\colorlet{Black}{black}
\newenvironment{tz}[1][]{%
                                \begin{tikzpicture}[baseline={([yshift=-.8ex]current bounding                        box.center)},#1] %
                                }{%
                        \end{tikzpicture} %
                        }
\DeclareRobustCommand{\SkipTocEntry}[5]{}
\tikzstyle{none}=[inner sep=0pt]
\tikzstyle{rn}=[circle,fill=Red,draw=Black,line width=0.8 pt]
\tikzstyle{gn}=[circle,fill=Lime,draw=Black,line width=0.8 pt]
\tikzstyle{bl}=[circle,fill=Blue,draw=Black,line width=0.8 pt]
\tikzstyle{simple}=[-,draw=Black,thick]
\tikzstyle{arrow}=[-,draw=Black,postaction={decorate},decoration={markings,mark=at position .5 with {\arrow{>}}},thick]
\tikzstyle{tick}=[-,draw=Black,postaction={decorate},decoration={markings,mark=at position .5 with {\draw (0,-0.1) -- (0,0.1);}},line width=2.000]
\def\thickness{0.7pt}
\tikzstyle{dot}=[circle, draw=black, fill=black, inner sep=.5ex, line width=\thickness, node on layer=foreground]
     \gdef\node@@on@layer{%
      \setbox\tikz@tempbox=\hbox\bgroup\pgfonlayer{#1}\unhbox\tikz@tempbox\endpgfonlayer\egroup}
\def\node@on@layer{\aftergroup\node@@on@layer}
\def\calign@preamble{%
   &\hfil\strut@
    \setboxz@h{\@lign$\m@th\displaystyle{##}$}%
    \ifmeasuring@\savefieldlength@\fi
    \set@field
    \hfil
    \tabskip\alignsep@
}
\let\cmeasure@\measure@
\patchcmd\cmeasure@{\divide\@tempcntb\tw@}{}{}{}
\patchcmd\cmeasure@{\divide\@tempcntb\tw@}{}{}{}
\patchcmd\cmeasure@{\ifodd\maxfields@
  \global\advance\maxfields@\@ne
  \fi}{}{}{}    
\newenvironment{calign}
{%
  \let\align@preamble\calign@preamble
  \let\measure@\cmeasure@
  \align
}
{%
  \endalign
}  
\tikzset{
    master/.style={
        execute at end picture={
            \coordinate (lower right) at (current bounding box.south east);
            \coordinate (upper left) at (current bounding box.north west);
        }
    },
    slave/.style={
        execute at end picture={
            \pgfresetboundingbox
            \path (upper left) rectangle (lower right);
        }
    }
}
\tikzset{blob/.style={draw, circle, fill=white, inner sep=1pt, minimum width=15pt, font=\scriptsize, line width=0.7pt}}
\tikzset{greenregion/.style={fill=green, fill opacity=0.3, draw=none}}
\tikzset{redregion/.style={fill=red, fill opacity=0.3, draw=none}}
\tikzset{blueregion/.style={fill=blue, fill opacity=0.3, draw=none}}
\tikzset{yellowregion/.style={fill=yellow, fill opacity=0.5, draw=none}}
\tikzset{cyanregion/.style={fill=cyan, fill opacity=0.3, draw=none}}
\tikzset{orangeregion/.style={fill=orange, fill opacity=0.6, draw=none}}
\tikzset{solidgreenregion/.style={fill=green!30, fill opacity=1, draw=none}}
\tikzset{solidredregion/.style={fill=red!30, fill opacity=1, draw=none}}
\tikzset{solidblueregion/.style={fill=blue!30, fill opacity=1, draw=none}}
\tikzset{solidyellowregion/.style={fill=yellow!30, fill opacity=1, draw=none}}
\tikzset{string/.style={line width=0.7pt}}
\tikzset{zig/.style={decoration={zigzag,segment length=3, amplitude=0.5}}}
\tikzset{bnd/.style={draw,string}}   
\tikzset{projector/.style={circle, draw, font=\scriptsize, inner sep=-5pt, minimum width=0.35cm, string, fill=white}}
\tikzset{dimension/.style={font=\scriptsize, inner sep=1pt}}
\tikzset{arrow data/.style 2 args={
      decoration={
         markings,
         mark=at position #1 with \arrow{#2}},
         postaction=decorate}
}
\tikzset{along path/.style={every path/.style={}, sloped, allow upside down}}
\def\zxnormal {
                \def \zxscale{0.55}
                \def\zxnodescale{0.8}
                \def\vertexscale{0.7}
                \def\zxshift{0.075cm}
                \def\hadscale{0.8}
                \def\trianglescale{1}
                \def\boxscale{1}
                }
\def\zxgreen{white}
\def\zxwhite{white}
\def\zxblack{black!50}
\tikzset{front/.style ={node on layer=foreground}}
\tikzset{zx/.style = {string, scale=\zxscale}}
\tikzset{zxnode/.style n args={1}{blob,scale=\zxnodescale,fill=#1,node on layer=foreground}}
\tikzset{box/.style={draw, rectangle, fill=white, inner sep=1pt, minimum width=10pt,minimum height=10pt, font=\scriptsize, line width=0.7pt,scale=\zxnodescale,node on layer=foreground}}
\tikzset{boxvertex/.style={draw, rectangle, fill=white, line width=0.733pt,scale=0.75*\vertexscale}}
\tikzset{bigbox/.style={draw, rectangle, fill=white,  minimum width=\boxscale *18pt,minimum height=\boxscale*8pt, line width=0.7pt,scale=\zxnodescale}}
\newlength{\unitbox}
\tikzset{widebox/.style ={draw,rectangle, fill=white, line width=0.7pt,scale=0.75*\zxnodescale,minimum height=15pt,inner sep=1pt,  minimum width = \unitbox,   anchor=center }}
\tikzset{wideboxm/.style n args={1}{draw,rectangle, fill=white, line width=0.7pt,scale=0.75*\zxnodescale,minimum height=15pt,inner sep=1pt,  minimum width =2\unitbox+#1\unitbox,   anchor=center }}
\tikzset{triangleup/.style n args={1}{draw, shape=isosceles triangle, isosceles triangle stretches, fill=white, line width=0.7pt,scale=0.75*\zxnodescale,minimum height=15pt,inner sep=1pt,  minimum width = #1*\trianglescale cm +0.15*\trianglescale cm,  shape border rotate=90, anchor=south }}
\tikzset{triangledown/.style n args={1}{draw, shape=isosceles triangle, isosceles triangle stretches, fill=white, line width=0.7pt,scale=0.75*\zxnodescale,minimum height=15pt,inner sep=1pt,  minimum width = #1*\trianglescale cm +0.15*\trianglescale cm,  shape border rotate=-90, anchor=north }}
\tikzset{zxvertex/.style n args={1}{draw,fill=#1,circle,line width=0.7pt,scale=0.75*\vertexscale}}
\tikzset{zxdown/.style={yshift=-\zxshift}}
\tikzset{zxup/.style={yshift=\zxshift}}
\newcommand\mult[3]{ 
\draw[string] (#1.center) to [out=up, in=-135] +(0.5*#2,#3) to [out=-45, in=up] +(0.5*#2,-#3);
\node[zxvertex=\zxgreen,zxdown] at ($(#1)+(0.5*#2,#3)$){};
}
\newcommand\comult[3]{ 
\draw[string] (#1.center) to [out=down, in=135] +(0.5*#2,-#3) to [out=45, in=down] +(0.5*#2,#3);
\node[zxvertex=\zxgreen,zxup] at ($(#1) +(0.5*#2,-#3)$){};}
\newcommand\unit[2]{ 
\draw[string] (#1.center) to + (0, -#2);
\node[zxvertex=\zxgreen] at ($(#1) +(0,-#2)$){};
}
\newcommand\emptydiagram{
\begin{tz}[zx]
\draw[dotted] (-0.5,-0.5) rectangle (0.5,0.5);
\end{tz}
}
\newcommand{\Tr}{\mathrm{Tr}}
\newcommand{\minus}{\ensuremath{\text{-}}}
\newcommand\superequals[1]{\stackrel {\makebox[0pt]{\tiny #1}} =}
\newcommand\super[2]{\stackrel{\makebox[0pt]{\tiny #1}} #2}
\newcommand\superequalseq[1]{\stackrel {\makebox[0pt]{\tiny\eqref{#1}}} =}
\newcommand{\ket}[1]{\left|#1\right\rangle}
\newcommand{\quotient}[2]{\left.\raisebox{.1em}{$#1$}\middle/\raisebox{-.1em}{$#2$}\right.}
\renewcommand{\to}[1][]{\ensuremath{\xrightarrow{#1}}}
\newcommand{\To}[1][]{\ensuremath{\xRightarrow{#1}}}
\theoremstyle{plain} 
\newtheorem{theorem}{Theorem}[section]
\newtheorem{corollary}[theorem]{Corollary}          
\newtheorem{proposition}[theorem]{Proposition}              
\newtheorem{prop}[theorem]{Proposition}
\newtheorem{res}{Result}
\newtheorem{app}{Application}
\theoremstyle{definition} 
\newtheorem{definition}[theorem]{Definition}
\newtheorem{remark}[theorem]{Remark}
\newtheorem{notation}[theorem]{Notation}
\newtheorem{terminology}[theorem]{Terminology}
\theoremstyle{remark}  
\newtheorem{example}[theorem]{Example}
\newtheoremstyle{special_statement} 
        {\topskip}
        {\topskip}
        {\addtolength{\leftskip}{2.5em} \itshape }
        {}
        {\bfseries}
        {:}
        {.5em}
        {}
\theoremstyle{special_statement}
\DeclareMathOperator{\Hom}{Hom}
\DeclareMathOperator{\End}{End}
\newcommand{\Aut}{\ensuremath{\mathrm{Aut}}}
\newcommand{\Vect}{\mathrm{Vect}}
\newcommand{\Rep}{\mathrm{Rep}}
\newcommand{\Mat}{\mathrm{Mat}}
\newcommand{\act}{\vartriangleright}
\newcommand{\Hilb}{\ensuremath{\mathrm{Hilb}}}
\newcommand{\Set}{\ensuremath{\mathrm{Set}}}
\newcommand\conj[1]{\overline{#1}}
\newcommand\inv[1]{#1^{\minus 1}}
\newcommand\Stab{\mathrm{Stab}}
\newcommand{\QAut}{\ensuremath{\mathrm{QAut}}}
\newcommand{\QGraph}{\ensuremath{\mathrm{QGraph}}}
\newcommand{\QGraphIso}{\ensuremath{\mathrm{QGraphIso}}}
\DeclareFontFamily{OMX}{MnSymbolE}{}
\DeclareSymbolFont{MnLargeSymbols}{OMX}{MnSymbolE}{m}{n}
\DeclareFontShape{OMX}{MnSymbolE}{m}{n}{
    <-6>  MnSymbolE5
   <6-7>  MnSymbolE6
   <7-8>  MnSymbolE7
   <8-9>  MnSymbolE8
   <9-10> MnSymbolE9
  <10-12> MnSymbolE10
  <12->   MnSymbolE12
}{}
\DeclareFontShape{OMX}{MnSymbolE}{b}{n}{
    <-6>  MnSymbolE-Bold5
   <6-7>  MnSymbolE-Bold6
   <7-8>  MnSymbolE-Bold7
   <8-9>  MnSymbolE-Bold8
   <9-10> MnSymbolE-Bold9
  <10-12> MnSymbolE-Bold10
  <12->   MnSymbolE-Bold12
}{}
\let\llangle\@undefined
\let\rrangle\@undefined
\DeclareMathDelimiter{\llangle}{\mathopen}%
                     {MnLargeSymbols}{'164}{MnLargeSymbols}{'164}
\DeclareMathDelimiter{\rrangle}{\mathclose}%
                     {MnLargeSymbols}{'171}{MnLargeSymbols}{'171}
\newcounter{DRcomment}
\newcommand\DR[1]{\ensuremath{{}^{\color{red}\theDRcomment}}\marginpar{\color{red}\tiny\raggedright \theDRcomment: #1}\stepcounter{DRcomment}}
\newcommand\DRcomm[1]{{\color{red}#1}}
\newcounter{DVcomment}
\newcommand\DV[1]{\ensuremath{{}^{\color{green}\theDVcomment}}\marginpar{\color{green}\tiny\raggedright \theDVcomment: #1}\stepcounter{DVcomment}}
\newcounter{BMcomment}
\newcounter{JVcomment}
\newcommand\ignore[1]{}
\tikzstyle{blackdot}=[circle, draw=black, fill=black, inner sep=.5ex, line width=\thickness, node on layer=foreground]
\tikzstyle{whitedot}=[circle, draw=black, fill=white, inner sep=.5ex, line width=\thickness, node on layer=foreground]
\tikzset{proofdiagram/.style={scale=1}}
\newlength\morphismheight
\newlength\minimummorphismwidth
\newlength\stateheight
\title{\vspace{-2cm}The Morita theory of quantum graph isomorphisms\vspace{-.4cm}}
\author{\normalsize \hspace{-1.8cm}\begin{tabular}{c c c c c c}
Benjamin Musto  && David Reutter && Dominic Verdon \\ \texttt{benjamin.musto@cs.ox.ac.uk} && \texttt{david.reutter@cs.ox.ac.uk} && \texttt{dominic.verdon@cs.ox.ac.uk} \end{tabular} \\[15pt] 
\hspace{-2cm}Department of Computer Science, University of Oxford}
\date{}
\begin{document}

\zxnormal
\maketitle

\setcounter{tocdepth}{2}
\def\ang{-17}

\maketitle
\vspace{-.3cm}
\begin{abstract}
We classify instances of quantum pseudo-telepathy in the graph isomorphism game,  exploiting the recently discovered connection between quantum information and the theory of quantum automorphism groups. Specifically, we show that graphs quantum isomorphic to a given graph are in bijective correspondence with Morita equivalence classes of certain Frobenius algebras in the category of finite-dimensional representations of the quantum automorphism algebra of that graph.
We show that such a Frobenius algebra may be constructed from a central type subgroup of the classical automorphism group, whose action on the graph has coisotropic vertex stabilisers. 
In particular, if the original graph has no quantum symmetries, quantum isomorphic graphs are classified by such subgroups.
We show that all quantum isomorphic graph pairs corresponding to a well-known family of binary constraint systems arise from this group-theoretical construction.
We use our classification to show that, of the small order vertex-transitive graphs with no quantum symmetry, none is quantum isomorphic to a non-isomorphic graph. We show that this is in fact asymptotically almost surely true of all graphs.
\end{abstract}
\vspace{.1cm}
\tableofcontents

\section{Introduction}
Quantum pseudo-telepathy~\cite{Brassard2005} is a well studied phenomenon in quantum information theory, where two non-communicating parties can use pre-shared entanglement to perform a task classically impossible without communication. Such tasks are usually formulated as games, where isolated players Alice and Bob are provided with inputs, and must return outputs satisfying some winning condition. 
One such game is the graph isomorphism game~\cite{Atserias2016}, whose instances correspond to pairs of graphs $\Gamma$ and $\Gamma'$, and whose winning classical strategies are exactly graph isomorphisms $\Gamma \to \Gamma'$. Winning quantum strategies are called \emph{quantum isomorphisms}. Quantum pseudo-telepathy is exhibited by graphs that are quantum but not classically isomorphic. \ignore{\DRcomm{However, finding such graph pairs is not easy; the only known construction is from binary constraint system games exhibiting quantum advantage, such as the Mermin Peres magic square. One contribution of this work is a more general group-theoretical construction which captures all known examples and could produce new ones.}}

This work builds on two recent articles, in which Lupini, Man{\v{c}}inska and Roberson~\cite{Lupini2017} and the present authors~\cite{Musto2017a} independently discovered a connection between these quantum isomorphisms and the \emph{quantum automorphism groups} of graphs~\cite{Banica2005,Banica2009,Banica2007_2,Banica2007_3,Bichon2003} studied in the framework of compact quantum groups~\cite{Woronowicz1998}. This connection has already proven to be fruitful, introducing new quantum information-inspired techniques to the study of quantum automorphism groups~\cite{Banica2017,Lupini2017}.


Here, we use this connection in the opposite direction, showing how results from the well developed theory of quantum automorphism groups have implications for the study of pseudo-telepathy. This may seem surprising, since pseudo-telepathy requires quantum isomorphisms between non-isomorphic graphs, not quantum automorphisms.
However, we here show that\ignore{, perhaps somewhat counterintuitively\footnote{In~\cite{Musto2017a} we showed that quantum isomorphisms are in fact dualisable 1-morphisms in a 2-category.\ignore{ and should not be thought of as invertible.} From this perspective, our classification result is far less counterintuitive.},} the graphs quantum isomorphic to a given graph $\Gamma$ can in fact be classified in terms of algebraic structures in the monoidal category $\QAut(\Gamma)$ of finite-dimensional representations of Banica's quantum automorphism Hopf $C^*$-algebra $A(\Gamma)$\footnote{For a definition of this algebra, see~\cite[Definition 2.1]{Banica2007_2}. In Sections~\ref{sec:qgraphsqisos} and~\ref{sec:catqaut}, we give an explicit description of the category $\QAut(\Gamma)$ which does not require knowledge of quantum automorphism groups.}. In other words, the quantum automorphism group of a graph, together with its action on the set of vertices of the graph, fully determines all graphs quantum isomorphic to it.

We further show that much information can be obtained just from the ordinary automorphism group of a graph. For example, if a graph has \emph{no quantum symmetries} (see \cite{Banica2007_3}) it is possible to completely classify quantum isomorphic graphs in terms of certain subgroups of the ordinary automorphism group; as a consequence we show that no vertex-transitive graph of order $\leq 11$ with no quantum symmetry~\cite{Banica2007_2,Schmidt2018} is part of a pseudo-telepathic graph pair. Even if a graph does have quantum symmetries, it is still possible to construct quantum isomorphic graphs using only the structure of the ordinary automorphism group. In particular, we show that all pseudo-telepathic graph pairs arising from Lupini et al.'s version of Arkhipov's construction~\cite{Arkhipov2012,Lupini2017} --- including the graph pairs corresponding to the well-known magic square~\cite{Mermin1990} and magic pentagram constraint systems--- arise from certain $\mathbb{Z}_2^4$ or $\mathbb{Z}_2^6$ symmetries of one of the graphs.
\ignore{ \DRcomm{This shows that these well-known instances of quantum advantage in fact correspond to quite `classical' data: namely a group, a 2-cocycle, and an action of that group on a certain graph.}}

\ignore{
It's somewhat surprising that we can construct instances of pseudo-telepathy by considering only classical data such as the classical automorphism group of the graph.
In fact, we show that all pseudo-telepathic graph pairs arising from a construction of Arkhipov~\ref{} (which are all pseudo-telepathic graph pairs known to the authors) correspond to algebraic structre 
This is a somewhat surprising result 
Surprising because classical data...!!
}


Our classification results are more naturally expressed in terms of (finite) \emph{quantum graphs}, originally introduced by Weaver~\cite{Weaver2015} and generalising the noncommutative graphs of Duan, Severini and Winter~\cite{Duan2013}. These quantum graphs generalise classical graphs, with a possibly non-commutative finite-dimensional $C^*$-algebra taking the role of the set of vertices. The notions of isomorphism and quantum isomorphism can both be generalised to the setting of quantum graphs~\cite{Musto2017a}; in particular, every quantum graph has a group of automorphisms $\Aut(\Gamma)$ and a category of quantum automorphisms $\QAut(\Gamma)$, which can again be understood as the category of finite-dimensional representations of a certain Hopf $C^*$-algebra. 
We are currently not aware of a direct application of quantum isomorphic quantum graphs in quantum information theory.\footnote{Although, see~\cite{Stahlke2016} for a possible interpretation in terms of zero-error quantum communication.} Nevertheless, our classification naturally includes quantum graphs, with the classification of quantum isomorphic classical graphs arising as a special case.

All results are derived in the 2-categorical framework recently introduced by the authors~\cite{Musto2017a}.

\subsection*{The classification}
For a quantum graph $\Gamma$, we classify quantum isomorphic quantum graphs $\Gamma '$ in terms of \emph{simple\footnote{There exists a more general notion of simple Frobenius monoid in a semisimple monoidal category~\cite{Kong2008}; the simple Frobenius monoids appearing here are always simple in this broader sense.} dagger Frobenius monoids} in the representation categories $\QAut(\Gamma)$; these are dagger Frobenius monoids $X$ (see Definition~\ref{def:Frobeniusmonoid}) in $\QAut(\Gamma)$ whose underlying algebra $FX$ is simple, where $F:\QAut(\Gamma) \to \Hilb$ is the forgetful functor. In terms of the Hopf $C^*$-algebra $A(\Gamma)$ such a structure can equivalently be defined as a matrix algebra $\Mat_n(\mathbb{C})$ with normalised trace inner product $\langle A, B\rangle = \frac{1}{n} \Tr(A^\dagger B)$, equipped with a $*$-representation $\vartriangleright: A(\Gamma) \to \End(\Mat_n(\mathbb{C}))$ such that the following holds for all $x\in A(\Gamma)$ and $A,B \in \Mat_n(\mathbb{C})$:
\begin{calign}\label{eq:sweedler}\left(x_{(1)}\act A\right) \left(x_{(2)} \act B\right) = x \act \left(AB\right) 
&
x \act \mathbbm{1}_n = \epsilon(x) \mathbbm{1}_n
\end{calign}
Here, we have used Sweedler's sumless notation for the comultiplication \mbox{$\Delta(x) = x_{(1)} \otimes x_{(2)}$.} We show that two such simple dagger Frobenius monoids produce isomorphic graphs if and only if they are \emph{Morita equivalent}. Morita equivalence plays a central role in modern algebra and mathematical physics, in particular being used to classify module categories~\cite{Ostrik2003_2}, rational conformal field theories~\cite{Runkel2007} and gapped boundaries of two-dimensional gapped phases of matter~\cite{Kitaev2012}. 
\begin{res}[Classification of quantum isomorphic quantum graphs --- Corollary~\ref{cor:bigclassification}]
For a quantum graph $\Gamma$ there is a bijective correspondence between the following structures:
\begin{itemize}[itemsep=3pt]
\item Isomorphism classes of quantum graphs $\Gamma'$ quantum isomorphic to $\Gamma$.
\item Morita equivalence classes of simple dagger Frobenius monoids in $\QAut(\Gamma)$.
\end{itemize}
\end{res}

\noindent 
We remark that this classification depends only on the quantum automorphism group of $\Gamma$, and not on its action on the (quantum) set of vertices.

For applications to pseudo-telepathy, we are of course interested in a classification of quantum isomorphic \emph{classical} graphs. 

\begin{res}[Classification of quantum isomorphic classical graphs --- Corollary~\ref{cor:superclassification}]For a classical graph $\Gamma$ there is a bijective correspondence between the following structures:
\begin{itemize}[itemsep=3pt]
\item Isomorphism classes of classical graphs $\Gamma'$ quantum isomorphic to $\Gamma$.
\item Morita equivalence classes of simple dagger Frobenius monoids in $\QAut(\Gamma)$ fulfilling a certain commutativity condition.
\end{itemize}
\end{res}

\noindent
In contrast to Result 1, the classification of quantum isomorphic classical graphs depends not only on the quantum automorphism group of $\Gamma$, but also on its action on the set of vertices.

Although some of the representation categories $\QAut(\Gamma)$ have been studied before~\cite{Banica2008,Banica2009}, a general classification of Morita classes of simple dagger Frobenius monoids in all such categories seems unfeasible. We therefore focus on the \emph{classical subcategory} of $\QAut(\Gamma)$; this is the full subcategory generated by the classical automorphisms\footnote{Equivalently, the classical subcategories can be understood as the categories of finite-dimensional representations of the commutative algebra of functions on $\Aut(\Gamma)$.} of $\Gamma$, and is equivalent to the category $\Hilb_{\Aut(\Gamma)}$ of $\Aut(\Gamma)$-graded Hilbert spaces. Using the well-known classification of Morita classes of Frobenius monoids in such categories~\cite{Ostrik2003}, we can classify quantum isomorphic graphs in terms of central type subgroups of $\Aut(\Gamma)$. \ignore{In these categories the classification of Morita classes of simple dagger Frobenius monoids is well known~\cite{Ostrik2003}; they correspond to \emph{central type subgroups} of $\Aut(\Gamma)$.} A \emph{group of central type}~\cite[Definition 7.12.21]{Etingof2015}  $(L,\psi)$ is a finite group $L$ with a \emph{non-degenerate} $2$-cocycle $\psi: L \times L \to U(1)$; that is, a $2$-cocycle such that the twisted group algebra $\mathbb{C}L^\psi$ is simple. \ignore{We therefore obtain the following result.}
\begin{res}[Quantum isomorphic quantum graphs from groups --- Corollary~\ref{cor:classification}] Every central type subgroup $(L,\psi)$ of the automorphism group $\Aut(\Gamma)$ of a quantum graph $\Gamma$ gives rise to a quantum graph $\Gamma_{L, \psi}$ and a quantum isomorphism $\Gamma_{L, \psi} \to \Gamma$. Moreover, if $\Gamma$ has no quantum symmetries, this leads to a bijective correspondence between the following structures:
\begin{itemize}[itemsep=3pt]
\item Isomorphism classes of quantum graphs $\Gamma'$ quantum isomorphic to $\Gamma$.
\item Central type subgroups $(L,\psi)$ of $\Aut(\Gamma)$ up to the following equivalence relation:
\begin{equation} \label{eq:equivalenceintro} \begin{split}
(L, \psi) \sim (L', \psi') \hspace{0.1cm}\Leftrightarrow\hspace{0.1cm} &L' = gLg^{-1}\text{ and }\psi'\text{ is cohomologous to }\\
&\psi^g(x,y):=\psi(gxg^{-1}, gyg^{-1})\text{ for some }g\in \Aut(\Gamma)
\end{split}
\end{equation}
\end{itemize}
\end{res}
\noindent
Classicality of the generated graph can also be expressed in group-theoretical terms. A nondegenerate $2$-cocycle $\psi$ of a group of central type $L$ gives rise to a symplectic form\footnote{See~\cite{BenDavid:2014} for an introduction to symplectic forms on groups.} \mbox{$\rho_{\psi}: L\times L \to U(1)$}, where $\rho_{\psi}(a,b):= \psi(a,b) \overline{\psi}(aba^{-1}, a)$. In particular, a subset $S\subseteq L$ is said to be \emph{coisotropic} if it contains its orthogonal complement $S^\bot$, defined as follows, where $Z_g = \{ h \in L~|~ hg =gh\}$ denotes the centraliser of $g\in L$:
\[ S^\bot:= \left\{ g\in L ~|~ \rho_{\psi}(g,a) = 1~\forall a \in Z_g \cap S\right\} \ignore{ ~\subseteq ~S}
\]
For a subgroup $L \subseteq \Aut(\Gamma)$ and a vertex $v$ of $\Gamma$ we denote the corresponding stabiliser subgroup by $\Stab_L(v) := \{ l \in L~|~ l(v) = v\}$. We say that a central type subgroup $(L,\psi)$ of $\Aut(\Gamma)$ \emph{has coisotropic stabilisers} if the stabiliser subgroups $\Stab_L(v)$ are coisotropic for every vertex $v$ of $\Gamma$.
\begin{res}[Quantum isomorphic classical graphs from groups --- Corollary~\ref{cor:classgroup}] Every central type subgroup $(L,\psi)$ of the automorphism group $\Aut(\Gamma)$ of a classical graph $\Gamma$ with coisotropic stabilisers gives rise to a classical graph $\Gamma_{L, \psi}$ and a quantum isomorphisms $\Gamma_{L, \psi}\to \Gamma$. Moreover, if $\Gamma$ has no quantum symmetries this leads to a bijective correspondence between the following structures:
\begin{itemize}[itemsep=3pt]
\item Isomorphism classes of classical graphs $\Gamma'$ quantum isomorphic to $\Gamma$.
\item Central type subgroups $(L,\psi)$ of $\Aut(\Gamma)$ with coisotropic stabilisers up to the equivalence relation~\eqref{eq:equivalenceintro}.
\end{itemize}
\end{res}

\subsection*{Applications to pseudo-telepathy} We exhibit some first simple applications of this classification.

\begin{app}[Corollary~\ref{cor:goestozero}] The proportion of $n$-vertex graphs which admit a quantum isomorphism to a non-isomorphic graph goes to zero as $n$ goes to infinity.
\end{app}
\noindent
In~\cite{Banica2007_2,Schmidt2018} all vertex transitive graphs of order $\leq 11$ without quantum symmetries are classified. The following is then a simple application of Result 4. 
\begin{app}[Theorem~\ref{thm:vertextransitive}] None of the vertex transitive graphs of order $\leq 11$ with no quantum symmetry admits a quantum isomorphism to a non-isomorphic graph. 
\end{app}

\noindent
Conversely, we use Result 4 to construct graphs quantum isomorphic to a given graph. We will give an example of such a construction in the next paragraph. In fact, we show that all pseudo-telepathic graph pairs arising from Lupini et al.'s variant of Arkhipov's construction~\cite{Lupini2017,Arkhipov2012} are obtained by the central type subgroup construction of Result 4.

\begin{app}[Theorem~\ref{thm:arkhipov}] All pseudo-telepathic graph pairs obtained from Arkhipov's construction~\cite[Definition 4.4 and Theorem 4.5]{Lupini2017} arise from a central type subgroup of the automorphism group of one of the graphs, with coisotropic stabilisers. In particular, the central type subgroup can always taken to be isomorphic to either $\mathbb{Z}_2^4$ or $\mathbb{Z}_2^6$.
\end{app}
\ignore{This result is somewhat surprising 
In particular, surprising since completely classical data!!!\\
Next paragraph is an example.}

\subsection*{Quantum isomorphisms from groups of central type}
We now demonstrate how Result 4 --- the construction of quantum isomorphisms between classical graphs from group-theoretical data --- may be used in practise to produce pairs of graphs exhibiting pseudo-telepathy. Recall that the following input data are required:
\begin{enumerate}
\item A graph $\Gamma$;
\item A subgroup $H$ of the automorphism group of $\Gamma$;
\item A non-degenerate 2-cocycle on $H$, such that the stabiliser subgroup $\Stab_H(v) \subset H$ is coisotropic for each vertex $v$ of $\Gamma$.
\end{enumerate}
We now describe a choice of such data which produces a pseudo-telepathic graph pair.

\begin{enumerate}[wide, labelwidth=!, labelindent=0pt]
\item \emph{The graph $\Gamma$.}
The graph $\Gamma$ \ignore{the graph corresponding to the homogenisation of the magic square binary constraint system} is the \emph{homogeneous BCS graph} introduced by Atserias et al.~\cite[Figure 2]{Atserias2016} for the \textit{binary magic square} (BMS). Explicitly, this graph is defined as follows.
A binary magic square is a $3{\times}3$ matrix with entries drawn from $\{0,1\}$, such that each row and each column sum up to an even number. The following are examples:\begin{calign}\begin{pmatrix}0&0&0\\ 0&0 &0 \\0 & 0 & 0\end{pmatrix}&\begin{pmatrix}0&0&0\\ 0&1 &1 \\0 & 1 & 1\end{pmatrix}& \begin{pmatrix}1&1&0\\ 0&1 &1 \\1 & 0 & 1\end{pmatrix}\end{calign}
The definition of $\Gamma$ is as follows.
\begin{itemize}
\item Vertices of $\Gamma$ correspond to partial BMS; that is, binary magic squares in which only one row or column is filled. The following are examples:
\begin{calign}\label{eq:pBMS}\begin{pmatrix}0&0&0\\ \cdot&\cdot &\cdot \\\cdot & \cdot & \cdot\end{pmatrix} & \begin{pmatrix}\cdot&\cdot&\cdot\\ \cdot&\cdot &\cdot \\1 & 1 & 0\end{pmatrix} &\begin{pmatrix}1&\cdot&\cdot\\ 0&\cdot &\cdot \\1 & \cdot & \cdot\end{pmatrix} \end{calign}
In total there are $24$ distinct partial BMS, so the graph $\Gamma$ has 24 vertices.
\item We draw an edge between two vertices if the corresponding partial BMS are incompatible. For example, there is an edge between the vertices corresponding to the first and the last partial BMS of~\eqref{eq:pBMS}, but not between any other pair. 
\end{itemize}
\item \emph{The symmetry $\left(\mathbb{Z}_2\right)^4$.}
Given a binary magic square, we can flip bits to obtain another binary magic square, so long as we preserve the parity of each row and each column. 
We denote such symmetries as follows:
\begin{equation}\begin{pmatrix} a_{11} & a_{12} & a_{13}\\ a_{21} & a_{22} & a_{23} \\ a_{31} & a_{32} & a_{33}
\end{pmatrix}
\to[\begin{pmatrix} \times & \cdot & \times\\ \cdot & \cdot & \cdot \\ \times & \cdot & \times
\end{pmatrix}]
\begin{pmatrix} \neg a_{11} & a_{12} & \neg a_{13}\\ a_{21} & a_{22} & a_{23} \\ \neg a_{31} & a_{32} & \neg a_{33}
\end{pmatrix}
\end{equation}
These symmetries of binary magic squares induce symmetries of the graph $\Gamma$. \ignore{For example:
\begin{calign}\begin{pmatrix}0&1&1\\ \cdot&\cdot &\cdot \\\cdot & \cdot & \cdot\end{pmatrix} \to[{\begin{pmatrix}\times & \times & \cdot \\ \times & \cdot & \times\\\cdot&\times&\times\end{pmatrix}}] \begin{pmatrix}1&0&1\\ \cdot&\cdot &\cdot \\\cdot & \cdot & \cdot\end{pmatrix} &
 \begin{pmatrix}\cdot&1&\cdot\\ \cdot&0 &\cdot \\\cdot & 1 & \cdot\end{pmatrix} \to[{\begin{pmatrix}\times & \cdot & \times \\ \times & \cdot & \times\\\cdot&\cdot&\cdot\end{pmatrix}}] \begin{pmatrix}\cdot&1&\cdot\\ \cdot&0&\cdot \\\cdot & 1 & \cdot\end{pmatrix} \end{calign}}
They form a subgroup of $\Aut(\Gamma)$ isomorphic to $\left(\mathbb{Z}_2\right)^4$, and generated by the following transformations:
\begin{calign}\label{eq:symmetrytransformations}
\begin{pmatrix}
\cdot & \times & \times \\ \cdot & \times & \times \\ \cdot & \cdot & \cdot
\end{pmatrix}
&
\begin{pmatrix} \cdot & \cdot & \cdot \\\times & \times & \cdot \\ \times & \times & \cdot 
\end{pmatrix}
&
\begin{pmatrix} \times & \times & \cdot \\ \times & \times & \cdot \\ \cdot & \cdot & \cdot
\end{pmatrix}
&
\begin{pmatrix}
\cdot & \cdot &\cdot \\ 
\cdot & \times & \times \\ \cdot & \times & \times
\end{pmatrix} \\[2pt]\nonumber
(1,0,0,0) & (0,1,0,0) & (0,0,1,0) & (0,0,0,1)
\end{calign}
\item \emph{A non-degenerate 2-cocycle on $\left(\mathbb{Z}_2\right)^4$.}
It is well known that abelian groups of symmetric type --- that is, groups of the form $A{\times} A$ for some abelian group $A$ --- admit non-degenerate $2$-cocycles~\cite[Theorem 5]{Bahturin2001}.
The Pauli matrices, which form a faithful projective representation\footnote{In quantum information theory, such faithful projective representations are known as \emph{nice unitary error bases}~\cite{Klappenecker2003}. See Definition~\ref{def:niceueb}.} of $\mathbb{Z}_2^2$, give rise to such a 2-cocycle $\psi_{\mathrm{P}}$ on $\mathbb{Z}_2^2$:
\begin{calign}\nonumber \hspace{-0cm}P_{0,0}=\begin{pmatrix} 1& 0 \\ 0 & 1
\end{pmatrix}
& 
P_{1,0} = \sigma_X = \begin{pmatrix} 0& 1 \\ 1 & 0
\end{pmatrix}
&
P_{0,1} = \sigma_Z = \begin{pmatrix} 1& 0 \\ 0 & \minus 1
\end{pmatrix}
& 
P_{1,1} = -i \sigma_Y = \begin{pmatrix} 0&\minus 1 \\  1 & 0
\end{pmatrix}
\end{calign}
\begin{equation}\label{eq:paulicocycle}
P_{a_1,b_1} P_{a_2,b_2} = \psi_P ((a_1,b_1),(a_2,b_2)) P_{(a_1+a_2),(b_1+b_2)} \quad \forall \; a_1,a_2,b_1,b_2 \in \mathbb{Z}_2
\end{equation}
This induces a non-degenerate $2$-cocycle $\psi_{\mathrm{P}^2}$ on $\left(\mathbb{Z}_2\right)^4$, corresponding to the projective representation consisting of pairwise tensor products of Pauli matrices:
\begin{equation}\label{eq:Pauliproduct} U_{a,b,c,d} = P_{a,b} \otimes P_{c,d}\hspace{0.5cm} \forall~a,b,c,d \in \mathbb{Z}_2
\end{equation}
\end{enumerate}
\noindent
We now verify that the stabiliser subgroups of the action of  $\left(\mathbb{Z}_2\right)^4$ on $\Gamma$ are coisotropic for the 2-cocycle $\psi_{\mathrm{P}^2}$ and its induced symplectic form $\rho_{\mathrm{P}^2}$:
\begin{equation}\label{eq:sympformz24}
\rho_{\text{P}^2} (a,b) = \psi_{\mathrm{P}^2}(a,b) \overline{\psi_{\mathrm{P}^2}(b,a)} \hspace{0.5cm} \forall~ a,b \in \mathbb{Z}_2^4
\end{equation}
The group $\mathbb{Z}_2^4$ can be understood as a four-dimensional vector space over the finite field $\mathbb{Z}_2$. From this perspective, order $2^k$ subgroups of $\mathbb{Z}_2^4$ correspond to $k$-dimensional subspaces and  the symplectic form $\rho_{\mathrm{P}^2}$ is a symplectic form in the linear algebraic sense. Since all stabiliser subgroups are two-dimensional, they are coisotropic if and only if they are isotropic (and hence Lagrangian). A subgroup is isotropic if the restriction of the symplectic form~\eqref{eq:sympformz24} to this subgroup is trivial.  By~\eqref{eq:paulicocycle}, the form $\rho_{\mathrm{P}^2}$ is trivial on two group elements of $\mathbb{Z}_2^4$ if the corresponding tensor products of Pauli matrices~\eqref{eq:Pauliproduct} commute. For example, let $v$ be a vertex corresponding to a partial BMS in which only the first row is filled. Its stabiliser subgroup is generated by the group elements $(0,1,0,0)$ and $(0,0,0,1)$ (see~\eqref{eq:symmetrytransformations}) with corresponding Pauli matrices $\sigma_Z\otimes \mathbbm{1}_2$ and $\mathbbm{1}_2\otimes \sigma_Z$, which clearly commute. Similarly, the stabiliser subgroup of a middle column vertex is generated by the group elements $(1,0,1,0)$ and $(0,1,0,1)$ with corresponding commuting matrices $\sigma_X \otimes \sigma_X$ and $\sigma_Z \otimes \sigma_Z$.
\ignore{
All the stabiliser subgroups are two-dimensional; therefore, if they are coisotropic then they are Lagrangian and hence isotropic, and vice versa. Isotropy is triviality of the form~\eqref{eq:sympformz24} on the subgroup, and is equivalent to symmetricity of the cocycle; we need only prove this for the generators of the subgroup. Consider a vertex corresponding to a filling of the middle row of the BMS. The cocycle is indeed symmetric on the generators of its stabiliser subgroup: \begin{align*}
\psi_{\mathrm{P}^2}((1,0,0,1),(0,1,1,0)) &= \psi_P((1,0),(0,1))\psi_P((0,1),(1,0)) \\ &=\psi_P((0,1),(1,0))\psi_P((1,0),(0,1)) \\&=\psi_{\mathrm{P}^2}((0,1,1,0),(1,0,0,1))
\end{align*}
}
A similar argument holds for all rows and columns, showing that all stabiliser subgroups are coisotropic.\footnote{We note that the simultaneous assignment of $\mathbb{Z}_2^4$ group elements to symmetry transformations~\eqref{eq:symmetrytransformations} and Pauli matrices~\eqref{eq:Pauliproduct} plays an important role in this argument. Other such assignments correspond to other, possibly non-cohomologous, non-degenerate $2$-cocycles which might not have coisotropic stabilisers.}

Our construction therefore produces a graph $\Gamma_{\mathbb{Z}_2^4, \psi_{\mathrm{P}^2}}$ that is quantum isomorphic to~$\Gamma$. 
We show in Section~\ref{sec:bcsarkhipov} that this graph is isomorphic to the \emph{inhomogenous BCS graph} for the binary magic square~\cite[Figure 1]{Atserias2016}, which is known to be non-isomorphic to $\Gamma$. The two graphs therefore form a pseudo-telepathic graph pair.

\ignore{
Having established all necessary ingredients, our theory (specifically Corollary~\ref{cor:classification}, Theorem~\ref{thm:commutativitycondition} and Corollary~\ref{cor:abelianconditions}) implies that there is a graph $\Gamma'$ which is quantum isomorphic to $\Gamma_{BMS}$. However, we do not know whether $\Gamma_{BMS}$ has quantum symmetries or not. In particular, we cannot make use of the second part of Corollary~\ref{cor:classification} and conclude that the resulting graph $\Gamma'$ is not isomorphic to $\Gamma_{BMS}$.

Nevertheless, if we explicitly construct $\Gamma'$ as sketched below Theorem~\ref{thm:Frobeniussplit} and described in detail in its proof in Appendix~\ref{app:proof}, it can be shown that the resulting graph is not isomorphic to $\Gamma_{BMS}$. In other words, $\left(\Gamma_{BMS}, \Gamma'\right)$ form a pseudo-telepathic pair corresponding to the Frobenius monoid $(\left(\mathbb{Z}_2\right)^4, \psi)$ in $\Vect_{\Aut(\Gamma_{BMS})} \subseteq \QAut(\Gamma_{BMS})$.

 In fact, the resulting graph $\Gamma'$ is isomorphic to the graph presented in Figure 1 in~\cite{Atserias2016}, corresponding to the famous Mermin-Peres square~\cite{Mermin1990}. In other words, our construction reproduces (the graph corresponding to) the quantum Mermin-Peres square simply from the graph corresponding to (classically satisfiable) binary magic squares and certain bit-flip symmetries on this graph. 

\DRcomm{An accompanying \textit{Mathematica} notebook containing the construction of $\Gamma'$ and the verification that $\Gamma'$ is not isomorphic to $\Gamma_{BMS}$ is available at \DR{arxiv}. }\DR{maybe remove this}
}

\ignore{
\paragraph{Quantum isomorphisms from groups of central type.}
We now demonstrate how Result 4 --- the construction of quantum isomorphisms between classical graphs from group-theoretical\DR{theoretical?} data --- may be used in practise to produce pairs of graphs exhibiting pseudo-telepathy. Recall that the following input data are required:
\begin{enumerate}
\item A graph $\Gamma$;
\item A subgroup $H$ of the automorphism group of $\Gamma$;
\item A non-degenerate 2-cocycle on $H$, such that the stabiliser subgroup $\Stab_H(v) \subset H$ is coisotropic for each vertex $v$ of $\Gamma$.
\end{enumerate}
We now describe a choice of such data which produces a pseudo-telepathic graph pair.

\begin{enumerate}[wide, labelwidth=!, labelindent=0pt]
\item \emph{The graph $\Gamma$.}
The graph $\Gamma$ \ignore{the graph corresponding to the homogenisation of the magic square binary constraint system} was introduced by Atserias et al.~\cite[Figure 2]{Atserias2016}, and is defined in terms of a \textit{binary magic square} (BMS).
A binary magic square is a $3{\times}3$ matrix with entries drawn from $\{0,1\}$, such that each row and each column sum up to an even number. The following are examples:\begin{calign}\begin{pmatrix}0&0&0\\ 0&0 &0 \\0 & 0 & 0\end{pmatrix}&\begin{pmatrix}0&0&0\\ 0&1 &1 \\0 & 1 & 1\end{pmatrix}& \begin{pmatrix}1&1&0\\ 0&1 &1 \\1 & 0 & 1\end{pmatrix}\end{calign}
The definition of $\Gamma$ is as follows.
\begin{itemize}
\item Vertices of $\Gamma$ correspond to partial BMS; that is, binary magic squares in which only one row or column is filled. The following are examples:
\begin{calign}\label{eq:pBMS}\begin{pmatrix}0&0&0\\ \cdot&\cdot &\cdot \\\cdot & \cdot & \cdot\end{pmatrix} & \begin{pmatrix}\cdot&\cdot&\cdot\\ \cdot&\cdot &\cdot \\1 & 1 & 0\end{pmatrix} &\begin{pmatrix}1&\cdot&\cdot\\ 0&\cdot &\cdot \\1 & \cdot & \cdot\end{pmatrix} \end{calign}
In total there are $24$ distinct partial BMS, so the graph $\Gamma$ has 24 vertices.
\item We draw an edge between two vertices if the corresponding partial BMS are incompatible. For example, there is an edge between the vertices corresponding to the first and the last partial BMS of~\eqref{eq:pBMS}, but not between any other pair. 
\end{itemize}
\item \emph{The symmetry $\left(\mathbb{Z}_2\right)^4$.}
Given a binary magic square, we can flip bits to obtain another binary magic square, so long as we preserve the parity of each row and each column. 
We denote such symmetries as follows:
\begin{equation}\begin{pmatrix} a_{11} & a_{12} & a_{13}\\ a_{21} & a_{22} & a_{23} \\ a_{31} & a_{32} & a_{33}
\end{pmatrix}
\to[\begin{pmatrix} \times & \cdot & \times\\ \cdot & \cdot & \cdot \\ \times & \cdot & \times
\end{pmatrix}]
\begin{pmatrix} \neg a_{11} & a_{12} & \neg a_{13}\\ a_{21} & a_{22} & a_{23} \\ \neg a_{31} & a_{32} & \neg a_{33}
\end{pmatrix}
\end{equation}
These symmetries of binary magic squares induce symmetries of the graph $\Gamma$. \ignore{For example:
\begin{calign}\begin{pmatrix}0&1&1\\ \cdot&\cdot &\cdot \\\cdot & \cdot & \cdot\end{pmatrix} \to[{\begin{pmatrix}\times & \times & \cdot \\ \times & \cdot & \times\\\cdot&\times&\times\end{pmatrix}}] \begin{pmatrix}1&0&1\\ \cdot&\cdot &\cdot \\\cdot & \cdot & \cdot\end{pmatrix} &
 \begin{pmatrix}\cdot&1&\cdot\\ \cdot&0 &\cdot \\\cdot & 1 & \cdot\end{pmatrix} \to[{\begin{pmatrix}\times & \cdot & \times \\ \times & \cdot & \times\\\cdot&\cdot&\cdot\end{pmatrix}}] \begin{pmatrix}\cdot&1&\cdot\\ \cdot&0&\cdot \\\cdot & 1 & \cdot\end{pmatrix} \end{calign}}
They form a subgroup of $\Aut(\Gamma)$ isomorphic to $\left(\mathbb{Z}_2\right)^4$, generated by the following transformations:
\begin{calign}\label{eq:symmetrytransformations}
\begin{pmatrix}
\cdot & \times & \times \\ \cdot & \times & \times \\ \cdot & \cdot & \cdot
\end{pmatrix}
&
\begin{pmatrix} \cdot & \cdot & \cdot \\\times & \times & \cdot \\ \times & \times & \cdot 
\end{pmatrix}
&
\begin{pmatrix} \times & \times & \cdot \\ \times & \times & \cdot \\ \cdot & \cdot & \cdot
\end{pmatrix}
&
\begin{pmatrix}
\cdot & \cdot &\cdot \\ 
\cdot & \times & \times \\ \cdot & \times & \times
\end{pmatrix}
\end{calign}
\item \emph{A non-degenerate 2-cocycle on $\left(\mathbb{Z}_2\right)^4$.}
It is well known that abelian groups of symmetric type --- that is, groups of the form $A{\times} A$ for some abelian group $A$ --- admit non-degenerate $2$-cocycles~\cite{Schnabel2016}.
The Pauli matrices, which form a faithful projective representation\footnote{In quantum information theory, such faithful projective representations are known as \emph{nice unitary error bases}~\cite{Klappenecker2003}. See Definition~\ref{def:niceueb}.} of $\mathbb{Z}_2^2$, give rise to such a 2-cocycle $\psi_{\mathrm{P}}$:
\begin{calign}\nonumber P_{0,0}=\begin{pmatrix} 1& 0 \\ 0 & 1
\end{pmatrix}
& 
P_{1,0} = \sigma_X = \begin{pmatrix} 0& 1 \\ 1 & 0
\end{pmatrix}
&
P_{0,1} = \sigma_Z = \begin{pmatrix} 1& 0 \\ 0 & \minus 1
\end{pmatrix}
& 
P_{1,1} = -i \sigma_Y = \begin{pmatrix} 0&\minus 1 \\  1 & 0
\end{pmatrix}
\end{calign}
\begin{equation}
P_{a_1,b_1} P_{a_2,b_2} = \psi_P ((a_1,b_1),(a_2,b_2)) P_{(a_1+a_2),(b_1+b_2)} \quad \forall \; a_1,a_2,b_1,b_2 \in \mathbb{Z}_2
\end{equation}
This induces a non-degenerate $2$-cocycle on $\left(\mathbb{Z}_2\right)^4$, which we denote by $\psi_{\mathrm{P}^2}$:
\begin{equation}\label{eq:productpauli}
\psi_{P^2} ((a_1,b_1,c_1,d_1),(a_2,b_2,c_2,d_2)) = \psi_P((a_1,b_1),(c_1,d_1))\psi_P((a_2,b_2),(c_2,d_2))
\end{equation}
\end{enumerate}
\noindent
We now verify that the stabiliser subgroups of the action of  $\left(\mathbb{Z}_2\right)^4$ on $\Gamma^{BMS}$ are coisotropic. We first note that the commutator of these matrices can be expressed in terms of the symplectic form $\rho_{\mathrm{P}^2}$ induced by $\psi_{\mathrm{P}^2}$: \begin{equation} U_{a,b,c,d} ~U_{e,f,g,h} = \rho_{\mathrm{P}^2} \left((a,b,c,d), (e,f,g,h)\vphantom{\frac{a}{b}}\right) U_{e,f,g,h} ~U_{a,b,c,d}
\end{equation}
In particular, $\rho_{\mathrm{P}^2} \left((a,b,c,d), (e,f,g,h)\vphantom{\frac{a}{b}}\right)=1$ if and only if the corresponding Pauli products commute. 
Let $v$ be a vertex of $\Gamma^{BMS}$. For concreteness, we assume that $v$ is a partial BMS of which only the first row is filled. Its stabiliser subgroup is then generated by the second and fourth transformation in~\eqref{eq:symmetrytransformations}. The orthogonal complement of this subgroup is the group of all tensor product Pauli matrices commuting with $U_{0,1,0,0} = \sigma_Z \otimes \mathbbm{1}_2$ and $U_{0,0,0,1} = \mathbbm{1}_2 \otimes \sigma_Z$. This complement is again generated by $U_{0,1,0,0}$ and $U_{0,0,0,1}$ and therefore coincides with the original stabiliser subgroup, proving that the stabiliser is coisotropic (and in fact Lagrangian). \DR{find a good place for the following footnote.}

Our classification therefore produces a graph $\Gamma^{BMS}_{\mathbb{Z}_2^4, \psi_{\mathrm{P}^2}}$ that is quantum isomorphic to $\Gamma^{BMS}$. 
We will show in Section~\ref{} that this graph is in fact isomorphic to the graph corresponding to the inhomogenous magic square constraint system~\cite[Figure 1]{Atserias2016} and is therefore not isomorphic to $\Gamma^{BMS}$ --- $\Gamma^{BMS}$ and $\Gamma^{BMS}_{\mathbb{Z}_2^4, \psi_{\mathrm{P}^2}}$ form a pseudo-telepathic graph pair. In other words, our construction reproduces the graph corresponding to the quantum magic square simply from the graph corresponding to classically satisfiable binary magic squares and certain bit-flip symmetries of this graph.

\DRcomm{Note that we cannot yet conclude that $\Gamma_{\mathbb{Z}_2^4, \psi_{\mathrm{P}}}^{BMS}$ is not also classically isomorphic to $\Gamma^{BMS}$ since $\QAut(\Gamma^{BMS})$ might contain quantum symmetries and the bijective correspondence of Result 4 does not necessarily hold.\DR{Leave this last sentence away??}
\footnote{While the Frobenius monoid corresponding to the central type subgroup $(\mathbb{Z}_2^4, \psi_{\mathrm{P}^2})$ of $\Aut(\Gamma^{BMS})$ is not Morita trivial in the classical subcategory $\Hilb_{\Aut(\Gamma)}$, it might very well be Morita trivial in $\QAut(\Gamma^{BMS})$. }
}

\ignore{
Having established all necessary ingredients, our theory (specifically Corollary~\ref{cor:classification}, Theorem~\ref{thm:commutativitycondition} and Corollary~\ref{cor:abelianconditions}) implies that there is a graph $\Gamma'$ which is quantum isomorphic to $\Gamma_{BMS}$. However, we do not know whether $\Gamma_{BMS}$ has quantum symmetries or not. In particular, we cannot make use of the second part of Corollary~\ref{cor:classification} and conclude that the resulting graph $\Gamma'$ is not isomorphic to $\Gamma_{BMS}$.

Nevertheless, if we explicitly construct $\Gamma'$ as sketched below Theorem~\ref{thm:Frobeniussplit} and described in detail in its proof in Appendix~\ref{app:proof}, it can be shown that the resulting graph is not isomorphic to $\Gamma_{BMS}$. In other words, $\left(\Gamma_{BMS}, \Gamma'\right)$ form a pseudo-telepathic pair corresponding to the Frobenius monoid $(\left(\mathbb{Z}_2\right)^4, \psi)$ in $\Vect_{\Aut(\Gamma_{BMS})} \subseteq \QAut(\Gamma_{BMS})$.

 In fact, the resulting graph $\Gamma'$ is isomorphic to the graph presented in Figure 1 in~\cite{Atserias2016}, corresponding to the famous Mermin-Peres square~\cite{Mermin1990}. In other words, our construction reproduces (the graph corresponding to) the quantum Mermin-Peres square simply from the graph corresponding to (classically satisfiable) binary magic squares and certain bit-flip symmetries on this graph. 

\DRcomm{An accompanying \textit{Mathematica} notebook containing the construction of $\Gamma'$ and the verification that $\Gamma'$ is not isomorphic to $\Gamma_{BMS}$ is available at \DR{arxiv}. }\DR{maybe remove this}
}

}

\subsection*{Notations and conventions}
We assume basic familiarity with monoidal category theory~\cite{Selinger2010} and $2$-category theory~\cite[Chapter 7]{Borceux1994}. Dagger categories are defined in~\cite{Selinger2010}; a \emph{unitary} morphism in a dagger category is one whose $\dagger$-adjoint is its inverse. Strict dagger $2$-categories are defined in~\cite{Heunen2016}.\footnote{Weak dagger 2-categories are the obvious generalisation, with unitary associators and unitors.}

We use the diagrammatic calculus for monoidal categories~\cite{Selinger2010,Coecke2010} throughout; with the exception of Section~\ref{app:Frobenius}, these diagrams will always represent morphisms in $\Hilb$, the monoidal dagger category of finite-dimensional Hilbert spaces and linear maps. In Appendix~\ref{app:2categorypictures}, we additionally use the diagrammatic calculus for 2-categories~\cite{Selinger2010,Marsden2014}.

`Frobenius algebra' and `Frobenius monoid' are usually taken to be synonymous, but in this work we reserve the term `Frobenius algebra' for Frobenius monoids in Hilb and use the term `Frobenius monoid' to refer to Frobenius monoids in general monoidal categories, to aid the reader in distinguishing between the two cases.

All our definitions are adapted to the dagger (or $*$- or unitary) setting. In particular, when we say that two dagger Frobenius monoids in a dagger monoidal category are Morita equivalent we require that the corresponding invertible bimodules are compatible with the dagger structure (see Definition~\ref{def:daggerbimodule}).
\ignore{A projector on a Hilbert space $H$ is an endomorphism $P: H \to H$ which is idempotent and self-adjoint $P = P^{\dagger} = P^2$. All sets appearing in this work are finite, and all vector spaces and all algebras are finite-dimensional.}

\ignore{We will need to distinguish between quantum isomorphisms and ordinary isomorphisms, and quantum graphs and ordinary graphs. }Whenever we say `graph' or `isomorphism' without the modifier `quantum' we always refer to the ordinary, or classical notion (isomorphisms between quantum graphs are defined in Definition~\ref{def:quantumgraphiso}). Occasionally, to clearly distinguish between the two cases, we explicitly use the modifier 	`classical' or `ordinary'.

\subsection*{Acknowledgments}
We are grateful to Jamie Vicary for many useful discussions and to David Roberson for sending us an early draft of~\cite{Lupini2017}. \ignore{We also thank Samson Abramsky and Rui Soares Barbosa for many useful comments on an early version of this work.\DV{This sounds like Samson and Rui read an early draft of the paper. Maybe `during the early stages of this work'?}}

\section{Background}

In this section, we recall various definitions and results; most of these are treated in greater detail in~\cite{Musto2017a}.

\subsection{String diagrams, Frobenius monoids and Gelfand duality}

\ignore{
In this work we consider two monoidal categories. The first is the compact closed~\cite{Kelly1972,Kelly1980,Abramsky2004} category $\Hilb$ of finite dimensional Hilbert spaces and linear maps, and the second is the category $\QAut(\Gamma)$ of automorphisms of a classical graph $\Gamma$. We treat both  these categories using the graphical calculus for monoidal categories~\cite{Joyal1991a,Joyal1991b,Selinger2010,Coecke2010}, which we briefly summarise now.
}
Most results in this work are derived using the graphical calculus of monoidal dagger categories~\cite{Selinger2010,Coecke2010}. Except for Section~\ref{app:Frobenius} and Appendix~\ref{app:2categorypictures}, we only use the graphical calculus of the compact closed~\cite{Kelly1980,Abramsky2004} dagger category $\Hilb$ of finite-dimensional Hilbert spaces and linear maps.

In the graphical calculus, morphisms are displayed as \emph{string diagrams}, which we read from bottom to top. In these diagrams of strings and nodes, strings are labelled with objects, and nodes are labelled with morphisms. The string for the monoidal unit $I$ is not drawn.
Composition and tensor product are depicted as follows:
\begin{calign}
\begin{tz}[zx]
\draw (0,0) to (0,3.5);
\node[zxnode=\zxwhite] at (0,1.05) {$f$};
\node[zxnode=\zxwhite] at (0,2.45) {$g$};
\node[dimension, right] at (0,0){$A$};
\node[dimension, right] at (0, 1.75) {$B$};
\node[dimension, right] at (0, 3.5) {$C$};
\end{tz}
&
\begin{tz}[zx]
\draw (0,0) to (0,3.5);
\draw (1.5,0) to (1.5,3.5);
\node[zxnode=\zxwhite] at (0,1.75) {$f$};
\node[zxnode=\zxwhite] at (1.5,1.75) {$g$};
\node[dimension, right] at (0,0){$A$};
\node[dimension, right] at (1.5, 0) {$B$};
\node[dimension, right] at (0, 3.5) {$C$};
\node[dimension, right] at (1.5, 3.5) {$D$};
\end{tz}\\\nonumber
gf:A\to C
& 
f\otimes g: A\otimes B\to C\otimes D
\end{calign}
In a monoidal dagger category, given a morphism $f:A\to B$, we express its $\dagger$-adjoint $f^\dagger:B\to A$ as a reflection of the corresponding diagram across a horizontal axis. 

Restricting attention to the category $\Hilb$, we note that all finite-dimensional Hilbert spaces $V$ have dual spaces $V^*=\Hom(V,\mathbb{C})$, represented in the graphical calculus as an oriented wire with the opposite orientation as $V$. Duality is characterized by the following linear maps, here called \textit{cups and caps}:
\def\pv{\vphantom{V^*}}
\begin{calign}\label{eq:cupscapsHilb}
\begin{tz}[zx]
\draw[arrow data ={0.15}{<}, arrow data={0.89}{<}] (0,0) to [out=up, in=up, looseness=2.5] (2,0) ;
\node[dimension, right] at (2.05,0) {$\pv V$};
\node[dimension, left] at (0,0) {$V^*$};
\end{tz}
&
\begin{tz}[zx,scale=-1]
\draw[arrow data ={0.15}{>}, arrow data={0.89}{>}] (0,0) to [out=up, in=up, looseness=2.5] (2,0) ;
\node[dimension, left] at (2.0,0) {$\pv V$};
\node[dimension, right] at (0,0) {$V^*$};
\end{tz}
&
\begin{tz}[zx,xscale=-1]
\draw[arrow data ={0.15}{<}, arrow data={0.89}{<}] (0,0) to [out=up, in=up, looseness=2.5] (2,0) ;
\node[dimension, left] at (2.0,0) {$\pv V$};
\node[dimension, right] at (0,0) {$V^*$};
\end{tz}
&
\begin{tz}[zx,yscale=-1]
\draw[arrow data ={0.15}{>}, arrow data={0.89}{>}] (0,0) to [out=up, in=up, looseness=2.5] (2,0) ;
\node[dimension, right] at (2.05,0) {$\pv V$};
\node[dimension, left] at (0,0) {$V^*$};
\end{tz}\\\nonumber
f\otimes v \mapsto f(v) 
& 
~1\mapsto \mathbbm{1}_V
&
v\otimes f\mapsto f(v)
&
~~1\mapsto \mathbbm{1}_V
\end{calign}
To define the second and fourth map, we have identified $V\otimes V^* \cong V^*\otimes V \cong\End(V)$. It may be verified that these maps fulfill the following \textit{snake equations}:
\begin{calign}\label{eq:snake}
\begin{tz}[zx]
\draw[arrow data={0.15}{>}, arrow data={0.5}{>}, arrow data={0.9}{>}] (0,0) to (0,1) to [out=up, in=up, looseness=2] (1,1) to [out=down, in=down, looseness=2] (2,1) to (2,2);
\end{tz}
~~~=~~~
\begin{tz}[zx]
\draw[arrow data={0.5}{>}] (0,0) to (0,2);
\end{tz}
~~~= ~~~
\begin{tz}[zx,xscale=-1]
\draw[arrow data={0.15}{>}, arrow data={0.5}{>}, arrow data={0.9}{>}] (0,0) to (0,1) to [out=up, in=up, looseness=2] (1,1) to [out=down, in=down, looseness=2] (2,1) to (2,2);
\end{tz}
&
\begin{tz}[zx]
\draw[arrow data={0.15}{<}, arrow data={0.5}{<}, arrow data={0.9}{<}] (0,0) to (0,1) to [out=up, in=up, looseness=2] (1,1) to [out=down, in=down, looseness=2] (2,1) to (2,2);
\end{tz}
~~~ =~~~
\begin{tz}[zx]
\draw[arrow data={0.5}{<}] (0,0) to (0,2);
\end{tz}
~~~= ~~~
\begin{tz}[zx,xscale=-1]
\draw[arrow data={0.15}{<}, arrow data={0.5}{<}, arrow data={0.9}{<}] (0,0) to (0,1) to [out=up, in=up, looseness=2] (1,1) to [out=down, in=down, looseness=2] (2,1) to (2,2);
\end{tz}
\end{calign}
Together with the swap map $\sigma_{V,W}:\ignore{V\otimes W\to W\otimes V$,~$}v\otimes w\mapsto w\otimes v$, depicted as a crossing of wires, this leads to a very flexible topological calculus, allowing us to untangle arbitrary diagrams and straighten out any twists:
\begin{calign}\label{eq:untangleunwindtwists}
\begin{tz}[zx,scale=1]
	\begin{pgfonlayer}{nodelayer}
		\node [style=none] (0) at (0, -0) {};
		\node [style=none] (1) at (0, 3) {};
		\node [style=none] (2) at (1, -0) {};
		\node [style=none] (3) at (1, 3) {};
		\node [style=none] (4) at (2, -0) {};
		\node [style=none] (5) at (2, 3) {};
		\node [style=none] (6) at (3, -0) {};
		\node [style=none] (7) at (3, 3) {};
		\node [style=none] (8) at (2.25, 0.75) {};
		\node [style=none] (9) at (1.25, 1.25) {};
		\node [style=none] (10) at (3, 2) {};
		\node [style=none] (11) at (1, 2) {};
		\node [style=none] (12) at (1.75, 1.25) {};
		\node [style=none] (13) at (0.75, 0.5) {};
		\node [style=none] (14) at (0, 2) {};
		\node [style=none] (15) at (2, -0) {};
		\node [style=none] (16) at (2.25, 2) {};
		\node [style=none] (17) at (4, 1.5) {$=$};
		\node [style=none] (18) at (6, -0) {};
		\node [style=none] (19) at (7, -0) {};
		\node [style=none] (20) at (8, -0) {};
		\node [style=none] (21) at (5, -0) {};
		\node [style=none] (22) at (7, -0) {};
		\node [style=none] (23) at (7, 3) {};
		\node [style=none] (24) at (8, 3) {};
		\node [style=none] (25) at (6, 3) {};
		\node [style=none] (26) at (5, 3) {};
	\end{pgfonlayer}
	\begin{pgfonlayer}{edgelayer}
		\draw [thick, in=-90, out=27, looseness=0.75] (0.center) to (8.center);
		\draw [thick, in=-45, out=105, looseness=0.75] (8.center) to (9.center);
		\draw [thick, in=-90, out=105, looseness=0.75] (9.center) to (10.center);
		\draw [thick, in=-90, out=105, looseness=0.50] (10.center) to (1.center);
		\draw [thick, bend right=15, looseness=1.00] (3.center) to (11.center);
		\draw [thick, in=90, out=-90, looseness=0.75] (11.center) to (12.center);
		\draw [thick, in=-90, out=90, looseness=0.50] (13.center) to (12.center);
		\draw [thick, bend right, looseness=1.00] (13.center) to (2.center);
		\draw [thick, in=90, out=-153, looseness=0.75] (5.center) to (14.center);
		\draw [thick, in=135, out=-90, looseness=1.50] (14.center) to (4.center);
		\draw [thick, in=-127, out=90, looseness=1.50] (16.center) to (7.center);
		\draw [thick, in=111, out=-90, looseness=1.25] (16.center) to (6.center);
		\draw [style=simple] (21.center) to (26.center);
		\draw [style=simple] (18.center) to (25.center);
		\draw [style=simple] (19.center) to (23.center);
		\draw [style=simple] (20.center) to (24.center);
	\end{pgfonlayer}
\end{tz}
&
\begin{tz}[zx]
\draw[arrow data ={0.075}{>}, arrow data ={0.5}{>}, arrow data={0.925}{>}] (0,0) to [out=up, in=right] (-1.25, 2.25) to [out=left, in=left, looseness=1] (-1.25,0.75) to [out=right, in=down] (0,3);
\end{tz}
~~~=~~~\begin{tz}[zx]
\draw[arrow data={0.5}{>}] (0,0) to (0,3);
\end{tz}
~~~=~~~
\begin{tz}[zx,xscale=-1]
\draw[arrow data ={0.075}{>}, arrow data ={0.5}{>}, arrow data={0.925}{>}] (0,0) to [out=up, in=right] (-1.25, 2.25) to [out=left, in=left, looseness=1] (-1.25,0.75) to [out=right, in=down] (0,3);
\end{tz}%
\ignore{
\input{unwindtwists.tikz}}
\end{calign}
\noindent
A closed circle evaluates to the dimension of the corresponding Hilbert space:
\begin{calign}\label{eq:closedcircle}
\begin{tz}[zx]
\draw[arrow data = {0}{>}] (0,0) circle (0.75) ; 
\end{tz}
~~=~~
\begin{tz}[zx]
\draw[arrow data = {0}{<}] (0,0) circle (0.75) ; 
\end{tz}
~~=~\dim(H)
\end{calign}

\ignore{ 
Both these monoidal categories have \emph{duals}, that is, every object $A$ has a \emph{dual object} $A^*$. In the graphical calculus  this is represented by orientation of wires: $A^*$ has opposite orientation to $A$.
Dual objects are characterised by the existence of the following morphisms, called \textit{cups and caps}:
}
\ignore{
\def\pv{\vphantom{A^*}}
\begin{calign}\label{eq:cupscapsHilb}
\begin{tz}[zx]
\draw[arrow data ={0.15}{<}, arrow data={0.89}{<}] (0,0) to [out=up, in=up, looseness=2.5] (2,0) ;
\node[dimension, right] at (2.05,0) {$\pv A$};
\node[dimension, left] at (0,0) {$A^*$};
\end{tz}
&
\begin{tz}[zx,scale=-1]
\draw[arrow data ={0.15}{>}, arrow data={0.89}{>}] (0,0) to [out=up, in=up, looseness=2.5] (2,0) ;
\node[dimension, left] at (2.0,0) {$\pv A$};
\node[dimension, right] at (0,0) {$A^*$};
\end{tz}
&
\begin{tz}[zx,xscale=-1]
\draw[arrow data ={0.15}{<}, arrow data={0.89}{<}] (0,0) to [out=up, in=up, looseness=2.5] (2,0) ;
\node[dimension, left] at (2.0,0) {$\pv A$};
\node[dimension, right] at (0,0) {$A^*$};
\end{tz}
&
\begin{tz}[zx,yscale=-1]
\draw[arrow data ={0.15}{>}, arrow data={0.89}{>}] (0,0) to [out=up, in=up, looseness=2.5] (2,0) ;
\node[dimension, right] at (2.05,0) {$\pv A$};
\node[dimension, left] at (0,0) {$A^*$};
\end{tz}\\[2pt]\nonumber
A^* \otimes A \to \mathbbm{1} 
& 
~\mathbbm{1} \to A \otimes A^*
&
A \otimes A^* \to \mathbbm{1}
&
~~\mathbbm{1}\to A^* \otimes A
\end{calign}
}

\ignore{
There are generally many cups and caps witnessing a duality. The following subclass of dualities will be important in what follows. 
\begin{definition}\label{def:daggerduality}Let $A$ and $B$ be objects in a compact closed category (e.g. \Hilb ). A \emph{dagger duality} between $A$ and $B$ is given by linear maps $\epsilon: B\otimes A \to \mathbbm{1}$ and $\eta: \mathbbm{1} \to A\otimes B$ fulfilling the snake equations~\eqref{eq:snake} and the following equation:
\begin{equation}\label{eq:daggerduality}
\begin{tz}[zx]
\draw (0,0) to (0,1.25);
\draw (1.5,0) to (1.5,1.25);
\draw (-0.2,0) rectangle (1.7, -0.8);
\node[scale=0.8] at (0.75,-0.4) {$\epsilon^\dagger$};
\node[dimension, left] at (0,1.25){$B$};
\node[dimension, right] at (1.5,1.25) {$A$};
\end{tz}
~~~=~~~
\begin{tz}[zx]
\draw (0,0) to [out=up, in=down] (1.5,1.25);
\draw (1.5,0) to [out=up, in=down](0,1.25);
\draw (-0.2,0) rectangle (1.7, -0.8);
\node[scale=0.8] at (0.75,-0.4) {$\eta$};
\node[dimension, left] at (0,1.25){$B$};
\node[dimension, right] at (1.5,1.25) {$A$};
\end{tz}
\end{equation}
\end{definition}
\noindent
The cups and caps defined in \eqref{eq:cupscapsHilb} are dagger duals. Dagger dualities are unique up to a unique unitary morphism~\cite[Section 7]{Selinger2007}, meaning that if $A$ and $B$ are dagger dual, then there is a unitary morphism $U: A^*\to B$ fulfilling the following equation:
\begin{calign}
\begin{tz}[zx]
\clip (-0.6, 1.5) rectangle (2.1, -1.1);
\draw (0,0) to (0,1.25);
\draw (1.5,0) to (1.5,1.25);
\draw (-0.2,0) rectangle (1.7, -0.8);
\node[scale=0.8] at (0.75,-0.4) {$\eta$};
\node[dimension, left] at (0,1.25){$A$};
\node[dimension, right] at (1.5,1.25) {$B$};
\end{tz}
~~~=~~~
\begin{tz}[zx]
\clip (-0.6, 1.5) rectangle (2.1, -1.1);
\draw[arrow data ={0.2}{<}, arrow data={0.7}{<}] (0,1.25) to (0,0.25) to [out=down, in=down, looseness=2.5] (1.5,0.25) to (1.5,1.25);
\node[zxnode=\zxwhite] at (1.5, 0.5) {$U$};
\node[dimension, left] at (0,1.25){$A$};
\node[dimension, right] at (1.5,1.25) {$B$};
\end{tz}
&
\begin{tz}[zx,yscale=-1]
\clip (-0.6, 1.5) rectangle (2.1, -1.1);
\draw (0,0) to (0,1.25);
\draw (1.5,0) to (1.5,1.25);
\draw (-0.2,0) rectangle (1.7, -0.8);
\node[scale=0.8] at (0.75,-0.4) {$\epsilon$};
\node[dimension, left] at (0,1.25){$B$};
\node[dimension, right] at (1.5,1.25) {$A$};
\end{tz}
~~~=~~~
\begin{tz}[zx,yscale=-1]
\clip (-0.6, 1.5) rectangle (2.1, -1.1);
\draw[arrow data ={0.35}{<}, arrow data={0.8}{<}] (0,1.25) to (0,0.25) to [out=down, in=down, looseness=2.5] (1.5,0.25) to (1.5,1.25);
\node[zxnode=\zxwhite] at (0, 0.5) {$U^\dagger$};
\node[dimension, left] at (0,1.25){$B$};
\node[dimension, right] at (1.5,1.25) {$A$};
\end{tz}
\end{calign}

\ignore{
Monoidal category theory, see e.g...., we use graphical calculus.\\
Except for section..., always only the graphical calculus of Hilb.\\
In Hilb we pick duals... (define them!).\\}
}
\subsubsection{Frobenius monoids}

We now recall the notion of a dagger Frobenius monoid in a monoidal dagger category. \ignore{Although our monoidal categories have duals, we refrain from drawing an orientation on the wire corresponding to the object carrying the algebra structure, for reasons which will soon become apparent.}

\begin{definition}\label{def:algebra}A \textit{monoid} in a monoidal category is an object $M$ with multiplication and unit morphisms, depicted as follows:
\begin{calign}
\begin{tz}[zx,master]
\coordinate (A) at (0,0);
\draw (0.75,1) to (0.75,2);
\mult{A}{1.5}{1}
\end{tz}
&
\begin{tz}[zx,slave]
\coordinate (A) at (0.75,2);
\unit{A}{1}
\end{tz}
\\[0pt]\nonumber
m:M\otimes M \to M& u: I \to M 
\end{calign}\hspace{-0.2cm}
These morphisms satisfy the following associativity and unitality equations:
\begin{calign}\label{eq:assocandunitality}
\begin{tz}[zx]
\coordinate(A) at (0.25,0);
\draw (1,1) to [out=up, in=-135] (1.75,2);
\draw (1.75,2) to [out=-45, in=up] (3.25,0);
\draw (1.75,2) to (1.75,3);
\mult{A}{1.5}{1}
\node[zxvertex=\zxwhite,zxdown] at (1.75,2){};
\end{tz}
\quad = \quad
\begin{tz}[zx,xscale=-1]
\coordinate(A) at (0.25,0);
\draw (1,1) to [out=up, in=-135] (1.75,2);
\draw (1.75,2) to [out=-45, in=up] (3.25,0);
\draw (1.75,2) to (1.75,3);
\mult{A}{1.5}{1}
\node[zxvertex=\zxwhite,zxdown] at (1.75,2){};
\end{tz}
&
\begin{tz}[zx]
\coordinate (A) at (0,0);
\draw (0,-0.25) to (0,0);
\draw (0.75,1) to (0.75,2);
\mult{A}{1.5}{1}
\node[zxvertex=\zxwhite,zxdown] at (1.5,0){};
\end{tz}
\quad =\quad
\begin{tz}[zx]
\draw (0,0) to (0,2);
\end{tz}
\quad= \quad
\begin{tz}[zx,xscale=-1]
\coordinate (A) at (0,0);
\draw (0,-0.25) to (0,0);
\draw (0.75,1) to (0.75,2);
\mult{A}{1.5}{1}
\node[zxvertex=\zxwhite,zxdown] at (1.5,0){};
\end{tz}
\end{calign}
Analogously, a \textit{comonoid} is an object $C$ with a coassociative comultiplication $\delta: C \to C\otimes C$ and counit $\epsilon:C\to I$. The $\dagger$-adjoint of a monoid in a monoidal dagger category is a comonoid.
\end{definition}
\noindent
Note that for the multiplication and unit morphisms of an monoid we simply draw white nodes rather than labelled boxes, for concision. Likewise, we draw the comultiplication and counit morphisms of the $\dagger$-adjoint comonoid as white nodes. Despite having the same label in the diagram, they can be easily distinguished by their type. 

\begin{definition} \label{def:Frobeniusmonoid}A \textit{dagger Frobenius monoid} in a monoidal dagger category is a monoid where the monoid and $\dagger$-adjoint comonoid structures are related by the following Frobenius equations:
\begin{equation}\label{eq:Frobenius}
\begin{tz}[zx]
\draw (0,0) to [out=up, in=-135] (0.75,2) to (0.75,3);
\draw (0.75,2) to [out=-45, in=135] (2.25,1);
\draw (2.25,0) to (2.25,1) to [out=45, in=down] (3,3);
\node[zxvertex=\zxwhite,zxup] at (2.25,1){};
\node[zxvertex=\zxwhite,zxdown] at (0.75,2){};
\end{tz}
\quad = \quad
\begin{tz}[zx]
\coordinate (A) at (0,0);
\coordinate (B) at (0,3);
\draw (0.75,1) to (0.75,2);
\mult{A}{1.5}{1}
\comult{B}{1.5}{1}
\end{tz}
\quad = \quad 
\begin{tz}[zx]
\draw (0,0) to [out=up, in=-135] (0.75,2) to (0.75,3);
\draw (0.75,2) to [out=-45, in=135] (2.25,1);
\draw (2.25,0) to (2.25,1) to [out=45, in=down] (3,3);
\node[zxvertex=\zxwhite,zxup] at (2.25,1){};
\node[zxvertex=\zxwhite,zxdown] at (0.75,2){};
\end{tz}
\end{equation}
\ignore{
\begin{terminology}
In this work we only consider dagger Frobenius monoids, and therefore  refer to these simply as \emph{Frobenius monoids}, omitting the extra adjective.
\end{terminology}
}%
A dagger Frobenius monoid is \textit{special} if equation (\ref{eq:special}a) holds. A dagger Frobenius algebra in $\Hilb$ is moreover \textit{symmetric} or \textit{commutative} if one of (\ref{eq:special}b) or (\ref{eq:special}c) holds.
\begin{calign}\label{eq:special}
\begin{tz}[zx,every to/.style={out=up, in=down}]\draw (0,0) to (0,1) to [out=135] (-0.75,2) to [in=-135] (0,3) to (0,4);
\draw (0,1) to [out=45] (0.75,2) to [in=-45] (0,3);
\node[zxvertex=\zxwhite, zxup] at (0,1){};
\node[zxvertex=\zxwhite,zxdown] at (0,3){};\end{tz}
\quad = \quad
\begin{tz}[zx]
\draw (0,0) to +(0,4);
\end{tz}
&
\begin{tz}[zx,every to/.style={out=up, in=down}]
\draw (0.25,-1.5) to (1.75,0) to [in=-45] (1,1) to (1,1.5);
\draw (1.75,-1.5) to (0.25,0) to [in=-135] (1,1);
\node[zxvertex=\zxwhite, zxdown] at (1,1){};
\node[zxvertex=\zxwhite] at (1,1.5){};
\end{tz}
\quad = \quad
\begin{tz}[zx,every to/.style={out=up, in=down}]
\draw (1.75,-1.5) to (1.75,0) to [in=-45] (1,1) to (1,1.5);
\draw (0.25,-1.5) to (0.25,0) to [in=-135] (1,1);
\node[zxvertex=\zxwhite, zxdown] at (1,1){};
\node[zxvertex=\zxwhite] at (1,1.5){};
\end{tz}
&
\begin{tz}[zx,every to/.style={out=up, in=down}]
\draw (0.25,-1.5) to (1.75,0) to [in=-45] (1,1) to (1,2);
\draw (1.75,-1.5) to (0.25,0) to [in=-135] (1,1);
\node[zxvertex=\zxwhite, zxdown] at (1,1){};
\end{tz}
\quad = \quad
\begin{tz}[zx,every to/.style={out=up, in=down}]
\draw (1.75,-1.5) to (1.75,0) to [in=-45] (1,1) to (1,2);
\draw (0.25,-1.5) to (0.25,0) to [in=-135] (1,1);
\node[zxvertex=\zxwhite, zxdown] at (1,1){};
\end{tz}\\\nonumber
\text{a) special}&\text{b) symmetric}&\text{c) commutative}
\end{calign} 
\end{definition}
\noindent
Dagger Frobenius monoids are closely related to dualities. In particular, it is a direct consequence of~\eqref{eq:assocandunitality} and~\eqref{eq:Frobenius} that the following cups and caps fulfil the snake equations~\eqref{eq:snake}:
\begin{calign}\label{eq:cupcapfrob}
\begin{tz}[zx]
\clip (-0.1,0) rectangle (2.1,2.);
\draw (0,0) to [out=up, in=up, looseness=2] node[zxvertex=\zxwhite, pos=0.5]{} (2,0);
\end{tz}
~:=~
\begin{tz}[zx]
\clip (-0.1,0) rectangle (2.1,2.2);
\draw (0,0) to [out=up, in=up, looseness=2] node[front,zxvertex=\zxwhite, pos=0.5](A){} (2,0);
\draw[string] (A.center) to (1,1.8);
\node[zxvertex=\zxwhite] at (1,1.8) {};
\end{tz}
&
\begin{tz}[zx,yscale=-1]
\clip (-0.1,0) rectangle (2.1,2.);
\draw (0,0) to [out=up, in=up, looseness=2] node[zxvertex=\zxwhite, pos=0.5]{} (2,0);
\end{tz}
~:=~
\begin{tz}[zx,yscale=-1]
\clip (-0.1,0) rectangle (2.1,2.2);
\draw (0,0) to [out=up, in=up, looseness=2] node[front,zxvertex=\zxwhite, pos=0.5](A){} (2,0);
\draw[string] (A.center) to (1,1.8);
\node[zxvertex=\zxwhite] at (1,1.8) {};
\end{tz}
\end{calign}
\ignore{It follows that every Frobenius monoid is canonically self-dual, $A^*\cong A$; this is why we do not need to draw an orientation on the corresponding wire.
}
\noindent
Finally, we define a notion of homomorphism between dagger Frobenius monoids.
\begin{definition}A \textit{$*$-homomorphism} $f:A\to B$ between dagger Frobenius monoids $A$ and $B$ is a morphism $f:A\to B$ satisfying the following equations:
\begin{calign}\label{eq:homo}
\begin{tz}[zx, master, every to/.style={out=up, in=down},yscale=-1]
\draw (0,0) to (0,2) to [out=135] (-0.75,3);
\draw (0,2) to [out=45] (0.75, 3);
\node[zxnode=\zxwhite] at (0,1) {$f$};
\node[zxvertex=\zxwhite, zxdown] at (0,2) {};
\end{tz}
=
\begin{tz}[zx, every to/.style={out=up, in=down},yscale=-1]
\draw (0,0) to (0,0.75) to [out=135] (-0.75,1.75) to (-0.75,3);
\draw (0,0.75) to [out=45] (0.75, 1.75) to +(0,1.25);
\node[zxnode=\zxwhite] at (-0.75,2) {$f$};
\node[zxnode=\zxwhite] at (0.75,2) {$f$};
\node[zxvertex=\zxwhite, zxdown] at (0,0.75) {};
\end{tz}
&
\begin{tz}[zx,slave, every to/.style={out=up, in=down},yscale=-1]
\draw (0,0) to (0,2) ;
\node[zxnode=\zxwhite] at (0,1) {$f$};
\node[zxvertex=\zxwhite, zxup] at (0,2) {};
\end{tz}
=
\begin{tz}[zx,slave, every to/.style={out=up, in=down},yscale=-1]
\draw (0,0) to (0,0.75) ;
\node[zxvertex=\zxwhite, zxup] at (0,0.75) {};
\end{tz}
&
\begin{tz}[zx,slave, every to/.style={out=up, in=down},scale=-1]
\draw (0,0) to (0,3);
\node[zxnode=\zxwhite] at (0,1.5) {$f^\dagger$};
\end{tz}
=~~
\begin{tz}[zx,slave,every to/.style={out=up, in=down},scale=-1]
\draw (0,1.5) to (0,2) to [in=left] node[pos=1] (r){} (0.5,2.5) to [out=right, in=up] (1,2)  to [out=down, in=up] (1,0);
\draw (-1,3) to [out=down,in=up] (-1,1) to [out=down, in=left] node[pos=1] (l){} (-0.5,0.5) to [out=right, in=down] (0,1) to (0,1.5);
\node[zxnode=\zxwhite] at (0,1.5) {$f$};
\node[zxvertex=\zxwhite] at (l.center){};
\node[zxvertex=\zxwhite] at (r.center){};
\end{tz}
\end{calign}
A \textit{$*$-cohomomorphism} $f:A\to B$ is a morphism $f:A\to B$ satisfying the following equations:
\begin{calign}\label{eq:cohomo}
\begin{tz}[zx, master, every to/.style={out=up, in=down}]
\draw (0,0) to (0,2) to [out=135] (-0.75,3);
\draw (0,2) to [out=45] (0.75, 3);
\node[zxnode=\zxwhite] at (0,1) {$f$};
\node[zxvertex=\zxwhite, zxup] at (0,2) {};
\end{tz}
=
\begin{tz}[zx, every to/.style={out=up, in=down}]
\draw (0,0) to (0,0.75) to [out=135] (-0.75,1.75) to (-0.75,3);
\draw (0,0.75) to [out=45] (0.75, 1.75) to +(0,1.25);
\node[zxnode=\zxwhite] at (-0.75,2) {$f$};
\node[zxnode=\zxwhite] at (0.75,2) {$f$};
\node[zxvertex=\zxwhite, zxup] at (0,0.75) {};
\end{tz}
&
\begin{tz}[zx,slave, every to/.style={out=up, in=down}]
\draw (0,0) to (0,2) ;
\node[zxnode=\zxwhite] at (0,1) {$f$};
\node[zxvertex=\zxwhite, zxup] at (0,2) {};
\end{tz}
=
\begin{tz}[zx,slave, every to/.style={out=up, in=down}]
\draw (0,0) to (0,0.75) ;
\node[zxvertex=\zxwhite, zxup] at (0,0.75) {};
\end{tz}
&
\begin{tz}[zx,slave, every to/.style={out=up, in=down}]
\draw (0,0) to (0,3);
\node[zxnode=\zxwhite] at (0,1.5) {$f^\dagger$};
\end{tz}
=~~
\begin{tz}[zx,slave,every to/.style={out=up, in=down}]
\draw (0,1.5) to (0,2) to [in=left] node[pos=1] (r){} (0.5,2.5) to [out=right, in=up] (1,2)  to [out=down, in=up] (1,0);
\draw (-1,3) to [out=down,in=up] (-1,1) to [out=down, in=left] node[pos=1] (l){} (-0.5,0.5) to [out=right, in=down] (0,1) to (0,1.5);
\node[zxnode=\zxwhite] at (0,1.5) {$f$};
\node[zxvertex=\zxwhite] at (l.center){};
\node[zxvertex=\zxwhite] at (r.center){};
\end{tz}
\end{calign}
A \emph{$*$-isomorphism} is a morphism which is both a $*$-homomorphism and a $*$-cohomomorphism. 
\end{definition}
\noindent
We observe that the dagger of a $*$-homomorphism is a $*$-cohomomorphism, that every {$*$-isomorphism} is unitary, and that every unitary $*$-homomorphism between dagger Frobenius monoids is a \mbox{$*$-isomorphism.}

\ignore{
Define a Frobenius monoid in a general monoidal cat., define 
special. \\}
 
 Recall that we refer to Frobenius monoids in $\Hilb$ as Frobenius algebras. A major reason for defining these structures is the fact that special symmetric dagger Frobenius algebras coincide with finite-dimensional $C^*$-algebras.
\begin{theorem}[{\cite[Theorem 4.6 and 4.7]{Vicary2010}}]\label{thm:jamiec*frob} Every finite-dimensional $C^*$-algebra admits a unique inner product making it into a special symmetric dagger Frobenius algebra. Conversely, every special symmetric dagger Frobenius algebra $A$ admits a unique norm such that the canonical involution, defined by its action on vectors $\ket{a}\in A$ as the following antihomomorphism, endows it with the structure of a $C^*$-algebra:%
\begin{calign}
\begin{tz}[zx,master]
\draw (0.25,3) to (0.25,1);
\node[zxnode=\zxwhite] at (0.25,1) {$a$};
\end{tz}
\quad \mapsto \quad
\begin{tz}[zx,slave]
\draw (0.25,3) to  (0.25,2) to [out=down, in=135] (1,1) to [out=45, in=down] (1.75,2);
\draw (1,1) to (1,0.5);
\node[zxnode=\zxwhite] at (1.75,2) {$a^\dagger$};
\node[zxvertex=\zxwhite,zxup] at (1,1){};
\node[zxvertex=\zxwhite] at (1,0.5){};
\end{tz}\\[-5pt]\nonumber\end{calign} 
Moreover, the notions of $*$-homomorphism and $*$-isomorphism between special symmetric dagger Frobenius algebras coincide with the corresponding notions for finite-dimensional \mbox{$C^*$-algebras.}
\end{theorem}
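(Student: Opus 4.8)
The plan is to treat the three assertions separately — the two directions of the correspondence and the coincidence of morphisms — with the \emph{left regular representation} $L\colon A\to\End(A)$, $L_a:=m\circ(a\otimes\id_A)$, as the organising tool. Since $A$ is an object of $\Hilb$, its endomorphism algebra $\End(A)$ is itself a finite-dimensional $C^*$-algebra under the operator norm, so if we can exhibit $A$ as a $\dagger$-closed (self-adjoint) subalgebra of $\End(A)$ via $L$, the $C^*$-structure comes for free.

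\textbf{From $C^*$-algebras to special symmetric dagger Frobenius algebras.} By Artin--Wedderburn I would write $A\cong\bigoplus_i \Mat_{n_i}(\mathbb{C})$ and put on the $i$-th block the inner product $\langle a,b\rangle_i=\lambda_i\Tr(a^\dagger b)$. The monoid axioms \eqref{eq:assocandunitality} hold by construction, and for any positive weights $\lambda_i$ the counit $\epsilon=u^\dagger$ is the corresponding weighted trace, so the Frobenius form $\epsilon\circ m$ is nondegenerate; this yields the Frobenius law \eqref{eq:Frobenius}, while symmetry (\ref{eq:special}b) is exactly traciality of $\epsilon$. A short computation shows that $m\circ m^\dagger$ acts as a scalar on each block, and the special condition (\ref{eq:special}a) then pins down each $\lambda_i$ uniquely. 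Uniqueness of the inner product follows because a symmetric dagger Frobenius structure on the fixed algebra is determined by its counit, a faithful trace, which is fixed up to the block weights that speciality removes.

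\textbf{From special symmetric dagger Frobenius algebras to $C^*$-algebras.} Starting from such an $A$, I define $a\mapsto a^*$ by the displayed diagram, i.e. $a^*=(\id_A\otimes a^\dagger)\circ\delta\circ u$ with $\delta=m^\dagger$; the snake equations \eqref{eq:snake} for the Frobenius cup and cap \eqref{eq:cupcapfrob} show at once that $*$ is antilinear and involutive. The two lemmas I need are (L1) $\langle x,y\rangle=\epsilon\bigl(m(x^*\otimes y)\bigr)$, which follows from the snake equations together with symmetry of the Frobenius form, and (L2) the antihomomorphism property $(ab)^*=b^*a^*$. Granting these, applying (L1) twice and (L2) gives $\langle a x,y\rangle=\langle x,a^*y\rangle$, that is $L_a^{\dagger}=L_{a^*}$ in $\End(A)$. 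Since $L$ is a unital algebra homomorphism and is injective (as $L_a(u)=a$), its image is a $\dagger$-closed subalgebra of the $C^*$-algebra $\End(A)$, hence a $C^*$-subalgebra; pulling back the operator norm along $L$ equips $A$ with a $C^*$-norm for which $*$ is the $C^*$-involution, and uniqueness of the norm is the standard fact that a $*$-algebra carries at most one $C^*$-norm. The main obstacle is lemma (L2), and this is precisely where speciality (\ref{eq:special}a) is indispensable rather than merely symmetry: the dual numbers $\mathbb{C}[x]/(x^2)$ form a non-special symmetric dagger Frobenius algebra that still admits the involution and satisfies (L1), yet are not semisimple and so cannot be a $C^*$-algebra, so $*$ must fail to be an antihomomorphism there. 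I would prove (L2) diagrammatically by sliding one multiplication around the Frobenius cup of \eqref{eq:cupcapfrob} using \eqref{eq:Frobenius}, reversing the factor order by symmetry (\ref{eq:special}b), and then using speciality to collapse the resulting composite of a multiplication followed by its adjoint back to the identity; without this collapse the two sides differ by the central operator $m\circ m^\dagger$, which is the identity precisely in the special case.

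\textbf{Coincidence of morphisms.} Finally, for a morphism $f\colon A\to B$ between special symmetric dagger Frobenius algebras I would unwind the three conditions \eqref{eq:homo}: the first two say exactly that $f$ is a unit-preserving algebra homomorphism, while the third, which expresses $f^\dagger$ as the Frobenius transpose of $f$ built from the cups and caps \eqref{eq:cupcapfrob}, translates under the involution formula into $f(a^*)=f(a)^*$. Conversely any $C^*$-algebra $*$-homomorphism satisfies all three, using that its Hilbert-space adjoint is determined by the traces fixed in the first part; the same bookkeeping, together with the observation that a unitary $*$-homomorphism is invertible, identifies $*$-isomorphisms with $C^*$-algebra isomorphisms.
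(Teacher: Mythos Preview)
The paper does not prove this theorem itself; it is quoted from \cite{Vicary2010}, so there is no in-paper argument to compare yours against.

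Your overall architecture --- Artin--Wedderburn for the forward direction, and the left regular representation $L:A\to\End(A)$ for the converse --- is the standard route and is essentially correct. But your account of where speciality enters is wrong. The antihomomorphism law (L2) follows from the Frobenius law together with symmetry alone: one computes $L_a^\dagger=(a^\dagger\otimes\id)\circ\delta$ directly from $L_a=m\circ(a\otimes\id)$, and then uses the symmetry of the Frobenius cup $\sigma\circ\eta=\eta$ to identify this with $L_{a^*}$; antimultiplicativity of $*$ is equivalent to $L_a^\dagger=L_{a^*}$. No factor of $m\circ m^\dagger$ appears in this calculation, so speciality is not what makes (L2) go through. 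Your proposed counterexample also fails: the dual numbers $\mathbb{C}[x]/(x^2)$ \emph{cannot} be made into a symmetric dagger Frobenius algebra in $\Hilb$, since if they could, your own embedding argument (which, as just noted, needs no speciality) would exhibit them as a $*$-closed unital subalgebra of $\Mat_2(\mathbb{C})$, hence semisimple --- contradicting the nilpotence of $x$.

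The upshot is that your converse argument actually proves the stronger statement that \emph{every} symmetric dagger Frobenius algebra in $\Hilb$ is a $C^*$-algebra; speciality does no work there. Its genuine role is exactly where you first placed it, in the forward direction: it fixes the block weights $\lambda_i$ and so makes the inner product on a given $C^*$-algebra unique.
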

\noindent
One advantage of explicitly using special symmetric dagger Frobenius algebras instead of $C^*$-algebras is that Frobenius algebras already contain `up-front' all emergent structures of finite-dimensional $C^*$-algebras, such as the comultiplication $\Delta = m^\dagger : H\to H \otimes H$; they are therefore more amenable to the purely compositional reasoning of the graphical calculus.

One important example of a special symmetric dagger Frobenius algebra is the endomorphism algebra of a Hilbert space.
\begin{definition} The \emph{endomorphism algebra} of a Hilbert space $H$ is defined to  be the following special symmetric dagger Frobenius algebra on $H\otimes H^*$ (where $n=\dim(H)$):
\begin{calign}\label{eq:endomorphismalgebra}
\frac{1}{\sqrt{n}}~
\begin{tz}[zx,master,every to/.style={out=up, in=down}]
\draw[arrow data={0.5}{>}] (0,0) to [looseness=1.6] (1,3.5);
\draw[arrow data={0.5}{<}] (3,0) to [looseness=1.6] (2, 3.5);
\draw[arrow data={0.5}{<}] (1,0) to [out=up, in=up, looseness=3] (2,0);
\end{tz}
&
\sqrt{n}\!\!\!\!\!
\begin{tz}[zx,slave,every to/.style={out=up, in=down}]
\draw[arrow data={0.52}{<}] (1,3.5) to [out=down, in=down, looseness=3] (2,3.5);
\end{tz}
&
\frac{1}{\sqrt{n}}~
\begin{tz}[zx,master,every to/.style={out=up, in=down},yscale=-1]
\draw[arrow data={0.5}{<}] (0,0) to [looseness=1.6] (1,3.5);
\draw[arrow data={0.5}{>}] (3,0) to [looseness=1.6] (2, 3.5);
\draw[arrow data={0.5}{>}] (1,0) to [out=up, in=up, looseness=3] (2,0);
\end{tz}
&
\sqrt{n}\!\!\!\!\!
\begin{tz}[zx,slave,every to/.style={out=up, in=down},yscale=-1]
\draw[arrow data={0.52}{>}] (1,3.5) to [out=down, in=down, looseness=3] (2,3.5);
\end{tz}
\end{calign}
\end{definition}
\begin{remark}\label{rem:normalisation}The normalisation factors were chosen to make the endomorphism algebra special. This is not essential but simplifies some of our arguments.
The algebra~\eqref{eq:endomorphismalgebra} is $*$-isomorphic to the unique special symmetric dagger Frobenius algebra corresponding to the usual $C^*$-algebra structure on $\End(H)$ which is usually given with unnormalised multiplication and unit but normalised inner product $\langle A,B\rangle := \frac{1}{n}\Tr(A^\dagger B)$ to retain specialness.  We prefer the normalisation~\eqref{eq:endomorphismalgebra}, since the normalised inner product does not arise as the canonical induced inner product on the tensor product Hilbert space $H\otimes H^*$.  %
\ignore{\ignore{ In particular, that algebra has a different normalisation; the multiplication and unit are not normalised, but the comultiplication and counit are rescaled by a factor of $\frac{1}{n}$ and $n$ instead of $\frac{1}{\sqrt{n}}$ and $\sqrt{n}$, respectively.}

Choosing a normalised inner product $\langle A , B \rangle := \frac{1}{n} \Tr(A^\dagger B)$ on $\End(H)$ would have let to another $*$-isomorphic special dagger Frobenius algebra which would have avoided the normalisation factors for multiplication and unit. We prefer the normalisation~\eqref{eq:quantumbijectionFrobenius}, since the normalised inner product does not arise as the canonical induced inner product on the tensor product Hilbert space $H\otimes H^*$.

\DRcomm{
We note that this normalised special dagger Frobenius algebra is $*$-isomorphic to the one with more conventional normalisation where multiplication and unit are not normalised but comultiplication and counit are rescaled by a factor of $\frac{1}{n}$ and $n$ instead of $\frac{1}{\sqrt{n}}$ and $\sqrt{n}$, respectively. In particular, this Frobenius algebra is the unique special symmetric dagger Frobenius algebra corresponding to the usual $C^*$-algebra structure on $\End(H)$. We prefer the normalisation~\eqref{eq:quantumbijectionFrobenius}, since the more conventional normalisation just discussed only becomes a dagger Frobenius algebra with the normalised inner product $\langle A, B \rangle :=\frac{1}{n} \Tr(A^\dagger B)$ which does not arise as the canonical induced inner product on the tensor product Hilbert space $H\otimes H^*$.}
}%
\end{remark}

\subsubsection{Gelfand duality and Frobenius algebras}

We now recall the graphical version of finite- dimensional Gelfand duality in the framework established by Coecke, Pavlovi{\'c} and Vicary~\cite{Coecke2009}. We first observe that every orthonormal basis of a Hilbert space $H$ defines a special commutative dagger Frobenius algebra on $H$.

\begin{example} \label{exm:Frob}Let $\left\{\ket{i}\right\}_{1\leq i\leq n}$ be an orthonormal basis of a Hilbert space $H$. Then the following multiplication and unit maps, together with their adjoints, form a special commutative dagger Frobenius algebra on $H$:
\begin{calign}\label{eq:classicalcopy}\begin{tz}[zx,master, scale=1.3]
\coordinate (A) at (0,0);
\draw (0.75,1) to (0.75,2);
\mult{A}{1.5}{1}
\end{tz} := ~\sum_{i=1}^{n} ~~~
\begin{tz}[zx,slave,scale=1.3]
\draw (0,0) to (0,0.5);
\draw (1.5,0) to (1.5,0.5);
\draw (0.75,1.5) to (0.75,2);
\node[zxnode=\zxwhite] at (0,0.5) {$i^\dagger$};
\node[zxnode=\zxwhite] at (1.5,0.5) {$i^\dagger$};
\node[zxnode=\zxwhite] at (0.75,1.5) {$i$};
\end{tz}
&
\begin{tz}[zx,slave,scale=1.3]
\draw (0.75,1) to (0.75,2);
\node[zxvertex=\zxwhite] at (0.75,1){};
\end{tz} := ~\sum_{i=1}^{n} 
\begin{tz}[zx,slave,scale=1.3]
\draw (0.75,1.25) to (0.75,2);
\node[zxnode=\zxwhite] at (0.75,1.25) {$i$};
\end{tz}\\[5pt]\nonumber
m: \ket{i} \otimes \ket{j} \mapsto \delta_{i,j} \ket{i} 
&
u: 1 \mapsto \sum_{i=1}^{n} \ket{i}
\end{calign}
\end{example}
\noindent
Conversely, every special commutative dagger Frobenius algebra $A$ gives rise to an orthonormal basis of $A$; the basis vectors are given by the copyable elements of $A$, defined as follows.

\begin{definition}\label{def:copyablestates} A \textit{copyable element} of a special commutative dagger Frobenius algebra $A$ is a $*$-cohomomorphism $\psi: \mathbb{C} \to A$; that is, a vector $\ket{\psi} \in A$, such that the following hold:
\begin{calign}\label{eq:ordinaryelement}\begin{tz}[zx, master, every to/.style={out=up, in=down}]
\draw (0,1) to (0,2) to [out=135] (-0.75,3);
\draw (0,2) to [out=45] (0.75, 3);
\node[zxnode=\zxwhite] at (0,1) {$\psi$};
\node[zxvertex=\zxwhite, zxup] at (0,2) {};
\end{tz}
=~~~
\begin{tz}[zx,slave, every to/.style={out=up, in=down}]
\draw (-0.75,2) to (-0.75,3);
\draw (0.75, 2) to +(0,1.);
\node[zxnode=\zxwhite] at (-0.75,2) {$\psi$};
\node[zxnode=\zxwhite] at (0.75,2) {$\psi$};
\end{tz}
&
\begin{tz}[zx, every to/.style={out=up, in=down}]
\draw (0,1) to (0,2) ;
\node[zxnode=\zxwhite] at (0,1) {$\psi$};
\node[zxvertex=\zxwhite, zxup] at (0,2) {};
\end{tz}
~~=~~
\emptydiagram
&
\begin{tz}[zx,slave, every to/.style={out=up, in=down}]
\draw (0,0) to (0,1.5);
\node[zxnode=\zxwhite] at (0,1.5) {$\psi^\dagger$};
\end{tz}
=
\begin{tz}[zx,slave,every to/.style={out=up, in=down}]
\draw (0,1.5) to (0,2) to [in=left] node[pos=1] (r){} (0.5,2.5) to [out=right, in=up] (1,2)  to [out=down, in=up] (1,0);
\node[zxnode=\zxwhite] at (0,1.5) {$\psi$};
\node[zxvertex=\zxwhite,zxdown] at (r.center){};
\end{tz}\\\nonumber
\end{calign}
\end{definition}
\begin{theorem}[{\cite[Theorem 5.1.]{Coecke2009}}] \label{thm:classificationONB}The copyable elements of a special commutative dagger Frobenius algebra $A$ form an orthonormal basis of $A$ for which the monoid is of the form given in Example~\ref{exm:Frob}.
\end{theorem}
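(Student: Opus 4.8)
The plan is to reduce the statement to the classical structure theory of finite-dimensional commutative $C^*$-algebras, which is made available by Theorem~\ref{thm:jamiec*frob}. A commutative dagger Frobenius algebra is in particular symmetric, so that theorem identifies $A$ with a finite-dimensional commutative $C^*$-algebra; the structure theorem for such algebras provides a $*$-isomorphism $A \cong \mathbb{C}^n$ onto the algebra of functions on a finite set $\{1,\ldots,n\}$ with pointwise multiplication. Since the special symmetric dagger Frobenius structure on a given finite-dimensional $C^*$-algebra is \emph{unique} (again Theorem~\ref{thm:jamiec*frob}), and since Example~\ref{exm:Frob} exhibits such a structure on $\mathbb{C}^n$ built from its standard basis of indicator functions $\{e_i\}$, this $*$-isomorphism must carry the Frobenius structure of $A$ precisely to the one of Example~\ref{exm:Frob}. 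It therefore suffices to treat $A = \mathbb{C}^n$ with the structure of Example~\ref{exm:Frob}, and to check that the copyable elements are exactly the basis $\{e_i\}$.

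Next I would classify the copyable elements concretely. On this algebra the comultiplication $\delta = m^\dagger$ is the diagonal map $\delta(e_i) = e_i \otimes e_i$. Writing an arbitrary element as $\psi = \sum_i c_i e_i$, the copying condition (the first equation of~\eqref{eq:ordinaryelement}) reads $\sum_i c_i\, e_i \otimes e_i = \sum_{i,j} c_i c_j\, e_i \otimes e_j$, which forces $c_i^2 = c_i$ and $c_i c_j = 0$ for $i \neq j$; hence every $c_i \in \{0,1\}$ and at most one is nonzero. The counit condition $\epsilon(\psi) = 1$ (the second equation of~\eqref{eq:ordinaryelement}) then forces exactly one $c_i$ to equal $1$, so the copyable elements are precisely the idempotents $e_1, \ldots, e_n$. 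These are automatically self-adjoint, so the reality condition (the third equation of~\eqref{eq:ordinaryelement}) is satisfied and imposes no further restriction. The $e_i$ are orthonormal and span $A$, and they recover $m(e_i \otimes e_j) = \delta_{ij} e_i$ and $u(1) = \sum_i e_i$, which is exactly the form of Example~\ref{exm:Frob}.

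As an alternative that avoids the full structure theorem, one can verify orthonormality directly in the graphical calculus: for copyable $\psi$, the reality condition rewrites $\braket{\psi}{\psi} = \psi^\dagger\psi$ as $\epsilon \circ m \circ (\psi \otimes \psi)$, and copying together with specialness~(\ref{eq:special}a) collapse $m \circ (\psi \otimes \psi) = m \circ \delta \circ \psi = \psi$, giving $\braket{\psi}{\psi} = \epsilon(\psi) = 1$; a similar manipulation shows that distinct copyable elements are orthogonal. The genuinely nontrivial step, and the main obstacle, is \emph{completeness}: that the copyable elements span $A$. This is precisely finite-dimensional commutative Gelfand duality and is most cleanly obtained through the $C^*$-identification above; a purely diagrammatic argument would instead have to show that the sum of the projections $\ket{\psi}\!\bra{\psi}$ over copyable elements equals $\id_A$, which again amounts to knowing that a commutative special symmetric dagger Frobenius algebra possesses exactly $\dim(A)$ copyable elements.
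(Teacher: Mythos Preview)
The paper does not supply its own proof of this statement; it merely cites \cite[Theorem 5.1]{Coecke2009}. Your argument is correct and is a perfectly good way to obtain the result from the other imported background theorem, Theorem~\ref{thm:jamiec*frob}: commutativity implies symmetry, the $C^*$-identification then gives $A\cong\mathbb{C}^n$ as algebras, uniqueness of the Frobenius data forces the isomorphism to match the structure of Example~\ref{exm:Frob}, and your direct computation of the copyable elements in $\mathbb{C}^n$ is accurate.

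For comparison, the original proof in \cite{Coecke2009} runs purely in the graphical calculus, essentially along the lines of your alternative paragraph: orthonormality of copyable elements is shown diagrammatically, and completeness is obtained by a spectral argument internal to the Frobenius structure rather than by invoking the full $C^*$-classification. Your main route is shorter given what this paper already has available, but it does lean on the chronologically later result of \cite{Vicary2010}; if one worries about logical dependence between the two cited theorems, the self-contained diagrammatic argument is the safer choice. Within the paper's own framework, however, no circularity arises and your reduction is entirely legitimate.
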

\noindent
In other words, every special commutative dagger Frobenius algebra in \Hilb\ is of the form~\eqref{eq:classicalcopy} for some orthonormal basis on a Hilbert space.

Given a special commutative dagger Frobenius algebra $A$, we denote its set of copyable elements by $\widehat{A}$. For such algebras $A$ and $B$, it can easily be verified that every function $\widehat{A}\to \widehat{B}$ gives rise to a $*$-cohomomorphism between $A$ and $B$ and that conversely every $*$-cohomomorphism $A\to B$ comes from such a function $\widehat{A} \to \widehat{B}$. Therefore, Theorem~\ref{thm:classificationONB} gives rise to the following Frobenius-algebraic version of finite-dimensional Gelfand duality.

\begin{corollary}[{\cite[Corollary 7.2.]{Coecke2009}}]The category of special commutative dagger Frobenius algebras and $*$-cohomomorphisms in \Hilb\ is equivalent to the category of finite sets and functions.
\end{corollary}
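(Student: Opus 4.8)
The plan is to upgrade Theorem~\ref{thm:classificationONB} from a statement about objects into an equivalence of categories, by constructing functors in both directions and checking essential surjectivity and full faithfulness. Write $\cF$ for the category of special commutative dagger Frobenius algebras and $*$-cohomomorphisms in $\Hilb$, and $\cS$ for the category of finite sets and functions. First I would define the functor $\widehat{(-)}\colon \cF \to \cS$ on objects by $A \mapsto \widehat{A}$, the set of copyable elements. By Definition~\ref{def:copyablestates} a copyable element is exactly a $*$-cohomomorphism $\mathbb{C}\to A$, and $*$-cohomomorphisms compose (a routine diagrammatic check of~\eqref{eq:cohomo}; identity maps are evidently $*$-cohomomorphisms, so $\cF$ is a well-defined category), so postcomposition $\psi \mapsto f\circ\psi$ sends each $*$-cohomomorphism $f\colon A\to B$ to a function $\widehat{f}\colon\widehat{A}\to\widehat{B}$; functoriality is immediate from associativity of composition. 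Each $\widehat{A}$ is finite since, by Theorem~\ref{thm:classificationONB}, it is an orthonormal basis of the finite-dimensional space $A$.

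In the reverse direction I would define $\cS\to\cF$ by sending a finite set $S$ to the Hilbert space $\mathbb{C}S$ with orthonormal basis $\{\ket{s}\}_{s\in S}$ equipped with the copy/delete structure of Example~\ref{exm:Frob}, and a function $g\colon S\to T$ to the linear map $\ket{s}\mapsto\ket{g(s)}$; that this map is a $*$-cohomomorphism is checked directly from the explicit formulas~\eqref{eq:classicalcopy}. Essential surjectivity of $\widehat{(-)}$ is then precisely Theorem~\ref{thm:classificationONB}: any $A$ has $\widehat{A}$ as an orthonormal basis with multiplication and unit in the form of Example~\ref{exm:Frob}, hence is $*$-isomorphic to $\mathbb{C}\widehat{A}$.

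The crux is full faithfulness, i.e.\ that $f\mapsto\widehat{f}$ is a bijection $\Hom_{\cF}(A,B)\to[\sim]\Fun(\widehat{A},\widehat{B})$ for all $A,B$. Faithfulness is immediate: $\widehat{A}$ is a basis of $A$ by Theorem~\ref{thm:classificationONB}, so a $*$-cohomomorphism is determined by its restriction to copyable elements. For fullness, given $h\colon\widehat{A}\to\widehat{B}$ I would let $f\colon A\to B$ be the unique linear extension of $\psi\mapsto h(\psi)$ and verify the three equations~\eqref{eq:cohomo}; one then has $\widehat{f}=h$ by construction. Evaluated on a basis vector $\psi\in\widehat{A}$, the comultiplication and counit equations for $f$ reduce to the copyability equations~\eqref{eq:ordinaryelement} for $\psi$ and for the copyable element $h(\psi)=f(\psi)$, which hold by definition.

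The step I expect to be the main obstacle is the third, dagger/duality equation of~\eqref{eq:cohomo}, as it couples $f^\dagger$ with the Frobenius cups and caps~\eqref{eq:cupcapfrob} rather than with the comonoid structure alone. The clean way to handle it is to compute in the orthonormal bases supplied by Theorem~\ref{thm:classificationONB}: there the cups and caps are the Kronecker pairings of the copy-algebra, so $f$ is the $\{0,1\}$-valued matrix of a function and is in particular real, whence bending $f$ around these cups and caps produces exactly $f^\dagger$. With all three equations verified, $\widehat{(-)}$ is essentially surjective and fully faithful, and therefore an equivalence $\cF\simeq\cS$.
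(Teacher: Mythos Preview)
Your proposal is correct and follows essentially the same approach as the paper. The paper does not give a formal proof of this corollary (it is cited from~\cite{Coecke2009}); it merely sketches the argument in the surrounding text---essential surjectivity from Theorem~\ref{thm:classificationONB}, and full faithfulness from the assertion that functions $\widehat{A}\to\widehat{B}$ correspond bijectively to $*$-cohomomorphisms $A\to B$---and you have simply spelled out these verifications in detail, including the third (dagger) equation of~\eqref{eq:cohomo} which the paper leaves as ``easily verified.''
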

\noindent
Explicitly, this equivalence maps a special commutative dagger Frobenius algebra $A$ to its set of copyable elements $\widehat{A}$ and a set $X$ to the algebra associated to the orthonormal basis $\{ \ket{x}~|~x\in X\}$ of the Hilbert space $\mathbb{C}^{|X|}$.
%
%
%
Under this correspondence, we may therefore consider the category of finite sets as `contained within $\Hilb$' using the following identification.
\begin{center}
\begin{tabular}{l | l}
$\Set$ & $\mathrm{Hilb}$\\
\hline
sets of cardinality $n$ & special commutative dagger Frobenius algebras of dimension $n$ \\
elements of the set & copyable states of the Frobenius algebra \\
functions & $*$-cohomomorphisms\\
bijections & $*$-isomorphisms\\
the one element set $\{*\}$ & the one-dimensional Frobenius algebra $\mathbb{C}$
\end{tabular}
\end{center}

\begin{terminology}\label{terminology:commutative}Throughout this paper, we will take pairs of words in this table to be synonymous. In particular, we will denote a set and its corresponding commutative algebra by the same symbol. It will always be clear from context whether we refer to the set $X$ or the algebra $X$.
\end{terminology}

\subsection{Quantum graphs and quantum graph isomorphisms}
\label{sec:qgraphsqisos}
The fundamental idea of noncommutative topology is to generalise the correspondence between spaces and commutative algebras by considering noncommutative algebras in light of Gelfand duality.
\begin{terminology}By analogy with Gelfand duality, we think of a special symmetric dagger Frobenius algebra as being associated to an imagined finite quantum set, just as a special commutative dagger Frobenius algebra is associated to a finite set. We follow Terminology~\ref{terminology:commutative} in denoting both the algebra and its associated imagined quantum set by the same symbol.
\end{terminology}
\noindent
We can endow a quantum set with graph structure in the following way.
\begin{definition}\label{def:quantumgraphsbyadjmats}
A \emph{quantum graph} is a pair $(V_{\Gamma},\Gamma)$ of a special symmetric dagger Frobenius algebra $V_{\Gamma}$ (the \emph{quantum set of vertices}) and a self-adjoint linear map $\Gamma:V_{\Gamma} \to V_{\Gamma}$ (the \emph{quantum adjacency matrix}) satisfying the following equations:%
\begin{calign}\label{eq:propadjacency}
\begin{tz}[zx]
\draw (0,0) to (0,0.5) to [out=135, in=-135,looseness=1.5] node[zxnode=\zxwhite, pos=0.5] {$\Gamma$} (0,2.5) to (0,3);
\draw[string] (0,0.5) to [out=45, in=-45,looseness=1.5] node[zxnode=\zxwhite, pos=0.5] {$\Gamma$} (0,2.5);
\node[zxvertex=\zxwhite,zxup] at (0,0.5){};
\node[zxvertex=\zxwhite,zxdown] at (0,2.5){};
\end{tz}\quad  = \quad 
\begin{tz}[zx]
\draw (0,0) to node[zxnode=\zxwhite, pos=0.5] {$\Gamma$} (0,3);
\end{tz}
&
\begin{tz}[zx]
\draw (1,0) to (1,1.5) to [out=up, in=up, looseness=2.5]node[zxvertex=\zxwhite, pos=0.5] {} (0,1.5) to [out=down, in=down, looseness=2.5]node[zxvertex=\zxwhite, pos=0.5] {} (-1,1.5) to (-1,3);
\node[zxnode=\zxwhite] at (0,1.5) {$\Gamma$};
\end{tz}
\quad = \quad 
\begin{tz}[zx]
\draw (0,0) to (0,3);
\node[zxnode=\zxwhite] at (0,1.5) {$\Gamma$};
\end{tz}
&
\begin{tz}[zx]
\draw (0,0) to (0,0.5) to [out=135, in=-135,looseness=1.5] node[zxnode=\zxwhite, pos=0.5] {$\Gamma$} (0,2.5) to (0,3);
\draw[string] (0,0.5) to [out=45, in=-45,looseness=1.5](0,2.5);
\node[zxvertex=\zxwhite,zxup] at (0,0.5){};
\node[zxvertex=\zxwhite,zxdown] at (0,2.5){};
\end{tz}\quad  = \quad 
\begin{tz}[zx]
\draw (0,0) to (0,3);
\end{tz}
\end{calign}
\end{definition}
\noindent
We will often omit the underlying algebra from the notation and denote quantum graphs $(V_{\Gamma},\Gamma)$ simply by $\Gamma$.

For a classical set $V_{\Gamma}$ (that is, for a special commutative dagger Frobenius algebra), Definition~\ref{def:quantumgraphsbyadjmats} reduces to the definition of an adjacency matrix $\{\Gamma_{v,w}\}_{v,w\in V_{\Gamma}}$; from left to right, the conditions state that $\Gamma_{v,w}^2 = \Gamma_{v,w}$, that $\Gamma_{v,w} = \Gamma_{w,v}$, and that $\Gamma_{v,v}=1$. Therefore, a quantum graph defined on a commutative algebra is precisely a graph in the usual sense.

\begin{remark}\label{rem:literatureqgraph}Notions of quantum graph have been defined elsewhere. In~\cite[Section 7]{Musto2017a}, we prove that:
\begin{itemize}
\item Our quantum graphs coincide with Weaver's finite-dimensional quantum graphs~\cite{Weaver2015}, defined in terms of symmetric and reflexive quantum relations~\cite{Kuperberg2012,Weaver2010}.
\item Our quantum graphs $(\mathrm{Mat}_n, \Gamma)$ on matrix algebras coincide with Duan, Severini and Winter's noncommutative graphs~\cite{Duan2013}.
\end{itemize}
\end{remark}

\begin{definition} \label{def:quantumgraphiso}An \emph{isomorphism} of quantum graphs $\Gamma$ and $\Gamma'$ is a $*$-isomorphism of the underlying Frobenius algebras $f:V_{\Gamma} \to V_{\Gamma'}$ intertwining the corresponding quantum adjacency matrices, i.e. such that $f\Gamma = \Gamma' f$. 
We denote the group of automorphisms of a quantum graph $\Gamma$ by $\Aut(\Gamma)$.
\end{definition}
\noindent
For classical graphs, Definition~\ref{def:quantumgraphiso} coincides with the usual notion of graph isomorphism. In particular, for a classical graph $\Gamma$, the group $\Aut(\Gamma)$ is the usual automorphism group. 

\subsubsection{Quantum isomorphisms}

We now come to the central definition of this work.
\begin{definition}\label{def:quantumfunction} A \textit{quantum isomorphism} between quantum graphs $\Gamma$ and $\Gamma'$ is a pair ($H,P$), where $H$ is a Hilbert space and $P$ is a linear map $P: H \otimes V_{\Gamma}\! \to\! V_{\Gamma'} \otimes H$ satisfying the following equations, where the algebras $V_\Gamma$ and $V_{\Gamma'}$ are depicted as white and grey nodes respectively:
\begin{calign}\label{eq:quantumfunction} \begin{tz}[zx,xscale=-1,every to/.style={out=up, in=down}]
\draw (0,0) to (0,2) to [out=45] (0.75,3);
\draw (0,2) to [out=135] (-0.75,3);
\draw[arrow data={0.2}{>}, arrow data={0.8}{>}] (1.75,0) to [looseness=0.9] node[zxnode=\zxwhite, pos=0.5] {$P$} (-1.75,2.5) to (-1.75,3);
\node[zxvertex=\zxblack, zxup] at (0,2){};
\end{tz}
=
\begin{tz}[zx,xscale=-1,every to/.style={out=up, in=down}]
\draw (0,0) to (0,0.75) to [out=45] (0.75,1.75) to (0.75,3);
\draw (0,0.75) to [out=135] (-0.75,1.75) to (-0.75,3);
\draw[arrow data={0.2}{>}, arrow data={0.9}{>}] (1.75,0) to (1.75,0.75) to  [looseness=1.1, in looseness=0.9] node[zxnode=\zxwhite, pos=0.36] {$P$} node[zxnode=\zxwhite, pos=0.64] {$P$}(-1.75,3);
\node[zxvertex=\zxwhite, zxup] at (0,0.75){};
\end{tz}
&
\begin{tz}[zx,xscale=-1,every to/.style={out=up, in=down}]
\draw (0,0) to (0,2.25);
\draw[arrow data={0.2}{>}, arrow data={0.8}{>}] (1,0) to [looseness=0.9] node[zxnode=\zxwhite, pos=0.5] {$P$} (-1,2.5) to (-1,3);
\node[zxvertex=\zxblack] at (0,2.25){};
\end{tz}
=
\begin{tz}[zx,xscale=-1,every to/.style={out=up, in=down}]
\draw (0,0) to (0,0.75);
\draw[arrow data={0.2}{>}, arrow data={0.9}{>}] (1.,0) to (1.,0.75) to   (-1,3);
\node[zxvertex=\zxwhite, zxup] at (0,0.75){};
\end{tz}
&
\begin{tz}[zx,xscale=-1,every to/.style={out=up, in=down},xscale=0.8]
\draw [arrow data={0.2}{>},arrow data={0.8}{>}]  (0,0) to (2.25,3);
\draw (2.25,0) to node[zxnode=\zxwhite, pos=0.5] {$P^\dagger$} (0,3);
\end{tz}
=~~
\begin{tz}[zx,xscale=-1,xscale=0.6,yscale=-1]
\draw[arrow data={0.5}{<}] (0.25,-0.5) to (0.25,0) to [out=up, in=-135] (1,1);
\draw (1,1) to [out=135, in=right] node[zxvertex=\zxwhite, pos=1]{} (-0.3, 1.7) to [out=left, in=up] (-1.25,1) to (-1.25,-0.5);
\draw[arrow data={0.5}{>}] (1.75,2.5) to (1.75,2) to [out=down, in=45] (1,1);
\draw (1,1) to [out= -45, in= left] node[zxvertex=\zxblack, pos=1] {} (2.3,0.3) to [out=right, in=down] (3.25,1) to (3.25,2.5);
\node [zxnode=\zxwhite] at (1,1) {$P$};
\end{tz} 
\end{calign}
\begin{calign}\label{eq:quantumfunction2}
\begin{tz}[zx,xscale=-1,every to/.style={out=up, in=down},scale=-1]
\draw (0,0) to (0,2) to [out=45] (0.75,3);
\draw (0,2) to [out=135] (-0.75,3);
\draw[arrow data={0.2}{<}, arrow data={0.8}{<}] (1.75,0) to [looseness=0.9] node[zxnode=\zxwhite, pos=0.5] {$P$} (-1.75,2.5) to (-1.75,3);
\node[zxvertex=\zxwhite, zxdown] at (0,2){};
\end{tz}
=
\begin{tz}[zx,xscale=-1,every to/.style={out=up, in=down},scale=-1]
\draw (0,0) to (0,0.75) to [out=45] (0.75,1.75) to (0.75,3);
\draw (0,0.75) to [out=135] (-0.75,1.75) to (-0.75,3);
\draw[arrow data={0.2}{<}, arrow data={0.9}{<}] (1.75,0) to (1.75,0.75) to  [looseness=1.1, in looseness=0.9] node[zxnode=\zxwhite, pos=0.36] {$P$} node[zxnode=\zxwhite, pos=0.64] {$P$}(-1.75,3);
\node[zxvertex=\zxblack, zxdown] at (0,0.75){};
\end{tz}
&
\begin{tz}[zx,xscale=-1,every to/.style={out=up, in=down},scale=-1]
\draw (0,0) to (0,2.25);
\draw[arrow data={0.2}{<}, arrow data={0.8}{<}] (1,0) to [looseness=0.9] node[zxnode=\zxwhite, pos=0.5] {$P$} (-1,2.5) to (-1,3);
\node[zxvertex=\zxwhite] at (0,2.25){};
\end{tz}
=
\begin{tz}[zx,xscale=-1,every to/.style={out=up, in=down},scale=-1]
\draw (0,0) to (0,0.75);
\draw[arrow data={0.2}{<}, arrow data={0.9}{<}] (1.,0) to (1.,0.75) to   (-1,3);
\node[zxvertex=\zxblack] at (0,0.75){};
\end{tz}
&
\begin{tz}[zx,xscale=-1,every to/.style={out=up, in=down},scale=1]
\draw (0,0) to (0,3);
\draw[arrow data={0.2}{>}, arrow data={0.9}{>}] (1.75,0) to (1.75,0.75) to  [looseness=1.1, in looseness=0.9]  node[zxnode=\zxwhite, pos=0.5] {$P$}(-1.75,3);
\node[zxnode=\zxwhite] at (0,0.9){$\Gamma$};
\end{tz}
=
\begin{tz}[zx,xscale=-1,every to/.style={out=up, in=down},scale=1]
\draw (0,0) to (0,3);
\draw[arrow data={0.2}{>}, arrow data={0.8}{>}] (1.75,0) to [looseness=0.9] node[zxnode=\zxwhite, pos=0.5] {$P$} (-1.75,2.5) to (-1.75,3);
\node[zxnode=\zxwhite] at (0,2.35) {$\Gamma'$};
\end{tz}
\end{calign}
The \emph{dimension} of a quantum isomorphism is defined as the dimension of the underlying Hilbert space $H$. 
\end{definition}
\begin{notation} To clearly distinguish between the wires corresponding to the Hilbert space $H$ and the wires corresponding to the algebras $V_{\Gamma}$ and $V_{\Gamma'}$, we will always draw the Hilbert space wire with an orientation and leave the algebra wires unoriented.
\end{notation}
\begin{remark} There are classical and quantum isomorphisms between classical graphs, and classical (see Definition~\ref{def:quantumgraphiso}) and quantum isomorphisms between quantum graphs. 
\end{remark}
\begin{remark}
A one-dimensional quantum isomorphism between quantum graphs is an ordinary isomorphism (see Definition~\ref{def:quantumgraphiso}). In particular, a one-dimensional quantum isomorphism between classical graphs is a graph isomorphism. 
\end{remark}
\noindent 
A quantum isomorphism $(H,P): \Gamma\to \Gamma'$ between classical graphs with adjacency matrices $\{\Gamma_{v,v'}\}_{v,v'\in V_{\Gamma}}$ and $\{\Gamma'_{w,w'}\}_{w,w'\in V_{\Gamma'}}$ can equivalently be expressed as a family of projectors $\{P_{v,w}\}_{v\in V_{\Gamma}, w \in V_{\Gamma'}}$ on $H$ such that the following holds for all vertices $v,v_1,v_2\in V_{\Gamma}$ and $w,w_1,w_2\in V_{\Gamma'}$:
\begin{calign}\label{eq:PPM1}P_{v,w_1} P_{v,w_2} = \delta_{w_1,w_2} P_{v,w_1} 
& 
\sum_{w\in V_{\Gamma'}} P_{v,w} = \mathbbm{1}_H
\\[-2pt]
P_{v_1,w} P_{v_2,w} = \delta_{v_1,v_2} P_{v_1,w} 
& 
\sum_{v\in V_{\Gamma}} P_{v,w} = \mathbbm{1}_H
\end{calign}\vspace{-15pt}
\begin{calign} \sum_{v' \in V_{\Gamma} }\Gamma_{v,v'}P_{v',w} ~=~ \sum_{w'\in V_{\Gamma'}} P_{v,w'} \Gamma'_{w',w}
\end{calign}
We will refer to such families of projectors as \emph{projective permutation matrices}~\cite{Atserias2016}.
Given a quantum isomorphism $(H,P):\Gamma \to \Gamma'$ between classical graphs, the corresponding projective permutation matrix can be obtained as follows. A classical set $X$ corresponds to a special commutative dagger Frobenius algebra (Example~\ref{exm:Frob}); the elements of $X$ form a basis of copyable elements of this algebra. Using this basis, the projectors $P_{x,y}$ can be obtained as follows:
\begin{equation}\label{eq:componentPPM}
\begin{tz}[zx]
\draw[arrow data={0.2}{>}, arrow data={0.8}{>}] (0,0) to (0,3);
\node[zxnode=\zxwhite] at (0,1.5) {$P_{x,y}$};
\end{tz}
~~:=
\begin{tz}[zx,xscale=-1,every to/.style={out=up, in=down},xscale=-0.8]
\draw [arrow data={0.2}{>},arrow data={0.8}{>}]  (0,0) to (2.25,3);
\draw (1.75,0.5) to node[zxnode=\zxwhite, pos=0.5] {$P$} (0.5,2.5);
\node[zxnode=\zxwhite] at (1.75,0.5){$x$};
\node[zxnode=\zxwhite] at (0.5,2.5){$y$};
\end{tz}
\end{equation}

\noindent
Like ordinary isomorphisms, quantum isomorphisms $(H,P):\Gamma\to \Gamma'$ can only exist between quantum graphs with quantum vertex sets of equal dimension.

\begin{proposition}[{\cite[Proposition 4.17]{Musto2017a}}]\label{prop:dimensionpreserved}If there is a quantum isomorphism $(H,P):\Gamma \to \Gamma'$, then $\dim(V_{\Gamma}) = \dim(V_{\Gamma'})$. In particular, quantum isomorphisms can only exist between classical graphs with an equal number of vertices. 
\end{proposition}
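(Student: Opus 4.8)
The plan is to extract the dimension of the quantum vertex set as a single scalar endomorphism of $H$, and to compute that endomorphism in two different ways: once using the axiom that $P$ preserves units, and once using the axiom that $P$ preserves counits. The two computations will yield $\dim(V_{\Gamma'})\,\id_H$ and $\dim(V_{\Gamma})\,\id_H$ respectively, forcing the dimensions to agree.

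First I would record the elementary fact that, for any special symmetric dagger Frobenius algebra $V$ in $\Hilb$ with unit $u_V\colon I\to V$ and counit $\epsilon_V=u_V^\dagger$, the scalar $\epsilon_V\circ u_V\colon I\to I$ equals $\dim(V)$. This is the categorical dimension of $V$ with respect to its canonical Frobenius self-duality~\eqref{eq:cupcapfrob}: by specialness the middle multiplication--comultiplication cancels, so the closed loop reduces to $\epsilon_V\circ u_V$, and for special \emph{symmetric} algebras this invariant coincides with the Hilbert-space dimension. Concretely, for the commutative algebra of a finite set $X$ (Example~\ref{exm:Frob}) one has $\epsilon_V\circ u_V=\langle u_V(1),u_V(1)\rangle=\|\sum_{x\in X}\ket{x}\|^2=|X|$, and for the endomorphism algebra~\eqref{eq:endomorphismalgebra} of an $n$-dimensional space the chosen normalisation (Remark~\ref{rem:normalisation}) makes it evaluate to $n^2=\dim(V)$.

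Next I would introduce the endomorphism
\[
Q\ :=\ (\epsilon_{\Gamma'}\otimes\id_H)\circ P\circ(\id_H\otimes u_\Gamma)\ \colon\ H\longrightarrow H .
\]
Applying the unit-preservation equation among~\eqref{eq:quantumfunction2} — which reads $P\circ(\id_H\otimes u_\Gamma)=u_{\Gamma'}\otimes\id_H$ — the map $Q$ collapses to $(\epsilon_{\Gamma'}\circ u_{\Gamma'})\,\id_H=\dim(V_{\Gamma'})\,\id_H$. Applying instead the counit-preservation equation among~\eqref{eq:quantumfunction} — which reads $(\epsilon_{\Gamma'}\otimes\id_H)\circ P=\id_H\otimes\epsilon_\Gamma$ — the very same map collapses to $(\epsilon_{\Gamma}\circ u_{\Gamma})\,\id_H=\dim(V_{\Gamma})\,\id_H$. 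Since the Hilbert space $H$ of a quantum isomorphism is non-zero, equating these two expressions for $Q$ as operators on $H$ forces $\dim(V_{\Gamma})=\dim(V_{\Gamma'})$. For classical graphs $\dim(V_{\Gamma})$ is exactly the number of vertices under the Gelfand correspondence of Example~\ref{exm:Frob}, which yields the final clause of the statement.

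The argument is almost entirely formal: everything beyond the identity $\epsilon_V\circ u_V=\dim(V)$ is a two-line diagrammatic manipulation, namely the observation that capping the $V_{\Gamma}$-input of $P$ with a unit and its $V_{\Gamma'}$-output with a counit produces a diagram that can be simplified from either end. I therefore expect the only (mild) obstacles to be, first, justifying that identity cleanly — equivalently, confirming that the normalisation of Remark~\ref{rem:normalisation} makes the categorical dimension agree with the Hilbert-space dimension — and, second, reading off the precise algebraic form of the unit- and counit-preservation equations from the oriented string diagrams in~\eqref{eq:quantumfunction} and~\eqref{eq:quantumfunction2}, where the orientations and the white/grey colouring of the $V_{\Gamma}$- and $V_{\Gamma'}$-nodes must be tracked carefully.
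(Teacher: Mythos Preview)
The paper does not give its own proof of this proposition; it is quoted as a background result from the authors' earlier paper~\cite[Proposition~4.17]{Musto2017a}. So there is no in-paper argument to compare against.

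Your proof is correct and is exactly the natural one. The two ingredients are (i) $\epsilon_V\circ u_V=\dim(V)$ for any special dagger Frobenius algebra $V$ in $\Hilb$, which follows because the Frobenius cup/cap~\eqref{eq:cupcapfrob} exhibit $V$ as self-dual, the closed loop computes the categorical dimension, and specialness collapses $\epsilon\circ m\circ m^\dagger\circ u$ to $\epsilon\circ u$; and (ii) the unit- and counit-preservation axioms among~\eqref{eq:quantumfunction} and~\eqref{eq:quantumfunction2}, which let you simplify $(\epsilon_{\Gamma'}\otimes\id_H)\circ P\circ(\id_H\otimes u_\Gamma)$ from either end. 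Symmetry of the Frobenius structure is not actually needed for (i); specialness alone suffices, since in $\Hilb$ the categorical dimension with respect to any duality agrees with the Hilbert-space dimension. Your caveat about $H\neq 0$ is the only genuine side condition, and it is implicitly assumed throughout the paper (a zero-dimensional ``quantum isomorphism'' would exist between any two graphs and carry no content).
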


\subsubsection{The $2$-category $\QGraphIso$}
Quantum graphs and quantum isomorphisms can be organised into a $2$-category. The $2$-morphisms of this $2$-category are defined as follows.
\begin{definition}\label{def:intertwiner} An \textit{intertwiner} of quantum isomorphisms $(H,P) \to (H', P')$ is a linear map $f:H\to H'$ such that the following holds:%
\begin{calign}
\begin{tz}[zx,xscale=-1,every to/.style={out=up, in=down}]
\draw (0,0) to (0,1) to (2,3);
\draw[arrow data={0.35}{>}] (2,0) to node[zxnode=\zxwhite, pos=0.9] {$f$} (2,1);
\draw[string,arrow data={0.9}{>},arrow data={0.26}{>}](2,1) to node[zxnode=\zxwhite, pos=0.5]{$P'$}  (0,3);
\end{tz}
\quad = \quad
\begin{tz}[zx,xscale=-1,every to/.style={out=up, in=down},scale=-1]
\draw (0,0) to (0,1) to (2,3);
\draw[arrow data={0.35}{<}] (2,0) to node[zxnode=\zxwhite, pos=0.9] {$f$} (2,1);
\draw[string,arrow data={0.9}{<},arrow data={0.26}{<}](2,1) to node[zxnode=\zxwhite, pos=0.5]{$P$}  (0,3);
\end{tz}
\end{calign}
\end{definition}

\begin{definition}[{\cite[Definition 3.18 and Theorem 3.20]{Musto2017a}}]\label{def:2catqgraph} The dagger 2-category $\QGraphIso$ is defined as follows:
\begin{itemize}
\item \textbf{objects} are quantum graphs $\Gamma, \Gamma'$, ...;
\item \textbf{1-morphisms} $\Gamma\to \Gamma'$ are quantum isomorphisms $(H,P):\Gamma \to \Gamma'$;
\item \textbf{2-morphisms} $(H,P)\to  (\!H', P')$ are intertwiners of quantum isomorphisms.
\end{itemize}
The composition of two quantum isomorphisms $(H,P):A\to  B$ and $(H', Q): B \to C$ is a quantum isomorphism $(H'\otimes H, Q\circ P)$ defined as follows:
\begin{calign}
\begin{tz}[zx,xscale=-1, every to/.style={out=up, in=down}] \label{eq:1composition}
\draw[arrow data={0.15}{>}, arrow data={0.8}{>}] (1.575,0) to (0.325,3.5);
\draw[string,arrow data={0.18}{>}, arrow data={0.85}{>}] (2.175,0) to (0.925,3.5);
\draw (0,0) to node[zxvertex=\zxwhite, pos=0.5] {$Q\circ P$} (2.5,3.5);
\node[dimension,left] at (2.175,0) {$H'\otimes H$};
\end{tz}
\! := \quad
\begin{tz}[zx,xscale=-1, every to/.style={out=up, in=down}]
\draw (0,0) to (2.5,3.5);
\draw[arrow data={0.15}{>}, arrow data={0.8}{>}] (1.25,0) to node[zxnode=\zxwhite, pos=0.4] {$P$} (0,3.5);
\draw[string,arrow data={0.2}{>}, arrow data={0.9}{>}] (2.5,0) to node[zxnode=\zxwhite, pos=0.58] {$Q$} (1.25,3.5);
\node[dimension,left] at (1.25,0) {$H$};
\node[dimension, left] at (2.5,0) {$H'$};
\end{tz}
\end{calign}
Vertical and horizontal composition of 2-morphisms is defined as the ordinary composition and tensor product of linear maps, respectively. The $\dagger$-adjoint of a 2-morphism is defined as the Hilbert space adjoint of the underlying linear map.
\end{definition}
\noindent
In~\cite{Musto2017a}, we define a $2$-category $\QGraph$ of quantum graphs and quantum \emph{homomorphisms}. For the purpose of this work, it suffices to focus on quantum isomorphisms.

This $2$-category $\QGraphIso$ has the advantage that every $1$-morphism is dualisable.
\begin{theorem}[{\cite[Theorem 4.8]{Musto2017a}}]\label{thm:dualisable}
Every quantum isomorphism $(H,P): \Gamma \to \Gamma'$ is dualisable in $\QGraphIso$. In particular, this means that there is a quantum isomorphism $(H^*, \overline{P}): \Gamma' \to \Gamma$, whose underlying linear map $\overline{P}:H^* \otimes V_{\Gamma'}  \to   V_{\Gamma} \otimes H^*$ is defined by equation~\eqref{eq:daggerdual} and fulfils equations~\eqref{eq:rightdual} and~\eqref{eq:leftdual}.
\begin{equation}\label{eq:daggerdual}
\begin{tz}[zx, xscale=-1,every to/.style={out=up, in=down},xscale=-0.8]
\draw [arrow data={0.2}{<},arrow data={0.8}{<}]  (0,0) to (2.25,3);
\draw (2.25,0) to node[zxnode=\zxwhite, pos=0.5] {$\overline{P}$} (0,3);
\end{tz}
~~~:=~~~
\begin{tz}[zx, xscale=-1,xscale=0.6,xscale=-1]
\draw(0.25,-0.5) to (0.25,0) to [out=up, in=-135] (1,1);
\draw [arrow data={0.34}{>}]  (1,1) to [out=135, in=right]  (-0.3, 1.7) to [out=left, in=up] (-1.25,1) to (-1.25,-0.5);
\draw(1.75,2.5) to (1.75,2) to [out=down, in=45] (1,1);
\draw[arrow data={0.32}{<}] (1,1) to [out= -45, in= left]   (2.3,0.3) to [out=right, in=down] (3.25,1) to (3.25,2.5);
\node [zxnode=\zxwhite] at (1,1) {$P^\dagger$};
\end{tz} 
~~~=~~~
\begin{tz}[zx, xscale=-1,xscale=0.6,xscale=-1]
\draw[arrow data={0.68}{>}]  (4.75,2.5) to (4.75, 0.5) to [out=down, in=right] (2.65,-0.6) to [out=left, in=down] (0.55,0.5);
\draw (0.55,0.5) to [out=up, in=-135] (1,1);
\draw  (1,1) to [out=135, in=right]  (-0.3, 1.7) to [out=left, in=up]  node[zxvertex=\zxblack, pos=0] {}(-1.25,1) to (-1.25,-0.5);
\draw[string, arrow data={0.68}{<}] (-2.75, -0.5) to (-2.75,1.5) to [out=up, in=left] (-0.65, 2.6) to [out=right, in=up] (1.45,1.5);
\draw  (1.45,1.5) to [out=down, in=45] (1,1);
\draw (1,1) to [out= -45, in= left]  node[zxvertex=\zxwhite, pos=1] {}  (2.3,0.3) to [out=right, in=down] (3.25,1) to (3.25,2.5);
\node [zxnode=\zxwhite] at (1,1) {$P$};
\end{tz} 
\end{equation}
\def\scl{0.75}
\begin{calign}\label{eq:rightdual}
\begin{tz}[zx, xscale=-1,scale=\scl]
\clip (-2.05,-0.05) rectangle (4.05,4.05);
\draw[arrow data={0.1}{>},arrow data={0.499}{>}, arrow data={0.9}{>}] (0,0) to [out=up, in=up, looseness=6.] (2,0);
\draw (-2,0) to [out=up, in=down]  node[zxnode=\zxwhite, pos=0.4] {$P$} node[zxnode=\zxwhite, pos=0.6] {$\overline{P}$}(4,4);
\end{tz}
=
\begin{tz}[zx, xscale=-1,scale=\scl]
\clip (-2.05,-0.05) rectangle (4.05,4.05);
\draw[arrow data={0.2}{>}, arrow data={0.8}{>}] (0,0) to [out=up, in=up, looseness=2.] (2,0);
\draw (-2,0) to [out=up, in=down]  (4,4);
\end{tz}
&
\begin{tz}[zx, xscale=-1,scale=-1,scale=\scl]
\clip (-2.05,-0.05) rectangle (4.05,4.05);
\draw[arrow data={0.1}{<},arrow data={0.499}{<}, arrow data={0.9}{<}] (0,0) to [out=up, in=up, looseness=6.] (2,0);
\draw (-2,0) to [out=up, in=down]  node[zxnode=\zxwhite, pos=0.4] {$P$} node[zxnode=\zxwhite, pos=0.6] {$\overline{P}$}(4,4);
\end{tz}
=
\begin{tz}[zx, xscale=-1,scale=-1,scale=\scl]
\clip (-2.05,-0.05) rectangle (4.05,4.05);
\draw[arrow data={0.2}{<}, arrow data={0.8}{<}] (0,0) to [out=up, in=up, looseness=2.] (2,0);
\draw (-2,0) to [out=up, in=down]  (4,4);
\end{tz}
\end{calign}
\begin{calign}\label{eq:leftdual}
\begin{tz}[zx, xscale=-1,master,scale=-1,scale=\scl]
\clip (-2.05,-0.05) rectangle (4.05,4.05);
\draw[arrow data={0.1}{>},arrow data={0.499}{>}, arrow data={0.9}{>}] (0,0) to [out=up, in=up, looseness=6.] (2,0);
\draw (-2,0) to [out=up, in=down]  node[zxnode=\zxwhite, pos=0.4] {$\overline{P}$} node[zxnode=\zxwhite, pos=0.6] {$P$}(4,4);
\end{tz}
=
\begin{tz}[zx, xscale=-1,scale=-1,scale=\scl]
\clip (-2.05,-0.05) rectangle (4.05,4.05);
\draw[arrow data={0.2}{>}, arrow data={0.8}{>}] (0,0) to [out=up, in=up, looseness=2.] (2,0);
\draw (-2,0) to [out=up, in=down]  (4,4);
\end{tz}
&
\begin{tz}[zx, xscale=-1,scale=1,scale=\scl]
\clip (-2.05,-0.05) rectangle (4.05,4.05);
\draw[arrow data={0.1}{<},arrow data={0.499}{<}, arrow data={0.9}{<}] (0,0) to [out=up, in=up, looseness=6.] (2,0);
\draw (-2,0) to [out=up, in=down]  node[zxnode=\zxwhite, pos=0.4] {$\overline{P}$} node[zxnode=\zxwhite, pos=0.6] {$P$}(4,4);
\end{tz}
=
\begin{tz}[zx, xscale=-1,scale=1,scale=\scl]
\clip (-2.05,-0.05) rectangle (4.05,4.05);
\draw[arrow data={0.2}{<}, arrow data={0.8}{<}] (0,0) to [out=up, in=up, looseness=2.] (2,0);
\draw (-2,0) to [out=up, in=down]  (4,4);
\end{tz}
\end{calign}
In particular, the linear map $P:H \otimes V_{\Gamma} \to V_{\Gamma'} \otimes H$ is unitary.
\end{theorem}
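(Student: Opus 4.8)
The plan is to work entirely in the graphical calculus and to verify the three things the statement asserts in turn: that the map $\overline{P}$ defined by~\eqref{eq:daggerdual} really is a quantum isomorphism $\Gamma' \to \Gamma$, that the dual-pairing equations~\eqref{eq:rightdual} and~\eqref{eq:leftdual} hold, and finally that $P$ is unitary. The starting point is that each of $V_\Gamma$ and $V_{\Gamma'}$ is canonically self-dual through the cups and caps built from its Frobenius structure in~\eqref{eq:cupcapfrob}, which by construction satisfy the snake equations~\eqref{eq:snake}. Consequently the algebra wires may be bent freely, and the definition~\eqref{eq:daggerdual} is exactly $P^\dagger$ with its $H$-leg bent into an $H^*$-leg and its two algebra legs rerouted along these self-dualities. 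Note also that the second equality in~\eqref{eq:daggerdual}, rewriting $\overline{P}$ in terms of $P$ rather than $P^\dagger$, is precisely the content of the third defining equation of $P$ in~\eqref{eq:quantumfunction} (and its mirror in~\eqref{eq:quantumfunction2}), so this rewriting is available for free.

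First I would check that $(H^*,\overline{P})$ satisfies the six axioms~\eqref{eq:quantumfunction}--\eqref{eq:quantumfunction2}, now with the roles of $\Gamma$ and $\Gamma'$ interchanged and the oriented wire carrying $H^*$. For each axiom the recipe is uniform: substitute the definition~\eqref{eq:daggerdual}, slide the relevant multiplication, unit, or adjacency-matrix node around the cups and caps using the Frobenius relations~\eqref{eq:Frobenius}, apply the corresponding axiom already known for $P$, and then unbend. Symmetry of the Frobenius algebras, equation~(\ref{eq:special}b), is what guarantees that pulling a (co)multiplication or (co)unit past a cup or cap reproduces exactly the node required on the other side, so that the $P$-axiom can be invoked cleanly.

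Next I would establish the duality equations~\eqref{eq:rightdual} and~\eqref{eq:leftdual}. Substituting~\eqref{eq:daggerdual} into the composite of $P$ with $\overline{P}$ and invoking the multiplicativity axiom of $P$ (the first equation of~\eqref{eq:quantumfunction}) lets the two copies of $P$ merge into one; specialness, equation~(\ref{eq:special}a), then collapses the resulting closed algebra loop, leaving precisely the straight wire together with the cup-cap pattern on the right-hand sides. The two equations~\eqref{eq:rightdual} and~\eqref{eq:leftdual} are handled symmetrically, one using the multiplication and the other the comultiplication half of the Frobenius structure.

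Finally, unitarity of $P$ follows from the snake equations for the duality just verified. Since $\overline{P}$ is by construction $P^\dagger$ with its legs bent, unfolding its definition and straightening both the Hilbert-space duality~\eqref{eq:snake} and the algebra self-dualities turns the pairing composites in~\eqref{eq:rightdual} and~\eqref{eq:leftdual} into $P^\dagger P$ and $P P^\dagger$, while the right-hand sides straighten to the identities on $H\otimes V_\Gamma$ and $V_{\Gamma'}\otimes H$; hence $P^\dagger P = \id$ and $P P^\dagger = \id$, so $P$ is unitary. The main obstacle is the first part: the axiom-by-axiom verification for $\overline{P}$ is the longest computation, and the delicate point throughout is the bookkeeping of which algebra's cups and caps appear with which orientation, together with the correct use of symmetry and specialness at each step. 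Once the duality is in place, the unitarity conclusion is immediate.
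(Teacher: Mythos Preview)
The paper does not actually prove this theorem; it is quoted from the authors' earlier work \cite[Theorem 4.8]{Musto2017a} as background (it sits in Section~2.2.2, and no proof is given). So there is no in-paper argument to compare against.

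That said, your outline is sound and is essentially the natural direct proof. A couple of remarks on organisation. First, the duality equations~\eqref{eq:rightdual}--\eqref{eq:leftdual} and unitarity do not require knowing in advance that $\overline{P}$ satisfies the six quantum-isomorphism axioms; you can prove~\eqref{eq:rightdual}--\eqref{eq:leftdual} directly from the definition~\eqref{eq:daggerdual} together with the (co)multiplication compatibility of $P$ and specialness, exactly as you describe, and then read off unitarity by unbending. Once $P$ is known to be unitary, the six axioms for $\overline{P}$ follow mechanically by conjugating and bending the axioms for $P$, which is a cleaner order than doing the axiom-by-axiom check first. Second, the paper does make partial contact with this circle of ideas later: Proposition~\ref{prop:quantumcondition} records the converse direction (unitarity implies the third equation of~\eqref{eq:quantumfunction}), and its proof cites the present theorem for the forward direction, confirming that the result is being used rather than re-proved here.
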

\begin{proposition}[{\cite[Proposition 4.2]{Musto2017a}}]\label{prop:equivalenceQGraph}
Equivalences in $\QGraphIso$ are ordinary isomorphisms as in Definition~\ref{def:quantumgraphiso}.
\end{proposition}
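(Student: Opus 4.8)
The plan is to prove the two inclusions separately, using the \emph{dimension} of the underlying Hilbert space (Definition~\ref{def:quantumfunction}) as the key invariant. The crucial observation is that the identity $1$-morphism $\id_\Gamma$ is the one-dimensional quantum isomorphism $(\mathbb{C}, \id_{V_\Gamma})$: by Definition~\ref{def:2catqgraph} composition tensors the underlying Hilbert spaces, and $\mathbb{C}$ is the tensor unit, so this is the only candidate for a unit. Everything then reduces to tracking how these dimensions behave.

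First I would show that every ordinary isomorphism is an equivalence. An ordinary isomorphism $f:\Gamma\to\Gamma'$ (Definition~\ref{def:quantumgraphiso}) is a one-dimensional quantum isomorphism whose underlying map is a $*$-isomorphism of Frobenius algebras intertwining the adjacency matrices; in particular it is unitary, so its inverse $f^{-1}=f^\dagger$ is again an ordinary isomorphism, viewed as a one-dimensional quantum isomorphism $(\mathbb{C}, f^{-1}):\Gamma'\to\Gamma$. By Definition~\ref{def:2catqgraph} the two composites have underlying Hilbert space $\mathbb{C}\otimes\mathbb{C}\cong\mathbb{C}$ and underlying maps $f^{-1}f=\id_{V_\Gamma}$ and $f f^{-1}=\id_{V_{\Gamma'}}$; these composites are therefore canonically $2$-isomorphic (via the unitor $\mathbb{C}\otimes\mathbb{C}\to\mathbb{C}$, which is an invertible intertwiner) to the respective identity $1$-morphisms, exhibiting $f$ as an equivalence.

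For the converse, suppose $(H,P):\Gamma\to\Gamma'$ is an equivalence, with pseudo-inverse $(H',P'):\Gamma'\to\Gamma$ and an invertible $2$-morphism $\alpha$ from the composite $(H',P')\circ(H,P)$ to $\id_\Gamma$. By Definition~\ref{def:2catqgraph}, $\alpha$ is an intertwiner, i.e.\ a linear map $H'\otimes H\to\mathbb{C}$, and since vertical composition of $2$-morphisms is ordinary composition of linear maps, invertibility of $\alpha$ as a $2$-morphism forces it to be a linear isomorphism. Hence $\dim(H)\cdot\dim(H')=\dim(H'\otimes H)=\dim(\mathbb{C})=1$, so $\dim(H)=1$. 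Thus $(H,P)$ is a one-dimensional quantum isomorphism, which is exactly an ordinary isomorphism by the remark that a one-dimensional quantum isomorphism is an ordinary isomorphism. The whole argument is essentially a dimension count, so no step is genuinely hard; the only point requiring care is the identification of the identity $1$-morphism as one-dimensional, and the observation that an invertible $2$-morphism forces equality of the underlying Hilbert-space dimensions, which follows immediately once one unwinds that $2$-morphisms are intertwiners composed as plain linear maps.
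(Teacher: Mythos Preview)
Your proof is correct. The paper does not actually prove this proposition here; it simply cites \cite[Proposition~4.2]{Musto2017a}. Your dimension-counting argument---observing that the identity $1$-morphism on $\Gamma$ is $(\mathbb{C},\id_{V_\Gamma})$, that $2$-morphisms are linear maps between the underlying Hilbert spaces composed as such, and hence that an invertible $2$-morphism to the identity forces $\dim(H'\otimes H)=1$---is exactly the natural approach and is essentially the argument given in the cited reference. The only cosmetic point is that a one-dimensional $H$ need not literally equal $\mathbb{C}$, so strictly speaking the equivalence is $2$-isomorphic to an ordinary isomorphism rather than equal to one; this is the intended reading of the statement and is implicit in the paper's remark that one-dimensional quantum isomorphisms are ordinary isomorphisms.
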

\ignore{
\noindent
We also remark that endomorphism categories of classical graphs in $\QGraphIso$ have appeared in the theory of compact quantum groups.

}



\subsection{The monoidal dagger category $\QAut(\Gamma)$}
\label{sec:catqaut}
For a quantum graph $\Gamma$, we write $\QAut(\Gamma)$ for the monoidal dagger category $\QGraphIso(\Gamma,\Gamma)$ of quantum automorphisms of $\Gamma$. 
For classical graphs $\Gamma$, the category $\QAut(\Gamma)$ (or rather the Hopf $C^*$-algebra for which it is the category of finite-dimensional representations) has been studied in the context of compact quantum groups~\cite{Banica2005,Banica2009,Banica2007_2,Banica2007_3,Bichon2003,Banica2008}.
\begin{proposition}[{\cite[Proposition 5.19]{Musto2017a}}]\label{prop:Banica} Let $\Gamma$ be a classical graph. The category $\QAut(\Gamma)$ is the category of finite-dimensional representations of Banica's quantum automorphism algebra $A(\Gamma)$ of the graph $\Gamma$ (see e.g.~\cite[Definition 2.1]{Banica2007_2}).
\end{proposition}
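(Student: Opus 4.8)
The plan is to exhibit an equivalence of monoidal dagger categories between $\QAut(\Gamma) = \QGraphIso(\Gamma,\Gamma)$ and the category $\Rep(A(\Gamma))$ of finite-dimensional (co)representations of Banica's Hopf $C^*$-algebra, and the first move is to pass from string diagrams to components. Since $V_\Gamma$ is classical, the Gelfand-dual identification of finite sets with commutative Frobenius algebras lets me expand an object $(H,P)$ of $\QAut(\Gamma)$ into its matrix entries $P_{vw}\in\End(H)$ exactly as in~\eqref{eq:componentPPM}. I would then check that the defining equations~\eqref{eq:quantumfunction} and~\eqref{eq:quantumfunction2} are equivalent to the statement that $\{P_{vw}\}_{v,w\in V_\Gamma}$ is a \emph{projective permutation matrix} in the sense of~\eqref{eq:PPM1} whose matrix $(P_{vw})\in\Mat_n(\End(H))$ moreover commutes with the adjacency matrix of $\Gamma$; that is, $(H,P)$ is precisely a magic unitary valued in $\End(H)$ commuting with $\Gamma$. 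In the same way, Definition~\ref{def:intertwiner} unwinds to the condition that an intertwiner $f\colon H\to H'$ satisfies $f\,P_{vw}=P'_{vw}\,f$ for all $v,w$.

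Next I would recall Banica's presentation: $A(\Gamma)$ is the universal unital $C^*$-algebra generated by the entries $u_{vw}$ of a magic unitary subject to commutation with $\Gamma$, equipped with the Hopf structure $\Delta(u_{vw})=\sum_k u_{vk}\otimes u_{kw}$ and $\epsilon(u_{vw})=\delta_{vw}$. Its finite-dimensional corepresentations are generated under tensor product, direct sum, subobjects and duals by the fundamental corepresentation $u$, which lives on $V_\Gamma$. The correspondence I want is then that a magic unitary $(P_{vw})$ over $\End(H)$ commuting with $\Gamma$ is exactly the data of a finite-dimensional corepresentation of $A(\Gamma)$ on $H$, the fundamental one being recovered from $(V_\Gamma,\id)$. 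I would define the comparison functor $\QAut(\Gamma)\to\Rep(A(\Gamma))$ by this assignment on objects and by $f\mapsto f$ on morphisms, and the component form of the intertwiner condition shows immediately that it is fully faithful.

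The remaining checks concern the monoidal and dagger structure and essential surjectivity. For monoidality, the composite $(H'\otimes H, Q\circ P)$ of~\eqref{eq:1composition} has components $(Q\circ P)_{vw}=\sum_k Q_{vk}\otimes P_{kw}$, which is exactly the multiplication of matrix coefficients dictated by $\Delta$, so the functor carries composition of quantum automorphisms to the tensor product of corepresentations, while the monoidal unit $(\C,\id_{V_\Gamma})$ goes to the trivial corepresentation. The dagger is preserved because adjoints of intertwiners are Hilbert-space adjoints on both sides. Essential surjectivity follows since every finite-dimensional corepresentation of $A(\Gamma)$ is built from the fundamental one, whose defining magic unitary commutes with $\Gamma$ and therefore arises from a genuine quantum automorphism.

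I expect the real obstacle to be this object matching, because a priori a ``representation of the quantum group'' is a \emph{comodule} over the function algebra $A(\Gamma)$ rather than merely a $*$-representation of the underlying $C^*$-algebra, so one must be careful both that the universal property is applied so as to produce corepresentations and that the two monoidal structures agree on the nose. The cleanest way to discharge this is to read the whole statement as an instance of Woronowicz--Tannaka--Krein reconstruction: $\QAut(\Gamma)$ is a rigid $C^*$-tensor category --- rigidity is exactly the dualisability of every $1$-morphism from Theorem~\ref{thm:dualisable}, and the $C^*$/dagger structure comes from the dagger $2$-category $\QGraphIso$ --- equipped with the fibre functor $(H,P)\mapsto H$ into $\Hilb$. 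Reconstruction then yields a compact quantum group whose representation category is $\QAut(\Gamma)$, and I would finish by identifying its Hopf $C^*$-algebra with $A(\Gamma)$: the matrix coefficients of the fundamental object $(V_\Gamma,\,\cdot\,)$ satisfy precisely the magic-unitary and adjacency relations isolated in the first step, and by the universality of those relations they generate $A(\Gamma)$ with no further relations.
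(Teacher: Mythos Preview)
The paper does not prove this proposition here; it is cited from~\cite[Proposition~5.19]{Musto2017a}, so there is no in-paper argument to compare against.

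Your direct approach in the first two paragraphs is essentially correct: unpack $(H,P)$ into its projective permutation matrix $(P_{vw})$, identify this via the universal property of $A(\Gamma)$ with a $*$-homomorphism $A(\Gamma)\to\End(H)$, and verify full faithfulness (the $u_{vw}$ generate) and monoidality (the comultiplication matches~\eqref{eq:1composition}). The confusion is terminological but consequential: you repeatedly write ``corepresentation'' where you mean $*$-representation (module). The paper is explicit about this --- see the text around~\eqref{eq:sweedler}, where a simple dagger Frobenius monoid is a matrix algebra equipped with a $*$-representation $A(\Gamma)\to\End(\Mat_n(\C))$, and the footnote in Section~\ref{sec:catqaut} which writes $\QAut(\Gamma)=\Rep_{\mathrm{fd}}(A(\Gamma))$. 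A projective permutation matrix in $\End(H)$ is exactly the data of such a $*$-homomorphism $u_{vw}\mapsto P_{vw}$; it is \emph{not} a comodule map $H\to H\otimes A(\Gamma)$. With the correct reading, essential surjectivity is immediate: given any $*$-representation $\pi$, the operators $P_{vw}:=\pi(u_{vw})$ automatically satisfy the magic-unitary and adjacency relations because $\pi$ is a $*$-homomorphism and those relations hold in $A(\Gamma)$.

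Your final paragraph is therefore misdirected and should be dropped. Tannaka--Krein reconstruction from a rigid $C^*$-tensor category with fibre functor recovers a compact quantum group together with its \emph{corepresentation} category; this is in general a different category from the category of $*$-representations of the function algebra, and the Peter--Weyl statement that everything is built from the fundamental corepresentation concerns corepresentations, not $*$-representations. No reconstruction is needed here: the universal property of $A(\Gamma)$ already gives the bijection on objects, and your earlier checks handle the rest.
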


\noindent
In particular, $\QAut(\Gamma)$ is \emph{semisimple} (see~\cite[Corollary 6.21]{Musto2017a}). The \textit{direct sum} of two quantum automorphisms $(H,P), (H',Q): \Gamma \to \Gamma$ is defined as the direct sum of the underlying linear maps:
\begin{calign}\label{eq:directsum}
\begin{tz}[zx, xscale=-1,every to/.style={out=up, in=down}]
\draw  (0,1) to (2,3);
\draw[string,arrow data={0.9}{>},arrow data={0.2}{>}](2,1) to node[zxnode=\zxwhite, pos=0.5]{$P\oplus Q$}  (0,3);
\node[dimension, right] at (0,1) {$V_{\Gamma}$};
\node[dimension, right] at (0,3) {$H\oplus H'$};
\node[dimension,left] at (2,3) {$V_{\Gamma}$};
\node[dimension, left] at (2,1) {$H\oplus H'$};
\end{tz}
=
\begin{tz}[zx, xscale=-1,every to/.style={out=up, in=down}]
\draw  (0,1) to (2,3);
\draw[string,arrow data={0.9}{>},arrow data={0.26}{>}](2,1) to node[zxnode=\zxwhite, pos=0.5]{$P$}  (0,3);
\node[dimension, right] at (0,1) {$V_{\Gamma}$};
\node[dimension, right] at (0,3) {$H$};
\node[dimension, left] at (2,3) {$V_{\Gamma}$};
\node[dimension, left] at (2,1) {$H$};
\end{tz}
\oplus 
\begin{tz}[zx, xscale=-1,every to/.style={out=up, in=down}]
\draw  (0,1) to (2,3);
\draw[string,arrow data={0.9}{>},arrow data={0.26}{>}](2,1) to node[zxnode=\zxwhite, pos=0.5]{$Q$}  (0,3);
\node[dimension,right] at (0,1) {$V_{\Gamma}$};
\node[dimension, right] at (0,3) {$H'$};
\node[dimension, left] at (2,3) {$V_{\Gamma}$};
\node[dimension, left] at (2,1) {$H'$};
\end{tz}
\end{calign}
\noindent
Conversely, a quantum isomorphism $(H,P)$ is \emph{simple} if it cannot be further decomposed into a non-trivial direct sum or equivalently, if it has no non-trivial interchangers, i.e. if $\QGraphIso((H,P), (H,P)) \cong \mathbb{C}$. Semisimplicity implies that every quantum isomorphism $\Gamma \to \Gamma$ is isomorphic to a direct sum of simple quantum isomorphisms.  The decomposition is unique up to permutation of the summands.

\begin{remark}
By dimensional considerations, every ordinary isomorphism is a simple quantum isomorphism. However, in general not all simple quantum isomorphisms are ordinary isomorphisms.
\end{remark}
\noindent
Under composition of quantum isomorphisms, $\QAut(\Gamma)$ becomes a \emph{monoidal} semisimple dagger category. In particular, since all quantum isomorphisms are dualisable, we obtain a monoidal semisimple dagger category with dualisable objects. For a finite number of simple objects such a structure is known as a \textit{unitary fusion category}\footnote{For fusion categories, it is additionally required that the monoidal unit is simple, which is straightforward to verify in our setting.}~\cite{Etingof2015}. In general, however, the number of simple objects of $\QAut(\Gamma)$ is not finite.

\begin{definition}\label{def:classical} The \emph{classical subcategory} of $\QAut(\Gamma)$ is the full semisimple monoidal subcategory of quantum automorphisms which are decomposable into a direct sum of classical automorphisms.
\end{definition}
\noindent
In other words, a quantum automorphism $(H,P)$ in the classical subcategory is of the following form, where $\{\ket{i}\}$ is an orthonormal basis corresponding to the decomposition of the Hilbert space $H$ into one-dimensional subspaces $H\cong \bigoplus_{i} \mathbb{C} \ket{i}$ and $f_i:\Gamma \to \Gamma$ are classical automorphisms:
\begin{equation}\label{eq:classicalquantumfunction}\begin{tz}[zx, xscale=-1,every to/.style={out=up, in=down}]
\draw (0,0) to (2,3);
\draw[arrow data ={0.2}{>}, arrow data={0.8}{>}] (2,0) to (0,3);
\node[zxnode=\zxwhite] at (1,1.5) {$P$};
\node[dimension,right] at (0,0) {$V_{\Gamma}$};
\node[dimension,left] at (2,3) {$V_{\Gamma}$};
\node[dimension,left] at (2,0) {$H$};
\node[dimension,right] at (0,3){$H$};
\end{tz}
~=~\sum_i~\begin{tz}[zx, xscale=-1,every to/.style={out=up, in=down}]
\draw (0,0) to (2,3);
\draw[arrow data ={0.8}{>}] (0,2.) to (0,3);
\draw[arrow data ={0.4}{>}] (2,0) to (2,1);
\node[zxnode=\zxwhite] at (1,1.5) {$f_i$};
\node[zxnode=\zxwhite] at (0,2.) {$i$};
\node[zxnode=\zxwhite] at (2,1) {$i^\dagger$};
\end{tz}
\end{equation}
We note that a quantum isomorphism between classical graphs is in the classical subcategory if and only if all projectors in its projective permutation matrix commute with each other~ \cite[Proposition 6.9]{Musto2017a}.

\begin{definition}
For a finite group $G$, we denote the unitary fusion category of \emph{$G$-graded Hilbert spaces} by $\Hilb_G$; its simple objects are the group elements of $G$ and the monoidal product is induced by group multiplication.\footnote{The category $\Hilb_G$ is the dagger analogue of the fusion category $\Vect_G$ of $G$-graded vector spaces~\cite{Etingof2015} in which every vector space is equipped with an inner product compatible with the grading.}
\end{definition}
\noindent 
It is not hard to see that the classical subcategory of $\QAut(\Gamma)$ is equivalent to the unitary fusion category $\Hilb_{\Aut(\Gamma)}$. We therefore have a full inclusion $\Hilb_{\Aut(\Gamma)} \subseteq \QAut(\Gamma)$. In general the inclusion is strict; there will be simple quantum automorphisms which are \emph{not} one-dimensional. However, for some graphs this is not the case.   \begin{definition}[\cite{Banica2007_3}]\label{def:noquantumsymmetries} A quantum graph $\Gamma$ is said to have \textit{no quantum symmetries} if every quantum automorphism is in the classical subcategory, i.e. if $\QAut(\Gamma) \cong \Hilb_{\Aut(\Gamma)}$; or equivalently, if all simple quantum automorphisms are 1-dimensional.
\end{definition}

\subsection{A rapid introduction to Morita theory}\label{app:Frobenius}
We now recall the theory of Morita equivalence in monoidal dagger categories. Similar expositions can be found in a variety of contexts~\cite{Carqueville2016,Kong2008,Heunen2014_2}. In the following, we focus on special dagger Frobenius monoids in monoidal dagger categories; however, most definitions and statements below have analogues for more general Frobenius monoids in monoidal categories.

\begin{definition}\label{def:daggeridempotentsandsplitting}A \emph{dagger idempotent} in a dagger category is an endomorphism $p:A\to A$ such that $p^2 = p$ and $p^\dagger = p$. We say that a dagger idempotent \textit{splits}, if there is an object $V$ and a morphisms $i:V\to A$ such that $p = i i^\dagger $ and $i^\dagger i = 1_V$.
\end{definition}
\begin{example}\label{ex:daggeridempotentsinhilb}
A dagger idempotent in the dagger category $\Hilb$ of finite-dimensional Hilbert spaces and linear maps is an orthogonal projection. Dagger splitting expresses the projector as a map onto the image composed with its adjoint.
\end{example}
\noindent The splitting of an indempotent is unique up to a unitary isomorphism: Indeed, if $(i,V)$ and $(i',V')$ split the same idempotent, then $U= i^\dagger i' :V' \to V$ is unitary with $i' = i U$. 

\def\d{0.5}
\def\h{2.25}
\def\inang{-45}
\begin{definition} \label{def:daggerbimodule}Let $A$ and $B$ be special dagger Frobenius monoids in a monoidal dagger category. An $A{-}B$-\textit{dagger bimodule} is an object $M$ together with an morphism $\rho:A\otimes M\otimes B \to M$ fulfilling the following equations:
\begin{calign}\label{eq:bimodule}
\begin{tz}[zx,every to/.style={out=up, in=down}]
\draw (0,0) to (0,3);
\draw (-\d-1.5,0) to [in=-135] (-\d-0.75,1) to[in=180-\inang] (0,\h);
\draw (-\d,0) to [in=-45] (-\d-0.75,1) ;
\draw (\d,0) to [in=-135] (\d+0.75,1) to [in=\inang] (0,\h);
\draw (\d+1.5,0) to [in=-45] (\d+0.75,1) ;
\node[zxvertex=\zxwhite, zxdown] at (-\d-0.75,1){};
\node[zxvertex=\zxwhite, zxdown] at (\d+0.75,1){};
\node[box,zxdown] at (0,\h) {$\rho$};
\end{tz}
~~=~~
\def\htop{2.25}
\def\hbot{1.25}
\begin{tz}[zx,every to/.style={out=up, in=down}]
\draw (0,0) to (0,3);
\draw (-\d-1.5,0) to [in=-135] (0,\htop);
\draw (-\d,0) to [in=-135] (0,\hbot);
\draw (\d,0) to [in=-45] (0,\hbot);
\draw (\d+1.5,0) to [in=-45] (0,\htop);
\node[box,zxdown] at (0,\hbot) {$\rho$};
\node[box,zxdown] at (0,\htop) {$\rho$};
\end{tz}
&
\begin{tz}[zx, every to/.style={out=up, in=down}]
\draw (0,0) to (0,3);
\draw (-\d,1.2) to [in=-135] (0,\h);
\draw (\d,1.2) to [in=-45] (0,\h);
\node[box,zxdown] at (0,\h) {$\rho$};
\node[zxvertex=\zxwhite] at (-\d,1.2){};
\node[zxvertex=\zxwhite] at (\d,1.2){};
\end{tz}
~~=~~~
\begin{tz}[zx, every to/.style={out=up, in=down}]
\draw (0,0) to (0,3);
\end{tz}
&
\def\x{0.2}
\begin{tz}[zx, every to/.style={out=up, in=down},xscale=0.8]
\draw (0,0) to (0,3);
\draw (-\x,1.5) to [out=up, in=right] (-0.75-\x, 2.25) to [out=left, in=up] (-1.5-\x, 1.5) to (-1.5-\x,0);
\draw (\x,1.5) to [out=up, in=left] (0.75+\x, 2.25) to [out=right, in=up] (1.5+\x, 1.5) to (1.5+\x,0);
\node[zxvertex=\zxwhite] at (-0.75-\x, 2.25){};
\node[zxvertex=\zxwhite] at (0.75+\x, 2.25){};
\node[box] at (0,1.5) {$\rho^\dagger$};
\end{tz}
~~=~~
\begin{tz}[zx, every to/.style={out=up, in=down},xscale=0.8]
\draw (0,0) to (0,3);
\draw (-1.25, 0) to [in=-135] (0,1.95) ;
\draw (1.25,0) to [in=-45] (0,1.95);
\node[box] at (0,1.95) {$\rho$};
\end{tz}
\end{calign}
\end{definition}
\noindent
We usually denote an $A{-}B$-dagger bimodule $M$ by $_AM_B$. For a dagger bimodule ${}_AM_B$, we introduce the following shorthand notation:
\begin{calign}\label{eq:commute}
\begin{tz}[zx, every to/.style={out=up, in=down}]
\draw (0,0) to (0,3);
\draw (1,0) to [in=-45] (0,\h);
\node[boxvertex,zxdown] at (0,\h) {};
\end{tz}
~:=~\begin{tz}[zx, every to/.style={out=up, in=down}]
\draw (0,0) to (0,3);
\draw (-\d,1.2) to [in=-135] (0,\h);
\draw (1,0) to [in=-45] (0,\h);
\node[box,zxdown] at (0,\h) {$\rho$};
\node[zxvertex=\zxwhite] at (-\d,1.2){};
\end{tz}
&
\begin{tz}[zx, every to/.style={out=up, in=down},xscale=-1]
\draw (0,0) to (0,3);
\draw (1,0) to [in=-45] (0,\h);
\node[boxvertex,zxdown] at (0,\h) {};
\end{tz}
~:=~\begin{tz}[zx, every to/.style={out=up, in=down},xscale=-1]
\draw (0,0) to (0,3);
\draw (-\d,1.2) to [in=-135] (0,\h);
\draw (1,0) to [in=-45] (0,\h);
\node[box,zxdown] at (0,\h) {$\rho$};
\node[zxvertex=\zxwhite] at (-\d,1.2){};
\end{tz}
&
\begin{tz}[zx, every to/.style={out=up, in=down}]
\draw (0,0) to (0,3);
\draw (1,0) to [in=-45] (0,\h);
\draw (-1,0) to [in=-135] (0,\h);
\node[boxvertex,zxdown] at (0,\h) {};
\end{tz}
~:=~\begin{tz}[zx, every to/.style={out=up, in=down}]
\draw (0,0) to (0,3);
\draw (-1,0) to [in=-135] (0,\h);
\draw (1,0) to [in=-45] (0,\h);
\node[box,zxdown] at (0,\h) {$\rho$};
\end{tz}
~=~
\begin{tz}[zx, every to/.style={out=up, in=down}]
\draw (0,0) to (0,3);
\draw (1,0) to [in=-45] (0,\h);
\draw (-1,0) to [in=-135] (0, 1.5);
\node[boxvertex,zxdown] at (0,1.5){};
\node[boxvertex,zxdown] at (0,\h) {};
\end{tz}
~=~
\begin{tz}[zx, every to/.style={out=up, in=down},xscale=-1]
\draw (0,0) to (0,3);
\draw (1,0) to [in=-45] (0,\h);
\draw (-1,0) to [in=-135] (0, 1.5);
\node[boxvertex,zxdown] at (0,1.5){};
\node[boxvertex,zxdown] at (0,\h) {};
\end{tz}
\end{calign}
\ignore{It is easy to see that by the `only-left' and `only-right' actions the $A-B$-bimodule structure induces left $A$-module and right $B$-module structures on $M$.}
\noindent
Every special dagger Frobenius monoid $A$ gives rise to a trivial dagger bimodule ${}_AA_A$:
\begin{calign}
\begin{tz}[zx, every to/.style={out=up, in=down}]
\draw (0,0) to (0,3);
\draw (-1,0) to [in=-135] (0,2.);
\draw (1,0) to [in=-45] (0,2.);
\node[boxvertex,zxdown] at (0,2.){};
\end{tz}
~~:= ~~
\begin{tz}[zx]
\coordinate(A) at (0.25,0);
\draw (1,1) to [out=up, in=-135] (1.75,2);
\draw (1.75,2) to [out=-45, in=up] (3.25,0);
\draw (1.75,2) to (1.75,3);
\mult{A}{1.5}{1}
\node[zxvertex=\zxwhite,zxdown] at (1.75,2){};
\end{tz}
~~= ~~
\begin{tz}[zx,xscale=-1]
\coordinate(A) at (0.25,0);
\draw (1,1) to [out=up, in=-135] (1.75,2);
\draw (1.75,2) to [out=-45, in=up] (3.25,0);
\draw (1.75,2) to (1.75,3);
\mult{A}{1.5}{1}
\node[zxvertex=\zxwhite,zxdown] at (1.75,2){};
\end{tz}\end{calign}%
\begin{definition} A \textit{morphism of dagger bimodules} $_AM_B\to {}_AN_B$ is a morphism $f:M\to N$ that commutes with the action of the Frobenius monoids:
\begin{calign}
\begin{tz}[zx]
\draw (0,0) to (0,3);
\draw (-1,0) to [out=up, in=-135] (0,2.15);
\draw (1,0) to [out=up, in=-45] (0,2.15) ;
\node[zxnode=\zxwhite] at (0,0.85) {$f$};
\node[boxvertex,zxdown] at (0,2.15){};
\end{tz}
=
\begin{tz}[zx]
\draw (0,0) to (0,3);
\draw (-1,0) to [out=up, in=-135] (0,0.85);
\draw (1,0) to [out=up, in=-45] (0,0.85) ;
\node[zxnode=\zxwhite] at (0,2.15) {$f$};
\node[boxvertex,zxdown] at (0,0.85){};
\end{tz}
\end{calign}
Two dagger bimodules are \textit{isomorphic}, here written ${}_AM_B\cong{}_AN_B$, if there is a unitary morphism of dagger bimodules ${}_AM_B\to{}_AN_B$.
\end{definition}
\noindent
In a monoidal dagger category in which dagger idempotents split, we can compose dagger bimodules ${}_AM_B$ and ${}_BN_C$ to obtain an $A{-}C$-dagger bimodule ${}_AM{\otimes_B}N_C$ as follows. First note that the following endomorphism is a dagger idempotent:
\begin{calign}\label{eq:idempotentforrelprod}
\begin{tz}[zx,every to/.style={out=up, in=down}]
\draw (0,0) to (0,3);
\draw (2,0) to (2,3);
\draw (0,2.25) to [out=-45, in=left] (1, 1.2) to[out=right, in=-135] (2,2.25);
\node[zxvertex=\zxwhite] at (1,1.2){};
\node[boxvertex,zxdown] at (0,2.25){};
\node[boxvertex,zxdown] at (2,2.25){};
\node[dimension, left] at (0,0) {$M$};
\node[dimension, right] at (2,0) {$N$};
\end{tz}
\end{calign}
The \emph{relative tensor product} ${}_AM{\otimes_B}N_C$ is defined as the image of the splitting of this idempotent. We depict the morphism $i: M\otimes_B N\to M\otimes N$ as a downwards pointing triangle:
\begin{calign}\label{eq:moritaidempotentsplit}
\begin{tz}[zx,every to/.style={out=up, in=down}]
\draw (0,0) to (0,3);
\draw (2,0) to (2,3);
\draw (0,2.25) to [out=-45, in=left] (1, 1.2) to[out=right, in=-135] (2,2.25);
\node[zxvertex=\zxwhite] at (1,1.2){};
\node[boxvertex,zxdown] at (0,2.25){};
\node[boxvertex,zxdown] at (2,2.25){};
\end{tz}
~~=~~
\begin{tz}[zx,every to/.style={out=up, in=down}]
\draw (0,0) to (0,0.5);
\draw (0,2.5) to (0,3);
\draw (2,0) to (2,0.5);
\draw (2,2.5) to (2,3);
\draw (1,0.5) to (1,2.5);
\node[dimension, right] at (1,1.5) {$M{\otimes_B}N$};
\node[triangleup=2] at (1,0.5){};
\node[triangledown=2] at (1,2.5){};
\end{tz}
&
\begin{tz}[zx]
\clip (-1.2, -0.3) rectangle (1.9,3.3);
\draw (0,0) to (0,1);
\draw (-1,1) to (-1,2);
\draw (1,1) to (1,2);
\draw (0, 2) to (0,3);
\node[triangleup=2] at (0,2){};
\node[triangledown=2] at (0,1){};
\node[dimension, right] at (0,0) {$M{\otimes_B}N$};
\end{tz}
=~~
\begin{tz}[zx]
\clip (-0.2, -0.3) rectangle (1.9,3.3);
\draw (0,0) to (0,3);
\node[dimension, right] at (0,0) {$M{\otimes_B}N$};
\end{tz}
\end{calign}
\ignore{One can convince onself that $f:M\otimes N\to M\otimes_B N$ is a coequalizer for the }
For dagger bimodules ${}_AM_B$ and ${}_BN_C$, the relative tensor product $M\otimes_B N$ is itself an $A{-}C$-dagger bimodule with the following action $A\otimes(M{\otimes_B}N) \otimes C\to M{\otimes_B}N$:
\begin{equation}
\begin{tz}[zx,every to/.style={out=up, in=down}]
\draw (0,-0.) to (0,1) ;
\draw (0,3) to (0,3.5);
\draw (-1,1) to (-1,2.5);
\draw (1,1) to (1,2.5);
\draw (-2,-0.) to [in=-135] (-1,1.75);
\draw (2,-0.) to [in=-45] (1,1.75);
\node[triangledown=2] at (0,1){};
\node[triangleup=2] at (0,2.5){};
\node[boxvertex] at (-1,1.75){};
\node[boxvertex] at (1,1.75){};
\end{tz}
\end{equation}
\noindent
\begin{definition} \label{def:daggermoritaequiv}Two special dagger Frobenius monoids $A$ and $B$ are \textit{Morita equivalent} if there are dagger bimodules $_AM_B$ and $_BN_A$ such that $_AM{\otimes_B}N_A\cong {}_AA_A$ and ${_BN{\otimes_A}M_B \cong {}_BB_B}$.
\end{definition}
\noindent
In other words, special dagger Frobenius monoids $A$ (depicted as a white node) and $B$ (depicted as a grey node) are Morita equivalent if there are dagger bimodules ${}_AM_B$ and ${}_BN_A$ and morphisms $i: A\to M\otimes N$ (depicted as a downwards-pointing white triangle) and $i':B\to N\otimes M$ (depicted as a downwards-pointing grey triangle) such that the following equations hold:
\def\xscl{1}
\def\scl{0.8}
\def\trianglescale{0.8}
\begin{calign}\label{eq:alldataMorita}\hspace{-0.2cm}
\begin{tz}[zx,every to/.style={out=up, in=down},scale=\scl,xscale=\xscl]
\draw (0,0) to (0,1.5) ;
\draw (0,2.5) to (0,4);
\draw (-1,1.5) to (-1,2.5);
\draw (1,1.5) to (1,2.5);
\draw (-2,0) to [in=-135] (-1,2);
\draw (2,0) to [in=-45] (1,2);
\node[triangledown=2] at (0,1.5){};
\node[triangleup=2] at (0,2.5){};
\node[boxvertex] at (-1,2){};
\node[boxvertex] at (1,2){};
\end{tz}
=
\begin{tz}[zx,every to/.style={out=up, in=down},scale=\scl,xscale=\xscl]
\draw (0,0 ) to (0,4);
\draw (-1, 0) to [in=-135] (0,2.);
\draw (1,0) to [in=-45] (0,2.);
\node[zxvertex=\zxwhite,zxdown] at (0,2.){};
\end{tz}
&
\begin{tz}[zx,scale=\scl,xscale=\xscl]
\draw (0,0) to (0,1);
\draw (2,0) to (2,1);
\draw (1,1) to (1,3);
\draw (0,3) to (0,4);
\draw (2,3) to (2,4);
\node[triangleup=2] at (1,1){};
\node[triangledown=2] at (1,3){};
\end{tz}
\!=
\begin{tz}[zx,every to/.style={out=up, in=down},scale=\scl,xscale=\xscl]
\draw (0,0.) to (0,4);
\draw (2,0.) to (2,4);
\draw (0,2.25) to [out=-45, in=left] (1, 1.2) to[out=right, in=-135] (2,2.25);
\node[zxvertex=\zxblack] at (1,1.2){};
\node[boxvertex,zxdown] at (0,2.25){};
\node[boxvertex,zxdown] at (2,2.25){};
\end{tz}
&
\begin{tz}[zx,every to/.style={out=up, in=down},scale=\scl,xscale=\xscl]
\draw (0,0) to (0,1.5) ;
\draw (0,2.5) to (0,4);
\draw (-1,1.5) to (-1,2.5);
\draw (1,1.5) to (1,2.5);
\draw (-2,0) to [in=-135] (-1,2);
\draw (2,0) to [in=-45] (1,2);
\node[triangledown=2,fill=\zxblack] at (0,1.5){};
\node[triangleup=2,fill=\zxblack] at (0,2.5){};
\node[boxvertex] at (-1,2){};
\node[boxvertex] at (1,2){};
\end{tz}
=
\begin{tz}[zx,every to/.style={out=up, in=down},scale=\scl,xscale=\xscl]
\draw (0,0 ) to (0,4);
\draw (-1, 0) to [in=-135] (0,2.);
\draw (1,0) to [in=-45] (0,2.);
\node[zxvertex=\zxblack,zxdown] at (0,2.){};
\end{tz}
&
\begin{tz}[zx,scale=\scl,xscale=\xscl]
\draw (0,0) to (0,1);
\draw (2,0) to (2,1);
\draw (1,1) to (1,3);
\draw (0,3) to (0,4);
\draw (2,3) to (2,4);
\node[triangleup=2,fill=\zxblack] at (1,1){};
\node[triangledown=2,fill=\zxblack] at (1,3){};
\end{tz}
\!=
\begin{tz}[zx,every to/.style={out=up, in=down},scale=\scl,xscale=\xscl]
\draw (0,0.) to (0,4);
\draw (2,0.) to (2,4);
\draw (0,2.25) to [out=-45, in=left] (1, 1.2) to[out=right, in=-135] (2,2.25);
\node[zxvertex=\zxwhite] at (1,1.2){};
\node[boxvertex,zxdown] at (0,2.25){};
\node[boxvertex,zxdown] at (2,2.25){};
\end{tz}
\end{calign}
It can easily be verified that $*$-isomorphic special dagger Frobenius monoids are Morita equivalent.

\ignore{
\begin{example} Isomorphic Frobenius monoids are Morita equivalent, as can be easily seen. (For isomorphic Frobenius monoids $A$ and $B$, $A$ can be given the structure of a right $B$-module, and $B$ the structure of a left $A$-module,  by using the isomorphism followed by the multiplication in the algebra. The other morphisms in the definition are easily identified.)
\end{example}
\begin{definition}
A Frobenius monoid is Morita trivial if it is Morita equivalent to the monoidal unit considered as a Frobenius monoid using the structural morphisms of the monoidal category.
\end{definition}
\noindent
In fact, it is well known that Morita trivial Frobenius monoids can be characterised up to $*$-isomorphism.
\begin{definition}
For any dualisable object $\tau$ in a monoidal dagger category $\mathcal{C}$, there is a \emph{pair of pants} Frobenius structure on the tensor product $\conj{\tau} \otimes \tau$ defined using the cups and caps of the duality. This structure is shown in~\eqref{eq:quantumbijectionfrobenius}.
\end{definition}
\begin{proposition} A Frobenius monoid $M$ in a monoidal dagger category $\mathcal{C}$ is \emph{Morita trivial} if there is a dualisable object $\tau$ in $\mathcal{C}$ and a $*$-isomorphism $M \cong \conj{\tau} \otimes \tau$.
\end{proposition}
}

\section{A classification of quantum isomorphic graphs}
\label{sec:classification}
In this section we present our classification of quantum graphs $\Gamma'$ quantum isomorphic to a given graph $\Gamma$ in terms of algebraic structures in the monoidal category $\QAut(\Gamma)$.

These results are based on the observation that dualisable $1$-morphisms $\Gamma' \to \Gamma$ in a dagger $2$-category give rise to dagger Frobenius monoids in the endomorphism category $\Hom(\Gamma,\Gamma)$. These Frobenius monoids can therefore be used to classify dualisable morphisms into $\Gamma$.
This is a prominent technique employed, for example, in the theory of module categories~\cite{Ostrik2003,Ostrik2003_2,Etingof2015} and the classification of subfactors~\cite{Bischoff2015}.

\ignore{
We make use of the theory of fusion categories and the classification of Frobenius algebras within them. Good references are~\cite{Etingof2015,Mueger2003}, but we provide a review of the essentials of Frobenius algebras and Morita equivalence in Appendix \ref{app:Frobenius}. For reasons of brevity, some constructions and proofs in this section are only sketched; the full proofs are  provided in Appendix \ref{app:proof0}.
}

In Section~\ref{sec:quantumisomorphicquantum}, we classify quantum graphs quantum isomorphic to a given quantum graph (Corollary~\ref{cor:bigclassification}). In Section~\ref{sec:quantumisomorphicclassical}, we restrict attention to classical graphs, and classify classical graphs quantum isomorphic to a given classical graph (Corollary~\ref{cor:superclassification}).
\ignore{
\DRcomm{Maybe some more bla: Overall, that shows that we can classify quantum graphs that are quantum isomorphic to a quantum graph $\Gamma$ in terms of structures in $\QAut(\Gamma)$ and therefore ultimately.... (or maybe put this into the appendix.\\}
\DRcomm{Theorem~\ref{} will be proven in Section~\ref{}.}
}

\subsection{Classifying quantum isomorphic quantum graphs}\label{sec:quantumisomorphicquantum}
\ignore{
In this section, we classify quantum isomorphisms out of a quantum graph $\Gamma$ in terms of Morita equivalence classes of dagger Frobenius monoids\footnote{We refer to Frobenius monoids...Recall Terminology~}\DR{footnote} in the monoidal dagger categories $\QAut(\Gamma)$. \DRcomm{Summary maybe in intro to sect 3}
}

We first establish that dualisable $1$-morphisms in $\QGraphIso$ into a quantum graph $\Gamma$ give rise to Frobenius monoids in $\QAut(\Gamma)$. 
\begin{proposition}\label{prop:quantumbijectionFrobenius} A quantum isomorphism $(H,P)$ between quantum graphs $\Gamma'$ and $\Gamma$ gives rise to a special dagger Frobenius monoid in $\QAut(\Gamma)$, whose underlying object is the composition $(H\otimes H^*, P \circ \overline{P})$, and whose underlying algebra is the endomorphism algebra~\eqref{eq:endomorphismalgebra}:
\begin{calign}\label{eq:quantumbijectionFrobenius}
\begin{tz}[zx, xscale=-1, every to/.style={out=up, in=down}]
\draw (0,0) to (2.5,3.5);
\draw[arrow data={0.15}{<}, arrow data={0.8}{<}] (1.25,0) to node[zxnode=\zxwhite, pos=0.4] {$\conj{P}$} (0,3.5);
\draw[string,arrow data={0.2}{>}, arrow data={0.9}{>}] (2.5,0) to node[zxnode=\zxwhite, pos=0.58] {$P$} (1.25,3.5);
\node[dimension,left] at (1.25,0) {$H^*$};
\node[dimension, left] at (2.5,0) {$H$};
\end{tz}
\end{calign}
\end{proposition}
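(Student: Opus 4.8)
The plan is to exhibit $(H \otimes H^*, P \circ \overline{P})$ as a composite $1$-morphism in $\QGraphIso$ and to equip it with the Frobenius structure that any dualisable $1$-morphism induces on the relevant endomorphism category. First I would invoke Theorem~\ref{thm:dualisable}: the quantum isomorphism $(H,P):\Gamma'\to\Gamma$ is dualisable, with dual $(H^*,\overline{P}):\Gamma\to\Gamma'$ whose underlying map is~\eqref{eq:daggerdual} and which satisfies the snake equations~\eqref{eq:rightdual} and~\eqref{eq:leftdual}. Composing $\Gamma \xrightarrow{(H^*,\overline{P})} \Gamma' \xrightarrow{(H,P)} \Gamma$ according to Definition~\ref{def:2catqgraph} (see~\eqref{eq:1composition}) produces a genuine quantum automorphism $(H\otimes H^*, P\circ\overline{P})$, so the underlying object lies in $\QAut(\Gamma)$ as claimed.

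Next I would define the monoid structure from the duality data. The unit $u$ is the coevaluation $2$-morphism $\mathrm{id}_\Gamma \to P\circ\overline{P}$, and the multiplication $m$ is obtained by whiskering the evaluation (counit) $2$-morphism $\overline{P}\circ P \to \mathrm{id}_{\Gamma'}$ on the outside by $(H,P)$ and $(H^*,\overline{P})$; the comultiplication and counit are the $\dagger$-adjoints $m^\dagger$ and $u^\dagger$. The key point is that all of these are $2$-morphisms of $\QGraphIso$, i.e. intertwiners in the sense of Definition~\ref{def:intertwiner}: dualisability in $\QGraphIso$ (Theorem~\ref{thm:dualisable}) says precisely that the evaluation and coevaluation are intertwiners, and horizontal composition (whiskering) with identity intertwiners preserves this. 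Hence $m$, $u$, $m^\dagger$ and $u^\dagger$ are honest morphisms in $\QAut(\Gamma)$.

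To verify the axioms I would pass to $\Hilb$ through the forgetful functor $F:\QAut(\Gamma)\to\Hilb$, which is strong monoidal (it sends the composition $(H',Q)\circ(H,P)$ to $H'\otimes H$ and the monoidal unit to $\mathbb{C}$), faithful (it merely remembers the underlying linear map of an intertwiner), and a dagger functor (it sends $\dagger$-adjoints of $2$-morphisms to Hilbert-space adjoints). Under $F$ the object becomes $H\otimes H^*$ and the duality $2$-morphisms become the standard cups and caps of~\eqref{eq:cupscapsHilb}, so that $m,u,m^\dagger,u^\dagger$ become exactly the multiplication, unit, comultiplication and counit of the endomorphism algebra~\eqref{eq:endomorphismalgebra}. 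Since that algebra is a special symmetric dagger Frobenius algebra in $\Hilb$, the associativity, unit, Frobenius, dagger and specialness equations all hold after applying $F$; because $F$ is faithful, these equations hold already in $\QAut(\Gamma)$. This simultaneously identifies the underlying algebra with~\eqref{eq:endomorphismalgebra} and establishes the special dagger Frobenius structure.

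The step I expect to require the most care is the identification, under $F$, of the abstractly defined duality $2$-morphisms with the concrete Hilbert-space cups and caps, and in particular the matching of the normalisation factors $\tfrac{1}{\sqrt{n}}$ and $\sqrt{n}$ in~\eqref{eq:endomorphismalgebra} with those implicit in~\eqref{eq:daggerdual}--\eqref{eq:leftdual}. If one prefers to bypass the $2$-categorical abstraction, the same content can be obtained directly: take $m$ and $u$ to be the explicit linear maps of~\eqref{eq:endomorphismalgebra} on $H\otimes H^*$ and verify the two intertwiner equations of Definition~\ref{def:intertwiner} by a diagrammatic computation feeding the quantum-isomorphism relations~\eqref{eq:quantumfunction}--\eqref{eq:quantumfunction2} for $P$ and $\overline{P}$ into the duality identities~\eqref{eq:daggerdual}, \eqref{eq:rightdual} and~\eqref{eq:leftdual}; the Frobenius and specialness axioms then follow as above.
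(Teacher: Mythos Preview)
Your proposal is correct and follows essentially the same approach as the paper: both arguments reduce to the observation that the cups and caps of~\eqref{eq:cupscapsHilb} are intertwiners for $P\circ\overline{P}$ by virtue of~\eqref{eq:rightdual} and~\eqref{eq:leftdual}, so the endomorphism-algebra structure~\eqref{eq:endomorphismalgebra} lifts from $\Hilb$ to $\QAut(\Gamma)$. The paper's phrasing is slightly more direct in that it checks the already-normalised structural maps of~\eqref{eq:endomorphismalgebra} are intertwiners, which sidesteps the normalisation worry you flag; but this is only a cosmetic difference, since scalar multiples of intertwiners are intertwiners.
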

\begin{proof} It is well known that the composition $P\circ \overline{P}$ of a 1-morphism with its dual in a dagger $2$-category gives rise to a Frobenius monoid. In our case, all we need to show is that the structural morphisms of the endomorphism algebra~\eqref{eq:endomorphismalgebra} are intertwiners for $P \circ \overline{P}$. This follows immediately from equations~\eqref{eq:rightdual} and~\eqref{eq:leftdual}.
\end{proof}
\noindent
The Frobenius monoids arising from dualisable $1$-morphisms in Proposition~\ref{prop:quantumbijectionFrobenius} have an underlying endomorphism algebra. We abstract this property.
\begin{definition} \label{def:simpleFrob}Let $\mathcal{C}$ be a monoidal dagger category with a faithful monoidal dagger functor $F:\mathcal{C} \to \Hilb$. A \emph{$F$-simple dagger Frobenius monoid} in $\mathcal{C}$ is a dagger Frobenius monoid $A$ in $\mathcal{C}$ such that the underlying dagger Frobenius algebra $FA$ in $\Hilb$ is $*$-isomorphic to an endomorphism algebra~\eqref{eq:endomorphismalgebra}.
\end{definition}
\noindent
Every $F$-simple dagger Frobenius monoid $A$ is special, since $FA$ is special.

In the following, we will be concerned with $F$-simple dagger Frobenius monoids in $\QAut(\Gamma)$ where $F: \QAut(\Gamma) \to \Hilb$ is the evident forgetful functor.\footnote{The forgetful functor $\QAut(\Gamma)\to \Hilb$ takes a quantum isomorphism $(H,P)$ to the Hilbert space $H$ and an intertwiner to the underlying linear map; equivalently it is the forgetful functor of the finite-dimensional representation category $\QAut(\Gamma) = \Rep_{\text{fd}}(A(\Gamma))$. See \cite[Section 3.3]{Musto2017a}} From now on, we omit the functor $F$ from the notation and refer to \emph{simple dagger Frobenius monoids} in $\QAut(\Gamma)$.

\begin{remark} Since $\QAut(\Gamma)$ is the category of finite-dimensional $*$-representations of the Hopf $C^*$-algebra $A(\Gamma)$, unpacking Definition~\ref{def:simpleFrob} gives the definition~\eqref{eq:sweedler} made in the introduction.
\end{remark}

\noindent
The main result of this section is that the converse of Proposition~\ref{prop:quantumbijectionFrobenius} is also true: simple dagger Frobenius monoids in $\QAut(\Gamma)$ give rise to quantum isomorphisms into $\Gamma$. 

\begin{theorem}[restate=splitfrobenius,name={}] \label{thm:Frobeniussplit}Let $\Gamma$ be a quantum graph and let $X$ be a simple dagger Frobenius monoid in $\QAut(\Gamma)$. Then there exist a quantum graph $\Gamma_X$ and a quantum isomorphism ${(H,P):\Gamma_X\to\Gamma}$ such that $X$ is $*$-isomorphic to $(H\otimes H^*,P \circ \conj{P})$.
\end{theorem}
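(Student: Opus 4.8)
The plan is to prove the converse of Proposition~\ref{prop:quantumbijectionFrobenius} by \emph{splitting} the Frobenius monoid $X$, i.e. exhibiting $X$ as $\iota\circ\iota^\dagger$ for a dualisable $1$-morphism $\iota=(H,P)$ into $\Gamma$. Since $\QGraphIso$ lives inside $\Hilb$ and dagger idempotents split there (Example~\ref{ex:daggeridempotentsinhilb}), everything will be produced by constructing and splitting a single explicit dagger idempotent. First I would normalise the data: by hypothesis $FX$ is $*$-isomorphic to an endomorphism algebra~\eqref{eq:endomorphismalgebra}, and transporting the Frobenius structure along this $*$-isomorphism (which is unitary, hence harmless) I may assume $FX=H\otimes H^*$ with the pair-of-pants multiplication, unit, and their adjoints. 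Thus the underlying object of $X$ is a quantum automorphism $(H\otimes H^*,Q)$, and the task is to factor $Q$ as $P\circ\overline P$ through a genuine quantum graph.

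\emph{Construction of $\Gamma_X$.} Let $n=\dim H$. On the Hilbert space $V_\Gamma\otimes FX$ I would define an endomorphism $e$ projecting onto the ``$X$-invariant'' part: $X$ acts diagonally, on the $FX$-leg by the multiplication $m$ and on $V_\Gamma$ through the quantum-automorphism structure $Q$ together with the counit $u^\dagger$, in exact analogy with the relative-tensor-product idempotent~\eqref{eq:idempotentforrelprod}. Specialness and symmetry of $X$ make $e$ a \emph{dagger} idempotent ($e=e^2=e^\dagger$); I would split it as $e=i\,i^\dagger$ with $i^\dagger i=1$, and set $V_{\Gamma_X}$ to be the splitting object, with inclusion $i:V_{\Gamma_X}\hookrightarrow V_\Gamma\otimes FX$. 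A dimension count (the image has relative dimension $1/n^2$, matching $\dim FX=n^2$) gives $\dim V_{\Gamma_X}=\dim V_\Gamma$, as forced by Proposition~\ref{prop:dimensionpreserved}. The tensor-product Frobenius structure on $V_\Gamma\otimes FX$ (the product of the two special symmetric dagger Frobenius algebras) restricts along $i$ to a special symmetric dagger Frobenius structure on $V_{\Gamma_X}$: one checks that $e$ behaves like a conditional expectation, so that multiplying and then projecting is associative, symmetric and special, making $V_{\Gamma_X}$ a quantum set. For the adjacency matrix I would set $\Gamma_X:=i^\dagger\circ(\Gamma\otimes\mathrm{id}_{FX})\circ i$ and verify the three conditions of~\eqref{eq:propadjacency} from those for $\Gamma$ together with the compatibility of $e$ with $\Gamma\otimes\mathrm{id}_{FX}$; self-adjointness is immediate.

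\emph{The quantum isomorphism and the matching.} Using $i$ and the cup/cap of $H$ from~\eqref{eq:endomorphismalgebra}, I would define $P:H\otimes V_{\Gamma_X}\to V_\Gamma\otimes H$ by including $V_{\Gamma_X}$ into $V_\Gamma\otimes H\otimes H^*$ and evaluating the outer $H\otimes H^*$ pair, with $\overline P$ defined dually via~\eqref{eq:daggerdual}. One then checks the axioms~\eqref{eq:quantumfunction}--\eqref{eq:quantumfunction2}: the multiplicativity conditions hold because $i$ intertwines the $X$-actions and $Q$ is a quantum-automorphism structure, while the adjacency condition is built into the definition of $\Gamma_X$. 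Finally, composing $P$ with $\overline P$ according to~\eqref{eq:1composition} and using $i^\dagger i=1$ collapses the $V_{\Gamma_X}$-leg and returns exactly $Q$; the endomorphism Frobenius structure on $(H\otimes H^*,P\circ\overline P)$ produced by Proposition~\ref{prop:quantumbijectionFrobenius} is the pair-of-pants structure, namely $FX$, so tracking the identifications yields the desired $*$-isomorphism $X\cong(H\otimes H^*,P\circ\overline P)$ of Frobenius monoids.

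The main obstacle is the middle step: showing that $e$ is genuinely a dagger idempotent and, crucially, that its splitting $V_{\Gamma_X}$ is a \emph{special symmetric} dagger Frobenius algebra rather than merely an associative algebra, since only then is $\Gamma_X$ a legitimate quantum graph. This is exactly where specialness and symmetry of $X$ --- equivalently, that $FX$ is an endomorphism algebra with the normalisation of~\eqref{eq:endomorphismalgebra} --- must be used in full, and where the scalars $\sqrt n$ have to be tracked with care; once this is established, the remaining axiom-checks for $\Gamma_X$ and for $P$ are routine graphical-calculus manipulations.
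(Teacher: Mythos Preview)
Your overall architecture---normalise $FX$ to an endomorphism algebra, build a dagger idempotent on (a reordering of) $H^*\otimes V_\Gamma\otimes H$, split it to obtain $V_{\Gamma_X}$ and the isometry $i$, and define $P$ by bending $i$---is exactly the paper's route, and your identification of the Frobenius structure on $V_{\Gamma_X}$ as the crux is accurate. The genuine difference lies in how that structure is produced and verified.

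The paper does \emph{not} restrict the tensor-product Frobenius algebra $V_\Gamma\otimes\mathrm{End}(H)$ along $i$. Instead it first proves that $P$ is \emph{biunitary} (Definition after Proposition~\ref{prop:quantumcondition}), which yields the Reidemeister-II moves~\eqref{eq:biunitarybraiding}--\eqref{eq:biunitarybraiding2}, and then invokes a ``jumping lemma'' (Proposition~\ref{prop:jumpinglemma}): any morphism between tensor powers of $V_\Gamma$ that pulls through the doubled $H$-wire---as the multiplication, unit, comultiplication, counit, and adjacency matrix of $V_\Gamma$ all do, since $Q$ is a quantum automorphism---transfers functorially to a morphism between the corresponding tensor powers of $A$ via the loop-wrap formula~\eqref{eq:secondalgebra}. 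Every equation holding on the $V_\Gamma$ side (associativity, Frobenius law, specialness, the three adjacency conditions~\eqref{eq:propadjacency}) then holds on the $A$ side automatically; only \emph{symmetry} needs a short extra argument, obtained by dragging a node once around the oriented loop. Finally, Proposition~\ref{prop:quantumcondition} upgrades biunitarity to the full quantum-isomorphism axioms. This machinery dispatches precisely the obstacle you flag, with no by-hand verification of specialness or symmetry.

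Your restriction approach can be made to work, but showing that the image of $e$ is closed under the tensor-product multiplication and that the restricted structure is \emph{special} and \emph{symmetric} amounts to re-deriving Propositions~\ref{prop:quantumcondition} and~\ref{prop:jumpinglemma} in disguise, with less control over the $\sqrt{n}$ factors. Two smaller points: the paper makes the idempotent explicit (your description ``in analogy with~\eqref{eq:idempotentforrelprod}'' is too vague to pin down the correct formula and its normalisation), and the equality $\dim V_{\Gamma_X}=\dim V_\Gamma$ is not obtained by a direct rank computation of $e$ but is a \emph{consequence} of $P$ being a quantum isomorphism via Proposition~\ref{prop:dimensionpreserved}.
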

\begin{proof} We will prove this in Section~\ref{sec:proof}.
\end{proof}
\begin{remark}From the perspective of category theory, the quantum graph $\Gamma_X$ is both an Eilenberg-Moore and a Kleisli object~\cite{Lack:2002} for the Frobenius monad $X$ in the $2$-category $\QGraphIso$.
\end{remark}
\noindent
Although we postpone the details of the proof, we quickly sketch the reconstruction of the quantum graph $\Gamma_X$ and the quantum isomorphism $\Gamma_X \to \Gamma$ from a simple dagger Frobenius monoid $X$ in $\QAut(\Gamma)$. Note firstly that the Frobenius monoid $X$ is a quantum isomorphism $(H\otimes H^*, X): \Gamma \to \Gamma$ for which the endomorphism algebra~\eqref{eq:endomorphismalgebra} is an intertwiner. It is then easy to check that the following linear map  $x\in\End(H^*\otimes V_{\Gamma} \otimes H)$ is a dagger idempotent, i.e. is self-adjoint and fulfils $x^2=x$:
\begin{equation}
\frac{1}{n}~~~
\begin{tz}[zx,xscale=-1,xscale=0.6,yscale=1]
\draw(0.25,-0.5) to (0.25,0) to [out=up, in=-135] (1,1);
\draw [arrow data={0.23}{<},arrow data={0.88}{<}]  (1,1) to [out=135, in=right]  (-0.3, 1.7) to [out=left, in=up] (-1.25,1) to (-1.25,-0.5);
\draw(1.75,2.5) to (1.75,2) to [out=down, in=45] (1,1);
\draw[arrow data={0.23}{<},arrow data={0.9}{<}] (1,1) to [out= -45, in= left]   (2.3,0.3) to [out=right, in=down] (3.25,1) to (3.25,2.5);
\draw[arrow data={0.83}{>}] (1,1) to [out=up, in=down] (0.25,2.5);
\draw[arrow data={0.83}{>}] (1,1) to [out=down, in=up] (1.75,-0.5);
\node [zxnode=\zxwhite] at (1,1) {$X$};
\end{tz} 
 \end{equation}
Splitting this idempotent results in a new Hilbert space $A$ and an isometry $i:A\to H^*\otimes V_{\Gamma} \otimes H$ which gives rise to (by bending wires) a linear map $P: H \otimes A \to V_{\Gamma}\otimes H $, so that $X$ is of the form~\eqref{eq:quantumbijectionFrobenius}. 

We now define the structure of a quantum graph on $A$. For this, we use the following shorthand notation:
\begin{calign}\label{eq:shorthand}
\hspace{-0.75cm}
\begin{tz}[zx,xscale=-1,every to/.style={out=up, in=down},xscale=-0.8]
\draw [arrow data={0.2}{>},arrow data={0.8}{>}]  (0,0) to (2.25,3);
\draw (2.25,0) to node[zxvertex=\zxwhite, pos=0.5] {} (0,3);
\end{tz}
~=~
\begin{tz}[zx,xscale=-1,every to/.style={out=up, in=down},xscale=-0.8]
\draw [arrow data={0.2}{>},arrow data={0.8}{>}]  (0,0) to (2.25,3);
\draw (2.25,0) to node[zxnode=\zxwhite, pos=0.5] {$P$} (0,3);
\end{tz}
&
\begin{tz}[zx,every to/.style={out=up, in=down},xscale=-0.8]
\draw [arrow data={0.2}{>},arrow data={0.8}{>}]  (0,0) to (2.25,3);
\draw (2.25,0) to node[zxvertex=\zxwhite, pos=0.5] {} (0,3);
\end{tz}
~=~
\begin{tz}[zx,every to/.style={out=up, in=down},xscale=-0.8]
\draw [arrow data={0.2}{>},arrow data={0.8}{>}]  (0,0) to (2.25,3);
\draw (2.25,0) to node[zxnode=\zxwhite, pos=0.5] {$P^\dagger$} (0,3);
\end{tz}
&
\begin{tz}[zx,xscale=-1,every to/.style={out=up, in=down},xscale=-0.8,yscale=-1]
\draw [arrow data={0.2}{>},arrow data={0.8}{>}]  (0,0) to (2.25,3);
\draw (2.25,0) to node[zxvertex=\zxwhite, pos=0.5] {} (0,3);
\end{tz}~=~
\begin{tz}[zx,xscale=-1,xscale=0.6,yscale=1,scale=1]
\draw(0.25,-0.5) to (0.25,0) to [out=up, in=-135] (1,1);
\draw [arrow data={0.34}{>}]  (1,1) to [out=135, in=right]  (-0.3, 1.7) to [out=left, in=up] (-1.25,1) to (-1.25,-0.5);
\draw(1.75,2.5) to (1.75,2) to [out=down, in=45] (1,1);
\draw[arrow data={0.32}{<}] (1,1) to [out= -45, in= left]   (2.3,0.3) to [out=right, in=down] (3.25,1) to (3.25,2.5);
\node [zxnode=\zxwhite] at (1,1) {$P$};
\end{tz}
&
\begin{tz}[zx,xscale=-1,every to/.style={out=up, in=down},xscale=0.8,yscale=-1]
\draw [arrow data={0.2}{>},arrow data={0.8}{>}]  (0,0) to (2.25,3);
\draw (2.25,0) to node[zxvertex=\zxwhite, pos=0.5] {} (0,3);
\end{tz}~=~
\begin{tz}[zx,xscale=-1,xscale=-0.6,yscale=1,scale=1]
\draw(0.25,-0.5) to (0.25,0) to [out=up, in=-135] (1,1);
\draw [arrow data={0.34}{>}]  (1,1) to [out=135, in=right]  (-0.3, 1.7) to [out=left, in=up] (-1.25,1) to (-1.25,-0.5);
\draw(1.75,2.5) to (1.75,2) to [out=down, in=45] (1,1);
\draw[arrow data={0.32}{<}] (1,1) to [out= -45, in= left]   (2.3,0.3) to [out=right, in=down] (3.25,1) to (3.25,2.5);
\node [zxnode=\zxwhite] at (1,1) {$P^\dagger$};
\end{tz}
\end{calign}
Using the algebra structure on $V_{\Gamma}$ (depicted as a white node), we define an algebra structure on $A$ (depicted as a grey node) as follows:
\def\yoff{-0.15}
\def\loff{-0.15}
\def\scl{0.7}
\begin{calign}
\label{eq:secondalgebra}
\hspace{-0.8cm}
\begin{tz}[zx,scale=\scl]
\draw (0,0) to (0,1) to [out=up, in=-135] (0.75,2) to (0.75,3.75);
\draw (1.5,0) to (1.5,1) to  [out=up, in=-45] (0.75,2);
\node[zxvertex=\zxblack, zxdown] at  (0.75,2){};
\node[dimension, right] at (1.5,0) {$A$};
\node[dimension, right] at (0,0) {$A$};
\node[dimension, right] at (0.75,3.75) {$A$};
\end{tz}
\!\!\!:=~\frac{1}{n}~
\begin{tz}[zx,scale=\scl]
\draw[string] (0,0) to node[front,zxvertex=\zxwhite, pos=0.69]{}(0,1) to [out=up, in=-135] (0.75,2) to node[front, zxvertex=\zxwhite, pos=0.5]{} (0.75,3.75);
\draw[string] (1.5,0) to (1.5,1) to [out=up, in=-45] node[front,zxvertex=\zxwhite, pos=0.0]{} (0.75, 2);
\draw[string,arrow data={0.01}{>},arrow data = {0.33}{>},arrow data={0.665}{>}] (-0.25+\loff, 0.75+\yoff) to [out=20, in=down] (2.25,2+\yoff) to [out=up, in=-20] (-0.25+\loff, 3.25+\yoff)  to [out=160 , in=200] (-0.25+\loff, 0.75+\yoff) ;
\node[zxvertex=\zxwhite,zxdown] at (0.75,2){};
\end{tz}
&
\begin{tz}[zx,scale=\scl]
\clip (0.45, -0.25) rectangle (1.55, 4);
\draw  (0.75,2) to (0.75,3.75);
\node[zxvertex=\zxblack] at  (0.75,2){};
\node[dimension, right] at (0.75,3.75) {$A$};
\end{tz}
\!:=~\frac{1}{n}~
\begin{tz}[zx,scale=\scl]
\clip (-1+\loff, 0) rectangle (2.45,3.75); 
\draw[string] (0.75,2) to node[front, zxvertex=\zxwhite, pos=0.5]{} (0.75,3.75);
\draw[string,arrow data={0.01}{>},arrow data = {0.33}{>},arrow data={0.665}{>}] (-0.25+\loff, 0.75+\yoff) to [out=20, in=down] (2.25,2+\yoff) to [out=up, in=-20] (-0.25+\loff, 3.25+\yoff)  to [out=160 , in=200] (-0.25+\loff, 0.75+\yoff) ;
\node[zxvertex=\zxwhite] at (0.75,2){};
\end{tz}
&
\begin{tz}[zx,scale=\scl, yscale=-1]
\draw (0,0) to (0,1) to [out=up, in=-135] (0.75,2) to (0.75,3.75);
\draw (1.5,0) to (1.5,1) to  [out=up, in=-45] (0.75,2);
\node[zxvertex=\zxblack, zxup] at  (0.75,2){};
\node[dimension, right] at (1.5,0) {$A$};
\node[dimension, right] at (0,0) {$A$};
\node[dimension, right] at (0.75,3.75) {$A$};
\end{tz}
\!\!\!:=~\frac{1}{n}~
\begin{tz}[zx,scale=\scl,yscale=-1]
\draw[string] (0,0) to node[front,zxvertex=\zxwhite, pos=0.69]{}(0,1) to [out=up, in=-135] (0.75,2) to node[front, zxvertex=\zxwhite, pos=0.5]{} (0.75,3.75);
\draw[string] (1.5,0) to (1.5,1) to [out=up, in=-45] node[front,zxvertex=\zxwhite, pos=0.0]{} (0.75, 2);
\draw[string,arrow data={0.0}{<},arrow data = {0.33}{<},arrow data={0.67}{<}] (-0.25+\loff, 0.75+\yoff) to [out=20, in=down] (2.25,2+\yoff) to [out=up, in=-20] (-0.25+\loff, 3.25+\yoff)  to [out=160 , in=200] (-0.25+\loff, 0.75+\yoff) ;
\node[zxvertex=\zxwhite,zxup] at (0.75,2){};
\end{tz}
&
\begin{tz}[zx,scale=\scl, yscale=-1]
\clip (0.45, -0.25) rectangle (1.55, 4);
\draw  (0.75,2) to (0.75,3.75);
\node[zxvertex=\zxblack] at  (0.75,2){};
\node[dimension, right] at (0.75,3.75) {$A$};
\end{tz}
\!:=~\frac{1}{n}~
\begin{tz}[zx,scale=\scl, yscale=-1]
\clip (-1+\loff, 0) rectangle (2.45,3.75); 
\draw[string] (0.75,2) to node[front, zxvertex=\zxwhite, pos=0.5]{} (0.75,3.75);
\draw[string,arrow data={0.01}{<},arrow data = {0.33}{<},arrow data={0.665}{<}] (-0.25+\loff, 0.75+\yoff) to [out=20, in=down] (2.25,2+\yoff) to [out=up, in=-20] (-0.25+\loff, 3.25+\yoff)  to [out=160 , in=200] (-0.25+\loff, 0.75+\yoff) ;
\node[zxvertex=\zxwhite] at (0.75,2){};
\end{tz}
\end{calign}
We will show in Section~\ref{sec:proof} that this makes $A$ into a special symmetric dagger Frobenius algebra.
The quantum graph $\Gamma_X$ has vertex quantum set $V_{\Gamma_X}:=A$ and quantum adjacency matrix $\Gamma_X:A \to A$, defined as follows:
\begin{equation}\begin{tz}[zx,scale=\scl, yscale=1]
\draw[string] (0.75,0) to (0.75,3.75);
\node[zxnode=\zxwhite] at (0.75, 1.875) {$\Gamma_X$};
\node[dimension, right] at (0.75,0) {$A$};
\node[dimension, right] at (0.75, 3.75) {$A$};
\end{tz}
~=~\frac{1}{n}~
\begin{tz}[zx,scale=\scl, yscale=1]
\draw[string] (0.75,0) to node[front, zxvertex=\zxwhite, pos=0.225]{}node[front,zxvertex=\zxwhite, pos=0.765]{} (0.75,3.75);
\draw[string,arrow data={0.01}{>},arrow data = {0.33}{>},arrow data={0.665}{>}] (-0.25+\loff, 0.75+\yoff) to [out=20, in=down] (2.25,2+\yoff) to [out=up, in=-20] (-0.25+\loff, 3.25+\yoff)  to [out=160 , in=200] (-0.25+\loff, 0.75+\yoff) ;
\node[zxnode=\zxwhite] at (0.75, 1.875) {$\Gamma$};
\end{tz}
\end{equation}
We will prove in Section~\ref{sec:proof} that $\Gamma_X$ is a quantum graph, and that $P$ is a quantum isomorphism from $\Gamma_X$ to $\Gamma$.

\begin{remark}\label{rem:quantumgraphsemerge} The algebra $V_{\Gamma_X}$ is in general noncommutative, even if $V_{\Gamma}$ is commutative. Quantum graphs therefore naturally emerge in Theorem~\ref{thm:Frobeniussplit}, even if we restrict our attention to classical graphs $\Gamma$. For pseudo-telepathy, we are  interested in classical graphs $\Gamma_X$, and therefore want $V_{\Gamma_X}$ to be commutative; in Section~\ref{sec:quantumisomorphicclassical}, we give a necessary and sufficient condition on the Frobenius monoid $X$ for this to be the case.
\end{remark}
\noindent
In summary, for every quantum isomorphism $\Gamma' \to \Gamma$ we get a simple dagger Frobenius monoid in $\QAut(\Gamma)$ (Proposition~\ref{prop:quantumbijectionFrobenius}), and for every simple dagger Frobenius monoid in $\QAut(\Gamma)$ we get a quantum isomorphism $\Gamma' \to \Gamma$ (Theorem~\ref{thm:Frobeniussplit}). With the right notion of equivalence (Definition~\ref{def:daggermoritaequiv}) of simple dagger Frobenius monoids, this in fact gives us a classification of  quantum graphs quantum isomorphic to $\Gamma$. 
\begin{corollary}[restate=onetoonecorrespondence, name={}]\label{cor:bigclassification} Let $\Gamma$ be a quantum graph. The constructions of Proposition~\ref{prop:quantumbijectionFrobenius} and Theorem~\ref{thm:Frobeniussplit}  induce a bijective correspondence between:
\vspace{-0.1cm}
\begin{itemize} \item Isomorphism classes of quantum graphs $\Gamma'$ such that there exists a quantum isomorphism $\Gamma'\to \Gamma$.
\item Morita equivalence classes of simple dagger Frobenius monoids in $\QAut(\Gamma)$.
\end{itemize}
\end{corollary}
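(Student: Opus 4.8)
The plan is to package the two constructions as mutually inverse maps between the two sets of equivalence classes and to check the required compatibilities. Write $\Phi$ for the assignment sending an isomorphism class $[\Gamma']$, represented by a choice of quantum isomorphism $(H,P)\colon \Gamma'\to\Gamma$, to the Morita class of the simple dagger Frobenius monoid $P\circ\overline{P}$ produced by Proposition~\ref{prop:quantumbijectionFrobenius}; and write $\Psi$ for the assignment sending a Morita class $[X]$ to the isomorphism class of the quantum graph $\Gamma_X$ produced by Theorem~\ref{thm:Frobeniussplit}. The composite $\Phi\circ\Psi$ is the identity essentially by construction: given a simple dagger Frobenius monoid $X$, Theorem~\ref{thm:Frobeniussplit} supplies a quantum isomorphism $(H,P)\colon\Gamma_X\to\Gamma$ with $P\circ\overline{P}$ $*$-isomorphic to $X$, and $*$-isomorphic Frobenius monoids are Morita equivalent (as noted after Definition~\ref{def:daggermoritaequiv}), so $\Phi([\Gamma_X])=[P\circ\overline{P}]=[X]$. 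In particular every Morita class is realised by some $P\circ\overline{P}$, so $\Phi$ is surjective. It therefore remains to prove the single biconditional that, for quantum isomorphisms $(H,P)\colon\Gamma'\to\Gamma$ and $(K,Q)\colon\Gamma''\to\Gamma$, one has $\Gamma'\cong\Gamma''$ if and only if $P\circ\overline{P}$ and $Q\circ\overline{Q}$ are Morita equivalent; this simultaneously yields well-definedness and injectivity of $\Phi$, and hence bijectivity.

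For the forward implication (which in the special case $\Gamma'=\Gamma''$ also shows that $\Phi$ is independent of the chosen quantum isomorphism) I would first compose $Q$ with a fixed $*$-isomorphism $\Gamma'\to\Gamma''$ to reduce to the case where $P,Q\colon\Gamma'\to\Gamma$ share a source. Writing $X=P\circ\overline{P}$ and $Y=Q\circ\overline{Q}$, I would then exhibit the candidate bimodules $M:=Q\circ\overline{P}$ and $N:=P\circ\overline{Q}$, both objects of $\QAut(\Gamma)$, with $Y$-$X$- and $X$-$Y$-bimodule structures given by composing the evident leg with the Frobenius multiplication. The content is the computation $M\otimes_X N\cong Y$ and $N\otimes_Y M\cong X$, which reduces to the contraction identity $\overline{P}\otimes_X P\cong \id_{\Gamma'}$ (and its counterpart for $Q$). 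This identity is exactly the statement that the relative tensor product over the Frobenius monoid splits the canonical idempotent on $\overline{P}\circ P$ down to the identity, and it should follow from the duality (zig-zag) equations~\eqref{eq:rightdual} and~\eqref{eq:leftdual} together with specialness: diagrammatically one slides the two caps past each other and cancels via the snake equations~\eqref{eq:snake}.

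The reverse implication is where I expect the real difficulty to lie. Here the cleanest route is to use the reconstruction of Theorem~\ref{thm:Frobeniussplit} to recover the source from the monoid: I would argue that the dagger idempotent defining $\Gamma_X$ is, up to the canonical unitary between splittings of a dagger idempotent (Definition~\ref{def:daggeridempotentsandsplitting}), the same idempotent that $(H,P)$ exhibits on $H^*\otimes V_{\Gamma}\otimes H$, so that $\Gamma_{P\circ\overline{P}}\cong\Gamma'$ — that is, $\Psi\circ\Phi=\id$. Granting this, the reverse implication follows provided $\Psi$ is itself well defined, namely that Morita-equivalent simple Frobenius monoids have isomorphic reconstructions $\Gamma_X\cong\Gamma_Y$. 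This last fact I would obtain from the Remark following Theorem~\ref{thm:Frobeniussplit}: each $\Gamma_X$ is an Eilenberg--Moore object for the Frobenius monad $X$ in $\QGraphIso$, Eilenberg--Moore objects are determined up to equivalence, Morita-equivalent monads share their Eilenberg--Moore object, and equivalences in $\QGraphIso$ are ordinary isomorphisms of quantum graphs by Proposition~\ref{prop:equivalenceQGraph}. Chaining $\Gamma'\cong\Gamma_{P\circ\overline{P}}\cong\Gamma_{Q\circ\overline{Q}}\cong\Gamma''$ then closes the biconditional. The main obstacle throughout is controlling the relative tensor products concretely — verifying $\overline{P}\otimes_X P\cong\id_{\Gamma'}$ and that the Morita bimodules transport the Eilenberg--Moore universal property in a dagger-compatible way — which is precisely the bookkeeping deferred to Section~\ref{sec:proof}; I would lean on the explicit idempotent-splitting description given there rather than on abstract monad theory should the latter prove awkward to make compatible with the $\dagger$-structure.
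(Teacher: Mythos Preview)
Your outline is correct, and the forward implication (isomorphic sources $\Rightarrow$ Morita equivalent) via the bimodules $Q\circ\overline{P}$ and $P\circ\overline{Q}$ is exactly what the paper does (part~2 of the proof of Theorem~\ref{thm:maintechnical} in Appendix~\ref{app:2categorypictures}). The paper simply packages the whole biconditional as a general statement about dagger 2-categories with split dagger idempotents (Theorem~\ref{thm:maintechnical}), verifies that $\QGraphIso$ satisfies the hypotheses, and reads off the corollary.

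Where you diverge is in the reverse implication. The paper does \emph{not} go through Eilenberg--Moore objects or invoke ``Morita-equivalent monads share their EM object''; instead it constructs the dagger equivalence $X\simeq Y$ directly and concretely from the Morita data. Given invertible bimodules $_XM_Y$ and $_YN_X$, it forms two dagger idempotents on $\overline{S}\circ M\circ P$ and $\overline{P}\circ N\circ S$ (built from the bimodule actions and the cups/caps of $S,P$), splits them to obtain 1-morphisms $L\colon Y\to X$ and $R\colon X\to Y$, and then verifies graphically that $L\circ R\cong 1_X$ and $R\circ L\cong 1_Y$ using specialness and the Morita equations~\eqref{eq:alldataMorita}. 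Your EM-object route is valid in principle, but note that the Remark you lean on is stated without proof, and the assertion that Morita-equivalent Frobenius monads have equivalent EM objects ultimately unwinds to precisely the same relative-tensor-product computations (e.g.\ $\overline{P}\otimes_X P\cong\id$) that you flag as the main obstacle---so you do not avoid that work, you only relocate it. The paper's explicit construction has the advantage of making dagger-compatibility manifest at each step, which is the point you yourself identify as potentially awkward in the abstract approach.
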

\begin{proof}
This is a straightforward application of a general theorem (Theorem~\ref{thm:maintechnical}), which holds in any dagger 2-category in which dagger idempotents split, and which is proved in Appendix~\ref{app:2categorypictures}. 

To apply this theorem, we note that dagger idempotents split in $\QGraphIso$, as shown in~\cite[Proof of Theorem 6.4]{Musto2017a}. The conditions of the theorem are therefore satisfied. The result follows immediately, since every 1-morphism in $\QGraphIso$ can be normalized to a special $1$-morphism (see Appendix~\ref{app:2categorypictures}) by multiplication with a scalar factor, dagger equivalences in $\QGraphIso$ are precisely ordinary isomorphisms (Proposition~\ref{prop:equivalenceQGraph}), and dagger Frobenius monoids in $\QAut(\Gamma)$ are split if and only if they are simple (Proposition~\ref{prop:quantumbijectionFrobenius} and Theorem~\ref{thm:Frobeniussplit}).
\end{proof}
\ignore{By Corollary~\ref{cor:bigclassification}, we can understand pseudo-telepathic graphs in terms of Morita equivalence classes of Frobenius monoids of central type in $\QAut(G)$.}%

\ignore{
\begin{remark} Since every equivalence in $\QGraph$ is a dagger equivalence, we could have equivalently classified quantum isomorphic quantum graphs in terms of Frobenius monoids of central type up to \emph{dagger Morita equivalence} as defined in Section~\ref{app:Frobenius}. \DR{check ref} \DR{also we might want to move this Remark below the next remark. The next remark is more important}
\end{remark}
}

\begin{remark} \label{rem:onlyquantumautos}The classification in Corollary~\ref{cor:bigclassification} only depends on the monoidal category $\QAut(\Gamma)$ and its fibre functor $F:\QAut(\Gamma) \to \Hilb$. In the language of compact quantum groups, the classification of quantum graphs quantum isomorphic to a classical graph $\Gamma$ depends only on the quantum automorphism group of $\Gamma$, and not on its action on the set of vertices $V_{\Gamma}$. 
\end{remark}

\begin{remark}
Corollary~\ref{cor:bigclassification} provides a classification of all quantum graphs $\Gamma'$ which are quantum isomorphic to a quantum graph $\Gamma$, but does not classify the explicit quantum isomorphisms between $\Gamma'$ and $\Gamma$. Such a classification can in fact be achieved as follows. We take two quantum isomorphisms $(H,P): \Gamma' \to \Gamma$ and $(H',P'): \Gamma''\to \Gamma$ into $\Gamma$ to be \emph{equivalent} if there is an isomorphism of quantum graphs (Definition~\ref{def:quantumgraphiso}) $\epsilon:\Gamma'\to \Gamma''$ and a unitary map $U:H\to H'$ such that the following holds\footnote{For classical graphs $\Gamma, \Gamma'$ and $\Gamma''$, this translates into the following condition on projective permutation matrices. Two projective permutation matrices $\{P_{v',v}\}_{v'\in V_{\Gamma'}, v\in V_{\Gamma}}$ and $\{P'_{v'',v}\}_{v''\in V_{\Gamma''}, v\in V_{\Gamma}}$ on Hilbert spaces $H$ and $H'$ are equivalent if there is a graph isomorphism $\epsilon: \Gamma' \to \Gamma''$ and a unitary $U:H\to H'$ such that for all $v\in V_{\Gamma}$ and $v' \in V_{\Gamma'}$ it holds that $P_{v',v} = U^\dagger P'_{\epsilon(v'),v} U $ }:
\begin{equation}\label{eq:equivalencequantumiso}\begin{tz}[zx,yscale=-1,every to/.style={out=up, in=down}]
\draw (0,0) to (0,1) to (2,3) to node[zxnode=\zxwhite, pos=0.1] {$\epsilon$} (2,4);
\draw[string,arrow data={0.35}{<}] (2,0) to node[zxnode=\zxwhite, pos=0.9] {$U^\dagger$} (2,1);
\draw[string,arrow data={0.8}{<},arrow data={0.26}{<}](2,1) to node[zxnode=\zxwhite, pos=0.5]{$P'$}  (0,3);
\draw[string,arrow data={0.85}{<}] (0,3) to node[zxnode=\zxwhite, pos=0.1]{$U$} (0,4);
\node[dimension, right] at (2,0) {$H$};
\node[dimension, right] at (0,0) {$V_{\Gamma}$};
\node[dimension, right] at (2,4) {$V_{\Gamma'}$};
\node[dimension, right] at (0,4) {$H$};
\end{tz}
\quad = \quad~
\begin{tz}[zx,xscale=-1,every to/.style={out=up, in=down}]
\draw (0,0) to (0,1) to (2,3) to  (2,4);
\draw (2,0) to  (2,1);
\draw[string,arrow data={0.99}{>},arrow data={0.01}{>}](2,1) to node[zxnode=\zxwhite, pos=0.5]{$P$}  (0,3);
\draw[string](0,3) to (0,4);
\node[dimension, right] at (2,0) {$H$};
\node[dimension, right] at (0,0) {$V_{\Gamma'}$};
\node[dimension, right] at (2,4) {$V_{\Gamma}$};
\node[dimension, right] at (0,4) {$H$};
\end{tz}
\end{equation}
\noindent
It then follows from Remark~\ref{rem:classify*iso} that the constructions of Proposition~\ref{prop:quantumbijectionFrobenius} and Theorem~\ref{thm:Frobeniussplit} induce a bijection between the following sets:
\begin{itemize}
\item Quantum isomorphisms into $\Gamma$ up to the equivalence relation~\eqref{eq:equivalencequantumiso}.
\item $*$-isomorphism classes of simple dagger Frobenius monoids in $\QAut(\Gamma)$.
\end{itemize}
\noindent
In other words, $*$-isomorphism classes of simple dagger Frobenius monoids classify quantum isomorphisms into $\Gamma$ up to dagger equivalence, while the coarser Morita equivalence classes only classify quantum graphs which are quantum isomorphic to $\Gamma$, without keeping track of the quantum isomorphism itself. For applications to pseudo-telepathy, we are mainly interested in this latter, coarser classification.
\end{remark}

\noindent
The following is a first, easy application of Corollary~\ref{cor:bigclassification}.

\begin{corollary}\label{cor:easyapplication} Let $\Gamma$ be a quantum graph with trivial quantum automorphism group, that is, $\QAut(\Gamma) \cong \Hilb$. Then, every quantum graph that is quantum isomorphic to $\Gamma$ is also isomorphic to $\Gamma$.
\end{corollary}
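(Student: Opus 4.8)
The plan is to reduce the statement, via the classification of Corollary~\ref{cor:bigclassification}, to counting Morita equivalence classes in the trivial category $\Hilb$, and to show there is exactly one. By hypothesis $\QAut(\Gamma) \cong \Hilb$, so Corollary~\ref{cor:bigclassification} gives a bijection between isomorphism classes of quantum graphs quantum isomorphic to $\Gamma$ (recall that, by dualisability of quantum isomorphisms, Theorem~\ref{thm:dualisable}, the relation ``quantum isomorphic to $\Gamma$'' is symmetric) and Morita equivalence classes of simple dagger Frobenius monoids in $\Hilb$. Since $\Gamma$ is quantum isomorphic to itself via the one-dimensional identity quantum isomorphism $(\mathbb{C}, \id)$, whose associated Frobenius monoid is the monoidal unit $\mathbb{C}$, it suffices to prove that $\Hilb$ admits a single Morita equivalence class of simple dagger Frobenius monoids; every quantum graph quantum isomorphic to $\Gamma$ must then lie in this unique class, and hence be isomorphic to $\Gamma$.

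First I would identify the simple dagger Frobenius monoids in $\Hilb$. By Definition~\ref{def:simpleFrob} these are the dagger Frobenius algebras whose underlying algebra is $*$-isomorphic to an endomorphism algebra~\eqref{eq:endomorphismalgebra}; equivalently, by Theorem~\ref{thm:jamiec*frob}, they are exactly the simple finite-dimensional $C^*$-algebras, i.e. the matrix algebras $\End(H) \cong \Mat_n(\mathbb{C})$.

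The core of the argument is to show that each such $\End(H)$ is Morita equivalent to the monoidal unit $\mathbb{C} = \End(\mathbb{C})$, so that all simple dagger Frobenius monoids collapse into one Morita class. I would exhibit the standard Morita data explicitly: take $M = H$ as an $\End(H)$-$\mathbb{C}$-dagger bimodule (with $\End(H) = H \otimes H^*$ acting on the left by evaluation and $\mathbb{C}$ acting trivially on the right), and $N = H^*$ as a $\mathbb{C}$-$\End(H)$-dagger bimodule. One then verifies the conditions of Definition~\ref{def:daggermoritaequiv}: the relative tensor product $M \otimes_{\mathbb{C}} N$ is the ordinary tensor product $H \otimes H^* \cong \End(H)$ as a dagger bimodule, while $N \otimes_{\End(H)} M$ is computed by splitting the idempotent~\eqref{eq:idempotentforrelprod}, whose image is the one-dimensional ``trace'' direction, reproducing $\mathbb{C}$.

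The main obstacle will be the bookkeeping in the special dagger setting rather than any conceptual difficulty: the endomorphism algebra~\eqref{eq:endomorphismalgebra} carries the precise normalisation that makes it special, and one must check that the bimodule actions and the splitting isometries $i, i'$ of Definition~\ref{def:daggermoritaequiv} respect the dagger structure, tracking the $\tfrac{1}{\sqrt{n}}$ and $\sqrt{n}$ factors correctly. Conceptually this is just the observation that $\End(H)$ is the ``pair of pants'' Frobenius monoid on the dualisable object $H$ and is therefore Morita trivial. Once Morita triviality is established for every $\End(H)$, the set of Morita classes is a singleton and the corollary follows at once from Corollary~\ref{cor:bigclassification}.
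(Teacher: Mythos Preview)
Your proposal is correct and follows exactly the paper's approach: apply Corollary~\ref{cor:bigclassification} and observe that $\Hilb$ has a single Morita equivalence class of simple dagger Frobenius monoids. The paper simply asserts this last fact in one line, whereas you spell out the standard Morita-triviality argument for the endomorphism algebras $\End(H)$; your extra detail is accurate (including the identification of the simple dagger Frobenius monoids via Definition~\ref{def:simpleFrob} and Theorem~\ref{thm:jamiec*frob}) and the normalisation bookkeeping you flag is indeed the only thing requiring care.
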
 
\begin{proof} The category $\QAut(\Gamma) \cong \Hilb$ has only one Morita equivalence class of simple dagger Frobenius monoids, corresponding to the graph $\Gamma$ itself.
\end{proof}
\subsection{Classifying quantum isomorphic classical graphs}\label{sec:quantumisomorphicclassical}

In Corollary~\ref{cor:bigclassification}, we classified quantum graphs $\Gamma'$ quantum isomorphic to a quantum graph $\Gamma$ in terms of Morita equivalence classes of simple dagger Frobenius monoids in $\QAut(\Gamma)$.
However, as discussed in Remark~\ref{rem:quantumgraphsemerge}, if $\Gamma$ is a classical graph, then the quantum graph $\Gamma_X$ corresponding to a simple dagger Frobenius monoid $X$ in $\QAut(\Gamma)$ will in general not be classical.

In this section, we prove a necessary and sufficient condition for commutativity of the algebra $V_{\Gamma_X}$, and therefore classicality of the graph $\Gamma_X$. \ignore{This is a certain commutativity condition on the simple dagger Frobenius monoid $X$. We show that} This results in a classification of classical graphs quantum isomorphic to a given classical graph $\Gamma$.  

For a quantum isomorphism $(H,P): \Gamma' \to \Gamma$, equations~\eqref{eq:rightdual} and~\eqref{eq:leftdual} are expressed in the shorthand notation~\eqref{eq:shorthand} as follows:
\def\scl{0.55}%
\begin{calign}\label{eq:biunitarybraiding}
\begin{tz}[zx,xscale=-1,scale=\scl]
\clip (-2.05,-0.05) rectangle (4.05,4.05);
\draw[arrow data={0.1}{>},arrow data={0.499}{>}, arrow data={0.9}{>}] (0,0) to [out=up, in=up, looseness=6.] (2,0);
\draw (-2,0) to [out=up, in=down]  node[zxvertex=\zxwhite, pos=0.4] {} node[zxvertex=\zxwhite, pos=0.6]{} (4,4);
\end{tz}
=
\begin{tz}[zx,xscale=-1,scale=\scl]
\clip (-2.05,-0.05) rectangle (4.05,4.05);
\draw[arrow data={0.2}{>}, arrow data={0.8}{>}] (0,0) to [out=up, in=up, looseness=2.] (2,0);
\draw (-2,0) to [out=up, in=down]  (4,4);
\end{tz}
&
\begin{tz}[zx,xscale=-1,scale=-1,scale=\scl]
\clip (-2.05,-0.05) rectangle (4.05,4.05);
\draw[arrow data={0.1}{<},arrow data={0.499}{<}, arrow data={0.9}{<}] (0,0) to [out=up, in=up, looseness=6.] (2,0);
\draw (-2,0) to [out=up, in=down]  node[zxvertex=\zxwhite, pos=0.4] {} node[zxvertex=\zxwhite, pos=0.6] {}(4,4);
\end{tz}
=
\begin{tz}[zx,xscale=-1,scale=-1,scale=\scl]
\clip (-2.05,-0.05) rectangle (4.05,4.05);
\draw[arrow data={0.2}{<}, arrow data={0.8}{<}] (0,0) to [out=up, in=up, looseness=2.] (2,0);
\draw (-2,0) to [out=up, in=down]  (4,4);
\end{tz}
\\[2pt]\label{eq:biunitarybraiding2}
\begin{tz}[zx,xscale=-1,master,scale=-1,scale=\scl]
\clip (-2.05,-0.05) rectangle (4.05,4.05);
\draw[arrow data={0.1}{>},arrow data={0.499}{>}, arrow data={0.9}{>}] (0,0) to [out=up, in=up, looseness=6.] (2,0);
\draw (-2,0) to [out=up, in=down]  node[zxvertex=\zxwhite, pos=0.4] {} node[zxvertex=\zxwhite, pos=0.6] {}(4,4);
\end{tz}
=
\begin{tz}[zx,xscale=-1,scale=-1,scale=\scl]
\clip (-2.05,-0.05) rectangle (4.05,4.05);
\draw[arrow data={0.2}{>}, arrow data={0.8}{>}] (0,0) to [out=up, in=up, looseness=2.] (2,0);
\draw (-2,0) to [out=up, in=down]  (4,4);
\end{tz}
&
\begin{tz}[zx,xscale=-1,scale=1,scale=\scl]
\clip (-2.05,-0.05) rectangle (4.05,4.05);
\draw[arrow data={0.1}{<},arrow data={0.499}{<}, arrow data={0.9}{<}] (0,0) to [out=up, in=up, looseness=6.] (2,0);
\draw (-2,0) to [out=up, in=down]  node[zxvertex=\zxwhite, pos=0.4] {} node[zxvertex=\zxwhite, pos=0.6] {}(4,4);
\end{tz}
=
\begin{tz}[zx,xscale=-1,scale=1,scale=\scl]
\clip (-2.05,-0.05) rectangle (4.05,4.05);
\draw[arrow data={0.2}{<}, arrow data={0.8}{<}] (0,0) to [out=up, in=up, looseness=2.] (2,0);
\draw (-2,0) to [out=up, in=down]  (4,4);
\end{tz}
\end{calign}
These equations look exactly like the second Reidemeister move from knot theory. Together with equations~\eqref{eq:quantumfunction} and~\eqref{eq:quantumfunction2}, this leads to a very flexible topological calculus, allowing us to move oriented Hilbert space wires almost freely through our diagrams, interconverting the algebra $V_\Gamma$ (in the following depicted by white nodes) and the algebra $V_{\Gamma'}$ (depicted by grey nodes) when passing through the corresponding nodes.
\ignore{
\begin{proof}
Putting the shorthand of~\eqref{eq:shorthand} together with the equations~\eqref{eq:daggerdual}, \eqref{eq:rightdual} and \eqref{eq:leftdual}, we immediately obtain equations~\eqref{eq:biunitarybraiding} and~\eqref{eq:biunitarybraiding2}. The equations~\eqref{eq:pullthroughdoubledwire} are the defining equations~\eqref{eq:quantumfunction} and \eqref{eq:quantumfunction2} of a quantum isomorphism for $X= \overline{P} \circ P: \Gamma \to \Gamma$. Equation~\eqref{eq:removebubble} follows immediately from the definition of the cups and caps in $\Hilb$~\eqref{eq:cupscapsHilb}.
\end{proof}
\noindent
Equations~\eqref{eq:biunitarybraiding} and~\eqref{eq:biunitarybraiding2} look exactly like the second Reidemeister move from knot theory, and allow us to pull Hilbert space wires through quantum set wires. The equations~\eqref{eq:pullthroughdoubledwire} allow us to pull the multiplication, comultiplication, unit, counit and adjacency matrix of $\Gamma$ through a doubled wire. Finally, the equation~\eqref{eq:removebubble} allows us to remove or add a bubble in exchange for a scalar factor.
}

We also recall the following piece of folklore~\cite{Baez2004}.
\begin{proposition}\label{prop:projectorcentre} Let $A$ be a\ignore{ finite-dimensional $C^*$-algebra, or equivalently a } special symmetric dagger Frobenius algebra\ignore{ in $\Hilb$} (depicted as a grey node). Then, the following endomorphism $P_{Z(A)}:A\to A$ is a projector onto the centre of $A$:
\begin{equation}\label{eq:projectorcenter}\begin{tz}[zx]
\draw (0,0) to (0,1) to [out=135, in=down] (-0.5,1.5) to [out=up, in=down] (0.5, 3) to [out=up, in=-45] (0,3.5) to (0,4.5);
\draw (0,1) to [out=45, in=down] (0.5,1.5) to [out=up, in=down] (-0.5, 3) to [out=up, in=-135] (0,3.5);
\node[zxvertex=\zxblack,zxup] at (0,1){};
\node[zxvertex=\zxblack,zxdown] at (0,3.5){};
\end{tz}
\end{equation}
\end{proposition}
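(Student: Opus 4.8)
The plan is to reduce the statement to two facts — that the image of $P_{Z(A)}$ is contained in the centre $Z(A)$, and that $P_{Z(A)}$ acts as the identity on $Z(A)$ — after which idempotency and the identification of the image both follow formally. Throughout write $m$, $\Delta = m^\dagger$, $u$ and $\epsilon = u^\dagger$ for the multiplication, comultiplication, unit and counit of $A$, and $\sigma$ for the symmetry, so that the endomorphism depicted in \eqref{eq:projectorcenter} is $P_{Z(A)} = m \circ \sigma \circ \Delta$.

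First I would record that $P_{Z(A)}$ is self-adjoint: reflecting the diagram \eqref{eq:projectorcenter} in a horizontal axis interchanges the two nodes and fixes the crossing, so $P_{Z(A)}^\dagger = \Delta^\dagger \circ \sigma^\dagger \circ m^\dagger = m \circ \sigma \circ \Delta = P_{Z(A)}$. This is not needed for the bare statement, but it identifies $P_{Z(A)}$ as the \emph{orthogonal} projection onto $Z(A)$ once the remaining claims are in place.

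The main step is to show that $P_{Z(A)}(a)$ is central for every $a$, i.e.\ that $m \circ (P_{Z(A)} \otimes \mathrm{id}) = m \circ \sigma \circ (P_{Z(A)} \otimes \mathrm{id})$. Diagrammatically I would attach a free strand at the final multiplication and drag it around the handle formed by the $\Delta$–$m$ pair: the Frobenius relation \eqref{eq:Frobenius} lets the strand pass the multiplication node, associativity and unitality \eqref{eq:assocandunitality} regroup the result, and the crossing $\sigma$ supplies precisely the symmetry that brings the strand out on the opposite side. Concretely, in terms of dual bases $\{e_i\}$, $\{f^i\}$ for the Frobenius form $\epsilon \circ m$, one has $P_{Z(A)}(a) = \sum_i f^i a\, e_i$, the standard symmetrising projection of $a$ into $Z(A)$; the content of the diagrammatic argument is exactly that this element commutes with everything. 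I expect this to be the main obstacle, since it is where the Frobenius law and the crossing must be combined with care.

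Finally I would show that $P_{Z(A)}$ fixes $Z(A)$ pointwise, so that the crossing becomes redundant on central elements. Setting $c := \Delta \circ u$ for the Casimir, the Frobenius relations give the module identity $\Delta(z) = (z \otimes 1)\,c = c\,(1 \otimes z)$; since $\sigma$ is an algebra homomorphism of $A \otimes A$, symmetry of the Casimir $\sigma \circ c = c$ — equivalent to the symmetric condition (\ref{eq:special}b) — together with centrality of $z \otimes 1$ in $A \otimes A$ yields $\sigma \circ \Delta(z) = \Delta(z)$, and then specialness (\ref{eq:special}a) gives $P_{Z(A)}(z) = m \circ \Delta(z) = z$. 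Combining the two steps: $P_{Z(A)}(a) \in Z(A)$ by the third paragraph and $P_{Z(A)}$ is the identity there by the present one, so $P_{Z(A)}^2(a) = P_{Z(A)}(P_{Z(A)}(a)) = P_{Z(A)}(a)$, proving idempotency; moreover $\mathrm{Im}(P_{Z(A)}) \subseteq Z(A)$, while every $z \in Z(A)$ has the form $z = P_{Z(A)}(z)$, so $\mathrm{Im}(P_{Z(A)}) = Z(A)$, as required.
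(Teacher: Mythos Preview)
Your argument is correct, but it follows a genuinely different route from the paper. The paper's proof is two lines: it verifies the formula directly for the endomorphism algebras \eqref{eq:endomorphismalgebra} (where the centre is one-dimensional, so one only has to check that $m\circ\sigma\circ\Delta$ is $\frac{1}{n}\Tr(-)\,\mathbbm{1}$), and then appeals to semisimplicity --- every special symmetric dagger Frobenius algebra in $\Hilb$ is a direct sum of such matrix algebras --- to extend to the general case. In contrast, you give an intrinsic argument using only the Frobenius law, specialness and symmetry: showing $\mathrm{Im}(P_{Z(A)})\subseteq Z(A)$ via the dual-basis/Casimir identity and $P_{Z(A)}|_{Z(A)}=\mathrm{id}$ via $\sigma\Delta(z)=\Delta(z)$ for central $z$. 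Your route is more conceptual and portable (it works in any symmetric monoidal dagger category, not just $\Hilb$), while the paper's is quicker but leans on the Artin--Wedderburn structure theory that Theorem~\ref{thm:jamiec*frob} imports. One small remark: in your centrality step, the ``drag the strand around the handle'' manoeuvre is doing the real work; it is worth writing out the one-line algebraic identity $\sum_i (be_i)\otimes f^i=\sum_i e_i\otimes (f^ib)$ explicitly, since that is exactly what turns $b\,P_{Z(A)}(a)$ into $P_{Z(A)}(a)\,b$.
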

\begin{proof} For an appropriately normalised matrix algebra (see e.g.~\eqref{eq:endomorphismalgebra}), Proposition~\ref{prop:projectorcentre} can easily be verified. Semisimplicity then extends this formula to general finite-dimensional $C^*$-algebras. 
\end{proof}
\noindent 
In particular, $\dim(Z(A)) = \Tr(P_{Z(A)})$, and $A$ is commutative if and only if $\Tr(P_{Z(A)}) = \dim(A)$. We use this fact to derive our commutativity condition.
\begin{theorem} \label{thm:commutativitycondition} Let $\Gamma$ be a classical graph, let $(H\otimes H^*,X)$ be a simple dagger Frobenius monoid in $\QAut(\Gamma)$ and let $\Gamma_X$ be the associated quantum graph. Then, the dimension of the centre of $V_{\Gamma_X}$ can be expressed as follows, where $X_{v,v}$ are the diagonal components of the projective permutation matrix underlying $X$ (see~\eqref{eq:componentPPM}):
\begin{equation}\label{eq:centerquantumgraph}\dim(Z(V_{\Gamma_X}))~~=~\frac{1}{\dim(H)}~
\sum_{v\in V_{\Gamma}}~~~\begin{tz}[zx]
\draw[arrow data={0.25}{>}, arrow data={0.8}{>}] (-0.3,-2) to (-0.3,2) ; 
\draw[arrow data={0.13}{<}, arrow data={0.9}{<}] (0.3,-2) to [out=up, in=down] (-1.5, 2) to [out=up, in=up, looseness=3.5] (-0.3,2);
\draw[arrow data={0.14}{>}, arrow data={0.9}{>}]  (0.3, 2) to [out=down, in=up] (-1.5,-2) to [out=down, in=down, looseness=3.5] (-0.3,-2);
\draw[arrow data={0.08}{<}, arrow data={0.92}{<}]  (0.3,2) to [out=up, in=up, looseness=3.5] (1.5,2) to (1.5,-2) to [out=down, in=down, looseness=3.5] (0.3,-2);
\node[zxnode=\zxwhite] at (0,2) {$X_{v,v}$};
\node[zxnode=\zxwhite] at (0,-2) {$X_{v,v}$};
\end{tz}
\end{equation}
In particular, $\Gamma_X$ is classical if and only if $\dim(Z(V_{\Gamma_X})) = |V_{\Gamma}|$.
\end{theorem}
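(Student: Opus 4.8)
The plan is to express the left-hand side as the trace of a projector and then evaluate that trace diagrammatically, so that the classical (commutative) structure of $V_\Gamma$ forces the answer into the stated form.

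First I would apply Proposition~\ref{prop:projectorcentre} to the special symmetric dagger Frobenius algebra $A := V_{\Gamma_X}$. This yields the projector $P_{Z(A)}:A\to A$ onto the centre $Z(A)$, built from the multiplication and comultiplication of $A$ as in~\eqref{eq:projectorcenter} (comultiply, swap the two strands, multiply). Since a projector has rank equal to its trace, this gives
\begin{equation*}
\dim(Z(V_{\Gamma_X})) = \Tr\big(P_{Z(V_{\Gamma_X})}\big),
\end{equation*}
which already matches the shape of~\eqref{eq:centerquantumgraph}: the claim is precisely that this trace, rewritten in terms of $X$, equals the displayed sum.

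Next I would substitute the explicit description~\eqref{eq:secondalgebra} of the multiplication and comultiplication of $A=V_{\Gamma_X}$ into the diagram for $P_{Z(A)}$ and close it up into a trace. Each structure map of $A$ carries a normalising factor $\tfrac1n$ with $n=\dim(H)$ and is expressed, via the shorthand~\eqref{eq:shorthand}, through $P$, $P^\dagger$ and the multiplication of the classical algebra $V_\Gamma$ together with a closed $V_\Gamma$-loop. The result is a single closed string diagram in $\Hilb$ built from the quantum automorphism $X$, the classical Frobenius algebra $V_\Gamma$, and the oriented $H$- and $H^*$-wires. I would then simplify it using the flexible topological calculus available here: the snake equations~\eqref{eq:snake}, the Reidemeister-II moves~\eqref{eq:biunitarybraiding}--\eqref{eq:biunitarybraiding2}, the pull-through relations~\eqref{eq:quantumfunction}--\eqref{eq:quantumfunction2}, and specialness to cancel bubbles against the $\tfrac1n$ factors. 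The decisive step is to use that $\Gamma$ is \emph{classical}: the copyable elements of $V_\Gamma$ form a basis of vertices, and the multiplications of $V_\Gamma$ act on this basis by $\ket{v}\otimes\ket{w}\mapsto\delta_{v,w}\ket v$. Resolving the $V_\Gamma$-loop in the copyable basis therefore collapses it to a single sum $\sum_{v\in V_\Gamma}$, and the delta functions force the vertex labels entering $X$ from above and below to coincide. What survives for each $v$ is exactly the configuration obtained by feeding $\ket v$ and $\bra v$ into the two $V_\Gamma$-legs of $X$ --- the diagonal component $X_{v,v}$ of~\eqref{eq:componentPPM} --- appearing twice and paired through the cups and caps on the $H$- and $H^*$-wires, which is precisely the right-hand side of~\eqref{eq:centerquantumgraph}. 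Carefully combining the $\tfrac1n$ factors with the trace normalisations leaves a single overall $\tfrac1{\dim(H)}$.

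The main obstacle is exactly this third step: keeping orientations and the $H$- versus $V_\Gamma$-wires straight while applying the topological moves, and verifying that commutativity of $V_\Gamma$ is what forces only the \emph{diagonal} components $X_{v,v}$ to survive (rather than a double sum over pairs $(v,w)$), together with the bookkeeping needed to confirm that all normalising constants collapse to the stated $\tfrac1{\dim(H)}$. Finally, the concluding sentence is immediate: $\Gamma_X$ is classical precisely when $V_{\Gamma_X}$ is commutative, i.e. when $Z(V_{\Gamma_X})=V_{\Gamma_X}$, i.e. when $\dim(Z(V_{\Gamma_X}))=\dim(V_{\Gamma_X})$. Since $P$ is a quantum isomorphism $\Gamma_X\to\Gamma$, Proposition~\ref{prop:dimensionpreserved} gives $\dim(V_{\Gamma_X})=\dim(V_\Gamma)=|V_\Gamma|$, so classicality of $\Gamma_X$ is equivalent to the right-hand side of~\eqref{eq:centerquantumgraph} equalling $|V_\Gamma|$.
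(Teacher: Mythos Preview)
Your proposal is correct and follows essentially the same approach as the paper: compute $\dim(Z(V_{\Gamma_X}))$ as the trace of the projector of Proposition~\ref{prop:projectorcentre}, rewrite the $V_{\Gamma_X}$-structure in terms of the $V_\Gamma$-structure via the quantum isomorphism $P$, and then resolve in the copyable basis of the classical algebra $V_\Gamma$ to obtain the sum over vertices and the diagonal components $X_{v,v}$; the final clause is handled exactly as you say via Proposition~\ref{prop:dimensionpreserved}. The only difference is one of execution: rather than substituting each structure map of~\eqref{eq:secondalgebra} separately and tracking several $\tfrac1n$ factors, the paper introduces a single closed $H$-loop (worth $n$, by~\eqref{eq:closedcircle}) and enlarges it across the diagram, converting all grey $V_{\Gamma_X}$-nodes into white $V_\Gamma$-nodes at once---this is just a cleaner implementation of the same step and yields the global prefactor $\tfrac1{\dim(H)}$ directly.
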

\begin{proof} Note that for a special symmetric dagger Frobenius algebra $A$ (depicted as a grey node) and a linear map $f:A\to A$, the trace $\Tr(f)$ can be computed as follows:
\[ \Tr(f) ~~= \begin{tz}[zx]
\draw (0,0) to (0,0.5) to  [out=up, in=up, looseness=2] (0.8, 0.5) to (0.8,-0.5) to [out=down, in=down, looseness=2] (0,-0.5) to (0,0);
\node[zxnode] at (0,0) {$f$};
\node[zxvertex=\zxblack] at (0.4, 0.92) {};
\node[zxvertex=\zxblack] at (0.4, -0.92) {};
\end{tz}
\]
Let $(H,P):\Gamma_X \to \Gamma$ be a quantum isomorphism such that $P\circ \overline{P}= X$ (see Theorem~\ref{thm:Frobeniussplit}). Using the shorthand notation~\eqref{eq:shorthand} for $P$, the trace of the projector~\eqref{eq:projectorcenter} for the algebra $V_{\Gamma_X}$ (depcited as a grey node) can be expressed as follows, where $n=\dim(H)$:
\def\scl{0.9}
\[
\begin{tz}[zx,scale=\scl]
\draw (0,0.5) to (0,1) to [out=135, in=down] (-0.75,2) to [out=up, in=down] (0.75, 3.5) to [out=up, in=-45] (0,4.5) to (0,5)  to [out=up, in=up,looseness=2] node[zxvertex=\zxblack, pos=0.5] {} (2.5, 5) to (2.5,0.5) to [out=down, in=down,looseness=2] node[zxvertex=\zxblack, pos=0.5]{} (0,0.5);
\draw (0,1) to [out=45, in=down] (0.75,2) to [out=up, in=down] (-0.75, 3.5) to [out=up, in=-135] (0,4.5);
\node[zxvertex=\zxblack,zxup] at (0,1){};
\node[zxvertex=\zxblack,zxdown] at (0,4.5){};
\end{tz}\ignore{
~\superequalseq{eq:secondalgebra}~\frac{1}{n^4}~
\begin{tz}[zx,scale=\scl]
\draw[arrow data={0.26}{>}, arrow data={0.5}{>}, arrow data ={1}{>}] (0,1.1) ellipse (1 and 0.7);
\draw[arrow data={0.5}{>},arrow data={0.76}{>},  arrow data ={1}{>}] (0,4.4) ellipse (1 and 0.7);
\draw[arrow data={0.26}{>}, arrow data={0.76}{>}] (1.25, 6.4) ellipse (1.75 and 0.7);
\draw[arrow data={0.26}{>}, arrow data={0.76}{>}] (1.25, -0.9) ellipse (1.75 and 0.7);
\draw (0,0.5) to  (0,1) to [out=135, in=down] node[zxvertex=\zxwhite, pos=0.7]{}(-0.75,2) to [out=up, in=down] (0.75, 3.5) to [out=up, in=-45] node[zxvertex=\zxwhite, pos=0.3]{} (0,4.5) to  (0,5)  to [out=up, in=up,looseness=2]  node[zxvertex=\zxwhite, pos=0.01]{}node[zxvertex=\zxwhite, pos=0.175] {}node[zxvertex=\zxwhite, pos=0.5] {}node[zxvertex=\zxwhite, pos=0.825] {} (2.5, 5) to (2.5,0.5) to [out=down, in=down,looseness=2] node[zxvertex=\zxwhite, pos=0.175]{}node[zxvertex=\zxwhite, pos=0.5]{} node[zxvertex=\zxwhite, pos=0.825]{} node[zxvertex=\zxwhite, pos=0.99]{} (0,0.5);
\draw (0,1) to [out=45, in=down] node[zxvertex=\zxwhite, pos=0.7]{} (0.75,2) to [out=up, in=down] (-0.75, 3.5) to[out=up, in=-135]  node[zxvertex=\zxwhite, pos=0.3]{} (0,4.5);
\node[zxvertex=\zxwhite,zxup] at (0,1){};
\node[zxvertex=\zxwhite,zxdown] at (0,4.5){};
\end{tz}
}
~~=~\frac{1}{n}~
\begin{tz}[zx,scale=\scl]
\draw[arrow data={0}{<}, arrow data ={0.06}{<},arrow data={0.13}{<},arrow data ={0.21}{<},arrow data ={0.27}{<},arrow data={0.635}{<}] (-1, 1) to [out=up, in=down] (1.5, 2.75) to [out=up, in=down] (-1, 4.5) to [out=up, in=up, looseness=2] (3.5,4.5) to (3.5, 1) to [out=down, in=down, looseness=2] (-1,1);
\draw (0,0.5) to (0,1) to [out=135, in=down] node[zxvertex=\zxwhite, pos=0.65]{} (-0.75,2) to [out=up, in=down] (0.75, 3.5) to [out=up, in=-45]node[zxvertex=\zxwhite, pos=0]{} (0,4.5) to (0,5)  to [out=up, in=up,looseness=2] node[zxvertex=\zxwhite, pos=0.5] {} (2.5, 5) to (2.5,0.5) to [out=down, in=down,looseness=2] node[zxvertex=\zxwhite, pos=0.5]{} (0,0.5);
\draw (0,1) to [out=45, in=down] node[zxvertex=\zxwhite, pos=1]{} (0.75,2) to [out=up, in=down]  (-0.75, 3.5) to [out=up, in=-135]node[zxvertex=\zxwhite, pos=0.35]{} (0,4.5);
\node[zxvertex=\zxwhite,zxup] at (0,1){};
\node[zxvertex=\zxwhite,zxdown] at (0,4.5){};
\end{tz}
~~\overset{\eqref{eq:untangleunwindtwists}}{=}\frac{1}{n}~
\begin{tz}[zx,scale=\scl]
\draw[arrow data ={0.25}{<}, arrow data ={0.5}{<},arrow data={0.75}{<}, arrow data={1}{<}] (0, 2.75) ellipse (1.5 and 0.8);
\draw (0,0.5) to (0,1) to [out=135, in=down] (-0.75,2) to [out=up, in=down] node[zxvertex=\zxwhite, pos=0.03]{} node[zxvertex=\zxwhite, pos=0.97]{} (0.75, 3.5) to [out=up, in=-45] (0,4.5) to (0,5)  to [out=up, in=up,looseness=2] node[zxvertex=\zxwhite, pos=0.5] {} (2.5, 5) to (2.5,0.5) to [out=down, in=down,looseness=2] node[zxvertex=\zxwhite, pos=0.5]{} (0,0.5);
\draw (0,1) to [out=45, in=down] (0.75,2) to [out=up, in=down] node[zxvertex=\zxwhite, pos=0.03]{} node[zxvertex=\zxwhite, pos=0.97]{}(-0.75, 3.5) to [out=up, in=-135] (0,4.5);
\node[zxvertex=\zxwhite,zxup] at (0,1){};
\node[zxvertex=\zxwhite,zxdown] at (0,4.5){};
\end{tz}
~~=\frac{1}{n}~
\sum_{v\in V_{\Gamma}} ~
\begin{tz}[zx,scale=\scl]
\draw[arrow data ={0.25}{<}, arrow data ={0.5}{<},arrow data={0.75}{<}, arrow data={1}{<}] (0, 2.75) ellipse (1.5 and 0.8);
\draw  (-0.75,1) to (-0.75,2) to [out=up, in=down] node[zxvertex=\zxwhite, pos=0.03]{} node[zxvertex=\zxwhite, pos=0.97]{} (0.75, 3.5) to (0.75, 4.5);
\draw (0.75, 1) to  (0.75,2) to [out=up, in=down] node[zxvertex=\zxwhite, pos=0.03]{} node[zxvertex=\zxwhite, pos=0.97]{}(-0.75, 3.5) to  (-0.75,4.5);
\node[zxnode] at (-0.75,1) {$v$};
\node[zxnode] at (-0.75,4.5) {$v$};
\node[zxnode] at (0.75,1) {$v$};
\node[zxnode] at (0.75,4.5) {$v$};
\end{tz}
\]
In the first equation, we have introduced a circle~\eqref{eq:closedcircle} to the right of the diagram and then enlarged this circle over parts of the diagram, converting grey $V_{\Gamma_X}$-nodes into white $V_{\Gamma}$-nodes in the process. 
In the last equation we used the expression~\eqref{eq:classicalcopy} for the commutative special dagger Frobenius algebra $V_{\Gamma}$. 
\ignore{
\DR{suggestion: get rid of first picture? Suggestion: Write equation for quantum iso in beginning as white black. Say: we already know that P is a quantum iso from white to black. Together with the equatins from beginning can freely move bubbles, etc. }For the second equation, we pulled the comultiplication downwards through a doubled wire, pulled the multiplication upwards through a doubled wire, and removed two bubbles; \DRcomm{we then pulled the bottom Hilbert space loop up around the braiding and over a quantum set wire, before pulling the cap of the Frobenius algebra downwards through a doubled wire and removing a third bubble.}\DR{where's a braiding here? Not entirely clear what's going on} }

Using $X=P \circ \conj{P}$, and untangling the above equation leads to the formula~\eqref{eq:centerquantumgraph}.
\ignore{\def\ang{-17}
\[ 
\begin{tz}[zx,yscale=-1,every to/.style={out=up, in=down}]
\draw  (0.75, 0.75) to  (0.75,1.75) to (0.75,2) to  (-0.75, 3.) to (-0.75,4);
\draw (-0.75,0.75) to (-0.75,1.75) to (-0.75,2) to  (0.75, 3.) to  (0.75,4);
\draw[string,arrow data={0.1}{<}, arrow data={0.95}{<}]  (1.75,0.75) to  [looseness=1.1, in looseness=0.9] node[zxvertex=\zxwhite, pos=0.32] {} node[zxvertex=\zxwhite, pos=0.55] {} (-2.5, 3.2) to (-2.5,4);
\draw[string,arrow data={0.1}{>}, arrow data={0.95}{>}] (2.75,0.75) to (2.75,1.5) to  [looseness=1.1, in looseness=0.9] node[zxvertex=\zxwhite, pos=0.48] {} node[zxvertex=\zxwhite, pos=0.73] {}(-1.5,4);
\end{tz}
~~=~~
\begin{tz}[zx,xscale=-1, every to/.style={out=up, in=down}]
\node[zxnode=\zxwhite] (XR) at (0.75, 2.3) {$X$};
\node[zxnode=\zxwhite] (XL)  at (-0.75, 2.65) {$X$};
\draw[string]  (0.75,0.75) to  (0.75,1.75) to (0.75,4);
\draw[string] (-0.75,0.75) to (-0.75,1.75) to (-0.75,4);
\draw[string,arrow data={0.1}{>}, arrow data={0.95}{>}] (2,0.75) to (2,0.9) to  [looseness=1.1, out looseness=0.9] (-2.25, 3.7) to (-2.25,4);
\draw[string, arrow data={0.5}{<}] (XL) to [out=135, in=\ang,in looseness=2] (0.5, 3.1) to [out=\ang-180, in=25, out looseness=2.5] (XR);
\draw[string,arrow data={0.9}{<}] (XR) to [out=135] (-1.5,4);
\draw[string,arrow data={0.95}{>}] (XL) to [out=5, in =up] (2.75,2) to (2.75,0.75);
\end{tz}
\]}%
\ignore{Plugging this into the above equation and untangling the result leads to the formula~\eqref{eq:centerquantumgraph}. }In particular, $\Gamma_X$ is classical if $V_{\Gamma_X}$ is commutative, that is, if $\dim(Z(V_{\Gamma_X})) = \dim(V_{\Gamma_X})$. Since quantum isomorphisms preserve dimensions (Proposition~\ref{prop:dimensionpreserved}), we have $\dim(V_{\Gamma_X}) = |V_{\Gamma}|$. Thus, $\Gamma_X$ is classical if and only if $\dim(Z(V_{\Gamma_X})) = |V_{\Gamma}|$.
\end{proof}

\ignore{
\DRcomm{Maybe remove:

\begin{remark} Since commutativity of a monoid is preserved under isomorphism, the classification statement in Corollary~\ref{cor:bigclassification} implies that the condition of Theorem~\ref{thm:commutativitycondition} is invariant under Morita equivalence of Frobenius monoids $X$.\ignore{This means that \eqref{eq:commutativeFrobenius} will usually have a more conceptual expression.} \ignore{For example, in Section~\ref{sec:construction} we will see that certain Frobenius monoids of central type come from actual subgroups of central type $(L,\psi)$ of $\Aut(\Gamma)$. In this case, the condition of Theorem~\ref{thm:commutativitycondition} can be expressed purely in terms of the subgroup structure $L\subseteq \Aut(\Gamma)$ and the cohomology class of the $2$-cocycle $\psi$.}
\end{remark}

}
}

\begin{remark} In contrast to the classification of quantum isomorphic quantum graphs (see Remark~\ref{rem:onlyquantumautos}), the condition in Theorem~\ref{thm:commutativitycondition} does not only depend on the abstract monoidal category with fibre functor $\QAut(\Gamma)$. In the language of compact quantum groups, the classification of classical graphs $\Gamma'$ which are quantum isomorphic to a classical graph $\Gamma$ depends both on the quantum automorphism group of $\Gamma$ and its action on $V_{\Gamma}$. 
\ignore{
From a categorical perspective, we therefore expect the commutativity condition of Theorem~\ref{thm:commutativitycondition} to be expressible as an abstract condition on the monoidal category $\mathcal{C}:= \QAut(\Gamma)$ and its module category $\mathcal{M} := \QGraph (*, \Gamma)$. \ignore{\DRcomm{but have not yet identified such a condition which only makes use of the monoidal structure on}}\DR{clear?}\DV{Yes. But why is it a useful perspective?}\DR{because we cannot yet phrase the condition of theorem above for general concrete monoidal categories. Somehow it depends on our model of QGraph which is always bad. But can get rid of this remark }
}%
\end{remark}
\noindent
We therefore obtain a classification of classical graphs which are quantum isomorphic to a classical graph $\Gamma$ in terms of simple dagger Frobenius monoids in $\QAut(\Gamma)$.
\begin{corollary} \label{cor:superclassification}Let $\Gamma$ be a classical graph. Then, the construction of Proposition~\ref{prop:quantumbijectionFrobenius} induces a bijective correspondence between the following structures:
\begin{itemize}
\item Isomorphism classes of classical graphs $\Gamma'$ such that there exists a quantum isomorphism {$\Gamma' \to \Gamma$}.
\item Morita equivalence classes of simple dagger Frobenius monoids in $\QAut(\Gamma)$ for which the expression~\eqref{eq:centerquantumgraph} evaluates to $|V_{\Gamma}|$.
\ignore{
\item Pseudo-telepathic classical graph pairs $(G,H)$ up to isomorphism of $H$.
\item Morita equivalence classes of Morita non-trivial Frobenius monoids of central type in $\QAut(G)$ fulfilling the conditions of Theorem~\ref{thm:commutativitycondition}.}
\end{itemize}
\end{corollary}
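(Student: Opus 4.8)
The plan is to obtain this classification simply by \emph{restricting} the bijection of Corollary~\ref{cor:bigclassification} along the inclusion of classical graphs into quantum graphs. Recall that Corollary~\ref{cor:bigclassification} already establishes a bijection between isomorphism classes of quantum graphs $\Gamma'$ admitting a quantum isomorphism $\Gamma' \to \Gamma$ and Morita equivalence classes of simple dagger Frobenius monoids in $\QAut(\Gamma)$, sending a Morita class to (the isomorphism class of) the quantum graph $\Gamma_X$ reconstructed in Theorem~\ref{thm:Frobeniussplit}. Since a classical graph is by definition a quantum graph whose vertex algebra is commutative, the left-hand side of the desired correspondence is precisely the subset of the left-hand side of Corollary~\ref{cor:bigclassification} consisting of those isomorphism classes that contain a classical graph.

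First I would observe that an isomorphism class of quantum graphs contains a classical graph if and only if one (equivalently, every) representative has a commutative vertex algebra: an isomorphism of quantum graphs (Definition~\ref{def:quantumgraphiso}) is in particular a $*$-isomorphism of the underlying Frobenius algebras, and commutativity is preserved under $*$-isomorphism. Hence the property ``$\Gamma_X$ is classical'' depends only on the isomorphism class of $\Gamma_X$, and therefore --- via the bijection of Corollary~\ref{cor:bigclassification} --- only on the Morita equivalence class of $X$.

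Next I would invoke Theorem~\ref{thm:commutativitycondition}, which computes $\dim(Z(V_{\Gamma_X}))$ as the expression~\eqref{eq:centerquantumgraph} and shows that $\Gamma_X$ is classical exactly when this expression evaluates to $|V_\Gamma|$. Combining this with the previous paragraph, the condition that \eqref{eq:centerquantumgraph} evaluates to $|V_\Gamma|$ cuts out precisely those Morita classes whose associated quantum graph is classical, and this condition is thereby seen to be well-defined on Morita classes. Restricting the bijection of Corollary~\ref{cor:bigclassification} to these two matching subsets then yields the stated bijective correspondence.

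The only point requiring care --- and the main (though modest) obstacle --- is the well-definedness just discussed: a priori the expression~\eqref{eq:centerquantumgraph} is written in terms of a chosen representative $X$ and its diagonal components $X_{v,v}$, so it is not manifestly Morita-invariant. The argument above sidesteps any direct computation by routing invariance through the geometric content of Theorem~\ref{thm:commutativitycondition} (namely that the expression equals $\dim(Z(V_{\Gamma_X}))$), together with the isomorphism-invariance of commutativity and the bijection of Corollary~\ref{cor:bigclassification}; one need only verify that these three facts chain together correctly, with no need to manipulate~\eqref{eq:centerquantumgraph} itself.
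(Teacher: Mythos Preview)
Your proposal is correct and follows essentially the same approach as the paper: both obtain the result by restricting the bijection of Corollary~\ref{cor:bigclassification} to those Morita classes whose associated quantum graph is classical, using Theorem~\ref{thm:commutativitycondition} to characterise classicality. Your write-up is in fact more careful than the paper's one-sentence proof, since you explicitly justify why the condition~\eqref{eq:centerquantumgraph}${}=|V_\Gamma|$ is Morita-invariant (by routing through the isomorphism-invariance of commutativity of $V_{\Gamma_X}$).
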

\begin{proof}Corollary~\ref{cor:superclassification} follows from restricting the classification of quantum isomorphic quantum graphs (Corollary~\ref{cor:bigclassification}) to Morita equivalence classes of simple dagger Frobenius monoids fulfilling the conditions of Theorem~\ref{thm:commutativitycondition} and their associated classical graphs.
\end{proof}

\subsection{Proof of Theorem~\ref{thm:Frobeniussplit}}\label{sec:proof}
In this section, we prove Theorem~\ref{thm:Frobeniussplit}. We first introduce two technical propositions.
\begin{proposition} \label{prop:quantumcondition}Let $A$ and $B$ be special symmetric dagger Frobenius algebras, and let $P:H \otimes A\to B\otimes H $ be a linear map fulfilling the first two equations of~\eqref{eq:quantumfunction} and the first two equations of~\eqref{eq:quantumfunction2}. Then, $P$ is unitary if and only if it also fulfils the last equation of~\eqref{eq:quantumfunction}.
\end{proposition}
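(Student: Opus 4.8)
The plan is to read the last equation of~\eqref{eq:quantumfunction} as the defining identity~\eqref{eq:daggerdual}: it asserts that the Hilbert-space adjoint $P^\dagger$, once its two Hilbert-space wires are bent through the cups and caps~\eqref{eq:cupscapsHilb} of $H$, agrees with the morphism $\widehat{P}$ obtained by rotating $P$ through the Frobenius cups and caps~\eqref{eq:cupcapfrob} of $A$ and $B$. So I would treat the proposition as comparing two a priori different conjugates of $P$: the analytic one $\overline{P}$, built from $P^\dagger$ by the first equality of~\eqref{eq:daggerdual}, and the algebraic one $\widehat{P}$, built purely from the Frobenius structure. The last equation of~\eqref{eq:quantumfunction} is exactly the statement $\overline{P}=\widehat{P}$.

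The main lemma, which I would establish using only the hypotheses (the first two equations of~\eqref{eq:quantumfunction} and of~\eqref{eq:quantumfunction2}) together with specialness and symmetry of $A$ and $B$, is that $\widehat{P}$ is a two-sided dual of $P$, i.e.\ that $P$ and $\widehat{P}$ satisfy the snake equations~\eqref{eq:rightdual} and~\eqref{eq:leftdual}. First I would substitute the definition of $\widehat{P}$ into~\eqref{eq:rightdual}, producing a diagram with two copies of $P$ joined through the Frobenius structure of $A$ and $B$; the multiplicativity equation (first of~\eqref{eq:quantumfunction2}) fuses the two copies at $B$ into a single $P$, the counit equation (second of~\eqref{eq:quantumfunction}) then deletes this remaining $P$, and the leftover $A$-diagram collapses to a bare identity wire by the snake equation for the Frobenius self-duality~\eqref{eq:cupcapfrob}. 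The mirror computation, using comultiplicativity (first of~\eqref{eq:quantumfunction}) and unitality (second of~\eqref{eq:quantumfunction2}) together with the self-duality of $B$, gives~\eqref{eq:leftdual}. Crucially, this lemma uses neither unitarity nor the last equation of~\eqref{eq:quantumfunction}.

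With the lemma in hand both implications are short. For the reverse direction I would assume the last equation of~\eqref{eq:quantumfunction}, so that $\overline{P}=\widehat{P}$; snake-composing $\widehat{P}$ with $P$ yields the identity by the lemma, while the same composite formed from $\overline{P}=$ bent-$P^\dagger$ unwinds, via the Hilbert-space snake equations~\eqref{eq:snake}, to the ordinary composite $P^\dagger P$. Comparing the two gives $P^\dagger P = \mathrm{id}_{H\otimes A}$, and the companion snake gives $P P^\dagger = \mathrm{id}_{B\otimes H}$, so $P$ is unitary. For the forward direction I would assume $P$ unitary; then $P^\dagger P = \mathrm{id}$ and $P P^\dagger = \mathrm{id}$ exhibit the bent adjoint $\overline{P}$ as a two-sided dual of $P$, while the lemma exhibits $\widehat{P}$ as one; since a two-sided dual relative to fixed cups and caps is unique (the snake equations solve for it explicitly), $\overline{P}=\widehat{P}$, which is precisely the last equation of~\eqref{eq:quantumfunction}.

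I expect the main obstacle to be the lemma, namely the two graphical merging-and-collapsing computations: one must keep careful track of the wire orientations and of the normalisation implicit in the special symmetric Frobenius structures, so that the two copies of $P$ fuse correctly and the residual $A$- and $B$-diagrams genuinely reduce to identity wires. The bending/unwinding identification of $\overline{P}$-composites with $P^\dagger P$ and $P P^\dagger$, and the uniqueness of duals, are then formal consequences of the snake equations~\eqref{eq:snake} and~\eqref{eq:cupcapfrob}.
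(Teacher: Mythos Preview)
Your approach is correct and close to the paper's, with one organisational difference worth noting.

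Both you and the paper hinge on the same computation: showing that the Frobenius rotation $\widehat{P}$ of $P$ composes with $P$ to give the identity, using only the (co)multiplication and (co)unit compatibility equations. The paper carries this out as a single graphical equality (citing \eqref{eq:quantumfunction} and \eqref{eq:quantumfunction2} together) and then, for the ``only if'' direction, immediately invokes unitarity to deduce $P^\dagger = P^{-1} = \widehat{P}$. Your ``uniqueness of duals relative to fixed cups and caps'' is exactly this step once one unbends the $H$-wires: both $\widehat{P}$ and the bent adjoint are then ordinary linear-map inverses of $P$, and inverses are unique.

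The genuine difference is in the ``if'' direction. The paper does not argue it directly at all: it simply cites Theorem~\ref{thm:dualisable} (proved in the companion paper~\cite{Musto2017a}), which asserts that any map satisfying all of \eqref{eq:quantumfunction} and \eqref{eq:quantumfunction2} is automatically unitary. You instead reuse your lemma: since $\widehat{P}$ is a two-sided inverse of $P$ and the third equation identifies $P^\dagger$ with $\widehat{P}$, you get $P^\dagger P=\id$ and $PP^\dagger=\id$ directly. This is more self-contained, and it is why you need both halves of the lemma (the paper, assuming unitarity, gets away with only one half). Your sketch of the two halves---multiplicativity and counit for one snake, comultiplicativity and unitality for the mirror---is the right division of labour.

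A small expository point: framing the lemma via \eqref{eq:rightdual} and \eqref{eq:leftdual} introduces an unnecessary layer of $H$-cups and caps. Since the third equation of \eqref{eq:quantumfunction} is literally the statement $P^\dagger = \widehat{P}'$ where $\widehat{P}'$ is the rotation through the Frobenius structure \emph{only} (no $H$-bending), it is cleaner to state the lemma as ``$\widehat{P}'$ is a two-sided linear-map inverse of $P$'' and skip the $2$-categorical snake language entirely. This is how the paper phrases its computation, and it avoids the need to unwind anything at the end.
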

\begin{proof} The `if'-direction follows immediately from Theorem~\ref{thm:dualisable} (in the special case where the quantum adjacency matrices are identities). For the other direction, observe that if $P$ is unitary, then the following implication holds:
\[
\begin{tz}[zx,xscale=-1,scale=-1]
\clip (-3.05,0) rectangle (1.8,4);
\draw[arrow data={0.1}{<},arrow data={0.5}{<}, arrow data={0.95}{<}] (1.75,0) to [out=up,in=-45] (1,1) to (-1,3) to [out=135, in=down] (-1.75,4);
\draw (0.25,0) to [out=up, in=-135] (1,1) to [in looseness=2.5,out=45, in=65] node[pos=0.72](c){} (-1,3) to [out=-135, in=right] (-2,2) to [out=left, in=down] (-3,4) ;
\node[zxnode=\zxwhite] at (1,1) {$P$};
\node[zxnode=\zxwhite] at (-1,3) {$P$};
\node[zxvertex=\zxwhite] at (c.center){};
\node[zxvertex=\zxwhite] at (-2,2){};
\end{tz}
~~\superequals{\eqref{eq:quantumfunction}\&\eqref{eq:quantumfunction2}}~~~~
\begin{tz}[zx,xscale=-1,scale=-1]
\draw (0,0) to (0, 2) to [looseness=1.5,out=up, in=up] node[zxvertex=\zxwhite,pos=0.5]{} (-1.5,2) to [looseness=1.5,out=down, in=down]node[zxvertex=\zxwhite,pos=0.5]{} (-3,2) to (-3,4);
\draw[string,arrow data={0.5}{<}] (1.5,0) to + (0,4);
\end{tz}
~~=~~
\begin{tz}[zx,xscale=-1]
\draw (0,0) to + (0,4);
\draw[string,arrow data={0.5}{>}] (1.5,0) to + (0,4);
\end{tz}
\hspace{0.5cm} \Rightarrow \hspace{0.5cm}
\begin{tz}[zx,xscale=-1,every to/.style={out=up, in=down},xscale=0.8,scale=1]
\draw [arrow data={0.2}{>},arrow data={0.8}{>}]  (0,0) to (2.25,3);
\draw (2.25,0) to node[zxnode=\zxwhite, pos=0.5] {$P^{\dagger}$} (0,3);
\end{tz}
~~=~~
\begin{tz}[zx,xscale=-0.6,yscale=-1]
\draw[arrow data={0.5}{<}] (0.25,-0.5) to (0.25,0) to [out=up, in=-135] (1,1);
\draw (1,1) to [out=135, in=right] node[zxvertex=\zxwhite, pos=1]{} (-0.3, 1.7) to [out=left, in=up] (-1.25,1) to (-1.25,-0.5);
\draw[arrow data={0.5}{>}] (1.75,2.5) to (1.75,2) to [out=down, in=45] (1,1);
\draw (1,1) to [out= -45, in= left] node[zxvertex=\zxwhite, pos=1] {} (2.3,0.3) to [out=right, in=down] (3.25,1) to (3.25,2.5);
\node [zxnode=\zxwhite] at (1,1) {$P$};
\end{tz} \qedhere
\]
\end{proof}
\noindent
We adopt the following terminology, originally introduced in~\cite{Ocneanu1989,Jones1999} and adapted to a categorical setting in~\cite{Vicary2012hq, Reutter2016}:
\begin{definition} Let $A,B$ and $H$ be Hilbert spaces. A linear map $P:H \otimes A \to B \otimes H$ is \textit{biunitary}, if it and the following `quarter-rotation' are unitary:
\begin{equation}\label{eq:biinv}\begin{tz}[zx,xscale=-0.6,yscale=1,scale=1]
\draw(0.25,-0.5) to (0.25,0) to [out=up, in=-135] (1,1);
\draw [arrow data={0.34}{>}]  (1,1) to [out=135, in=right]  (-0.3, 1.7) to [out=left, in=up] (-1.25,1) to (-1.25,-0.5);
\draw(1.75,2.5) to (1.75,2) to [out=down, in=45] (1,1);
\draw[arrow data={0.32}{<}] (1,1) to [out= -45, in= left]   (2.3,0.3) to [out=right, in=down] (3.25,1) to (3.25,2.5);
\node [zxnode=\zxwhite] at (1,1) {$P$};
\end{tz}
\end{equation}
\end{definition}
\noindent

\ignore{
From now on, we will use the following shorthand notation for a biunitary $P:H \otimes A\to B\otimes H$, where as usual we draw an orientation on the Hilbert space $H$ and leave $A$ and $B$ unoriented:
\begin{calign}\label{eq:shorthand2}\begin{tz}[zx,every to/.style={out=up, in=down},xscale=-0.8]
\draw [arrow data={0.2}{>},arrow data={0.8}{>}]  (0,0) to (2.25,3);
\draw (2.25,0) to node[zxvertex=\zxwhite, pos=0.5] {} (0,3);
\end{tz}
~=~
\begin{tz}[zx,every to/.style={out=up, in=down},xscale=-0.8]
\draw [arrow data={0.2}{>},arrow data={0.8}{>}]  (0,0) to (2.25,3);
\draw (2.25,0) to node[zxnode=\zxwhite, pos=0.5] {$P$} (0,3);
\end{tz}
&
\begin{tz}[zx,every to/.style={out=up, in=down},xscale=0.8]
\draw [arrow data={0.2}{>},arrow data={0.8}{>}]  (0,0) to (2.25,3);
\draw (2.25,0) to node[zxvertex=\zxwhite, pos=0.5] {} (0,3);
\end{tz}
~=~
\begin{tz}[zx,every to/.style={out=up, in=down},xscale=0.8]
\draw [arrow data={0.2}{>},arrow data={0.8}{>}]  (0,0) to (2.25,3);
\draw (2.25,0) to node[zxnode=\zxwhite, pos=0.5] {$P^\dagger$} (0,3);
\end{tz}
&
\begin{tz}[zx,every to/.style={out=up, in=down},xscale=-0.8,yscale=-1]
\draw [arrow data={0.2}{>},arrow data={0.8}{>}]  (0,0) to (2.25,3);
\draw (2.25,0) to node[zxvertex=\zxwhite, pos=0.5] {} (0,3);
\end{tz}~=~
\begin{tz}[zx,xscale=0.6,yscale=1,scale=1]
\draw(0.25,-0.5) to (0.25,0) to [out=up, in=-135] (1,1);
\draw [arrow data={0.34}{>}]  (1,1) to [out=135, in=right]  (-0.3, 1.7) to [out=left, in=up] (-1.25,1) to (-1.25,-0.5);
\draw(1.75,2.5) to (1.75,2) to [out=down, in=45] (1,1);
\draw[arrow data={0.32}{<}] (1,1) to [out= -45, in= left]   (2.3,0.3) to [out=right, in=down] (3.25,1) to (3.25,2.5);
\node [zxnode=\zxwhite] at (1,1) {$P$};
\end{tz}
&
\begin{tz}[zx,every to/.style={out=up, in=down},xscale=0.8,yscale=-1]
\draw [arrow data={0.2}{>},arrow data={0.8}{>}]  (0,0) to (2.25,3);
\draw (2.25,0) to node[zxvertex=\zxwhite, pos=0.5] {} (0,3);
\end{tz}~=~
\begin{tz}[zx,xscale=-0.6,yscale=1,scale=1]
\draw(0.25,-0.5) to (0.25,0) to [out=up, in=-135] (1,1);
\draw [arrow data={0.34}{>}]  (1,1) to [out=135, in=right]  (-0.3, 1.7) to [out=left, in=up] (-1.25,1) to (-1.25,-0.5);
\draw(1.75,2.5) to (1.75,2) to [out=down, in=45] (1,1);
\draw[arrow data={0.32}{<}] (1,1) to [out= -45, in= left]   (2.3,0.3) to [out=right, in=down] (3.25,1) to (3.25,2.5);
\node [zxnode=\zxwhite] at (1,1) {$P^\dagger$};
\end{tz}
\end{calign}}
\noindent
From now on, we will use the shorthand~\eqref{eq:shorthand} for $P$. \ignore{Although in all cases the partially transposed and/or conjugated biunitary is represented as a white dot, as long as we only use duality of the oriented wire, we can always determine the resulting 2-morphism from its type. \ignore{This is exactly as in Remark~\ref{rem:convention}, except that now we only transpose the oriented wire.}
It will be useful to note that, as one travels along the $H$-wire in the direction of its orientation, the Hilbert space $A$ is always to the left, while $B$ is always on the right.
}It can straightforwardly be verified that a morphism $P:H \otimes A \to B \otimes H$ is biunitary if and only if the equations~\eqref{eq:biunitarybraiding} and~\eqref{eq:biunitarybraiding2} hold. 
\ignore{
\def\scl{0.55}
\begin{calign}
\begin{tz}[zx,scale=\scl]
\clip (-2.05,-0.05) rectangle (4.05,4.05);
\draw[arrow data={0.1}{>},arrow data={0.499}{>}, arrow data={0.9}{>}] (0,0) to [out=up, in=up, looseness=6.] (2,0);
\draw (-2,0) to [out=up, in=down]  node[zxvertex=\zxwhite, pos=0.4] {} node[zxvertex=\zxwhite, pos=0.6]{} (4,4);
\end{tz}
=
\begin{tz}[zx,scale=\scl]
\clip (-2.05,-0.05) rectangle (4.05,4.05);
\draw[arrow data={0.2}{>}, arrow data={0.8}{>}] (0,0) to [out=up, in=up, looseness=2.] (2,0);
\draw (-2,0) to [out=up, in=down]  (4,4);
\end{tz}
&
\begin{tz}[zx,scale=-1,scale=\scl]
\clip (-2.05,-0.05) rectangle (4.05,4.05);
\draw[arrow data={0.1}{<},arrow data={0.499}{<}, arrow data={0.9}{<}] (0,0) to [out=up, in=up, looseness=6.] (2,0);
\draw (-2,0) to [out=up, in=down]  node[zxvertex=\zxwhite, pos=0.4] {} node[zxvertex=\zxwhite, pos=0.6] {}(4,4);
\end{tz}
=
\begin{tz}[zx,scale=-1,scale=\scl]
\clip (-2.05,-0.05) rectangle (4.05,4.05);
\draw[arrow data={0.2}{<}, arrow data={0.8}{<}] (0,0) to [out=up, in=up, looseness=2.] (2,0);
\draw (-2,0) to [out=up, in=down]  (4,4);
\end{tz}
\\[2pt]
\begin{tz}[zx,master,scale=-1,scale=\scl]
\clip (-2.05,-0.05) rectangle (4.05,4.05);
\draw[arrow data={0.1}{>},arrow data={0.499}{>}, arrow data={0.9}{>}] (0,0) to [out=up, in=up, looseness=6.] (2,0);
\draw (-2,0) to [out=up, in=down]  node[zxvertex=\zxwhite, pos=0.4] {} node[zxvertex=\zxwhite, pos=0.6] {}(4,4);
\end{tz}
=
\begin{tz}[zx,scale=-1,scale=\scl]
\clip (-2.05,-0.05) rectangle (4.05,4.05);
\draw[arrow data={0.2}{>}, arrow data={0.8}{>}] (0,0) to [out=up, in=up, looseness=2.] (2,0);
\draw (-2,0) to [out=up, in=down]  (4,4);
\end{tz}
&
\begin{tz}[zx,scale=1,scale=\scl]
\clip (-2.05,-0.05) rectangle (4.05,4.05);
\draw[arrow data={0.1}{<},arrow data={0.499}{<}, arrow data={0.9}{<}] (0,0) to [out=up, in=up, looseness=6.] (2,0);
\draw (-2,0) to [out=up, in=down]  node[zxvertex=\zxwhite, pos=0.4] {} node[zxvertex=\zxwhite, pos=0.6] {}(4,4);
\end{tz}
=
\begin{tz}[zx,scale=1,scale=\scl]
\clip (-2.05,-0.05) rectangle (4.05,4.05);
\draw[arrow data={0.2}{<}, arrow data={0.8}{<}] (0,0) to [out=up, in=up, looseness=2.] (2,0);
\draw (-2,0) to [out=up, in=down]  (4,4);
\end{tz}
\end{calign}
These equations, which look exactly like the second Reidemeister move from knot theory, allow us to use topological intuition in our graphical calculus. This is useful in proving the following proposition.}

The following proposition allows linear maps that can pull through a double wire to `jump' over a single wire, acquiring a surrounding bubble as they do so.
\def\d{2}
\def\halfheightbox{0.4}
\def\wwhite{1}
\def\ang{-17}
\begin{proposition}\label{prop:jumpinglemma}Let $S:H \otimes A \to B \otimes H$ be a biunitary linear map, written using the conventions above, and let $n=\dim(H)$. Let $e:B^{\otimes k} \to B^{\otimes r}$ be a linear map between tensor powers of $A$ fulfilling the following:
\begin{equation}\label{eq:jump}
\begin{tz}[zx,scale=-1,xscale=-1]
\draw[string] (0,0) to node[front,zxvertex=\zxwhite,pos=0.59] {} node[front,zxvertex=\zxwhite, pos=0.78] {} (0,4.5);
\draw[string] (\d,0) to node[front,zxvertex=\zxwhite, pos=0.44] {} node[front,zxvertex=\zxwhite, pos=0.65] {} (\d,4.5);
\draw[string,fill=white] (-0.2, 1-\halfheightbox) rectangle (\d+0.2, 1+\halfheightbox);
\node[scale=0.8] at (0.5*\d, 1) {$e$};
\draw[string, arrow data={0.08}{<}, arrow data ={0.95}{<}] (3, 0 )  to (3,0.5) to [out=up, in=down] (-2,4.5);
\draw[string, arrow data={0.08}{>}, arrow data={0.97}{>}] (4,0) to (4,0.5) to [out=up, in=down, out looseness=1.3, in looseness=0.7] (-1, 4.5);
\fill[white] (0.5*\d-0.5*\wwhite,1.8) rectangle (0.5*\d+0.5*\wwhite, 4.5);
\node[rotate=\ang] at (0.5*\d, 2.25+\halfheightbox) {$\cdots$};
\node at (0.5*\d, 0.2) {$\ldots$};
\end{tz}
~~=~~
\begin{tz}[zx,xscale=-1]
\draw[string] (0,0) to node[front,zxvertex=\zxwhite,pos=0.59] {} node[front,zxvertex=\zxwhite, pos=0.78] {} (0,4.5);
\draw[string] (\d,0) to node[front,zxvertex=\zxwhite, pos=0.44] {} node[front,zxvertex=\zxwhite, pos=0.65] {} (\d,4.5);
\draw[string,fill=white] (-0.2, 1-\halfheightbox) rectangle (\d+0.2, 1+\halfheightbox);
\node[scale=0.8] at (0.5*\d, 1) {$e$};
\draw[string, arrow data={0.08}{<}, arrow data ={0.95}{<}] (3, 0 )  to (3,0.5) to [out=up, in=down] (-2,4.5);
\draw[string, arrow data={0.08}{>}, arrow data={0.97}{>}] (4,0) to (4,0.5) to [out=up, in=down, out looseness=1.3, in looseness=0.7] (-1, 4.5);
\fill[white] (0.5*\d-0.5*\wwhite,1.8) rectangle (0.5*\d+0.5*\wwhite, 4.5);
\node[rotate=\ang] at (0.5*\d, 2.25+\halfheightbox) {$\cdots$};
\node at (0.5*\d, 0.2) {$\ldots$};
\end{tz}
\end{equation}
Then, the following holds:
\begin{equation}\nonumber
\begin{tz}[zx,scale=-1,xscale=-1]
\draw[string] (0,0) to node[front,zxvertex=\zxwhite,pos=0.59] {} (0,4.5);\draw[string] (\d,0) to node[front,zxvertex=\zxwhite, pos=0.41] {} (\d,4.5);
\draw[string,fill=white] (-0.2, 1-\halfheightbox) rectangle (\d+0.2, 1+\halfheightbox);
\node[scale=0.8] at (0.5*\d, 1) {$e$};
\draw[string, arrow data={0.08}{<}, arrow data ={0.95}{<}] (3, 0 )  to (3,0.5) to [out=up, in=down] (-1.,4) to (-1.,4.5);
\fill[white] (0.5*\d-0.5*\wwhite,1.8) rectangle (0.5*\d+0.5*\wwhite, 4.5);
\node at (0.5*\d, 0.2) {$\ldots$};
\node at (0.5*\d, 3.5) {$\ldots$};
\end{tz}
=
\begin{tz}[zx,xscale=-1]
\draw[string] (0,0) to node[front,zxvertex=\zxwhite,pos=0.59] {} (0,4.5);\draw[string] (\d,0) to node[front,zxvertex=\zxwhite, pos=0.41] {} (\d,4.5);
\draw[string,fill=white] (-0.2, 1-\halfheightbox) rectangle (\d+0.2, 1+\halfheightbox);
\node[scale=0.8] at (0.5*\d, 1) {$e'$};
\draw[string, arrow data={0.08}{>}, arrow data ={0.95}{>}] (3, 0 )  to (3,0.5) to [out=up, in=down] (-1.,4) to (-1.,4.5);
\fill[white] (0.5*\d-0.5*\wwhite,1.8) rectangle (0.5*\d+0.5*\wwhite, 4.5);
\node at (0.5*\d, 0.2) {$\ldots$};
\node at (0.5*\d, 3.5) {$\ldots$};
\end{tz}
\hspace{0.2cm} \text{where } \hspace{0.2cm} \begin{tz}[zx]
\draw[string] (0,0) to  (0,4.5);
\draw[string] (\d,0) to  (\d,4.5);
\draw[string,fill=white] (-0.2, 2.25-\halfheightbox) rectangle (\d+0.2, 2.25+\halfheightbox);
\node[scale=0.8] at (0.5*\d, 2.25) {$e'$};
\fill[white] (0.5*\d-0.5*\wwhite,0.1) rectangle (0.5*\d+0.5*\wwhite, 1);
\fill[white] (0.5*\d-0.5*\wwhite,3.5) rectangle (0.5*\d+0.5*\wwhite, 4.4);
\node at (0.5*\d, 0.7) {$\ldots$};
\node at (0.5*\d, 3.8) {$\ldots$};
\end{tz}
~~=~\frac{1}{n}~
\begin{tz}[zx]
\draw[string] (0,0) to node[front,zxvertex=\zxwhite,pos=0.15] {} node[front,zxvertex=\zxwhite,pos=0.855] {}  (0,4.5);\draw[string] (\d,0) to node[front,zxvertex=\zxwhite, pos=0.23] {} node[front,zxvertex=\zxwhite,pos=0.765] {} (\d,4.5);
\draw[string,fill=white] (-0.2, 2.25-\halfheightbox) rectangle (\d+0.2, 2.25+\halfheightbox);
\node[scale=0.8] at (0.5*\d, 2.25) {$e$};
\draw[string,arrow data={0.01}{>},arrow data = {0.33}{>},arrow data={0.655}{>}] (-0.5, 0.5) to [out=20, in=down] (3,2.25) to [out=up, in=-20] (-0.5, 4)  to [out=160 , in=200] (-0.5, 0.5) ;
\fill[white] (0.5*\d-0.5*\wwhite,0.1) rectangle (0.5*\d+0.5*\wwhite, 1);
\fill[white] (0.5*\d-0.5*\wwhite,3.5) rectangle (0.5*\d+0.5*\wwhite, 4.4);
\node at (0.5*\d, 0.2) {$\ldots$};
\node at (0.5*\d, 4.3) {$\ldots$};
\end{tz}
~=\frac{1}{n}~
\begin{tz}[zx,xscale=-1]
\draw[string] (0,0) to node[front,zxvertex=\zxwhite,pos=0.15] {} node[front,zxvertex=\zxwhite,pos=0.855] {}  (0,4.5);\draw[string] (\d,0) to node[front,zxvertex=\zxwhite, pos=0.23] {} node[front,zxvertex=\zxwhite,pos=0.765] {} (\d,4.5);
\draw[string,fill=white] (-0.2, 2.25-\halfheightbox) rectangle (\d+0.2, 2.25+\halfheightbox);
\node[scale=0.8] at (0.5*\d, 2.25) {$e$};
\draw[string,arrow data={0.01}{<},arrow data = {0.33}{<},arrow data={0.655}{<}] (-0.5, 0.5) to [out=20, in=down] (3,2.25) to [out=up, in=-20] (-0.5, 4)  to [out=160 , in=200] (-0.5, 0.5) ;
\fill[white] (0.5*\d-0.5*\wwhite,0.1) rectangle (0.5*\d+0.5*\wwhite, 1);
\fill[white] (0.5*\d-0.5*\wwhite,3.5) rectangle (0.5*\d+0.5*\wwhite, 4.4);
\node at (0.5*\d, 0.2) {$\ldots$};
\node at (0.5*\d, 4.3) {$\ldots$};
\end{tz}
\end{equation}
Moreover, if $f:B^{\otimes l} \to B^{\otimes k}$ and $e:B^{\otimes k}\to B^{\otimes r}$  both fulfil~\eqref{eq:jump}, it follows that $(1_{B^{\otimes k}})' = 1_{A^{\otimes k}}$ and $(ef)' = e'f'$.\ignore{ and $e''=e$ where the second $'$ is with respect to the biunitary $\overline{S}:B\otimes H^* \to A\otimes H^*$ defined in~\eqref{eq:daggerdual}.}
\end{proposition}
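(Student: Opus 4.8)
The plan is to phrase the statement in terms of the unitary multi-crossing built from $S$. Write $n=\dim(H)$ and let $c_k:H\otimes A^{\otimes k}\to B^{\otimes k}\otimes H$ be the map obtained by passing the oriented $H$-wire through $k$ parallel $A$-wires using $S$, turning each into a $B$-wire; as a composite of copies of the unitary $S$ (tensored with identities) it is itself unitary, with $c_k^{-1}=c_k^\dagger$. Set
\[ M_e := c_r^\dagger\,(e\otimes\id_H)\,c_k : H\otimes A^{\otimes k}\to H\otimes A^{\otimes r}, \qquad e':=\tfrac1n\,\Tr_H(M_e). \]
By inspection $e'$ is exactly the bubbled diagram in the statement; the two displayed bubble expressions are the two ways of closing the $H$-strand of $M_e$ into a loop, so they agree, both computing $\Tr_H(M_e)$. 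The single-wire jump asserted by the proposition is then precisely the identity $M_e=\id_H\otimes e'$: composing it with $c_r$ on the left reproduces the displayed jump equation, with $e$ on the $B$-wires on one side and $e'$ on the $A$-wires on the other.

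The main step is to prove $M_e=\id_H\otimes e'$, and this is where the hypothesis \eqref{eq:jump} enters. Writing $Q$ for the operation that wraps the $H$-strand of a map $H\otimes V\to H\otimes W$ into a $\tfrac1n$-normalised closed loop, one has $Q(M_e)=\id_H\otimes\tfrac1n\Tr_H(M_e)=\id_H\otimes e'$ by definition, so it suffices to show $Q(M_e)=M_e$. I would argue this graphically: forming $Q(M_e)$ re-routes the $H$-strand of $M_e$, and using the Reidemeister-II moves \eqref{eq:biunitarybraiding} and \eqref{eq:biunitarybraiding2} (equivalent to biunitarity of $S$) the two strands of the loop can be brought alongside the crossings flanking $e$ so as to form the doubled Hilbert wire appearing in \eqref{eq:jump}. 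Pulling this doubled wire through $e$ by \eqref{eq:jump} and straightening with the snake equations \eqref{eq:snake} detaches the loop as a plain closed $H$-circle, which evaluates to $n$ by \eqref{eq:closedcircle}; the resulting factor $n$ cancels the $\tfrac1n$ in $Q$, leaving $M_e$ unchanged. This is the principal obstacle, being the only place the hypothesis \eqref{eq:jump} is genuinely used and the source of the $\tfrac1n$ normalisation.

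Granting the jump $M_e=\id_H\otimes e'$, the remaining claims are short. For the unit, $1_{B^{\otimes k}}$ trivially satisfies \eqref{eq:jump} (a doubled wire passes freely through the identity), and $M_{1}=c_k^\dagger\,c_k=\id_{H\otimes A^{\otimes k}}=\id_H\otimes\id_{A^{\otimes k}}$, whence $(1_{B^{\otimes k}})'=\id_{A^{\otimes k}}=1_{A^{\otimes k}}$. For composites, first note that \eqref{eq:jump} is stable under composition: a doubled wire pulled through $f$ and then through $e$ is pulled through $ef$, so $ef$ again satisfies the hypothesis and hence $M_{ef}=\id_H\otimes(ef)'$ by the jump. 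On the other hand, writing $ef\otimes\id_H=(e\otimes\id_H)(f\otimes\id_H)$ and inserting $c_k c_k^\dagger=\id$ between the two factors gives
\[ M_{ef}=c_r^\dagger(e\otimes\id_H)c_k\cdot c_k^\dagger(f\otimes\id_H)c_l=M_e M_f=(\id_H\otimes e')(\id_H\otimes f')=\id_H\otimes(e'f'). \]
Since $\id_H\otimes(-)$ is faithful, comparison of the two expressions yields $(ef)'=e'f'$, completing the proof.
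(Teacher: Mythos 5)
Your proof is correct and takes essentially the same route as the paper: the paper establishes the jump by inserting a $\tfrac{1}{n}$-normalised closed $H$-circle via \eqref{eq:closedcircle}, expanding it with the Reidemeister-II moves \eqref{eq:biunitarybraiding} and \eqref{eq:biunitarybraiding2} until its two strands form the doubled wire of \eqref{eq:jump}, and then pulling it through $e$ --- which is precisely your $Q(M_e)=M_e$ step read in the opposite direction, with your operator framing $M_e = c_r^\dagger(e\otimes\mathrm{id}_H)c_k$ and $e'=\tfrac1n\Tr_H(M_e)$ an equivalent algebraic packaging of the same diagrams. Your explicit derivations of $(1_{B^{\otimes k}})'=1_{A^{\otimes k}}$ from $M_1=c_k^\dagger c_k=\mathrm{id}$ and of $(ef)'=e'f'$ from $M_{ef}=M_eM_f$ together with faithfulness of $\mathrm{id}_H\otimes(-)$ are a clean rendering of what the paper dismisses as ``verified analogously''.
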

\begin{proof}
\[
\begin{tz}[zx,scale=-1]
\draw[string] (0,-2) to node[front,zxvertex=\zxwhite, pos=0.88]{} (0,0);
\draw[string] (\d,-2) to node[front,zxvertex=\zxwhite, pos=1.12]{} (\d,0);
\draw[string, arrow data={0.05}{<},arrow data ={0.9}{<}] (-1,-2) to (-1,-1.5) to [out=up, in=down] (4, 1.8) to (4, 4.5);
\draw[string,fill=white] (-0.2, -1.2-\halfheightbox) rectangle (\d+0.2, -1.2+\halfheightbox);
\node[scale=0.8] at (0.5*\d, -1.2){$e$};
\draw[string] (0,0) to  (0,4.5);
\draw[string] (\d,0) to (\d,4.5);
\ignore{
\draw[string,fill=white] (-0.2, 2.25-\halfheightbox) rectangle (\d+0.2, 2.25+\halfheightbox);
\node[scale=0.8] at (0.5*\d, 2.25) {$e$};}
\ignore{
\draw[string,arrow data={0.01}{<},arrow data = {0.33}{<},arrow data={0.655}{<}] (-0.5, 0.5) to [out=20, in=down] (3,2.25) to [out=up, in=-20] (-0.5, 4)  to [out=160 , in=200] (-0.5, 0.5) ;}
\fill[white] (0.5*\d-0.5*\wwhite,-0.6) rectangle (0.5*\d+0.5*\wwhite, 1);
\fill[white] (0.5*\d-0.5*\wwhite,3.5) rectangle (0.5*\d+0.5*\wwhite, 4.4);
\node at (0.5*\d, -1.88) {$\ldots$};
\node at (0.5*\d, 4.3) {$\ldots$};
\end{tz}
\!\!\superequalseq{eq:closedcircle}~
\frac{1}{n}~~
\begin{tz}[zx,scale=-1]
\draw[arrow data={0.51}{<}] (-1,2.25) circle (0.5);
\draw[string] (0,-2) to node[front,zxvertex=\zxwhite, pos=0.88]{} (0,0);
\draw[string] (\d,-2) to node[front,zxvertex=\zxwhite, pos=1.12]{} (\d,0);
\draw[string, arrow data={0.05}{<},arrow data ={0.9}{<}] (-1,-2) to (-1,-1.5) to [out=up, in=down] (4, 1.8) to (4, 4.5);
\draw[string,fill=white] (-0.2, -1.2-\halfheightbox) rectangle (\d+0.2, -1.2+\halfheightbox);
\node[scale=0.8] at (0.5*\d, -1.2){$e$};
\draw[string] (0,0) to  (0,4.5);
\draw[string] (\d,0) to (\d,4.5);
\ignore{
\draw[string,fill=white] (-0.2, 2.25-\halfheightbox) rectangle (\d+0.2, 2.25+\halfheightbox);
\node[scale=0.8] at (0.5*\d, 2.25) {$e$};}
\ignore{
\draw[string,arrow data={0.01}{<},arrow data = {0.33}{<},arrow data={0.655}{<}] (-0.5, 0.5) to [out=20, in=down] (3,2.25) to [out=up, in=-20] (-0.5, 4)  to [out=160 , in=200] (-0.5, 0.5) ;}
\fill[white] (0.5*\d-0.5*\wwhite,-0.6) rectangle (0.5*\d+0.5*\wwhite, 1);
\fill[white] (0.5*\d-0.5*\wwhite,3.5) rectangle (0.5*\d+0.5*\wwhite, 4.4);
\node at (0.5*\d, -1.88) {$\ldots$};
\node at (0.5*\d, 4.3) {$\ldots$};
\end{tz}
~~\superequals{\eqref{eq:biunitarybraiding}\&\eqref{eq:biunitarybraiding2}}~~~
\frac{1}{n}~
\begin{tz}[zx,scale=-1]
\draw[string] (0,-2) to node[front,zxvertex=\zxwhite, pos=0.88]{} (0,0);
\draw[string] (\d,-2) to node[front,zxvertex=\zxwhite, pos=1.12]{} (\d,0);
\draw[string, arrow data={0.05}{<},arrow data ={0.9}{<}] (-1,-2) to (-1,-1.5) to [out=up, in=down] (4, 1.8) to (4, 4.5);
\draw[string,fill=white] (-0.2, -1.2-\halfheightbox) rectangle (\d+0.2, -1.2+\halfheightbox);
\node[scale=0.8] at (0.5*\d, -1.2){$e$};
\draw[string] (0,0) to node[front,zxvertex=\zxwhite,pos=0.15] {} node[front,zxvertex=\zxwhite,pos=0.855] {}  (0,4.5);\draw[string] (\d,0) to node[front,zxvertex=\zxwhite, pos=0.23] {} node[front,zxvertex=\zxwhite,pos=0.765] {} (\d,4.5);
\ignore{
\draw[string,fill=white] (-0.2, 2.25-\halfheightbox) rectangle (\d+0.2, 2.25+\halfheightbox);
\node[scale=0.8] at (0.5*\d, 2.25) {$e$};}
\draw[string,arrow data={0.01}{>},arrow data = {0.33}{>},arrow data={0.655}{>}] (-0.5, 0.5) to [out=20, in=down] (3,2.25) to [out=up, in=-20] (-0.5, 4)  to [out=160 , in=200] (-0.5, 0.5) ;
\fill[white] (0.5*\d-0.5*\wwhite,-0.6) rectangle (0.5*\d+0.5*\wwhite, 1);
\fill[white] (0.5*\d-0.5*\wwhite,3.5) rectangle (0.5*\d+0.5*\wwhite, 4.4);
\node at (0.5*\d, -1.88) {$\ldots$};
\node at (0.5*\d, 4.3) {$\ldots$};
\end{tz}
~\superequalseq{eq:jump}~\frac{1}{n}~
\begin{tz}[zx,scale=-1]
\draw[string] (0,-2) to node[front,zxvertex=\zxwhite, pos=0.88]{} (0,0);
\draw[string] (\d,-2) to node[front,zxvertex=\zxwhite, pos=1.12]{} (\d,0);
\draw[string, arrow data={0.05}{<},arrow data ={0.9}{<}] (-1,-2) to (-1,-1.5) to [out=up, in=down] (4, 1.8) to (4, 4.5);
\draw[string] (0,0) to node[front,zxvertex=\zxwhite,pos=0.15] {} node[front,zxvertex=\zxwhite,pos=0.855] {}  (0,4.5);\draw[string] (\d,0) to node[front,zxvertex=\zxwhite, pos=0.23] {} node[front,zxvertex=\zxwhite,pos=0.765] {} (\d,4.5);
\draw[string,fill=white] (-0.2, 2.25-\halfheightbox) rectangle (\d+0.2, 2.25+\halfheightbox);
\node[scale=0.8] at (0.5*\d, 2.25) {$e$};
\draw[string,arrow data={0.01}{>},arrow data = {0.33}{>},arrow data={0.655}{>}] (-0.5, 0.5) to [out=20, in=down] (3,2.25) to [out=up, in=-20] (-0.5, 4)  to [out=160 , in=200] (-0.5, 0.5) ;
\fill[white] (0.5*\d-0.5*\wwhite,-0.6) rectangle (0.5*\d+0.5*\wwhite, 1);
\fill[white] (0.5*\d-0.5*\wwhite,3.5) rectangle (0.5*\d+0.5*\wwhite, 4.4);
\node at (0.5*\d, -1.88) {$\ldots$};
\node at (0.5*\d, 4.3) {$\ldots$};
\end{tz}
\] The statements $(1_{B^{\otimes k}})' = 1_{A^{\otimes k}}$ and $(ef)' = e'f'$ are verified analogously.
\end{proof}
\begin{remark} Proposition~\ref{prop:jumpinglemma} is closely related to standard techniques in the setting of \emph{planar algebras}. In particular, it is analogous to~\cite[Proposition~2.11.7 and Theorem~2.11.8]{Jones1999}.
\end{remark}
\noindent
We now prove Theorem~\ref{thm:Frobeniussplit}.
\splitfrobenius*
\begin{proof}
A simple dagger Frobenius monoid in $\QAut(\Gamma)$ is $*$-isomorphic to a quantum isomorphism $(H\otimes H^*, X): \Gamma \to \Gamma$, represented by a linear map $X: (H\otimes H^*) \otimes V_{\Gamma}\to V_{\Gamma} \otimes (H\otimes H^*)$, fulfilling: 
 
 \def\dt{0.2}
 \def\db{0.2}
 \begin{calign}\label{eq:proofFrobenius1}
 \hspace{-0.9cm}
 \begin{tz}[zx,xscale=-1, every to/.style={out=up, in=down}]
   \clip (-1.1,0) rectangle (1.6,3);
 \draw[arrow data={0.2}{>}, arrow data={0.83}{>}] (\db, 0) to [out=up, in=up, looseness=4] (1-\db,0);
 \draw[arrow data={0.3}{<}, arrow data={0.91}{<}] (-\db, 0) to (0.5-\dt, 2) to (0.5-\dt, 3);
 \draw [arrow data={0.3}{>}, arrow data={0.91}{>}](1+\db, 0) to (0.5+\dt,2) to (0.5+\dt, 3);
 \draw (-1,0) to (-1,0.25)to[in looseness=0.8] (1.5,3);
 \node[zxnode=\zxwhite] at (0.5,1.85) {$X$};
 \ignore{
 \node[dimension,left] at (-1,0) {$A$};
 \node[dimension, right] at (1.5,3) {$A$};
 \node[dimension, left] at (
 }
 \end{tz}
 =
  \begin{tz}[zx,xscale=-1,every to/.style={out=up, in=down}]
    \clip (-1.1,0) rectangle (1.85,3);
 \draw[arrow data={0.1}{>}, arrow data={0.5}{>}, arrow data={0.95}{>}] (\db, 0) to (\db,1.25) to  [out=up, in=up, looseness=4] (1-\db,1.25) to (1-\db, 0);
 \draw[arrow data={0.13}{<}, arrow data={0.87}{<}] (-\db, 0) to (-\db, 1.25) to  (0.5-\dt, 3);
 \draw [arrow data={0.13}{>}, arrow data={0.87}{>}](1+\db, 0) to  (1+\db, 1.25) to (0.5+\dt,3) ;
 \draw (-1,0) to [in looseness=0.8] node[zxnode=\zxwhite, pos=0.405] {$X$} node[zxnode=\zxwhite, pos=0.65]{$X$} (1.75,2.) to  (1.75,3);
 \end{tz}
 &
  \begin{tz}[zx,xscale=-1,every to/.style={out=up, in=down}]
  \clip (-1.1,0) rectangle (1.6,3);
  \draw[arrow data={0.13}{>},arrow data={0.44}{>},arrow data={0.65}{>},arrow data={0.93}{>}](0.5-\dt, 3) to (0.5-\dt, 1.4) to [out=down, in=down, looseness=6] (0.5+\dt, 1.4) to (0.5+\dt, 3);
 \draw (-1,0) to (-1,0.25)to[in looseness=0.8] (1.5,3);
 \node[zxnode=\zxwhite] at (0.5,1.85) {$X$};
 \ignore{
 \node[dimension,left] at (-1,0) {$A$};
 \node[dimension, right] at (1.5,3) {$A$};
 \node[dimension, left] at (
 }
 \end{tz}
=
  \begin{tz}[zx,xscale=-1,every to/.style={out=up, in=down}]
  \clip (-1.1,0) rectangle (1.6,3);
  \draw[arrow data={0.2}{>},arrow data={0.86}{>}](0.5-\dt, 3) to (0.5-\dt, 2.7) to [out=down, in=down, looseness=6] (0.5+\dt, 2.7) to (0.5+\dt, 3);
 \draw (-1,0) to[in looseness=0.8]  (1.5,2.3) to (1.5,3);
 \end{tz}
 &
  \begin{tz}[zx,xscale=-1,every to/.style={out=up, in=down},scale=-1]
    \clip (-1.1,0) rectangle (1.85,3);
 \draw[arrow data={0.1}{>}, arrow data={0.5}{>}, arrow data={0.95}{>}] (\db, 0) to (\db,1.25) to  [out=up, in=up, looseness=4] (1-\db,1.25) to (1-\db, 0);
 \draw[arrow data={0.13}{<}, arrow data={0.87}{<}] (-\db, 0) to (-\db, 1.25) to  (0.5-\dt, 3);
 \draw [arrow data={0.13}{>}, arrow data={0.87}{>}](1+\db, 0) to  (1+\db, 1.25) to (0.5+\dt,3) ;
 \draw (-1,0) to [in looseness=0.8] node[zxnode=\zxwhite, pos=0.405] {$X$} node[zxnode=\zxwhite, pos=0.65]{$X$} (1.75,2.) to  (1.75,3);
 \end{tz}
 =
 \begin{tz}[zx,xscale=-1, every to/.style={out=up, in=down},scale=-1]
   \clip (-1.1,0) rectangle (1.6,3);
 \draw[arrow data={0.2}{>}, arrow data={0.83}{>}] (\db, 0) to [out=up, in=up, looseness=4] (1-\db,0);
 \draw[arrow data={0.3}{<}, arrow data={0.91}{<}] (-\db, 0) to (0.5-\dt, 2) to (0.5-\dt, 3);
 \draw [arrow data={0.3}{>}, arrow data={0.91}{>}](1+\db, 0) to (0.5+\dt,2) to (0.5+\dt, 3);
 \draw (-1,0) to (-1,0.25)to[in looseness=0.8] (1.5,3);
 \node[zxnode=\zxwhite] at (0.5,1.85) {$X$};
 \ignore{
 \node[dimension,left] at (-1,0) {$A$};
 \node[dimension, right] at (1.5,3) {$A$};
 \node[dimension, left] at (
 }
 \end{tz}
 &
  \begin{tz}[zx,xscale=-1,every to/.style={out=up, in=down},scale=-1]
  \clip (-1.1,0) rectangle (1.6,3);
  \draw[arrow data={0.2}{>},arrow data={0.86}{>}](0.5-\dt, 3) to (0.5-\dt, 2.7) to [out=down, in=down, looseness=6] (0.5+\dt, 2.7) to (0.5+\dt, 3);
 \draw (-1,0) to[in looseness=0.8]  (1.5,2.3) to (1.5,3);
 \end{tz}
 =
  \begin{tz}[zx,xscale=-1,every to/.style={out=up, in=down},scale=-1]
  \clip (-1.1,0) rectangle (1.6,3);
  \draw[arrow data={0.13}{>},arrow data={0.44}{>},arrow data={0.65}{>},arrow data={0.93}{>}](0.5-\dt, 3) to (0.5-\dt, 1.4) to [out=down, in=down, looseness=6] (0.5+\dt, 1.4) to (0.5+\dt, 3);
 \draw (-1,0) to (-1,0.25)to[in looseness=0.8] (1.5,3);
 \node[zxnode=\zxwhite] at (0.5,1.85) {$X$};
 \ignore{
 \node[dimension,left] at (-1,0) {$A$};
 \node[dimension, right] at (1.5,3) {$A$};
 \node[dimension, left] at (
 }
 \end{tz}
 \end{calign}
 We first note that since $X$ is a quantum isomorphism and therefore unitary (Theorem~\ref{thm:dualisable}), the following holds:
  \begin{calign}\label{eq:proofFrobenius2}
  \hspace{-0.5cm}
   \begin{tz}[zx,xscale=-1,every to/.style={out=up, in=down}]
    \clip (-1.1,0) rectangle (1.85,3);
 \draw[arrow data={0.1}{>}, arrow data={0.5}{>}, arrow data={0.97}{>}] (\db, 0) to (\db,1.25) to  [out=up, in=up, looseness=4] (1-\db,1.25) to (1-\db, 0);
 \draw[arrow data={0.07}{<}, arrow data={0.5}{<},arrow data={0.95}{<}] (-\db, 0) to (-\db, 1.25) to  [out=up, in=up, looseness=3] (1+\db, 1.25) to [out=down, in=up] (1+\db, 0);
 \draw (-1,0) to [in looseness=0.8] node[zxnode=\zxwhite, pos=0.405] {$X$} node[zxnode=\zxwhite, pos=0.65]{$X$} (1.75,2.) to  (1.75,3);
 \end{tz}
~ \superequalseq{eq:proofFrobenius1}~
 \begin{tz}[zx,xscale=-1, every to/.style={out=up, in=down}]
   \clip (-1.1,0) rectangle (1.6,3);
 \draw[arrow data={0.2}{>}, arrow data={0.83}{>}] (\db, 0) to [out=up, in=up, looseness=4] (1-\db,0);
 \draw[arrow data={0.06}{<},arrow data={0.44}{<},arrow data={0.59}{<}, arrow data={0.96}{<}] (-\db, 0) to (0.5-\dt, 2.4) to [out=up, in=up, looseness=4] (0.5+\dt, 2.4) to [out=down, in=up] (1+\db, 0); 
 \draw (-1,0) to (-1,0.25)to[in looseness=0.8] (1.5,3);
 \node[zxnode=\zxwhite] at (0.5,1.85) {$X$};
 \end{tz}
~ \superequalseq{eq:proofFrobenius1}~
  \begin{tz}[zx,xscale=-1, every to/.style={out=up, in=down}]
   \clip (-1.1,0) rectangle (1.6,3);
 \draw[arrow data={0.2}{>}, arrow data={0.83}{>}] (\db, 0) to [out=up, in=up, looseness=4] (1-\db,0);
 \draw[arrow data={0.11}{<}, arrow data={0.92}{<}] (-\db, 0) to [out=up, in=up, looseness=3] (1+\db, 0); 
 \draw (-1,0) to (-1,0.25)to[in looseness=0.8] (1.5,3);
 \end{tz}
 \hspace{0.3cm}\Rightarrow 
 \hspace{0.3cm}
 \begin{tz}[zx,xscale=-1,every to/.style={out=up, in=down},xscale=-1]
 \draw[string] (0,0) to node[zxnode=\zxwhite, pos=0.5] {$X^\dagger$}(2,3);
 \draw[string,arrow data={0.1}{<}, arrow data={0.93}{<}] (2.2,0) to (0.2,3);
 \draw[string,arrow data={0.08}{>}, arrow data={0.9}{>}] (1.8,0) to (-0.2,3);
 \end{tz}
 \overset{\text{unitary}}{=}
  \begin{tz}[zx,xscale=-1,every to/.style={out=up, in=down},xscale=-1]
 \draw[string] (0,0) to node[zxnode=\zxwhite, pos=0.5] {$X^{-1}$}(2,3);
 \draw[string,arrow data={0.1}{<}, arrow data={0.93}{<}] (2.2,0) to (0.2,3);
 \draw[string,arrow data={0.08}{>}, arrow data={0.9}{>}] (1.8,0) to (-0.2,3);
 \end{tz}
 ~=~~
 \begin{tz}[zx,xscale=-1,xscale=0.6,yscale=1]
\draw(0.25,-0.5) to (0.25,0) to [out=up, in=-135] (1,1);
\draw [arrow data={0.88}{<}]  (1,1) to [out=135, in=right]  (-0.3, 1.7) to [out=left, in=up] (-1.25,1) to (-1.25,-0.5);
\draw(1.75,2.5) to (1.75,2) to [out=down, in=45] (1,1);
\draw[arrow data={0.88}{<}] (1,1) to [out= -45, in= left]   (2.3,0.3) to [out=right, in=down] (3.25,1) to (3.25,2.5);
\draw[arrow data={0.9}{>}] (1,1) to [out=up, in=up,looseness=1.25] (-2,1) to (-2,-0.5);
\draw[arrow data={0.9}{>}] (1,1) to [out=down, in=down,looseness=1.25] (4,1) to  (4,2.5);
\node [zxnode=\zxwhite] at (1,1) {$X$};
\end{tz} 
\end{calign}
It then follows straightforwardly from \eqref{eq:proofFrobenius1} and \eqref{eq:proofFrobenius2}, that the following linear map $x\in \End(H^*\otimes V_\Gamma \otimes H)$ is a dagger idempotent, i.e. it is self-adjoint and fulfils $x^2= x$:
 \[\frac{1}{n}~~~
\begin{tz}[zx,xscale=-0.6,yscale=1]
\draw(0.25,-0.5) to (0.25,0) to [out=up, in=-135] (1,1);
\draw [arrow data={0.23}{<},arrow data={0.88}{<}]  (1,1) to [out=135, in=right]  (-0.3, 1.7) to [out=left, in=up] (-1.25,1) to (-1.25,-0.5);
\draw(1.75,2.5) to (1.75,2) to [out=down, in=45] (1,1);
\draw[arrow data={0.23}{<},arrow data={0.9}{<}] (1,1) to [out= -45, in= left]   (2.3,0.3) to [out=right, in=down] (3.25,1) to (3.25,2.5);
\draw[arrow data={0.83}{>}] (1,1) to [out=up, in=down] (0.25,2.5);
\draw[arrow data={0.83}{>}] (1,1) to [out=down, in=up] (1.75,-0.5);
\node [zxnode=\zxwhite] at (1,1) {$X$};
\end{tz} 
 \]
 Splitting this idempotent (see Section~\ref{app:Frobenius}) produces an isometry $i:A\to H^*\otimes V_{\Gamma} \otimes H$ from some Hilbert space $A$ such that:
 \begin{calign}\label{eq:idempotentsplitting}
 \frac{1}{n}~~~
\begin{tz}[zx,xscale=-0.6,scale=4/3]
\draw(0.25,-0.5) to (0.25,0) to [out=up, in=-135] (1,1);
\draw [arrow data={0.23}{<},arrow data={0.88}{<}]  (1,1) to [out=135, in=right]  (-0.3, 1.7) to [out=left, in=up] (-1.25,1) to (-1.25,-0.5);
\draw(1.75,2.5) to (1.75,2) to [out=down, in=45] (1,1);
\draw[arrow data={0.23}{<},arrow data={0.9}{<}] (1,1) to [out= -45, in= left]   (2.3,0.3) to [out=right, in=down] (3.25,1) to (3.25,2.5);
\draw[arrow data={0.83}{>}] (1,1) to [out=up, in=down] (0.25,2.5);
\draw[arrow data={0.83}{>}] (1,1) to [out=down, in=up] (1.75,-0.5);
\node [zxnode=\zxwhite] at (1,1) {$X$};
\end{tz} 
~~~=~
\def\d{0.8}
\begin{tz}[zx,xscale=-1,scale=4/3,every to/.style={out=up, in=down}]
\draw[arrow data={0.45}{>}] (-\d,0) to (-0.25*\d, 1);
\draw (0,0) to + (0,1);
\draw[arrow data={0.45}{<}] (\d,0) to  (0.25*\d,1);
\draw[arrow data={0.55}{>}](-0.25*\d,2) to (-\d,3) ;
\draw (0, 2) to  (0,3) ;
\draw[arrow data={0.55}{<}]  (0.25*\d, 2) to (\d,3) ;
\draw (0, 1) to (0,2) ;
\node[zxnode=\zxwhite] at (0, 1) {$i^\dagger$};
\node[zxnode=\zxwhite] at (0, 2) {$i$};
\node[dimension, right] at (0,1.5) {$A$};
\end{tz}
&
\begin{tz}[zx,xscale=-1,scale=4/3,every to/.style={out=up, in=down}]
\draw[arrow data={0.5}{>}] (0,0.55) to (-0.25*\d, 1.5) to (0,2.45);
\draw[arrow data={0.5}{<}] (0,0.55) to (0.25*\d, 1.5) to (0,2.45);
\draw (0,0) to + (0,1);
\draw (0, 2) to  (0,3) ;
\draw (0, 1) to (0,2) ;
\node[zxnode=\zxwhite] at (0, 0.75) {$i$};
\node[zxnode=\zxwhite] at (0, 2.25) {$i^\dagger$};
\node[dimension, right] at (0,0) {$A$};
\node[dimension, right] at (0,3) {$A$};
\end{tz}
~~=~~
\begin{tz}[zx,xscale=-1,scale=4/3]
\clip (-0.1, -0.2) rectangle (0.5,3.2);
\draw (0,0) to (0,3);
\node[dimension,right] at (0,0) {$A$};
\end{tz}
 \end{calign}
 \ignore{Moreover, we remark that this splitting is unique up to unitaries $A'\to A'$.}
Note that this splitting is unique up to a unitary morphism. We now claim that the following linear map $P:H \otimes A  \to V_{\Gamma} \otimes H$ is biunitary:
 \[\begin{tz}[zx,xscale=-1,every to/.style={out=up, in=down},xscale=-0.8]
\draw [arrow data={0.2}{>},arrow data={0.8}{>}]  (0,0) to (2.25,3);
\draw (2.25,0) to node[zxnode=\zxwhite, pos=0.5] {$P$} (0,3);
\node[dimension, right] at (2.25,0) {$A$};
\node[dimension, left] at (0,3) {$V_{\Gamma}$};
\end{tz}
~~:=~\sqrt{n}~~~
\begin{tz}[zx,yscale=-1,every to/.style={out=up, in=down}]
\draw[arrow data={0.35}{<}] (1., 0) to (0,1.8);
\draw[arrow data ={0.8}{<}] (0,1.8) to [out=-120, in=right] (-0.85,0.5) to [out=left, in=down] (-1.5,1.5) to (-1.5,3);
\draw (0,0) to (0, 3);
\node[zxnode=\zxwhite] at (0,1.5) {$i$};
\end{tz}\]
The unitarity equation $PP^\dagger=\mathbbm{1}$ follows from:
\[n~~~
\begin{tz}[zx,every to/.style={out=up, in=down}]
\begin{scope}[yscale=-1,yshift=-4.5cm]
\draw[arrow data={0.35}{<}] (1., 0) to (0,1.8);
\draw (0,1.8) to [out=-120, in=right] (-0.85,0.5) to [out=left, in=down] (-1.5,1.5) to (-1.5,2.3);
\draw (0,0) to (0, 3);
\node[zxnode=\zxwhite] at (0,1.5) {$i$};
\end{scope}
\draw[string,arrow data={0.35}{>}] (1., 0) to (0,1.8);
\draw[string,arrow data ={1}{>}] (0,1.8) to [out=-120, in=right] (-0.85,0.5) to [out=left, in=down] (-1.5,1.5) to (-1.5,2.3);
\draw (0,0) to (0, 3);
\node[zxnode=\zxwhite] at (0,1.5) {$i^\dagger$};
\end{tz}
~\superequalseq{eq:idempotentsplitting}~
\begin{tz}[zx,xscale=0.6,scale=1.225,yscale=-1.225]
\draw(0.25,-0.5) to (0.25,0) to [out=up, in=-135] (1,1);
\draw[arrow data={0.4}{<}] (0.95,0.95) to [out=145, in=125,looseness=25] (1.05,1.05) ;
\draw(1.75,2.5) to (1.75,2) to [out=down, in=45] (1,1);
\draw[arrow data={0.23}{<},arrow data={0.9}{<}] (1,1) to [out= -45, in= left]   (2.3,0.3) to [out=right, in=down] (3.25,1) to (3.25,2.5);
\draw[arrow data={0.83}{>}] (1,1) to [out=down, in=up] (1.75,-0.5);
\node [zxnode=\zxwhite] at (1,1) {$X$};
\end{tz} 
~~\superequalseq{eq:proofFrobenius1}~~
\begin{tz}[zx,xscale=1]
\draw (0,0) to (0,4.5);
\draw[arrow data={0.5}{>}] (1.2,0) to (1.2,4.5);
\end{tz}
\]
The other equation $P^\dagger P = \mathbbm{1}$ follows from conjugating the three right-most wires of the following by $i$ and using~\eqref{eq:idempotentsplitting}:
\def\sclP{0.75}
\def\xoffP{-0.2}
\def\yoffP{-0.3}
\begin{equation}\label{eq:endskolemnoether}
\begin{tz}[zx,yscale=-1,xscale=0.6,scale=4/3*\sclP]
\draw(0.25,-0.5) to (0.25,0) to [out=up, in=-135] (1,1);
\draw [arrow data={0.88}{<}]  (1,1) to [out=135, in=right]  (-0.3, 1.7) to [out=left, in=up] (-1.25,1) to (-1.25,-0.5);
\draw (1.75,2) to [out=down, in=45] (1,1);
\draw[arrow data={1}{<}] (1,1) to [out= -45, in= left]   (2.3,0.3) to [out=right, in=down] (3.25,1) to (3.25,1.5);
\draw[arrow data={0.9}{>}] (1,1) to [out=up, in=up,looseness=0.95] (-2.75,1) to (-2.75, -0.5);
\draw[arrow data={0.7}{>}] (1,1) to [out=down, in=up] (1.75,-0.5);
\draw (1.75,2) to [out=up, in=-135] (2.5,3) ;
\draw (3.25,1.5) to [out=up, in=down] (2.5,3);
\draw[arrow data={0.9}{<}] (2.5,3) to [out=-45, in=left] (3.8, 2.3) to [out=right, in=down] (4.75, 3) to (4.75, 4.5);
\draw[] (2.5,3) to [out=45, in=down] (3.25, 4) to (3.25, 4.5);
\draw[arrow data={0.79}{>}] (2.5,3) to [out=up, in=down] (1.75, 4.5);
\draw[arrow data={0.83}{<}] (2.25,3) to [out=up, in=down] (0.25, 4.5);
\node [zxnode=\zxwhite] at (1,1) {$X$};
\node [zxnode=\zxwhite] at (2.5,3) {$X$};
\end{tz} 
~\superequalseq{eq:proofFrobenius1}~ 
\begin{tz}[zx,yscale=-1,xscale=0.6,scale=4/3*\sclP]
\draw[arrow data={0.13}{<}, arrow data={0.93}{<}] (-2.75, -0.5) to [out=up, in=down] (0.25,4.5);
\begin{scope}[xshift=\xoffP cm, yshift=\yoffP cm]
\draw (1.75,2) to [out=135, in=right]  (0.45, 2.7) to [out=left, in=up] (-0.5,2);
\draw (1.75,2) to [out= -45, in= left]   (3.05,1.3) to [out=right, in=down] (4,2);
\draw (1.75,2) to [out=up, in=down] (1,3.5);
\draw (1.75,2) to [out=down, in=up] (2.1,0.5);
\node [zxnode=\zxwhite] at (1.75,2) {$X$};
\end{scope}
\draw[string,arrow data={0.79}{<}] (-0.5+\xoffP,2+\yoffP) to [out=down, in=up] (-1.25, -0.5);
\draw (0.25,-0.5) to [out=up, in=-135] (1.75+\xoffP, 2+\yoffP);
\draw[arrow data={0.35}{>}] (2.1+\xoffP,0.5+\yoffP) to [out=down, in=up] (1.75,-0.5);
\draw[arrow data={0.68}{>},string] (1+\xoffP,3.5+\yoffP) to [out=up, in=down](1.75,4.5);
\draw[string]  (1.75+\xoffP, 2+\yoffP) to [out=45, in=down] (3.24,4.5);
\draw[string, arrow data={0.85}{<}] (4+\xoffP, 2+\yoffP) to [out=up, in=down]  (4.75, 4.5);
\end{tz} 
\end{equation}
Unitarity of the quarter-rotation follows analogously:
\[
\begin{tz}[zx,xscale=-0.6,yscale=1,scale=1]
\draw(0.25,-0.5) to (0.25,0) to [out=up, in=-135] (1,1);
\draw [arrow data={0.34}{>}]  (1,1) to [out=135, in=right]  (-0.3, 1.7) to [out=left, in=up] (-1.25,1) to (-1.25,-0.5);
\draw(1.75,2.5) to (1.75,2) to [out=down, in=45] (1,1);
\draw[arrow data={0.32}{<}] (1,1) to [out= -45, in= left]   (2.3,0.3) to [out=right, in=down] (3.25,1) to (3.25,2.5);
\node [zxnode=\zxwhite] at (1,1) {$P$};
\end{tz}
~=~\sqrt{n}~
\begin{tz}[zx,yscale=-1,every to/.style={out=up, in=down},xscale=-1]
\draw[arrow data={0.35}{>}] (1., 0) to (0,1.8);
\draw[arrow data ={0.8}{>}] (0,1.8) to [out=-120, in=right] (-0.85,0.5) to [out=left, in=down] (-1.5,1.5) to (-1.5,3);
\draw (0,0) to (0, 3);
\node[zxnode=\zxwhite] at (0,1.5) {$i$};
\end{tz}
\]
From now on we will use the short-hand notation for $P$ introduced in~\eqref{eq:shorthand}. Using the algebra on $V_{\Gamma}$, we define the following linear maps on $A$:
\def\yoff{-0.15}
\def\loff{-0.15}
\def\scl{0.75}
\begin{calign}\nonumber
\begin{tz}[zx,scale=\scl]
\draw (0,0) to (0,1) to [out=up, in=-135] (0.75,2) to (0.75,3.75);
\draw (1.5,0) to (1.5,1) to  [out=up, in=-45] (0.75,2);
\node[zxvertex=\zxblack, zxdown] at  (0.75,2){};
\node[dimension, right] at (1.5,0) {$A$};
\node[dimension, right] at (0,0) {$A$};
\node[dimension, right] at (0.75,3.75) {$A$};
\end{tz}
\!\!\!:=~\frac{1}{n}~
\begin{tz}[zx,scale=\scl]
\draw[string] (0,0) to node[front,zxvertex=\zxwhite, pos=0.69]{}(0,1) to [out=up, in=-135] (0.75,2) to node[front, zxvertex=\zxwhite, pos=0.5]{} (0.75,3.75);
\draw[string] (1.5,0) to (1.5,1) to [out=up, in=-45] node[front,zxvertex=\zxwhite, pos=0.0]{} (0.75, 2);
\draw[string,arrow data={0.01}{>},arrow data = {0.33}{>},arrow data={0.665}{>}] (-0.25+\loff, 0.75+\yoff) to [out=20, in=down] (2.25,2+\yoff) to [out=up, in=-20] (-0.25+\loff, 3.25+\yoff)  to [out=160 , in=200] (-0.25+\loff, 0.75+\yoff) ;
\node[zxvertex=\zxwhite,zxdown] at (0.75,2){};
\end{tz}
&
\begin{tz}[zx,scale=\scl]
\clip (0.45, -0.25) rectangle (1.55, 4);
\draw  (0.75,2) to (0.75,3.75);
\node[zxvertex=\zxblack] at  (0.75,2){};
\node[dimension, right] at (0.75,3.75) {$A$};
\end{tz}
\!:=~\frac{1}{n}~
\begin{tz}[zx,scale=\scl]
\clip (-1+\loff, 0) rectangle (2.45,3.75); 
\draw[string] (0.75,2) to node[front, zxvertex=\zxwhite, pos=0.5]{} (0.75,3.75);
\draw[string,arrow data={0.01}{>},arrow data = {0.33}{>},arrow data={0.665}{>}] (-0.25+\loff, 0.75+\yoff) to [out=20, in=down] (2.25,2+\yoff) to [out=up, in=-20] (-0.25+\loff, 3.25+\yoff)  to [out=160 , in=200] (-0.25+\loff, 0.75+\yoff) ;
\node[zxvertex=\zxwhite] at (0.75,2){};
\end{tz}
&
\begin{tz}[zx,scale=\scl, yscale=-1]
\draw (0,0) to (0,1) to [out=up, in=-135] (0.75,2) to (0.75,3.75);
\draw (1.5,0) to (1.5,1) to  [out=up, in=-45] (0.75,2);
\node[zxvertex=\zxblack, zxup] at  (0.75,2){};
\node[dimension, right] at (1.5,0) {$A$};
\node[dimension, right] at (0,0) {$A$};
\node[dimension, right] at (0.75,3.75) {$A$};
\end{tz}
\!\!\!:=~\frac{1}{n}~
\begin{tz}[zx,scale=\scl,yscale=-1]
\draw[string] (0,0) to node[front,zxvertex=\zxwhite, pos=0.69]{}(0,1) to [out=up, in=-135] (0.75,2) to node[front, zxvertex=\zxwhite, pos=0.5]{} (0.75,3.75);
\draw[string] (1.5,0) to (1.5,1) to [out=up, in=-45] node[front,zxvertex=\zxwhite, pos=0.0]{} (0.75, 2);
\draw[string,arrow data={0.0}{<},arrow data = {0.33}{<},arrow data={0.67}{<}] (-0.25+\loff, 0.75+\yoff) to [out=20, in=down] (2.25,2+\yoff) to [out=up, in=-20] (-0.25+\loff, 3.25+\yoff)  to [out=160 , in=200] (-0.25+\loff, 0.75+\yoff) ;
\node[zxvertex=\zxwhite,zxup] at (0.75,2){};
\end{tz}
&
\begin{tz}[zx,scale=\scl, yscale=-1]
\clip (0.45, -0.25) rectangle (1.55, 4);
\draw  (0.75,2) to (0.75,3.75);
\node[zxvertex=\zxblack] at  (0.75,2){};
\node[dimension, right] at (0.75,3.75) {$A$};
\end{tz}
\!:=~\frac{1}{n}~
\begin{tz}[zx,scale=\scl, yscale=-1]
\clip (-1+\loff, 0) rectangle (2.45,3.75); 
\draw[string] (0.75,2) to node[front, zxvertex=\zxwhite, pos=0.5]{} (0.75,3.75);
\draw[string,arrow data={0.01}{<},arrow data = {0.33}{<},arrow data={0.665}{<}] (-0.25+\loff, 0.75+\yoff) to [out=20, in=down] (2.25,2+\yoff) to [out=up, in=-20] (-0.25+\loff, 3.25+\yoff)  to [out=160 , in=200] (-0.25+\loff, 0.75+\yoff) ;
\node[zxvertex=\zxwhite] at (0.75,2){};
\end{tz}
\end{calign}
It follows from Proposition~\ref{prop:jumpinglemma} that these structures form a special dagger Frobenius algebra. In fact, they form a special \textit{symmetric} dagger Frobenius algebra, since we also have that
\def\scl{0.8}
\[
\begin{tz}[zx, scale=\scl]
\clip (-0.2, -0.5) rectangle (1.8, 3.4+\yoff);
\draw (0,-0.5) to (0,1) to [out=up, in=-135] (0.75,2);
\draw (1.5,-0.5) to (1.5,1) to  [out=up, in=-45] (0.75,2);
\node[zxvertex=\zxblack, zxdown] at  (0.75,2){};
\end{tz}
~=~\frac{1}{n}~\begin{tz}[zx,scale=\scl]
\clip (-1.2, -0.5) rectangle (2.4, 3.4+\yoff);
\draw[string] (0, -0.5) to (0,0) to node[front,zxvertex=\zxwhite, pos=0.69]{}(0,1) to [out=up, in=-135] (0.75,2);
\draw[string]  (1.5,-0.5) to (1.5,0) to (1.5,1) to [out=up, in=-45] node[front,zxvertex=\zxwhite, pos=0.0]{} (0.75, 2);
\draw[string,arrow data={0.01}{>},arrow data = {0.33}{>},arrow data={0.665}{>}] (-0.25+\loff, 0.75+\yoff) to [out=20, in=down] (2.25,2+\yoff) to [out=up, in=-20] (-0.25+\loff, 3.25+\yoff)  to [out=160 , in=200] (-0.25+\loff, 0.75+\yoff) ;
\node[zxvertex=\zxwhite] at (0.75,2){};
\end{tz}
~=~\frac{1}{n}~
\begin{tz}[zx,scale=\scl]
\clip (-1.2, -0.5) rectangle (2.4, 3.4+\yoff);
\draw[string] (1.5,-0.5) to [out=up, in=down]  (0,0.5) to node[front,zxvertex=\zxwhite, pos=0.4]{}(0,1) to [out=up, in=-25, in looseness=2.5] (0.75,2);
\draw[string] (0,-0.5) to [out=up, in=down] (1.5,0.5) to (1.5,1) to [out=up, in=-155, in looseness=2.5] node[front,zxvertex=\zxwhite, pos=0.0]{} (0.75, 2);
\draw[string,arrow data={0.01}{>},arrow data = {0.33}{>},arrow data={0.665}{>}] (-0.25+\loff, 0.75+\yoff) to [out=20, in=down] (2.25,2+\yoff) to [out=up, in=-20] (-0.25+\loff, 3.25+\yoff)  to [out=160 , in=200] (-0.25+\loff, 0.75+\yoff) ;
\node[zxvertex=\zxwhite] at (0.75,2){};
\end{tz}
~=~\frac{1}{n}~
\begin{tz}[zx,scale=\scl]
\clip (-1.2, -0.5) rectangle (2.4, 3.4+\yoff);
\draw[string] (1.5,-0.5) to [out=up, in=down]  (0,0.5) to node[front,zxvertex=\zxwhite, pos=0.4]{}(0,1) to [out=up, in=-135] (0.75,2);
\draw[string] (0,-0.5) to [out=up, in=down] (1.5,0.5) to (1.5,1) to [out=up, in=-45] node[front,zxvertex=\zxwhite, pos=0.0]{} (0.75, 2);
\draw[string,arrow data={0.01}{>},arrow data = {0.33}{>},arrow data={0.665}{>}] (-0.25+\loff, 0.75+\yoff) to [out=20, in=down] (2.25,2+\yoff) to [out=up, in=-20] (-0.25+\loff, 3.25+\yoff)  to [out=160 , in=200] (-0.25+\loff, 0.75+\yoff) ;
\node[zxvertex=\zxwhite] at (0.75,2){};
\end{tz}
~=~
\begin{tz}[zx, scale=\scl]
\clip (-0.2, -0.5) rectangle (1.8, 3.4+\yoff);
\draw (1.5,-0.5) to[out=up, in=down] (0,1) to [out=up, in=-135] (0.75,2);
\draw (0,-0.5) to [out=up, in=down] (1.5,1) to  [out=up, in=-45] (0.75,2);
\node[zxvertex=\zxblack, zxdown] at  (0.75,2){};
\end{tz}
\]
Here, the second equation is a direct consequence of the graphical calculus, moving the bottom right node all the way around the oriented loop to the left. The third equation is symmetry of the algebra on $V_{\Gamma}$.

We also define the following endomorphism on $A$, which is --- again due to Proposition~\ref{prop:jumpinglemma} --- an adjacency matrix of a quantum graph.
\[\begin{tz}[zx,scale=\scl, yscale=1]
\draw[string] (0.75,0) to (0.75,3.75);
\node[zxnode=\zxwhite] at (0.75, 1.875) {$\Gamma_X$};
\node[dimension, right] at (0.75,0) {$A$};
\node[dimension, right] at (0.75, 3.75) {$A$};
\end{tz}
~:=~\frac{1}{n}~
\begin{tz}[zx,scale=\scl, yscale=1]
\draw[string] (0.75,0) to node[front, zxvertex=\zxwhite, pos=0.225]{}node[front,zxvertex=\zxwhite, pos=0.765]{} (0.75,3.75);
\draw[string,arrow data={0.01}{>},arrow data = {0.33}{>},arrow data={0.665}{>}] (-0.25+\loff, 0.75+\yoff) to [out=20, in=down] (2.25,2+\yoff) to [out=up, in=-20] (-0.25+\loff, 3.25+\yoff)  to [out=160 , in=200] (-0.25+\loff, 0.75+\yoff) ;
\node[zxnode=\zxwhite] at (0.75, 1.875) {$\Gamma$};
\end{tz}
\]
It follows from Proposition~\ref{prop:quantumcondition} and Proposition~\ref{prop:jumpinglemma} that $P$ is a quantum graph isomorphism from $\Gamma_X$ to $\Gamma$ and it follows from~\eqref{eq:idempotentsplitting} that $X=P \circ \overline{P}$.
\end{proof}

\section{Frobenius monoids in classical subcategories}
\label{sec:frobmonclassical}

In Section~\ref{sec:classification}, we classified quantum and classical graphs which are quantum isomorphic to a given quantum or classical graph $\Gamma$ in terms of Morita equivalence classes of simple dagger Frobenius monoids in the monoidal category $\QAut(\Gamma)$. Although some of these categories $\QAut(\Gamma)$ have been studied in the framework of compact quantum groups~\cite{Banica2008,Banica2009}, the general classification of Morita equivalence classes of Frobenius monoids in such categories seems unfeasible using current techniques.

We therefore focus on the much more tractable classical subcategories $\Hilb_{\Aut(\Gamma)} \subseteq \QAut(\Gamma)$ (see Definition~\ref{def:classical}), where the Morita equivalence classes of Frobenius monoids are well known~\cite{Ostrik2003}. Although these Frobenius monoids are in a sense `classical', being sums of classical automorphisms, we will see in Section~\ref{sec:bcsarkhipov} (and have already seen in the introduction) that they can still give rise to quantum but not classically isomorphic graphs.

Moreover, if a quantum graph $\Gamma$ has no quantum symmetries (Definition~\ref{def:noquantumsymmetries}) --- that is, if $\QAut(\Gamma) \cong \Hilb_{\Aut(\Gamma)}$ --- we are able to completely classify quantum graphs quantum isomorphic to $\Gamma$ in terms of straightforward group theoretical properties of the automorphism group of $\Gamma$.

\subsection{Quantum isomorphic quantum graphs from groups}\label{sec:ClassicalFrobPart1}
We recall the well known Morita classification of special Frobenius monoids on graded vector spaces. 
\begin{proposition}[{\cite[Example 2.1]{Ostrik2003}}] \label{prop:subgroupsclassification}Let $G$ be a finite group. Up to Morita equivalence, indecomposable,\footnote{A Frobenius monoid is \emph{indecomposable} if it is not a direct sum of non-trivial Frobenius monoids. We observe that all simple dagger Frobenius monoids are indecomposable.} special dagger Frobenius monoids in $\Hilb_G$ correspond to pairs $(L, \psi)$ where $L$ is a subgroup of $G$ and $\psi:L\times L \to U(1)$ is a 2-cocycle up to the equivalence relation:
\begin{equation} \label{eq:equivalence} \begin{split}
(L, \psi) \sim (L', \psi') \hspace{0.1cm}\Leftrightarrow\hspace{0.1cm} L' = gLg^{-1}&\text{ and }\psi'\text{ is cohomologous to }\\
&\psi^g(x,y):=\psi(gxg^{-1}, gyg^{-1})\text{ for some }g\in G
\end{split}
\end{equation}
\end{proposition}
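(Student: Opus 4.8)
The plan is to identify Morita equivalence classes of special dagger Frobenius monoids in $\Hilb_G$ with indecomposable semisimple module categories over $\Hilb_G$, and then classify the latter directly by group-theoretic invariants. First I would invoke the standard dictionary of Morita theory in a unitary fusion category: for a special dagger Frobenius monoid $A$ in $\Hilb_G$, the category $\mathrm{Mod}(A)$ of (right) $A$-modules in $\Hilb_G$ is a semisimple $\Hilb_G$-module category, two such monoids are dagger Morita equivalent (Definition~\ref{def:daggermoritaequiv}) precisely when their module categories are equivalent as module categories, and $A$ is indecomposable iff $\mathrm{Mod}(A)$ is. Moreover every indecomposable semisimple module category arises in this way via the internal endomorphism algebra of a chosen generator. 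This is \cite{Ostrik2003,Etingof2015} in the non-dagger setting; in our dagger setting one works with unitary module categories and the dagger bimodules of Definition~\ref{def:daggerbimodule}, using that internal Homs carry a canonical dagger Frobenius structure. This step reduces the statement to classifying indecomposable $\Hilb_G$-module categories.

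The heart of the argument is this classification. Since every simple object of $\Hilb_G$ is invertible (the grading-group elements), the action of $\Hilb_G$ permutes the isomorphism classes of simple objects of a module category $\mathcal{M}$, giving an action of $G$ on the set of simples; indecomposability forces this action to be transitive. Fixing a simple object $m_0 \in \mathcal{M}$, its stabiliser is a subgroup $L \le G$, and the simples are then indexed by the coset space $G/L$ via tensoring $m_0$ with coset representatives. Restricting the module associativity constraint to the invertible objects labelled by $L$ acting on $m_0$ yields, because the associator of $\Hilb_G$ is trivial, a normalised $2$-cochain with vanishing coboundary, i.e.\ a $2$-cocycle $\psi \in Z^2(L, U(1))$; the unitary/dagger structure forces its values into $U(1)$. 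Choosing a different base simple replaces $L$ by a conjugate $gLg^{-1}$ and $\psi$ by $\psi^g$, while re-choosing the trivialisations of the module constraint alters $\psi$ by a coboundary. This produces the invariant $(L,\psi)$ modulo exactly the equivalence relation~\eqref{eq:equivalence}.

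Finally I would exhibit the canonical representative: for a pair $(L,\psi)$, the twisted group algebra $\mathbb{C}L^\psi$, graded by $L \subseteq G$ with multiplication $e_l \cdot e_{l'} = \psi(l,l')\,e_{ll'}$, is a finite-dimensional $C^*$-algebra whenever $\psi$ is $U(1)$-valued, hence carries a canonical special symmetric dagger Frobenius structure by the graded analogue of Theorem~\ref{thm:jamiec*frob}. It is indecomposable and its module category has simples indexed by $G/L$ with base-point stabiliser $L$ and associated obstruction $[\psi]$, so it realises precisely the class constructed above. This shows every Morita class contains such a twisted group algebra and closes the bijection.

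The step I expect to be the main obstacle is the careful extraction of the $2$-cocycle in the second paragraph together with the proof that the invariant is both \emph{complete} and \emph{well-defined} modulo precisely~\eqref{eq:equivalence}. One must track how the two kinds of choices---the base simple $m_0$ and the trivialisations of the module associativity constraint---interact, confirm that distinct classes really give inequivalent module categories by recovering $L$ intrinsically as a point stabiliser and $[\psi] \in H^2(L,U(1))$ as an intrinsic obstruction, and verify throughout that the dagger refinement (unitary module equivalences, $U(1)$-valued cocycles, dagger Morita equivalence in the sense of Definition~\ref{def:daggermoritaequiv}) recovers exactly the non-dagger classification rather than yielding a strictly finer set of invariants.
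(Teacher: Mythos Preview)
Your proposal is correct and follows essentially the same route as the paper: reduce Morita classes of special dagger Frobenius monoids in $\Hilb_G$ to indecomposable module categories, then invoke Ostrik's classification of the latter by pairs $(L,\psi)$ modulo~\eqref{eq:equivalence}. The paper's own proof is much terser---it simply cites \cite[Example 2.1]{Ostrik2003} for the $\Vect_G$ classification rather than sketching the stabiliser/cocycle extraction you outline, and for the dagger refinement it does not track unitarity through the argument but instead observes afterwards that any $G$-graded bimodule between special dagger Frobenius algebras can be equipped with a compatible inner product via a Weyl-type averaging trick, so that non-dagger Morita equivalence already implies dagger Morita equivalence and no finer invariants arise.
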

\begin{proof}
Morita equivalence classes of indecomposable, special Frobenius monoids in $\Vect_G$ correspond to equivalence classes of semisimple indecomposable module categories over $\Vect_G$ whose classification in terms of pairs $(L, \psi)$ up to the equivalence relation~\eqref{eq:equivalence} is well-known~\cite[Example 2.1]{Ostrik2003}.
\ignore{
The classification of Morita equivalence classes of indecomposable symmetric, special Frobenius monoids in the category $\Vect_G$ by pairs $(L, \psi)$ up to the equivalence relation~\eqref{eq:equivalence} is well-known~\cite[Example 2.1]{Ostrik2003}. }%
\ignore{That this classification also applies in the dagger setting follows from the fact that any $G$-graded bimodule for two $G$-graded special Frobenius algebras can be endowed with a graded inner product giving it the structure of a dagger bimodule (Definition~\ref{def:daggerbimodule}). (For this, take any inner product compatible with the grading, pick a homogeneous orthonormal basis for both algebras, and perform an analogue of Weyl's unitarian trick.)}%
That this classification also applies in the dagger setting follows from the fact that a version of Weyl's unitarian trick can be used to endow any $G$-graded bimodule between two $G$-graded special dagger Frobenius algebras with a compatible inner product giving it the structure of a dagger bimodule (Definition~\ref{def:daggerbimodule}).  \ignore{ any $G$-graded bimodule for two $G$-graded special Frobenius algebras can be endowed with a graded inner product giving it the structure of a dagger bimodule (Definition~\ref{def:daggerbimodule}). (For this, take any inner product compatible }
\end{proof}
\noindent
The underlying algebra of the Frobenius monoid associated to $(L,\psi)$ is the twisted group algebra $\mathbb{C}L^\psi$ defined on the Hilbert space $\mathbb{C}L$ with orthonormal basis given by the group elements and algebra structure defined as:
\begin{calign}g\star_\psi h :=\frac{1}{\sqrt{|L|}}~ \psi(g,h) gh & e_\psi :=\sqrt{|L|}~\conj{\psi}(e,e)~ e
\end{calign}
Here again, the normalisation factors are chosen to make $\mathbb{C}L^\psi$ special (see Remark~\ref{rem:normalisation}).

The Frobenius monoid $(L,\psi)$ is simple in the sense of Definition~\ref{def:simpleFrob}, if the algebra $\mathbb{C}L^\psi$ is simple. Groups with 2-cocycles $\psi$ such that $\mathbb{C}L^\psi$ is simple have a long history and are known as \textit{groups of central type}, while the corresponding 2-cocycles are said to be \emph{non-degenerate} (see~\cite[Definition 7.12.21]{Etingof2015}). This leads to the following consequence of Corollary~\ref{cor:bigclassification}.

\begin{corollary}\label{cor:classification} Let $\Gamma$ be a quantum graph. Every subgroup of central type $(L, \psi)$ of $\Aut(\Gamma)$ induces a quantum graph $\Gamma_{L,\psi}$ and a quantum isomorphism $\Gamma_{L, \psi} \to \Gamma$. 
Moreover, if $\Gamma$ has no quantum symmetries, this gives rise to a bijective correspondence between the following structures:
\begin{itemize}
\item Isomorphism classes of quantum graphs $\Gamma'$ such that there exists a quantum isomorphism $\Gamma'\to\Gamma$.
\item  Subgroups of central type $(L, \psi)$ of $\Aut(\Gamma)$ up to the equivalence relation~\eqref{eq:equivalence}.
\end{itemize}
\end{corollary}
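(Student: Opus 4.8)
The plan is to derive Corollary~\ref{cor:classification} as a direct specialization of the general classification in Corollary~\ref{cor:bigclassification}, using the Morita classification on graded Hilbert spaces (Proposition~\ref{prop:subgroupsclassification}) to parametrize the relevant Frobenius monoids. First I would establish the ``every subgroup gives a quantum isomorphism'' half, which holds for \emph{all} quantum graphs $\Gamma$ regardless of quantum symmetries. Given a subgroup of central type $(L,\psi)$ of $\Aut(\Gamma)$, the twisted group algebra $\mathbb{C}L^\psi$ is a special dagger Frobenius monoid in $\Hilb_L$. Pushing forward along the full inclusion $\Hilb_L \subseteq \Hilb_{\Aut(\Gamma)} \subseteq \QAut(\Gamma)$ (the last inclusion from Definition~\ref{def:classical}), we obtain a dagger Frobenius monoid in $\QAut(\Gamma)$. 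Since $(L,\psi)$ is of central type, $\mathbb{C}L^\psi$ is simple as an algebra, hence this is a \emph{simple} dagger Frobenius monoid in the sense of Definition~\ref{def:simpleFrob}. Theorem~\ref{thm:Frobeniussplit} then directly supplies a quantum graph $\Gamma_{L,\psi}$ and a quantum isomorphism $\Gamma_{L,\psi} \to \Gamma$. This part requires only assembling already-proven results.

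For the bijective correspondence under the no-quantum-symmetries hypothesis, the key observation is that $\QAut(\Gamma) \cong \Hilb_{\Aut(\Gamma)}$ by Definition~\ref{def:noquantumsymmetries}. Under this identification, Corollary~\ref{cor:bigclassification} says that isomorphism classes of quantum graphs $\Gamma'$ quantum isomorphic to $\Gamma$ are in bijection with Morita equivalence classes of simple dagger Frobenius monoids in $\Hilb_{\Aut(\Gamma)}$. I would then invoke Proposition~\ref{prop:subgroupsclassification}, which classifies Morita equivalence classes of indecomposable special dagger Frobenius monoids in $\Hilb_G$ (with $G = \Aut(\Gamma)$) by pairs $(L,\psi)$ up to the equivalence relation~\eqref{eq:equivalence}. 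The final step is to match the \emph{simplicity} condition of Definition~\ref{def:simpleFrob} with the \emph{central type} condition on $(L,\psi)$: a Frobenius monoid $(L,\psi)$ is simple precisely when its underlying algebra $\mathbb{C}L^\psi$ is simple, which is exactly the definition of $(L,\psi)$ being of central type with $\psi$ non-degenerate. Composing the two bijections yields the stated correspondence.

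The main technical point to nail down is the interaction between simplicity and indecomposability. Proposition~\ref{prop:subgroupsclassification} classifies \emph{indecomposable} special dagger Frobenius monoids, whereas Corollary~\ref{cor:bigclassification} concerns \emph{simple} ones; I would need the observation (footnoted in Proposition~\ref{prop:subgroupsclassification}) that every simple dagger Frobenius monoid is indecomposable, together with the converse restriction that among the indecomposable monoids $(L,\psi)$, the simple ones are exactly those for which $\mathbb{C}L^\psi$ is a matrix algebra, i.e. those of central type. I expect this matching --- verifying that ``$F$-simple'' in the sense of Definition~\ref{def:simpleFrob} (underlying algebra $*$-isomorphic to an endomorphism algebra, equivalently a matrix algebra) coincides exactly with ``$\mathbb{C}L^\psi$ simple'' --- to be the only place requiring genuine care, since everything else is bookkeeping through established equivalences. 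A minor secondary point is confirming that the Morita equivalence used in Proposition~\ref{prop:subgroupsclassification} is the dagger Morita equivalence of Definition~\ref{def:daggermoritaequiv} matching the one in Corollary~\ref{cor:bigclassification}; this is handled by the unitarian-trick argument already given in the proof of Proposition~\ref{prop:subgroupsclassification}.

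Concretely, I would structure the proof in two short paragraphs: one constructing $\Gamma_{L,\psi}$ from $(L,\psi)$ via $\mathbb{C}L^\psi \in \Hilb_L \hookrightarrow \QAut(\Gamma)$ and Theorem~\ref{thm:Frobeniussplit} (valid for all $\Gamma$), and one chaining the bijections of Corollary~\ref{cor:bigclassification} and Proposition~\ref{prop:subgroupsclassification} under the hypothesis $\QAut(\Gamma)\cong\Hilb_{\Aut(\Gamma)}$, with the central-type condition identified as the image of the simplicity condition. The reliance on prior results makes this essentially a corollary-assembly argument rather than a proof with substantial new content.
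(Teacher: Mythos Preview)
Your proposal is correct and follows essentially the same route as the paper: the paper's one-line proof (``a direct consequence of Proposition~\ref{prop:subgroupsclassification}'') tacitly relies on exactly the chain you spell out --- the inclusion $\Hilb_{\Aut(\Gamma)}\subseteq\QAut(\Gamma)$, Theorem~\ref{thm:Frobeniussplit} for the first assertion, and Corollary~\ref{cor:bigclassification} combined with Proposition~\ref{prop:subgroupsclassification} and the identification of simplicity with central type for the bijection. The technical caveats you flag (simple $\Rightarrow$ indecomposable, and the dagger-Morita compatibility) are precisely those addressed in the text and footnote surrounding Proposition~\ref{prop:subgroupsclassification}.
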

\begin{proof} The statement is a direct consequence of Proposition~\ref{prop:subgroupsclassification}; the classification of Morita equivalence classes of simple dagger Frobenius monoids in the category of $\Aut(\Gamma)$-graded Hilbert spaces.
\end{proof}
\begin{remark} Two Frobenius monoids in $\Hilb_{\Aut(\Gamma)}\subseteq \QAut(\Gamma)$ might be Morita equivalent in $\QAut(\Gamma)$ even if they are not in $\Hilb_{\Aut(\Gamma)}$. Therefore, the bijective correspondence of Corollary~\ref{cor:classification} holds only if $\Gamma$ has no quantum symmetries, that is, if $\QAut(\Gamma) \cong \Hilb_{\Aut(\Gamma)}$.
\end{remark}
\noindent
This makes the classification of quantum isomorphic quantum graphs quite concrete, particularly if one of the graphs has no quantum symmetries.

\begin{example}\label{exm:Cnquantumiso1}Let $C_n$ be the cycle graph with $n\geq 5$ vertices. It is known~\cite[Lemma 3.5]{Banica2005a} that $C_n$ has no quantum symmetries. Therefore, quantum isomorphic quantum graphs $\Gamma'$ are in correspondence with subgroups of central type of $\Aut(C_n) = D_n$.  All subgroups of $D_n$ are either cyclic or dihedral. For odd $n$, $D_n$ has no subgroup of central type. For even $n$, the only such subgroups are the abelian groups $D_2 \cong \mathbb{Z}_2\times \mathbb{Z}_2$, acting by 180-degree rotations and reflections on the cycle graph. Since there is only one nondegenerate second cohomology class of $\mathbb{Z}_2 \times \mathbb{Z}_2$, the equivalence relation~\eqref{eq:equivalence} reduces to conjugacy of subgroups. If $4$ does not divide $n$, there is only one conjugacy class of $\mathbb{Z}_2 \times \mathbb{Z}_2$ subgroups; if $4$ divides $n$, there are two such conjugacy classes, depending on whether the line of reflection is through opposing edges or through opposing vertices. We therefore conclude the following:
\begin{itemize}
\item For odd $n$, $C_n$ is only isomorphic to itself. 
\item For even $n$ not divisible by $4$, there is exactly one other quantum graph quantum isomorphic to $C_n$. 
\item For $n$ divisible by $4$, there are exactly two other quantum graphs quantum isomorphic to $C_n$.
\end{itemize}
We will show in Example~\ref{exm:Cnquantumiso2} that none of these quantum graphs are classical graphs.
\end{example}

\noindent
We now explicitly construct the simple dagger Frobenius monoid in $\QAut(\Gamma)$ corresponding to a subgroup of central type $(L,\psi)$ of the automorphism group $\Aut(\Gamma)$ of a quantum graph~$\Gamma$.

Note that every quantum isomorphism $X$ in $\Hilb_{\Aut(\Gamma)} \subseteq \QAut(\Gamma)$ is \textit{classical} in the sense of Definition~\ref{def:classical}. In particular, it is of the form~\eqref{eq:classicalquantumfunction} with some orthonormal basis $\left\{\ket{l}~|~l \in L\right\}$ of the underlying Hilbert space $H$ and permutations $\{l\in L\}$ where $L\subseteq \Aut(\Gamma)$ is some subset of the automorphism group of $\Gamma$. If $X$ is moreover a simple dagger Frobenius monoid, by Proposition~\ref{prop:subgroupsclassification} we can assume without loss of generality that $L$ is a subgroup of central type of the automorphism group and that the basis $\{\ket{l}\}$ is determined by a *-isomorphism of algebras $\mathbb{C}L^{\psi} \cong \End(H)$. The data defining such an isomorphism is known in the quantum information community as a nice unitary error basis.
\begin{definition}[{\cite{Klappenecker2003}}]\label{def:niceueb}  A \textit{nice unitary error basis} (nice UEB) for a group of central type $(L,\psi)$ is a family of unitary endomorphisms $\{U_a~|~a \in L\}$ on some Hilbert space $H$ with $|L|=\dim(H)^2$ and such that for all $a,b\in L$:\looseness=-1
\begin{calign} \label{eq:UEBcondition}\Tr(U_a^\dagger U_{b}) = \dim(H)~\delta_{a,b}& U_aU_{b} = \psi(a,b) U_{ab}
\end{calign}
The group $L$ is called the \textit{index group} of the nice UEB. From now on, and without loss of generality, we will always assume that $\psi(e,h) = 1 = \psi(h,e)$ and therefore that $U_e= \mathbbm{1}_H$.
\end{definition}
\noindent
A nice UEB induces a *-isomorphism of algebras $\mathbb{C}L^\psi \to \End(H),~a \mapsto \sqrt{\dim(H)}^{\hspace{1pt}-1} U_a$  (see Remark~\ref{rem:normalisation} for our normalisation of the endomorphism algebra) and every {$*$\-isomorphism} between $\mathbb{C}L^\psi$ and $\End(H)$ is of this form. 

We summarise this discussion in the following proposition.
\begin{prop} Let $\Gamma$ be a quantum graph. Every simple dagger Frobenius monoid in $\Hilb_{\Aut(\Gamma)}\subseteq \QAut(\Gamma)$ is Morita equivalent to a simple dagger Frobenius monoid $(H\otimes H^*, X_{L, \psi})$ for some Hilbert space $H$, where the underlying linear map $X_{L,\psi}: (H\otimes H^*) \otimes V_{\Gamma} \to V_{\Gamma} \otimes (H\otimes H^*)$ is defined as follows:
\def\d{0.5}
\def\h{0.6}
\def\l{3}
\def\hU{0.8}
\begin{equation}\label{eq:UEBquantumbijection}\begin{tz}[zx,xscale=-1, every to/.style={out=up, in=down},yscale=4/3]
\draw (0,0) to (2,3);
\draw[arrow data={0.1}{<}, arrow data={0.93}{<}] (2-\d, 0) to [out looseness=1.25, in looseness=0.75] (0.-\d,3);
\draw[arrow data={0.1}{>}, arrow data={0.93}{>}] (2+\d,0) to [out looseness=0.75, in looseness=1.25] (0+\d,3);
\node[zxnode=\zxwhite] at (1,1.5) {$X_{L,\psi}$};
\end{tz}
~=~
\frac{1}{\sqrt{|L|}}~\sum_{a\in L\subseteq \Aut(\Gamma)}~ 
\begin{tz}[zx,xscale=-1, every to/.style={out=up, in=down}]
\draw (0,0) to (2,4);
\draw[arrow data={0.08}{<},arrow data={0.99}{<}] (2-\d,0) to (2-\d,\h) to  [out=up, in=up,looseness=\l] (2+\d,\h) to  (2+\d,0);
\draw[arrow data={0.1}{>}, arrow data={0.95}{>}] (0-\d,4) to (0-\d,4-\h) to  [out=down, in=down,looseness=\l] (0+\d,4-\h) to  (0+\d,4);
\node[zxnode=\zxwhite] at (1,2) {$a$};
\node[zxnode=\zxwhite] at (2+\d, \hU) {$U_a^\dagger$};
\node[zxnode=\zxwhite] at (0+\d, 4-\hU) {$U_a$};
\node[dimension,left] at (0,0) {$V_\Gamma$};
\node[dimension, right] at (2,4) {$V_\Gamma$};
\end{tz}
\end{equation}
Here, $(L, \psi)$ is a subgroup of central type of $\Aut(\Gamma)$ and $\{U_a~|~a\in L\}$ is a corresponding nice UEB. The endomorphism $a:V_\Gamma\to V_\Gamma$ denotes the action of $a\in L \subseteq \Aut(\Gamma)$ on the quantum set of vertices $V_\Gamma$.
\end{prop}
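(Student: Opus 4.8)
The plan is to combine the Morita classification of Proposition~\ref{prop:subgroupsclassification} with the realisation of a simple twisted group algebra as an endomorphism algebra via a nice UEB, and then to upgrade the resulting $*$-isomorphism of underlying algebras to a $*$-isomorphism of Frobenius monoids \emph{inside} $\QAut(\Gamma)$, since $*$-isomorphic monoids are Morita equivalent.

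First I would invoke Proposition~\ref{prop:subgroupsclassification} with $G=\Aut(\Gamma)$: every simple, hence indecomposable, dagger Frobenius monoid in $\Hilb_{\Aut(\Gamma)}$ is Morita equivalent to the canonical monoid $M_{L,\psi}$ whose underlying object is the $L$-graded Hilbert space $\mathbb{C}L$ and whose underlying algebra is the twisted group algebra $\mathbb{C}L^\psi$, for some subgroup $L\subseteq\Aut(\Gamma)$ and $2$-cocycle $\psi$. As an object of $\QAut(\Gamma)$ this $M_{L,\psi}$ is the classical quantum automorphism $(\mathbb{C}L, P_L)$ of the form~\eqref{eq:classicalquantumfunction}, with $P_L(\ket{l}\otimes v)= l(v)\otimes\ket{l}$. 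Simplicity of the Frobenius monoid (Definition~\ref{def:simpleFrob}) forces $\mathbb{C}L^\psi$ to be a simple algebra, i.e.\ $(L,\psi)$ is of central type, so by Definition~\ref{def:niceueb} and the discussion following it there is a Hilbert space $H$ with $|L|=\dim(H)^2$ and a nice UEB $\{U_a\}$ giving a $*$-isomorphism $\phi\colon\mathbb{C}L^\psi\to\End(H)$, $\ket{a}\mapsto\dim(H)^{-1/2}U_a$, onto the endomorphism algebra~\eqref{eq:endomorphismalgebra}. Here one checks, as flagged in Remark~\ref{rem:normalisation}, that $\{\dim(H)^{-1/2}U_a\}$ is orthonormal for the \emph{unnormalised} Hilbert--Schmidt inner product carried by $H\otimes H^*$, so $\phi$ is unitary, and that the $\tfrac1{\sqrt n}$ in the special multiplication is exactly compensated by the $\tfrac1{\sqrt{|L|}}$ in $\star_\psi$.

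The only substantive step is then to promote $\phi$ to a $*$-isomorphism of Frobenius monoids in $\QAut(\Gamma)$. Since $\phi$ is already a unitary $*$-algebra map it respects multiplication, unit, and, by taking daggers, comultiplication and counit, so by Theorem~\ref{thm:jamiec*frob} it is a $*$-isomorphism of the underlying Frobenius algebras; what remains is to show that $\phi$ is an \emph{intertwiner} (Definition~\ref{def:intertwiner}) carrying $P_L$ to $X_{L,\psi}$. I would verify this by transporting $P_L$ along the unitary $\phi$ and evaluating on a matrix $M\in H\otimes H^*\cong\End(H)$: writing $M=\sum_a c_a U_a$ with $c_a=\tfrac1n\Tr(U_a^\dagger M)$ one obtains $(\id_{V_\Gamma}\otimes\phi)\circ P_L\circ(\phi^{-1}\otimes\id_{V_\Gamma})(M\otimes v)=\tfrac1n\sum_{a\in L}\Tr(U_a^\dagger M)\,a(v)\otimes U_a$. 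This is precisely the map read off from the diagram~\eqref{eq:UEBquantumbijection}: the input cup carrying $U_a^\dagger$ realises the pairing $M\mapsto\Tr(U_a^\dagger M)$ (via the cups and caps of~\eqref{eq:cupscapsHilb}), the $a$-labelled node applies $a\in L\subseteq\Aut(\Gamma)$ to $V_\Gamma$, the output cap carrying $U_a$ produces $U_a\in H\otimes H^*$, and the global scalar is $\tfrac1{\sqrt{|L|}}=\tfrac1n$. Hence $\phi$ intertwines $P_L$ and $X_{L,\psi}$.

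Putting the pieces together, $M_{L,\psi}$ and $(H\otimes H^*, X_{L,\psi})$ are $*$-isomorphic Frobenius monoids in $\QAut(\Gamma)$, and $*$-isomorphic special dagger Frobenius monoids are Morita equivalent (as noted after Definition~\ref{def:daggermoritaequiv}); composing with the Morita equivalence from the first step yields the claim. The hard part will be the bookkeeping in the intertwiner verification: matching the cup/cap conventions to the Hilbert--Schmidt trace pairing and confirming that the normalisation constants ($\sqrt n$, $\sqrt{|L|}$, and the unnormalised inner product of Remark~\ref{rem:normalisation}) conspire to reproduce~\eqref{eq:UEBquantumbijection} exactly. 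Everything else is a direct appeal to results already established.
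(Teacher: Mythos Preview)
Your proposal is correct and follows essentially the same approach as the paper: the proposition is stated there as a summary of the preceding discussion, which invokes Proposition~\ref{prop:subgroupsclassification} to reduce to a twisted group algebra $\mathbb{C}L^\psi$ with $(L,\psi)$ of central type, and then uses the nice UEB to identify this with the endomorphism algebra on $H\otimes H^*$. Your explicit verification that the unitary $\phi$ is an intertwiner (Definition~\ref{def:intertwiner}) carrying $P_L$ to $X_{L,\psi}$, together with the normalisation check $\tfrac{1}{\sqrt{|L|}}=\tfrac{1}{n}$, fills in details the paper leaves implicit; the one small point you might make more explicit is why the Morita representative $M_{L,\psi}$ is itself simple, which follows since the forgetful functor $\Hilb_{\Aut(\Gamma)}\to\Hilb$ is monoidal and takes a Morita equivalence to a Morita equivalence of the underlying algebras.
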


\begin{remark} Different nice UEBs for the same subgroup of central type $(L,\psi)$ --- that is, different $*$-isomorphisms $\mathbb{C}L^\psi \cong \End(H)$ --- give rise to $*$-isomorphic, and in particular Morita equivalent, simple dagger Frobenius monoids $X_{L, \psi}$, and thus to isomorphic induced quantum graphs. Therefore, the particular choice of UEB does not play a role in the following classification.
\end{remark}

\begin{remark} The fact that $X_{L,\psi}$ is classical in the sense of Definition~\ref{def:classical} does not mean that its splitting --- the induced quantum isomorphism from some quantum graph $\Gamma_{L, \psi}$ to $\Gamma$ --- is classical. If this were not the case, we could never generate any non-isomorphic graph from Frobenius monoids in the classical subcategory. In fact, it is a direct consequence of Corollary~\ref{cor:bigclassification} that the splitting is only classical if $X_{L,\psi}$ is Morita trivial in $\QAut(\Gamma)$; if $\Gamma$ has no quantum symmetries, this only happens if $L$ is trivial.
\end{remark}
\begin{remark} For a classical graph $\Gamma$, the underlying projective permutation matrix of the quantum isomorphism~\eqref{eq:UEBquantumbijection} is the following, for $v,w\in V_{\Gamma}$:
\begin{equation} \left(X_{L, \psi} \right)_{v,w} := \frac{1}{\sqrt{|L|}}~ \sum_{a\in L \subseteq \Aut(\Gamma)}\! \delta_{a(v), w}~ P_{U_a} 
\end{equation}
Here $P_{U_a} : \End(H) \to \End(H)$ denotes the projector on the one-dimensional subspace spanned by $U_a\in \End(H)$.
\end{remark}
\subsection{Quantum isomorphic classical graphs from groups}\label{sec:ClassicalFrobPart2}
We now consider the conditions under which a central type subgroup of the automorphism group of a classical graph gives rise to a quantum isomorphic classical graph.
 In particular, we translate the classicality condition of Theorem~\ref{thm:commutativitycondition} into a condition on subgroups of central type.

We first discuss some properties of non-degenerate $2$-cocycles.
We denote the \emph{centraliser} of a group element $a\in L$ by $Z_a:= \left\{b \in L~|~ ab=ba\right\}$ and the \textit{commutator} of two group elements $a,b\in L$ by $[a,b]:= ab\inv{a}\inv{b}$.

For a 2-cocycle $\psi:L \otimes L\to U(1)$, we define the following function:
\begin{calign}\rho_\psi: L\otimes L \to U(1) & \rho_{\psi} (a,b) := \psi(a,b) \conj{\psi} (aba^{-1},a)
\end{calign}
If $L$ is abelian, it is well-known that $\rho_{\psi}$ is an \emph{alternating bicharacter} (that is, a homomorphism in both arguments such that $\rho_{\psi}(a,b) =\overline{\rho}_{\psi}(b,a))$. In the general setting, and for non-degenerate 2-cocycle $\psi$, the following still holds.

\begin{proposition}[{\cite[Exercise 7.12.22.v]{Etingof2015}}]\label{prop:nontriviality} Let $(L,\psi)$ be a group of central type and let $x\in L$. Then $\rho_{\psi} (x, - )|_{Z_x}: Z_x \to U(1)$ is a multiplicative character of the centralizer $Z_x$ and $\rho_{\psi}(x,-)|_{Z_x}$ is non-trivial for every $x\neq e_L$, that is:
\begin{equation}\label{eq:nontriviality} \rho_{\psi}(x,a) = 1 ~~~\forall a \in Z_x  \hspace{1cm} \Rightarrow\hspace{1cm} x = e_L 
\end{equation}
\end{proposition}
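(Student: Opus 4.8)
The plan is to phrase everything inside the twisted group algebra $\mathbb{C}L^\psi$ with its standard basis $\{u_g\}_{g\in L}$ satisfying $u_g u_h = \psi(g,h)u_{gh}$. After replacing $\psi$ by a cohomologous normalised representative --- which leaves $\rho_\psi$ unchanged on commuting pairs, and so is harmless here --- I may assume $\psi(e_L,-)=\psi(-,e_L)=1$, so that $u_{e_L}=1$ and every $u_g$ is invertible with $\inv{u_g}=\conj{\psi(g,\inv g)}\,u_{\inv g}$. The single identity driving the whole argument is obtained for $a\in Z_x$: there $u_x u_a=\psi(x,a)u_{xa}$ and $u_a u_x=\psi(a,x)u_{xa}$, whence
\[
u_x\,u_a\,\inv{u_x}=\frac{\psi(x,a)}{\psi(a,x)}\,u_a=\rho_\psi(x,a)\,u_a,
\]
where I used $xa\inv{x}=a$ to read off $\rho_\psi(x,a)=\psi(x,a)\conj{\psi(a,x)}$. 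In words, conjugation by $u_x$ rescales each $u_a$ with $a\in Z_x$ by the scalar $\rho_\psi(x,a)\in U(1)$.

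For the first assertion I would use that $\mathrm{Ad}(u_x)$ is an algebra automorphism of $\mathbb{C}L^\psi$. For $a,b\in Z_x$ (so $ab\in Z_x$) I apply it to $u_a u_b=\psi(a,b)u_{ab}$ in two ways: directly it gives $\rho_\psi(x,a)\rho_\psi(x,b)\,u_a u_b$, while through $u_{ab}$ it gives $\psi(a,b)\rho_\psi(x,ab)\,u_{ab}=\rho_\psi(x,ab)\,u_a u_b$. Comparing yields $\rho_\psi(x,ab)=\rho_\psi(x,a)\rho_\psi(x,b)$, and since all values lie in $U(1)$ this shows $\rho_\psi(x,-)|_{Z_x}$ is a multiplicative character of $Z_x$.

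For \eqref{eq:nontriviality} I would argue by contradiction from simplicity of $\mathbb{C}L^\psi$. Assume $\rho_\psi(x,a)=1$ for all $a\in Z_x$. On commuting pairs $\rho_\psi(a,x)=\conj{\rho_\psi(x,a)}$, so the displayed identity gives $u_h\,u_x\,\inv{u_h}=u_x$ for every $h\in Z_x$. Now form the average of the conjugates of $u_x$,
\[
z:=\frac{1}{|Z_x|}\sum_{g\in L}u_g\,u_x\,\inv{u_g}.
\]
Because $u_h u_x\inv{u_h}=u_x$ for $h\in Z_x$, the $|Z_x|$ group elements of any coset $gZ_x$ all contribute the same nonzero scalar multiple of $u_{gx\inv g}$; hence $z=\sum_{y}\lambda_y u_y$, the sum running over the conjugacy class of $x$ with every $\lambda_y\neq0$, so $z\neq0$. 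By construction $z$ is invariant under conjugation by each $u_h$ (the ambient scalars cancel) and is therefore central. If $x\neq e_L$ then $e_L$ is not conjugate to $x$, so $z$ is supported away from $u_{e_L}=1$ and is linearly independent from the unit. This forces $\dim Z(\mathbb{C}L^\psi)\geq2$, contradicting the fact that a central type algebra is a matrix algebra with one-dimensional centre. Hence $x=e_L$.

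The routine parts are the reduction to a normalised cocycle and the character computation. The load-bearing step is the non-triviality argument, and within it the claim that $z$ is simultaneously nonzero and central; non-vanishing is precisely where the hypothesis $\rho_\psi(x,-)|_{Z_x}\equiv1$ is indispensable, since without it the scalars attached to the various $u_g$ in a coset $gZ_x$ would disagree and could cancel, collapsing $z$. I would therefore take particular care over the per-coset bookkeeping --- equivalently, over deducing $u_h u_x\inv{u_h}=u_x$ for $h\in Z_x$ from $\psi$-regularity, via the relation $\rho_\psi(a,x)=\conj{\rho_\psi(x,a)}$ valid on commuting pairs --- as this is the exact point at which the argument could silently fail.
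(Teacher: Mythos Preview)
Your proof is correct. Note that the paper itself does not supply a proof of this proposition; it simply cites it as \cite[Exercise~7.12.22.v]{Etingof2015}. So there is nothing to compare against in the paper, and your argument stands on its own.

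Your approach---realising $\rho_\psi(x,a)$ as the scalar by which $\mathrm{Ad}(u_x)$ acts on $u_a$ in the twisted group algebra, deducing the character property from the fact that inner automorphisms are algebra automorphisms, and then extracting a nonscalar central element from any $\psi$-regular $u_x$ with $x\neq e_L$---is the standard one, and you have handled the details cleanly. In particular, the coset argument showing that all $|Z_x|$ terms from $gZ_x$ contribute \emph{identically} (not merely proportionally) to $z$ is correct: since $u_{gh}=\psi(g,h)^{-1}u_gu_h$ and $u_{gh}^{-1}=\psi(g,h)\,u_h^{-1}u_g^{-1}$, the scalar factors cancel and $u_{gh}u_xu_{gh}^{-1}=u_g(u_hu_xu_h^{-1})u_g^{-1}=u_gu_xu_g^{-1}$. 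This is exactly the per-coset bookkeeping you flagged as load-bearing, and it checks out. The centrality of $z$ follows by the same cancellation under the reindexing $g\mapsto hg$, and the nonvanishing of each coefficient $\lambda_y$ is clear since each is a nonzero scalar (a product of values of $\psi$ and $\overline\psi$). The final step, that a simple algebra has one-dimensional centre, is of course where the central-type hypothesis enters.

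One very minor stylistic remark: your reduction to a normalised cocycle is fine, but strictly speaking unnecessary, since all you use is that each $u_g$ is invertible, which already follows from $u_gu_{g^{-1}}=\psi(g,g^{-1})u_{e_L}$ and $u_{e_L}$ being a nonzero scalar.
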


\noindent
If $(L,\psi)$ is a group of central type, we may therefore think of $\rho_\psi$ as a non-degenerate alternating form on $L$. In particular, we borrow the following definitions and terminology from the theory of symplectic forms on groups~\cite{BenDavid:2014}. 

\begin{definition} Let $(L,\psi)$ be a group of central type and let $S\subseteq L$ be a subset. The \emph{orthogonal complement} $S^\bot$ of $S$ is the following subset of $L$:
\begin{equation} S^\bot := \left\{ g\in L~|~\rho_{\psi} (g,a) = 1 ~~\forall a\in Z_g \cap S \right\}
\end{equation}
We say that a subset $S$ is \emph{coisotropic} if $S^\bot \subseteq S$.
\end{definition}

\noindent
Proposition~\ref{prop:nontriviality} leads to the following observation.

\begin{prop}\label{prop:centraltypeformula} For a group of central type $(L,\psi)$ and a subgroup $H \subseteq L$ we define:
\begin{equation}\Phi^{L,\psi}_H~:= \sum_{\substack{a,b\in H\\ [a,b] = e}} \rho_{\psi}(a,b)\end{equation} 
Then, $\Phi^{L,\psi}_H\in \mathbb{N}$ and $\Phi^{L,\psi}_H \leq |L|$ with equality if and only if $H$ is coisotropic. \ignore{If moreover $H$ is abelian, then $\Phi_H^{G,\psi} = |H| |H^\bot|$ and $\Phi_H^{G,\psi}=|G|$ if and only if $H$ is Lagrangian and $|G|=|H|^2$.}
\end{prop}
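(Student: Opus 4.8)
The plan is to exploit the nice unitary error basis $\{U_a\}_{a\in L}$ attached to the group of central type $(L,\psi)$ (Definition~\ref{def:niceueb}), which exists precisely because $\psi$ is non-degenerate, and to translate the phase $\rho_{\psi}$ into honest operator traces. Write $d=\dim(H)$, so that $|L|=d^2$. The starting point is the identity
\[
\rho_{\psi}(a,b)\,\delta_{[a,b],e}=\tfrac1d\,\Tr\!\big(U_aU_bU_a^{\dagger}U_b^{\dagger}\big)\qquad(a,b\in L),
\]
which follows immediately from $U_aU_b=\psi(a,b)U_{ab}$, $U_bU_a=\psi(b,a)U_{ba}$ and $\Tr(U_cU_{c'}^{\dagger})=d\,\delta_{c,c'}$, together with the observation that $aba^{-1}=b$ when $[a,b]=e$. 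Summing over $a,b\in H$ turns $\Phi^{L,\psi}_H$ into $\tfrac1d\sum_{a,b\in H}\Tr(U_aU_bU_a^{\dagger}U_b^{\dagger})$.

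Next I would package the $a$-sum as an orthogonal projection. On the Hilbert--Schmidt space $\End(H)$ (inner product $\langle X,Y\rangle=\Tr(X^{\dagger}Y)$) the maps $\mathrm{Ad}_{U_a}(X)=U_aXU_a^{\dagger}$ form a genuine (cocycle-free) unitary representation of $H$, so $E:=\tfrac1{|H|}\sum_{a\in H}\mathrm{Ad}_{U_a}$ is the orthogonal projection onto the commutant $C=\{X:U_aX=XU_a\ \forall a\in H\}$. A short cyclicity computation gives $\langle U_b,EU_b\rangle=\tfrac{d}{|H|}\sum_{a\in Z_b\cap H}\rho_{\psi}(a,b)$, hence $\Phi^{L,\psi}_H=\tfrac{|H|}{d}\sum_{b\in H}\|EU_b\|^2$, which is manifestly real and non-negative. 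To identify the summand I apply Proposition~\ref{prop:nontriviality}: $\rho_{\psi}(b,-)|_{Z_b}$ is a character of $Z_b$, and since $\rho_{\psi}(a,b)=\overline{\rho_{\psi}(b,a)}$ the restriction to the subgroup $Z_b\cap H$ makes $\sum_{a\in Z_b\cap H}\rho_{\psi}(a,b)$ equal to $|Z_b\cap H|$ when this character is trivial on $Z_b\cap H$ and equal to $0$ otherwise; triviality on $Z_b\cap H$ is exactly the condition $b\in H^{\bot}$. Thus $\|EU_b\|^2=\tfrac{d}{|H|}|Z_b\cap H|$ for $b\in H^{\bot}$ and $0$ for $b\notin H^{\bot}$, yielding $\Phi^{L,\psi}_H=\sum_{b\in H\cap H^{\bot}}|Z_b\cap H|$, a sum of positive integers, which establishes $\Phi^{L,\psi}_H\in\mathbb{N}$.

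For the bound I would compute the total mass of the projection. Since $\Tr(\mathrm{Ad}_{U_a})=|\Tr U_a|^2=d^2\delta_{a,e}$, we get $\Tr(E)=|L|/|H|$, and because $\{U_b/\sqrt d\}_{b\in L}$ is an orthonormal basis of $\End(H)$ this gives $\sum_{b\in L}\|EU_b\|^2=d\,\Tr(E)=d\,|L|/|H|$. Combined with the evaluation of $\|EU_b\|^2$ above (supported on $H^{\bot}$), this is the clean identity $\sum_{b\in H^{\bot}}|Z_b\cap H|=|L|$. Comparing the two sums then gives
\[
\Phi^{L,\psi}_H=\sum_{b\in H\cap H^{\bot}}|Z_b\cap H|\ \le\ \sum_{b\in H^{\bot}}|Z_b\cap H|=|L|,
\]
since the missing terms run over $b\in H^{\bot}\setminus H$ and each $|Z_b\cap H|\ge 1$ is strictly positive. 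Equality therefore holds if and only if $H^{\bot}\setminus H=\varnothing$, i.e.\ $H^{\bot}\subseteq H$, which is exactly coisotropy; strict positivity of every term rules out any accidental cancellation.

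I expect the main obstacle to be the bookkeeping in the character-orthogonality step: one must apply Proposition~\ref{prop:nontriviality} to the subgroup $Z_b\cap H$ rather than to all of $Z_b$, keep careful track of the asymmetry $\rho_{\psi}(b,a)=\overline{\rho_{\psi}(a,b)}$, and match the resulting triviality condition precisely to the definition of $H^{\bot}$ (remembering that $H^{\bot}$ is a subset of $L$, not of $H$, which is exactly what makes the equality case nontrivial). The remaining ingredients---the trace identity, the superoperator trace $\Tr(\mathrm{Ad}_{U_a})=|\Tr U_a|^2$, and the fact that $E$ is an orthogonal projection---are routine once the nice UEB is in hand.
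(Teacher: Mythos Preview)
Your proof is correct and reaches exactly the same two intermediate identities as the paper, namely $\Phi^{L,\psi}_H=\sum_{b\in H\cap H^\bot}|Z_b\cap H|$ and $\sum_{b\in H^\bot}|Z_b\cap H|=|L|$, from which the bound and the coisotropy characterisation follow immediately. The first identity is obtained in the same way in both proofs, via orthogonality of the restricted character $\rho_\psi(b,-)|_{Z_b\cap H}$. The difference lies in how the second identity is established. The paper computes the extended double sum $\sum_{a\in L,\,b\in H,\,[a,b]=e}\rho_\psi(a,b)$ in two orders: summing over $a\in Z_b$ first and invoking the \emph{nontriviality} clause of Proposition~\ref{prop:nontriviality} collapses everything to the single term $b=e$, giving $|L|$; summing the other way gives $\Phi^{L,\psi}_H$ plus the tail over $H^\bot\setminus H$. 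You instead realise the same identity as the equality of two expressions for $\Tr(E)$ on the Hilbert--Schmidt space, using the UEB trace property $\Tr(U_a)=d\,\delta_{a,e}$ in place of the nontriviality statement. Your route is slightly more operator-theoretic and makes the positivity $\Phi^{L,\psi}_H\ge 0$ visible as a norm, at the cost of importing the UEB machinery; the paper's route stays entirely at the level of characters and needs nothing beyond Proposition~\ref{prop:nontriviality}. Either way the argument is short and the endgame is identical.
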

\begin{proof} 
Using orthogonality of characters of the group $Z_a\cap H$, we calculate:
\begin{equation}\label{eq:Phiformula}\Phi_H^{L,\psi} ~=~\sum_{a\in H} \sum_{b \in Z_a\cap H} \rho_{\psi}(a,b) ~=\!\!\!\!\!\!\!\sum_{\substack{a\in H\\ \rho_{\psi}(a,b) =1~\forall b \in Z_a\cap H}}\!\!\!\!\!\!\! |Z_a \cap H|
\end{equation}
Thus, $\Phi_{H}^{L,\psi}$ is a natural number.
Again using orthogonality of characters and equation~\eqref{eq:nontriviality} we note the following:
\[\sum_{\substack{a\in L, b \in H\\ [a,b]=e}} \rho_{\psi}(a,b) ~=~ \sum_{b\in H} \sum_{a\in Z_b } \rho_{\psi}(a,b) ~\superequalseq{eq:nontriviality}~ \sum_{b\in H} |Z_b| ~\delta_{b,e} ~=~ |L|
\]
On the other hand, we find:
\[\sum_{\substack{a\in L, b \in H\\ [a,b]=e}} \rho_{\psi}(a,b) ~=~\sum_{\substack{a,b\in H\\ [a,b]=e}} \rho_{\psi}(a,b) + \sum_{a\in L\setminus H} \sum_{b \in Z_a\cap H } \rho_{\psi}(a,b) ~=~ \Phi_H^{L,\psi} ~+\!\!\!\!\!\! \!\!\!\sum_{\substack{a\in L \setminus H\\ \rho_{\psi}(a,b) =1~\forall b\in Z_a\cap H}}\!\!\!\!\!\!\!\!\! |Z_a\cap H|
\]
Therefore, we obtain the following formula for $\Phi_{H}^{L,\psi}$:
\[\Phi_{H}^{L,\psi} ~=~ |L| ~- \!\!\!\!\!\!\!\!\! \!\!\!\sum_{\substack{a\in L \setminus H\\ \rho_{\psi}(a,b) =1~\forall b\in Z_a\cap H}}\!\!\!\!\!\!\!\!\! |Z_a\cap H|
\]
Proposition~\ref{prop:centraltypeformula} is an immediate consequence.\end{proof}

\noindent We now turn our attention back to graphs. For a vertex $v$ of a classical graph $\Gamma$ and a subgroup $L\subseteq \Aut(\Gamma)$, we denote the \textit{stabiliser subgroup} of $L$ by $\Stab_L(v):=\{h \in L~|~ h(v) = v \}$.

\begin{prop} \label{prop:formulacenter}Let $\Gamma$ be a classical graph and let $(L,\psi)$ be a subgroup of central type of $\Aut(\Gamma)$. Then, the dimension of the center of the algebra $V_{\Gamma_{L,\psi}}$ can be expressed as follows:
\begin{equation} \label{eq:dimcentergroup}\dim(Z(V_{\Gamma_{L,\psi}})) ~=~\frac{1}{|L|}~\sum_{v \in V_{\Gamma}}~\Phi_{\Stab_L(v)}^{L, \psi}
\end{equation}
\end{prop}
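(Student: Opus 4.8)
The plan is to specialise the general formula~\eqref{eq:centerquantumgraph} of Theorem~\ref{thm:commutativitycondition} to the concrete simple dagger Frobenius monoid $X_{L,\psi}$ of~\eqref{eq:UEBquantumbijection}. Write $n = \dim(H)$, so $|L| = n^2$, and recall the forgetful data: $X_{L,\psi}$ is built from the nice UEB $\{U_a\}_{a\in L}$ and the permutation action of $L\subseteq\Aut(\Gamma)$ on the vertices. First I would compute the diagonal components $(X_{L,\psi})_{v,v}$ of the underlying projective permutation matrix using the last Remark of Section~\ref{sec:ClassicalFrobPart1}: since $(X_{L,\psi})_{v,w} = \frac{1}{\sqrt{|L|}}\sum_{a\in L}\delta_{a(v),w}\,P_{U_a}$, setting $w=v$ restricts the sum to the stabiliser, giving
\begin{equation}\nonumber
(X_{L,\psi})_{v,v} ~=~ \frac{1}{\sqrt{|L|}}\sum_{a\in \Stab_L(v)} P_{U_a},
\end{equation}
where $P_{U_a}$ is the rank-one projector onto $\C U_a$ inside the endomorphism algebra $\End(H) = H\otimes H^*$.

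Next I would substitute this into the right-hand side of~\eqref{eq:centerquantumgraph}. The diagram appearing there is a trace-like contraction of two copies of $X_{v,v}$ over the doubled Hilbert space; unpacking it in $\Hilb$ (using the normalisation of the endomorphism algebra~\eqref{eq:endomorphismalgebra}, where closed loops give factors of $n$), the pictured quantity for each vertex $v$ evaluates to a Hilbert--Schmidt-type pairing of $(X_{v,v})$ with itself. Concretely, the diagram computes $\sum_{a,b\in\Stab_L(v)}$ of the overlap between $P_{U_a}$ and $P_{U_b}$ under the relevant cup/cap contraction. The key calculation is therefore to evaluate this overlap for the UEB projectors. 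Using the UEB relations~\eqref{eq:UEBcondition} — orthonormality $\Tr(U_a^\dagger U_b) = n\,\delta_{a,b}$ and the cocycle multiplication $U_aU_b = \psi(a,b)U_{ab}$ — the overlap of $P_{U_a}$ and $P_{U_b}$ through this particular contraction should be nonzero precisely when $U_a$ and $U_b$ commute up to the scalar $\rho_\psi$, i.e. it produces the factor $\rho_\psi(a,b)$ exactly when $[a,b]=e$ and vanishes otherwise. Assembling the constants, each vertex contributes $\frac{1}{|L|}\sum_{a,b\in\Stab_L(v),\,[a,b]=e}\rho_\psi(a,b) = \frac{1}{|L|}\Phi^{L,\psi}_{\Stab_L(v)}$, and summing over $v$ yields~\eqref{eq:dimcentergroup}.

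The main obstacle will be the careful bookkeeping of the UEB-projector contraction: I need to verify that the specific wiring in~\eqref{eq:centerquantumgraph} — which pairs $X_{v,v}$ with $X_{v,v}$ across both the $H$ and $H^*$ legs — really extracts the \emph{commutation} pairing $\Tr(U_a^\dagger U_b^\dagger U_a U_b)$ rather than some other contraction, since it is this quantity that equals $n^2\rho_\psi(a,b)$ when $a,b$ commute (because $U_aU_b = \psi(a,b)\psi(ab a^{-1}b^{-1},\ldots)$-type scalars collapse to $\rho_\psi(a,b)U_{ab}U_{ba}^{-1}$ relations) and vanishes when $[a,b]\neq e$ by orthogonality of the $U$'s. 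The cleanest route is to recognise that $P_{U_a} = \frac{1}{n}U_a(-)U_a^\dagger$ acts by conjugation, so the pictured double trace becomes $\frac{1}{n}\sum_{a,b}\mathrm{tr}\!\big(\text{conjugation overlaps}\big)$; the conjugation action of $U_a$ on $U_b$ multiplies by $\rho_\psi(a,b)$ when they commute, and the resulting $U_{ab}$ is orthogonal to the identity-contraction unless $ab=ba$, forcing the commutator constraint. Once this single overlap identity is nailed down, the remaining steps — tracking the powers of $n$ against $\frac{1}{\dim(H)}$ and $\frac{1}{|L|}=\frac{1}{n^2}$ — are purely routine, and Proposition~\ref{prop:nontriviality} and Proposition~\ref{prop:centraltypeformula} are not needed here, only the bare definition of $\Phi^{L,\psi}_H$.
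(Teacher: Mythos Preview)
Your approach is correct and essentially the same as the paper's: substitute $X_{L,\psi}$ into the formula~\eqref{eq:centerquantumgraph}, observe that the diagonal components restrict the sum to stabilisers, and evaluate the resulting contraction as $\Tr(U_bU_a^\dagger U_b^\dagger U_a)$, which by the UEB relations~\eqref{eq:UEBcondition} vanishes unless $[a,b]=e$ and then equals $\sqrt{|L|}\,\rho_\psi(a,b)$. One small slip: the rank-one projector $P_{U_a}$ onto $\mathbb{C}U_a$ is $X\mapsto \tfrac{1}{n}\Tr(U_a^\dagger X)\,U_a$, not the conjugation map $U_a(-)U_a^\dagger$; once you use the correct projector the diagram unwinds directly to the commutator trace without the detour through conjugation actions.
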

\begin{proof} Inserting the Frobenius algebra $X_{L,\psi}$~\eqref{eq:UEBquantumbijection} into the expression~\eqref{eq:centerquantumgraph} results in the following formula for the dimension of the center of the algebra $V_{\Gamma_{L, \psi}}$:
\[ \dim(Z(V_{\Gamma_{L,\psi}})) ~=~\frac{1}{|L|^{\frac{3}{2}}}\sum_{v\in V_{\Gamma}}\sum_{a,b\in \Stab_L(v)} \Tr(U_bU_a^\dagger U_b^\dagger U_a)
\]
It is a direct consequence of~\eqref{eq:UEBcondition} that the trace is only non-zero if $[a,b]=e$. In this case, $U_aU_b = \rho_\psi(a,b) U_b U_a$ and therefore $\Tr(U_b U_a^\dagger U_b^\dagger U_a) = \sqrt{|L|}~\rho_{\psi}(a,b)$. This proves the theorem:
\[
\dim(Z(V_{\Gamma_{L,\psi}})) ~=~\frac{1}{|L|} \sum_{v\in V_{\Gamma}}~\sum_{\substack{a,b\in \Stab_L(v)\\ [a,b]=e}} \rho_{\psi}(a,b) ~=~\frac{1}{|L|} \sum_{v\in V_{\Gamma}} ~\Phi_{\Stab_L(v)}^{L,\psi}
\qedhere\]
\end{proof}
\noindent Combining the formula of Proposition~\ref{prop:formulacenter} with Proposition~\ref{prop:centraltypeformula} leads to a necessary and sufficient condition for the quantum graph $\Gamma_{L,\psi}$ to be classical.

\begin{theorem}\label{thm:coisotropic} Let $\Gamma$ be a classical graph and let $(L, \psi)$ be a subgroup of central type of $\Aut(\Gamma)$. Then, $\Gamma_{L, \psi}$ is a classical graph if and only if all stabiliser subgroups are coisotropic; that is, for every vertex $v\in V_{\Gamma}$ the following holds:
\begin{equation}\label{eq:coisotropic} \Stab_L(v)^\bot := \left\{ a\in L~|~ \rho_{\psi}(a, b) = 1~\forall b \in Z_a \cap \Stab_L(v) \right\} ~\subseteq~ \Stab_L(v)
\end{equation}
\end{theorem}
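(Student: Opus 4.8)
The plan is to combine the two results that immediately precede the statement, since Theorem~\ref{thm:coisotropic} is essentially their concatenation. First I would invoke Proposition~\ref{prop:formulacenter}, which already computes the dimension of the centre of $V_{\Gamma_{L,\psi}}$ as
\[
\dim(Z(V_{\Gamma_{L,\psi}})) ~=~\frac{1}{|L|}~\sum_{v \in V_{\Gamma}}~\Phi_{\Stab_L(v)}^{L, \psi}.
\]
Then I would recall from Theorem~\ref{thm:commutativitycondition} that $\Gamma_{L,\psi}$ is classical precisely when $V_{\Gamma_{L,\psi}}$ is commutative, i.e. when $\dim(Z(V_{\Gamma_{L,\psi}})) = |V_{\Gamma}|$ (using that quantum isomorphisms preserve dimension, so $\dim(V_{\Gamma_{L,\psi}}) = |V_\Gamma|$).

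The core of the argument is then a term-by-term comparison in the sum over vertices. Proposition~\ref{prop:centraltypeformula} tells us that for each subgroup $\Stab_L(v)\subseteq L$ we have $\Phi_{\Stab_L(v)}^{L,\psi}\leq |L|$, with equality if and only if $\Stab_L(v)$ is coisotropic. Dividing by $|L|$, each summand $\tfrac{1}{|L|}\Phi_{\Stab_L(v)}^{L,\psi}$ is at most $1$, with equality exactly when the corresponding stabiliser is coisotropic. Summing over the $|V_\Gamma|$ vertices, the total is at most $|V_\Gamma|$, and reaches $|V_\Gamma|$ if and only if every individual summand attains its maximum --- that is, if and only if $\Stab_L(v)$ is coisotropic for every vertex $v$. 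This is the key step: it is a clean ``sum of bounded non-negative terms equals its maximal value iff each term is maximal'' argument, and it is only valid because Proposition~\ref{prop:centraltypeformula} guarantees each term is a non-negative integer bounded above by $1$ after normalisation.

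Putting these together yields the claimed equivalence: $\dim(Z(V_{\Gamma_{L,\psi}})) = |V_\Gamma|$, which by Theorem~\ref{thm:commutativitycondition} means $\Gamma_{L,\psi}$ is classical, holds if and only if all stabiliser subgroups are coisotropic. I would then simply unpack the definition of coisotropy to arrive at the explicit condition~\eqref{eq:coisotropic}. I do not anticipate a genuine obstacle here, since all the analytic and combinatorial work has been discharged in the preceding two propositions; the only point requiring minor care is ensuring that the equality case of the bound in Proposition~\ref{prop:centraltypeformula} is applied \emph{per vertex} rather than merely in aggregate, which is exactly what the non-negativity and integrality of $\Phi_{\Stab_L(v)}^{L,\psi}$ permit. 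The proof is therefore short: cite Proposition~\ref{prop:formulacenter}, apply the equality criterion of Proposition~\ref{prop:centraltypeformula} to each term, and conclude via Theorem~\ref{thm:commutativitycondition}.
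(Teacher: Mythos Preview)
Your proposal is correct and follows essentially the same route as the paper: invoke Proposition~\ref{prop:formulacenter} for the centre dimension formula, use the bound $\Phi^{L,\psi}_{\Stab_L(v)}\le |L|$ with equality iff coisotropic from Proposition~\ref{prop:centraltypeformula}, and conclude via the classicality criterion $\dim(Z(V_{\Gamma_{L,\psi}}))=|V_\Gamma|$. The paper's proof is the same term-by-term saturation argument; the only cosmetic difference is that it cites Proposition~\ref{prop:dimensionpreserved} directly rather than Theorem~\ref{thm:commutativitycondition} for the equality $\dim(V_{\Gamma_{L,\psi}})=|V_\Gamma|$.
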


\begin{proof}  The graph $\Gamma_{L, \psi}$ is classical if $V_{\Gamma_{L, \psi}}$ is commutative, that is if  $\dim(Z(V_{\Gamma_{L, \psi}})) = \dim(V_{\Gamma_{L, \psi}})~~ \superequals{\text{Prop.~\ref{prop:dimensionpreserved}}}~~ \dim(V_{\Gamma}) = |V_{\Gamma}|$. Using equation~\eqref{eq:dimcentergroup}, $\Gamma_{L, \psi}$ is therefore classical if and only if the following holds:
\begin{equation}\label{eq:proofcommutativegroup} \frac{1}{|L|} \sum_{v\in V_{\Gamma}}~ \Phi_{\Stab_L(v)}^{L, \psi} ~=~ |V_{\Gamma}|
\end{equation}
It follows from Proposition~\ref{prop:centraltypeformula} that $\Phi_{\Stab_L(v)}^{L, \psi} \leq |L|$. Thus, equation~\eqref{eq:proofcommutativegroup} holds if and only if $\Phi_{\Stab_L(v)}^{L, \psi} = |L|$ for every vertex $v\in V_{\Gamma}$ which in turn holds, again by Proposition~\ref{prop:centraltypeformula}, if and only if $\Stab_L(v)$ is coisotropic. 
\end{proof}

\noindent We now summarise our results on quantum isomorphic classical graphs obtained from simple dagger Frobenius monoids in the classical subcategory.
\begin{corollary}\label{cor:classgroup} Let $\Gamma$ be a classical graph. Then, every subgroup of central type $(L,\psi)$ of $\Aut(\Gamma)$ with coisotropic stabilisers induces a classical graph $\Gamma_{L,\psi}$ and a quantum isomorphism $\Gamma_{L, \psi}\to \Gamma$. Moreover, if $\Gamma$ has no quantum symmetries, this gives rise to a bijective correspondence between the following structures:
\begin{itemize}
\item Isomorphism classes of classical graphs $\Gamma'$ such that there exists a quantum isomorphism $\Gamma'\to \Gamma$.
\item Subgroups of central type $(L, \psi)$ of $\Aut(\Gamma)$ with coisotropic stabilisers up to the equivalence relation~\eqref{eq:equivalence}.
\end{itemize}
\end{corollary}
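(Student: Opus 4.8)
The plan is to obtain this corollary as a straightforward restriction of the bijective correspondence already established in Corollary~\ref{cor:classification}, using Theorem~\ref{thm:coisotropic} to single out precisely which central type subgroups land in the classical locus. First I would dispatch the opening assertion: given a subgroup of central type $(L,\psi)$ of $\Aut(\Gamma)$ with coisotropic stabilisers, Corollary~\ref{cor:classification} already furnishes a quantum graph $\Gamma_{L,\psi}$ together with a quantum isomorphism $\Gamma_{L,\psi}\to\Gamma$. The only additional content is that $\Gamma_{L,\psi}$ is in fact a \emph{classical} graph, and this is exactly the nontrivial direction of Theorem~\ref{thm:coisotropic}: coisotropy of every stabiliser $\Stab_L(v)$ forces $V_{\Gamma_{L,\psi}}$ to be commutative.

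For the bijection under the no-quantum-symmetries hypothesis, I would begin from the bijection of Corollary~\ref{cor:classification}, which in this case places isomorphism classes of quantum graphs $\Gamma'$ quantum isomorphic to $\Gamma$ in bijection with subgroups of central type of $\Aut(\Gamma)$ modulo the equivalence relation~\eqref{eq:equivalence}. Theorem~\ref{thm:coisotropic} supplies, for each such subgroup, a criterion (coisotropy of all stabilisers) that holds if and only if the associated quantum graph $\Gamma_{L,\psi}$ is classical. Restricting the bijection to those subgroups satisfying this criterion on the right, and to classical graphs $\Gamma'$ on the left, then yields the claimed correspondence.

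The one point requiring care is well-definedness of the coisotropy condition on equivalence classes under~\eqref{eq:equivalence}: I must check that if $(L,\psi)\sim(L',\psi')$, then $(L,\psi)$ has coisotropic stabilisers if and only if $(L',\psi')$ does. Rather than verifying this directly from the definition of $S^\bot$ (which would require tracking how conjugation by $g\in\Aut(\Gamma)$ relabels the stabilisers via $\Stab_L(v)=g^{-1}\Stab_{L'}(g\cdot v)\,g$ and how $\rho_\psi$ transforms accordingly), I would argue purely formally: equivalent subgroups produce isomorphic quantum graphs $\Gamma_{L,\psi}$ by Corollary~\ref{cor:classification}, classicality is an isomorphism-invariant property of a quantum graph, and classicality is equivalent to coisotropy of stabilisers by Theorem~\ref{thm:coisotropic}; hence coisotropy descends to equivalence classes. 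With this observation the restriction of the bijection is legitimate, and no genuine obstacle remains---all of the combinatorial and analytic work has already been absorbed into Theorem~\ref{thm:coisotropic} and Proposition~\ref{prop:centraltypeformula}.
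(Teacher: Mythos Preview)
Your proposal is correct and takes essentially the same approach as the paper. The paper's proof is a single sentence citing Theorem~\ref{thm:coisotropic} together with Corollary~\ref{cor:superclassification}, whereas you cite Theorem~\ref{thm:coisotropic} together with Corollary~\ref{cor:classification}; these are equivalent routes through the same underlying results, and your explicit check that the coisotropy condition descends to equivalence classes (via isomorphism-invariance of classicality) is a helpful detail that the paper leaves implicit.
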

\begin{proof}Corollary~\ref{cor:classgroup} is a direct consequence of Theorem~\ref{thm:coisotropic} and Corollary~\ref{cor:superclassification}.
\end{proof}

\ignore{\begin{remark} \DRcomm{The authors have not yet found a graph without quantum symmetries that admits a quantum isomorphism to another non-isomorphic graph , or equivalently, a graph without quantum symmetries which has a central type subgroup of its automorphism group with coisotropic stabilizers. Nevertheless, there are graphs with quantum symmetries for which....}
\end{remark}}
\noindent We immediately make a simple observation based on the fact that trivial subgroups can never be coisotropic.
\begin{prop}\label{prop:notstabilised} Let $\Gamma$ be a classical graph, and let $(L, \psi)$ be a non-trivial subgroup of central type of $\Aut(\Gamma)$ such that $\Gamma_{L, \psi}$ is a classical graph. Then, every vertex is stabilised by some non-trivial element of $L$, that is $\Stab_L(v) \neq \{e\}$.
\end{prop}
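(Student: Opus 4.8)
The plan is to argue by contradiction, reducing everything to the single observation that the trivial subgroup is never coisotropic inside a non-trivial group of central type. Since $\Gamma_{L,\psi}$ is assumed classical, Theorem~\ref{thm:coisotropic} applies and tells us that every stabiliser subgroup $\Stab_L(v)$ is coisotropic. Hence it suffices to prove that if $\Stab_L(v) = \{e\}$ for some vertex $v$, then $L$ must be trivial, contradicting the hypothesis that $(L,\psi)$ is non-trivial.

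The key computation is that of the orthogonal complement $\{e\}^\bot$. First I would invoke the normalisation $\psi(e,h) = 1 = \psi(h,e)$ fixed in Definition~\ref{def:niceueb}, which gives $\rho_{\psi}(g,e) = \psi(g,e)\,\conj{\psi}(geg^{-1},g) = \psi(g,e)\,\conj{\psi}(e,g) = 1$ for every $g\in L$. Since $e$ lies in every centraliser, we have $Z_g \cap \{e\} = \{e\}$, so the defining condition for membership in $\{e\}^\bot$ reduces to the single equation $\rho_{\psi}(g,e)=1$, which holds automatically. Therefore $\{e\}^\bot = L$.

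Concluding is then immediate: coisotropy of $\{e\}$ means $\{e\}^\bot \subseteq \{e\}$, i.e. $L \subseteq \{e\}$, forcing $L = \{e\}$ and contradicting non-triviality. Equivalently, and perhaps more transparently, one can route the argument through Proposition~\ref{prop:centraltypeformula}: the only pair $(a,b)$ with $a,b \in \{e\}$ and $[a,b]=e$ is $(e,e)$, so $\Phi^{L,\psi}_{\{e\}} = \rho_{\psi}(e,e) = 1$, whereas coisotropy is equivalent to $\Phi^{L,\psi}_{\{e\}} = |L|$; since $|L| > 1$ these cannot coincide.

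There is essentially no obstacle here --- the statement is a direct corollary of the machinery already in place. The only point requiring a moment's care is using the normalisation of $\psi$ consistently, so that $\rho_{\psi}(g,e)=1$ genuinely holds for all $g$; once that is in hand, the result drops out of either the definition of the orthogonal complement or the formula of Proposition~\ref{prop:centraltypeformula}.
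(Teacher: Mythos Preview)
Your proposal is correct and follows essentially the same route as the paper: assume some $\Stab_L(v)=\{e\}$, observe $\{e\}^\bot = L$, and conclude from coisotropy (Theorem~\ref{thm:coisotropic}) that $L\subseteq\{e\}$, contradicting non-triviality. You supply more detail than the paper does --- the explicit computation of $\rho_\psi(g,e)=1$ and the alternative via $\Phi^{L,\psi}_{\{e\}}=1$ --- but the argument is the same (note, incidentally, that $\rho_\psi(g,e)=1$ holds for any $2$-cocycle, so the normalisation is not actually needed here).
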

\begin{proof} Note that $\{e\}^\bot = L$. Thus, if $v$ is a vertex of $\Gamma$ such that $\Stab_L(v) =\{e\}$, it follows from Theorem~\ref{thm:coisotropic} that $L = \Stab_L(v)^\bot \super{\eqref{eq:coisotropic}}{\subseteq} \Stab_L(v)=\{e\}$, and thus that $L=\{e\}$ contradicting non-triviality of $L$.
\end{proof} 

\begin{example}\label{exm:Cnquantumiso2} \label{exm:Cnquantumiso2}Let $C_n$ be the cycle graph with $n\geq 5$ vertices. We have seen in Example~\ref{exm:Cnquantumiso1} that for even $n$ there are either one or two quantum graphs $\Gamma'$ which are quantum isomorphic to $C_n$, corresponding to conjugacy classes of central type subgroups $\mathbb{Z}_2\times \mathbb{Z}_2 \subset D_n$. These subgroups act by 180 degree rotation and reflection along some axis through either opposite vertices or opposite edges of $C_n$. In both cases, there are vertices with trivial stabiliser. It therefore follows from Proposition~\ref{prop:notstabilised} that all quantum graphs quantum isomorphic to $C_n$ are non-classical.
\ignore{  Let $\{s,r\}$ be generators of $D_2 \subset D_n$; any vertex $v$ not satisfying $s \cdot v = v$ or $r s \cdot v = v$ will have trivial stabiliser. There are at most two vertices satisfying each equation, and there are $\geq 5$ vertices in the graph. Therefore, at least one vertex  has trivial stabiliser, and it follows from Proposition~\ref{prop:notstabilised} that all quantum graphs quantum isomorphic to $C_n$ are non-classical.}%
\end{example}

\ignore{
As we briefly discussed in the introduction, two non-isomorphic graphs G and H which nevertheless have a quantum isomorphism between them can be used to define a nonlocal game in which quantum pseudo-telepathy is exhibited [3]. The existence of a quantum graph isomorphism ensures a winning strategy provided that the players share sufficient quantum entanglement; however, the non-existence of a classical isomorphism means that there is no such perfect strategy without this shared entanglement.
This inspires the following definition:
Definition 7.1. A pair of non-isomorphic graphs (G,H) with a quantum graph isomorphism G →− H
between them will be called pseudo-telepathic.
An example of such a pair of pseudo-telepathic graphs was obtained in [3]. In contrast to the situation with quantum graph isomorphisms, the categories of quantum graph automorphisms are well understood, thanks to the work of Banica, Bichon and others on quantum graph automorphism groups [5, 6, 7, 9, 10? ? ].
In this section, we will show how we can classify pairs of pseudo-telepathic graphs in terms of certain algebraic structures in these quantum graph automorphism categories.
The following corollary is a direct consequence of Proposition 4.1 and Theorem 4.8.
}

\section{Quantum pseudo-telepathy}
Quantum pseudo-telepathy is a well-studied phenomenon in quantum information theory, where two non-communicating parties can use pre-shared entanglement to perform a task classically impossible without communication~\cite{Brassard2003,Brassard2005,Cleve2004}. Such tasks are usually formulated as games, where isolated players Alice and Bob are provided with inputs, and must return outputs satisfying some winning condition.

One such game is is the \emph{graph isomorphism game}~\cite{Atserias2016}, whose instances correspond to pairs of classical graphs $\Gamma$ and $\Gamma'$, and whose winning classical strategies are precisely graph isomorphisms $\Gamma' \to \Gamma$. Winning quantum strategies correspond to quantum isomorphisms.
\begin{proposition}[{\cite[Theorem 5.4]{Atserias2016}}]
Given classical graphs $\Gamma$ and $\Gamma'$, there is a winning quantum strategy for the graph isomorphism game if and only if there is a quantum isomorphism $(H,P):\Gamma' \to \Gamma$.
\end{proposition}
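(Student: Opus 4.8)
The plan is to prove the two implications separately, in each case routing through the \emph{projective permutation matrix} description of quantum isomorphisms between classical graphs already recorded in~\eqref{eq:PPM1} and~\eqref{eq:componentPPM}. This isolates the genuinely game-theoretic content (translating a winning quantum strategy into a family of projectors) from the purely categorical content (translating such a family into a map $P$), the latter of which is essentially already established in the excerpt.

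First I would recall the precise notion of a quantum strategy for the graph isomorphism game. Alice and Bob share a bipartite state $\ket{\psi} \in H_A \otimes H_B$, and on input a vertex each player performs a projective measurement whose outcomes are labelled by vertices, yielding operator families $\{E^A_{v,w}\}$ and $\{E^B_{v,w}\}$. The winning condition demands that, conditioned on the inputs, the joint outcomes always respect the relational structure of $\Gamma$ and $\Gamma'$ (preserving equality, adjacency, and non-adjacency), and the strategy is winning if this holds with probability $1$ on every admissible input pair. The essential step is then the \emph{standardisation} of a perfect strategy: because the graph isomorphism game is synchronous (equal inputs force equal outputs), the structure theory of perfect synchronous strategies lets us assume without loss of generality that $\ket{\psi}$ is a maximally entangled state on $H \otimes H$ and that a single projector family $\{P_{v,w}\}$ on $H$ is used, with Bob's operators the entrywise conjugates of Alice's. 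Under this standardisation the winning condition unwinds exactly into the algebraic relations~\eqref{eq:PPM1}, i.e.\ into the statement that $\{P_{v,w}\}$ is a projective permutation matrix compatible with the two adjacency matrices.

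The remaining step is the passage between projective permutation matrices and quantum isomorphisms $(H,P)$, which I can take directly from the excerpt: the discussion surrounding~\eqref{eq:componentPPM} gives a bijection between linear maps satisfying the quantum isomorphism axioms~\eqref{eq:quantumfunction} and~\eqref{eq:quantumfunction2} and families of projectors satisfying~\eqref{eq:PPM1}. Assembling the standardised strategy's projectors via~\eqref{eq:componentPPM} therefore produces a quantum isomorphism $(H,P)\colon \Gamma' \to \Gamma$; conversely, given any such $(H,P)$, its component projectors from~\eqref{eq:componentPPM}, paired with the maximally entangled state on $H \otimes H$, define a winning strategy, the winning condition being guaranteed precisely by~\eqref{eq:PPM1}.

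I expect the main obstacle to be the standardisation step rather than the categorical translation. Showing that an arbitrary perfect quantum strategy can be brought into the symmetric, maximally-entangled form with a single projector family relies on the rigidity of perfect strategies for synchronous games (the shared state may be taken maximally entangled and the local algebras tracial), and this is the technical heart of~\cite{Atserias2016}. Once that normal form is in hand, the identification of the winning condition with~\eqref{eq:PPM1} and the reindexing through~\eqref{eq:componentPPM} are routine, so the bulk of the work is in importing and invoking that structural result correctly.
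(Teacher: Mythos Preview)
The paper does not supply a proof of this proposition; it is quoted verbatim as \cite[Theorem~5.4]{Atserias2016} and used as a black box. There is therefore nothing in the present paper to compare your argument against.

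That said, your sketch is a reasonable outline of how the cited result is established. The translation between a projective permutation matrix and a map $(H,P)$ satisfying \eqref{eq:quantumfunction}--\eqref{eq:quantumfunction2} is indeed routine and already implicit around \eqref{eq:componentPPM}. The substantive step, as you correctly identify, is the reduction of an arbitrary winning quantum strategy to the symmetric form with a maximally entangled state and a single projector family. Your appeal to the general structure theory of perfect synchronous strategies is one valid route; the original argument in \cite{Atserias2016} is somewhat more hands-on, working directly with the Schmidt decomposition of the shared state and using the perfect-winning constraints to force the required commutation and orthogonality relations on the measurement operators restricted to the support of the reduced state. Either way, the technical content lies entirely in that reduction, and your proposal correctly locates it there.
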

\noindent
Therefore, two non-isomorphic graphs with a quantum isomorphism between them exhibit pseudo-telepathic behaviour.

\begin{definition} A pair of non-isomorphic graphs $(\Gamma, \Gamma')$ will be called \emph{pseudo-telepathic} if there is a quantum isomorphism $\Gamma' \to\Gamma$.
\end{definition}
\noindent
We can therefore apply the results of Sections~\ref{sec:classification} and~\ref{sec:frobmonclassical} to obtain the following classification of pseudo-telepathic graph pairs $(\Gamma, \Gamma')$ in terms of structures in the monoidal category $\QAut(\Gamma)$.

\begin{corollary} \label{cor:fullpseudo}Let $\Gamma$ be a classical graph. There is a bijective correspondence between the following sets:
\begin{itemize}
\item Isomorphism classes of classical graphs $\Gamma'$ such that $(\Gamma, \Gamma')$ are pseudo-telepathic.
\item Non-trivial Morita equivalence classes of simple dagger Frobenius monoids in $\QAut(\Gamma)$ for which the expression~\eqref{eq:centerquantumgraph} evaluates to $|V_{\Gamma}|$.
\end{itemize}
\end{corollary}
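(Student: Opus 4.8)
The plan is to derive this statement directly from Corollary~\ref{cor:superclassification} by excising from both sides of that bijection the single ``trivial'' case. Corollary~\ref{cor:superclassification} already provides a bijection between isomorphism classes of classical graphs $\Gamma'$ admitting a quantum isomorphism $\Gamma'\to\Gamma$ and Morita equivalence classes of simple dagger Frobenius monoids in $\QAut(\Gamma)$ for which the expression~\eqref{eq:centerquantumgraph} evaluates to $|V_{\Gamma}|$. Since a pair $(\Gamma,\Gamma')$ is pseudo-telepathic precisely when $\Gamma'$ is quantum isomorphic but \emph{not} classically isomorphic to $\Gamma$, it suffices to identify, on the Frobenius-monoid side, exactly those Morita classes whose associated graph is isomorphic to $\Gamma$, and to check that this is a single class which may be removed.

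First I would pin down the Morita class corresponding to $\Gamma$ itself. The identity quantum isomorphism $(\mathbb{C},\id)\colon\Gamma\to\Gamma$ produces, via Proposition~\ref{prop:quantumbijectionFrobenius}, the monoidal unit $I$ of $\QAut(\Gamma)$ equipped with its canonical Frobenius structure; this is a simple dagger Frobenius monoid, since its underlying algebra is $\mathbb{C}\cong\End(\mathbb{C})$ and hence an endomorphism algebra in the sense of Definition~\ref{def:simpleFrob}, and its associated graph is $\Gamma_I\cong\Gamma$, which is classical, so it trivially satisfies the commutativity condition of Theorem~\ref{thm:commutativitycondition}. Thus the trivial Morita class lies among those classified by Corollary~\ref{cor:superclassification}.

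Next I would invoke the injectivity half of the bijection of Corollary~\ref{cor:superclassification} (equivalently Corollary~\ref{cor:bigclassification}): two simple dagger Frobenius monoids are Morita equivalent if and only if their associated graphs are isomorphic. Applying this with one of the monoids taken to be $I$ shows that $\Gamma_X\cong\Gamma$ holds if and only if $X$ is Morita equivalent to $I$, i.e.\ exactly for the trivial Morita class. Hence the graphs $\Gamma'$ isomorphic to $\Gamma$ account for precisely the trivial Morita class, and the pseudo-telepathic partners of $\Gamma$ are in bijection with the remaining, non-trivial, Morita classes satisfying the evaluation condition~\eqref{eq:centerquantumgraph}. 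Restricting the bijection of Corollary~\ref{cor:superclassification} accordingly yields the claim.

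The only genuinely delicate point is the bookkeeping around the word ``non-trivial'': I must ensure that Morita triviality of $X$ in the \emph{full} category $\QAut(\Gamma)$ — rather than merely in some subcategory such as $\Hilb_{\Aut(\Gamma)}$ — is the notion being used, since it is Morita equivalence in $\QAut(\Gamma)$ that controls isomorphism of the associated graphs through Corollary~\ref{cor:bigclassification}. Once this is respected, no further computation is needed; the result is a formal restriction of the already-established classification.
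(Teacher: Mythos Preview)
Your proposal is correct and follows essentially the same approach as the paper: both derive the result from Corollary~\ref{cor:superclassification} by observing that the monoidal unit $I$ corresponds to the isomorphism class of $\Gamma$ itself, and then excising this single trivial Morita class from both sides of the bijection. Your argument is more explicit than the paper's (which is only three sentences), in particular spelling out why the bijection's injectivity forces the trivial Morita class to be the unique class corresponding to $\Gamma$, and flagging that ``Morita trivial'' must be read in $\QAut(\Gamma)$ rather than a subcategory; but the underlying logic is identical.
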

\begin{proof}This is essentially the statement of Corollary~\ref{cor:superclassification} with the additional condition of non-triviality. Note that a simple dagger Frobenius monoid is Morita trivial if it is Morita equivalent to the monoidal unit $I$. On the other hand, under the correspondence of Corollary~\ref{cor:superclassification}, the monoidal unit of $\QAut(\Gamma)$ corresponds to the isomorphism class of $\Gamma$ itself. Excluding this trivial class leads to Corollary~\ref{cor:fullpseudo}.
\end{proof}

\noindent
Similarly, we can translate the statement of Corollary~\ref{cor:classgroup} into a statement about pseudo-telepathic graph pairs.

\begin{corollary}\label{cor:grouppseudo}
Let $\Gamma$ be a classical graph with no quantum symmetries. There is a bijective correspondence between the following sets:
\begin{itemize}
\item Isomorphism classes of classical graphs $\Gamma'$ such that the pair $(\Gamma, \Gamma')$ is pseudo-telepathic.
\item Non-trivial subgroups of central type $(L, \psi)$ of $\Aut(\Gamma)$ with coisotropic stabilisers up to the equivalence relation~\eqref{eq:equivalence}.
\end{itemize} 
\end{corollary}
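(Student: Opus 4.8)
The plan is to obtain Corollary~\ref{cor:grouppseudo} as a direct specialisation of the two results immediately preceding it, combining the group-theoretic classification of Corollary~\ref{cor:classgroup} with the non-triviality bookkeeping already performed in the proof of Corollary~\ref{cor:fullpseudo}. The key observation is that the only difference between ``quantum isomorphic to $\Gamma$'' and ``pseudo-telepathic with $\Gamma$'' is the exclusion of the graph $\Gamma$ itself, i.e.\ the exclusion of the trivial (isomorphism) case on the graph side, which corresponds to excluding the trivial subgroup on the group side.

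First I would invoke Corollary~\ref{cor:classgroup}: since $\Gamma$ has no quantum symmetries, there is a bijective correspondence between isomorphism classes of classical graphs $\Gamma'$ admitting a quantum isomorphism $\Gamma'\to\Gamma$, and subgroups of central type $(L,\psi)$ of $\Aut(\Gamma)$ with coisotropic stabilisers, taken up to the equivalence relation~\eqref{eq:equivalence}. Next I would identify precisely which element of each side corresponds to $\Gamma$ itself. On the graph side, pseudo-telepathy excludes exactly the isomorphism class of $\Gamma$. On the subgroup side, I claim this class corresponds to the trivial subgroup $(L,\psi) = (\{e\},\psi_{\mathrm{triv}})$: indeed, the trivial subgroup yields the monoidal unit $I$ of $\Hilb_{\Aut(\Gamma)}\subseteq\QAut(\Gamma)$ as its associated Frobenius monoid, and as noted in the proof of Corollary~\ref{cor:fullpseudo}, the monoidal unit corresponds under the classification to the isomorphism class of $\Gamma$. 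Note also that the trivial subgroup automatically has coisotropic stabilisers (all stabilisers equal $\{e\}$, and $\{e\}^\bot = L = \{e\}$), so it does appear on the list of Corollary~\ref{cor:classgroup}, and it is the unique class mapping to $\Gamma$.

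Restricting the bijection of Corollary~\ref{cor:classgroup} to the complements of these two distinguished elements then yields the desired correspondence: isomorphism classes $\Gamma'$ with $(\Gamma,\Gamma')$ pseudo-telepathic correspond bijectively to \emph{non-trivial} subgroups of central type $(L,\psi)$ with coisotropic stabilisers, up to~\eqref{eq:equivalence}. I do not anticipate a genuine obstacle here; the entire content is the matching of ``trivial graph class $\leftrightarrow$ trivial subgroup'', which is already established in the surrounding results. The only point requiring a moment's care is confirming that the equivalence relation~\eqref{eq:equivalence} never identifies the trivial subgroup with a non-trivial one (immediate, since conjugation preserves cardinality), so that removing one point from each side respects the bijection. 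The proof is therefore essentially a one-line deduction, exactly parallel to how Corollary~\ref{cor:fullpseudo} is deduced from Corollary~\ref{cor:superclassification}.
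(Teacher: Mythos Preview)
Your proposal is correct and matches the paper's approach exactly: the paper states this corollary without proof, treating it as the immediate specialisation of Corollary~\ref{cor:classgroup} obtained by excluding the trivial case, precisely parallel to how Corollary~\ref{cor:fullpseudo} was deduced from Corollary~\ref{cor:superclassification}. Your identification of the trivial subgroup with the isomorphism class of $\Gamma$ and the check that the equivalence relation respects this removal are the only points needing verification, and you have handled both correctly.
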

\ignore{\noindent
We will now show how these results may be used to demonstrate the existence or non-existence of pseudotelepathic graph pairs.
}
\subsection{Ruling out pseudo-telepathy}
In this section, we demonstrate how Corollary~\ref{cor:fullpseudo} and Corollary~\ref{cor:grouppseudo} can be used to show that a graph $\Gamma$ cannot exhibit pseudo-telepathy. We begin by showing that almost all graphs are not part of a pseudo-telepathic graph pair. We recall a result of Lupini et al. showing that almost all graphs have trivial quantum automorphism group.
\begin{theorem}[{\cite[Theorem 3.14]{Lupini2017}}]\label{prop:randomgraph}
Let $G_n$ be the number of isomorphism classes of classical graphs with $n$ vertices and let $Q_n$ be the number of isomorphism classes of classical graphs with non-trivial quantum automorphism group. Then $Q_n/G_n$ goes to zero as $n$ goes to infinity. 
\end{theorem}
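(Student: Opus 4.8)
The plan is to reduce the statement to two standard facts about random graphs together with one algebraic sufficient condition for the absence of quantum symmetry. Since the classical automorphism group embeds into the quantum automorphism group as its commutative part (equivalently, there is a full inclusion $\Hilb_{\Aut(\Gamma)} \subseteq \QAut(\Gamma)$), a graph $\Gamma$ has trivial quantum automorphism group precisely when it is both \emph{asymmetric}, i.e. $\Aut(\Gamma) = \{e\}$, and has \emph{no quantum symmetries} in the sense of Definition~\ref{def:noquantumsymmetries}, so that $\QAut(\Gamma) \cong \Hilb_{\Aut(\Gamma)} \cong \Hilb$. Taking the contrapositive, every graph counted by $Q_n$ either carries a non-trivial classical automorphism or possesses a genuine quantum symmetry, whence
\[ Q_n \;\le\; \#\{\Gamma : \Aut(\Gamma) \neq \{e\}\} \;+\; \#\{\Gamma : \Gamma \text{ has a quantum symmetry}\}, \]
the counts being over isomorphism classes of $n$-vertex graphs. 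It therefore suffices to show that each term on the right is $o(G_n)$.

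The first term is controlled by the classical Erd\H{o}s--R\'enyi theorem that almost all graphs are asymmetric: the number of isomorphism classes of $n$-vertex graphs with a non-trivial automorphism is $o(G_n)$. For the second term I would invoke a spectral sufficient condition, due to Banica~\cite{Banica2005}, for the absence of quantum symmetries: if the adjacency matrix $A_\Gamma$ has \emph{simple spectrum} (all eigenvalues distinct), then every magic unitary commuting with $A_\Gamma$ generates a commutative algebra, so that $\QAut(\Gamma)$ is the representation category of a genuine classical group and $\Gamma$ has no quantum symmetries. Consequently any graph \emph{with} a quantum symmetry must have a repeated adjacency eigenvalue, giving
\[ \#\{\Gamma : \Gamma \text{ has a quantum symmetry}\} \;\le\; \#\{\Gamma : A_\Gamma \text{ has a repeated eigenvalue}\}. \]
The right-hand side is $o(G_n)$ by the random-matrix fact that the adjacency matrix of $G(n,\tfrac12)$ has simple spectrum asymptotically almost surely. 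Combining the two bounds yields $Q_n/G_n \to 0$.

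The main obstacle is the algebraic input, namely the implication that simple spectrum forces the quantum automorphism group to be classical. The point is that commuting with $A_\Gamma = \sum_k \lambda_k \, \xi_k \xi_k^{*}$, where all $\lambda_k$ are distinct, forces the fundamental coaction $u$ to preserve each one-dimensional eigenspace $\mathbb{C}\xi_k$; writing $u(\xi_k \otimes 1) = \xi_k \otimes a_k$ exhibits the generators $u_{ij}$ as fixed linear combinations of the grouplike elements $a_k$, and one must verify that these commute, so that $C(\Aut^+(\Gamma))$ is commutative. A commutative quantum subgroup of $S_n^+$ is an honest permutation group commuting with $A_\Gamma$, hence lies in $\Aut(\Gamma)$ and (containing it already) equals it. The remaining ingredients --- asymmetry and simple spectrum holding asymptotically almost surely --- are standard but genuinely probabilistic; some care is needed to pass between the labelled statements and the \emph{unlabelled} counts $G_n, Q_n$ (dividing by the $n!$ labellings), which is harmless precisely because the exceptional families are themselves negligible.
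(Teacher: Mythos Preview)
The paper does not prove this theorem at all: it merely quotes it as \cite[Theorem 3.14]{Lupini2017} and uses it as a black box for Corollary~\ref{cor:goestozero}. So there is no in-paper proof to compare against; the question is whether your argument stands on its own.

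Your overall architecture---a union bound over ``has a non-trivial classical automorphism'' and ``has a genuine quantum symmetry'', with the Erd\H{o}s--R\'enyi asymmetry theorem handling the first term---is sound. The difficulty is entirely in the second term, and here there is a genuine gap. You assert that simple spectrum of $A_\Gamma$ forces $C(\Aut^+(\Gamma))$ to be commutative, and you correctly observe that simple spectrum makes the fundamental corepresentation diagonal in the eigenbasis, $u = \mathrm{diag}(a_1,\dots,a_n)$ with each $a_k$ a self-adjoint unitary. But you then write ``one must verify that these commute'' and stop. That verification \emph{is} the content of the implication; nothing you have written forces $[a_k,a_l]=0$. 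The constraints you have used so far (unitarity of $u$, self-adjointness of the entries) do not suffice, and the row-sum condition $u\mathbf{1}=\mathbf{1}$ only pins down $a_k=1$ for those $k$ with $\langle \mathbf{1},\xi_k\rangle\neq 0$, leaving the remaining $a_k$ unconstrained by that argument. Extracting commutativity from the projection relations $u_{ij}^2=u_{ij}$ in the standard basis is not automatic. I also do not believe this implication appears in \cite{Banica2005}; that paper treats specific families (homogeneous graphs), not a general simple-spectrum criterion.

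A cleaner route, closer to what Lupini--Man\v{c}inska--Roberson actually do, avoids spectra altogether: for almost all graphs the colour-refinement (1-WL) partition is \emph{discrete}, i.e.\ every vertex receives its own colour class. Since the magic unitary must respect this partition, it becomes block-diagonal with $1\times 1$ blocks, hence equals the identity, and the quantum automorphism group is trivial outright. This gives both ``asymmetric'' and ``no quantum symmetry'' in one stroke, and the asymptotic statement is the classical Babai--Erd\H{o}s--Selkow / Babai--Ku\v{c}era result on canonical labelling of random graphs.
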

\noindent We combine this with our results to obtain the following corollary.
\begin{corollary}\label{cor:goestozero}
Let $G_n$ be the number of isomorphism classes of classical graphs with $n$ vertices and let $PT_n$ be the number of isomorphism classes of classical graphs which are part of a pseudo-telepathic pair. Then $PT_n/G_n$ goes to zero as $n$ goes to infinity.
\ignore{
Let $\Gamma$ be a random classical graph on $n$ vertices. The probability that there exists a quantum graph $\Gamma'$ such that $\Gamma$ and $\Gamma'$ are quantum isomorphic goes to zero as $n$ goes to infinity.}%
\end{corollary}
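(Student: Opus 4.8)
The plan is to deduce Corollary~\ref{cor:goestozero} directly from Theorem~\ref{prop:randomgraph} together with the easy Corollary~\ref{cor:easyapplication}. The key logical observation is this: if a graph $\Gamma$ has \emph{trivial} quantum automorphism group, i.e. $\QAut(\Gamma) \cong \Hilb$, then by Corollary~\ref{cor:easyapplication} every quantum graph quantum isomorphic to $\Gamma$ is in fact isomorphic to $\Gamma$. In particular, there is no \emph{non}-isomorphic (classical) graph $\Gamma'$ admitting a quantum isomorphism to $\Gamma$, so $\Gamma$ cannot be part of a pseudo-telepathic pair. Contrapositively, every graph belonging to a pseudo-telepathic pair must have non-trivial quantum automorphism group.

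From this observation the counting is immediate. First I would fix the notation of the two theorems: $G_n$ is the number of isomorphism classes of $n$-vertex graphs, $Q_n$ the number of those with non-trivial quantum automorphism group, and $PT_n$ the number of those that are part of a pseudo-telepathic pair. The observation above gives the inclusion of the relevant sets of isomorphism classes, and hence the inequality
\begin{equation}\label{eq:ptleqq}
PT_n \leq Q_n.
\end{equation}
Indeed, the set of isomorphism classes counted by $PT_n$ is a subset of the set counted by $Q_n$, since membership in a pseudo-telepathic pair forces a non-trivial quantum automorphism group.

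Dividing \eqref{eq:ptleqq} by $G_n$ yields $0 \leq PT_n/G_n \leq Q_n/G_n$, and Theorem~\ref{prop:randomgraph} states exactly that $Q_n/G_n \to 0$ as $n\to\infty$. By the squeeze theorem, $PT_n/G_n \to 0$ as well, which is the claim. I would note in passing that here ``pseudo-telepathic'' refers to non-isomorphic \emph{classical} graph pairs, so Corollary~\ref{cor:easyapplication}, whose conclusion already includes the classical case, applies without modification.

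There is essentially no obstacle in this argument; the mathematical content lives entirely in Theorem~\ref{prop:randomgraph} (the Lupini et al.\ result on random graphs) and in Corollary~\ref{cor:easyapplication}, both of which may be assumed. The only point requiring a moment's care is the direction of the implication: one must use that trivial quantum automorphism group \emph{rules out} pseudo-telepathy (via Corollary~\ref{cor:easyapplication}), rather than the converse, so that the containment of isomorphism classes runs the correct way and produces the bound $PT_n \leq Q_n$ rather than an unhelpful inequality. Once this is pinned down, the corollary follows by the elementary squeeze.
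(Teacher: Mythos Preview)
Your proof is correct and follows essentially the same approach as the paper: both show that a trivial quantum automorphism group precludes membership in a pseudo-telepathic pair, deduce $PT_n \leq Q_n$, and conclude via Theorem~\ref{prop:randomgraph}. The only cosmetic difference is that you invoke Corollary~\ref{cor:easyapplication} directly, whereas the paper phrases the same implication through the group-theoretical Corollary~\ref{cor:grouppseudo} (trivial $\QAut(\Gamma)$ means no quantum symmetries and trivial $\Aut(\Gamma)$, hence no non-trivial central type subgroups); your route is marginally more direct but the content is the same.
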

\begin{proof}
If $\Gamma$ has trivial quantum automorphism group, then it has no quantum symmetries and trival automorphism group $\Aut(\Gamma)$. There are therefore no non-trivial Morita equivalence classes of simple dagger Frobenius monoids in $\QAut(G)$; the result then follows from Corollary~\ref{cor:grouppseudo} and Theorem~\ref{prop:randomgraph}.
\end{proof}
\ignore{
\begin{corollary}
Let $\Gamma$ be a random classical graph on $n$ vertices. The probability that $\Gamma$ is part of a pseudo-telepathic graph pair goes to zero as $n$ goes to infinity.
\end{corollary}}
\noindent We now consider various graphs known to have no quantum symmetries. We recall the following result. 
\begin{theorem}[{\citetext{\citealp{Banica2007_2}, Section 7; \citealp{Schmidt2018}}}] The following is a complete list of all vertex-transitive graphs of order $\leq 11$ with no quantum symmetries.
\end{theorem}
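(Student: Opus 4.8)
The plan is to treat this as a finite case analysis underwritten by the cited literature, since the result is a classification that is not original to the present paper. First I would recall that vertex-transitive graphs of small order are completely enumerated by standard combinatorial methods: for order $\leq 11$ there are only finitely many isomorphism classes, which can be listed explicitly (cycles $C_n$, complete graphs $K_n$ and their complements, circulant and Cayley graphs of the small groups, together with a handful of sporadic examples such as the Petersen graph). The theorem asserts that among all of these, precisely the tabulated graphs have \emph{no quantum symmetries} in the sense of Definition~\ref{def:noquantumsymmetries}, that is, satisfy $\QAut(\Gamma) \cong \Hilb_{\Aut(\Gamma)}$.

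For each candidate graph $\Gamma$ the question reduces, via Proposition~\ref{prop:Banica}, to deciding whether Banica's quantum automorphism algebra $A(\Gamma)$ is commutative; equivalently, whether $\QAut(\Gamma)$ contains a simple quantum automorphism of dimension strictly greater than $1$. I would proceed graph by graph. The cycle graphs $C_n$ with $n \geq 5$ are handled by~\cite[Lemma 3.5]{Banica2005a}, as already used in Example~\ref{exm:Cnquantumiso1}. For the graphs with rigid or sparsely symmetric structure one can frequently show directly that the magic-unitary relations defining $A(\Gamma)$ force commutativity, so that no quantum symmetries can occur. The bulk of the remaining low-order cases are settled by Banica's analysis in~\cite[Section 7]{Banica2007_2}, with the few graphs left open there resolved by the representation-theoretic and computer-assisted arguments of Schmidt~\cite{Schmidt2018}.

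The main obstacle is the two-sided nature of the verification. To \emph{exclude} a graph from the list one must exhibit a genuinely noncommutative quantum automorphism, typically by producing an explicit representation of $A(\Gamma)$ on a noncommutative $C^*$-algebra or by invoking a structural criterion (such as disjoint automorphisms generating a free product) that guarantees the presence of quantum symmetries. Conversely, \emph{certifying} that a graph has no quantum symmetries requires a complete description of the irreducible corepresentations of $A(\Gamma)$ and a proof that the fusion category $\QAut(\Gamma)$ collapses onto its classical subcategory $\Hilb_{\Aut(\Gamma)}$; this is the delicate content of the cited works. Since this classification is precisely what is established in~\cite{Banica2007_2,Schmidt2018}, I would cite those references for the verification of each entry and present the resulting enumeration as the statement of the theorem.
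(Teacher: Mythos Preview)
Your proposal is reasonable as a sketch of what the cited works do, but there is nothing to compare it against: the paper does not prove this statement at all. It is stated purely as a citation of known results from \cite{Banica2007_2} and \cite{Schmidt2018}, followed immediately by the table of graphs, and is then used as input to the paper's own Theorem~\ref{thm:vertextransitive}. So the ``paper's proof'' is simply the attribution in the theorem header.
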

\begin{center}
\begin{tabular}{l | l}
Graph & Automorphism group \\
\hline \hline 
$C_{11}$, $C_{11}(2)$, $C_{11}(3)$ & $D_{11}$ \\
Petersen & $S_5$ \\
$C_{10}$, $C_{10}(2)$, $C_{10}^{+}$, $\text{Pr}(C_5)$ & $D_{10}$ \\
Torus & $S_3 \wr \mathbb{Z}_2$ \\
$C_9$, $C_9(3)$ & $D_9$ \\
$C_8$, $C_8^+$ & $D_8$\\
$C_7$ & $D_7$ \\
$C_6$ & $D_6$ \\
$C_5$ & $D_5$ \\
$K_3$ & $S_3$ \\
$K_2$ & $\mathbb{Z}_2$
\end{tabular}
\end{center}
Here the graphs $C_{n}$, $C_n(m)$ and $C_{2n}^+=C_{2n}(n)$ are circulant graphs; $K_n$ are complete graphs; the Petersen graph is well-known; $\text{Pr}(C_5)$ is the graph $C_5 \times K_2$; and Torus is the graph $K_3 \times K_3$, where $\times$ is the direct product; see~\cite{Banica2007_2} for more detail.
\begin{theorem}\label{thm:vertextransitive}
Vertex-transitive graphs of order $\leq 11$ with no quantum symmetries cannot be part of a pseudo-telepathic graph pair. \end{theorem}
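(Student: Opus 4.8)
The plan is to reduce the statement, via Corollary~\ref{cor:grouppseudo}, to a purely group-theoretic question. Since each graph $\Gamma$ in the table has no quantum symmetries, the classical graphs forming a pseudo-telepathic pair with $\Gamma$ are in bijection with non-trivial central type subgroups $(L,\psi)$ of $\Aut(\Gamma)$ with coisotropic stabilisers; so it suffices to show that no such subgroup exists for any of the eleven groups listed. The decisive tool is Proposition~\ref{prop:notstabilised}: if $(L,\psi)$ had coisotropic stabilisers then \emph{every} vertex would be fixed by some non-identity element of $L$. Thus my real target becomes: for every non-trivial central type subgroup $L\subseteq\Aut(\Gamma)$, exhibit a vertex of $V_\Gamma$ with trivial $L$-stabiliser, equivalently show that the fixed-point sets of the non-identity elements of $L$ fail to cover $V_\Gamma$.

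First I would classify the candidate subgroups, using that a group of central type has order a perfect square (as $\mathbb{C}L^\psi\cong\End(H)$ with $|L|=\dim(H)^2$) and a faithful irreducible projective representation of dimension $\sqrt{|L|}$, whence the presence of an abelian subgroup of index $k$ forces $\sqrt{|L|}\le k$. For the dihedral groups $D_n$ this pins $L\cong\mathbb{Z}_2\times\mathbb{Z}_2$: cyclic subgroups have trivial Schur multiplier, while any dihedral subgroup $D_m$ has its rotation subgroup as an abelian subgroup of index two, forcing $\sqrt{2m}\le 2$ and hence $|D_m|=4$. Such a Klein four subgroup exists exactly when $n$ is even and always contains the central rotation $r^{n/2}$. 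The groups $\Aut(K_2)=\mathbb{Z}_2$ and $\Aut(K_3)=S_3=D_3$ have no non-trivial central type subgroup. For $S_5$ the only perfect squares dividing $120$ are $1$ and $4$, and $\mathbb{Z}_4$ is not of central type, so again $L\cong\mathbb{Z}_2^2$. For $S_3\wr\mathbb{Z}_2$ of order $72$ the square orders are $1,4,9,36$; I would discard $36$ by noting that any order-$36$ subgroup has Sylow $3$-subgroup equal to a Sylow $3$-subgroup $\mathbb{Z}_3^2$ of the whole group, giving an abelian subgroup of index $4$ and projective irreducibles of dimension at most $4<6$. This leaves $L\cong\mathbb{Z}_2^2$ or the Sylow $3$-subgroup $L\cong\mathbb{Z}_3^2$.

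With the candidates fixed, I would dispatch each action by a fixed-point count. For the circulant graphs $C_n$, $C_n(m)$ and $C_{2n}^+$ the element $r^{n/2}\in\mathbb{Z}_2^2$ is the $180^\circ$ rotation and is fixed-point free, while each of the two reflections fixes at most two vertices; as every graph has more than four vertices the three non-identity elements cannot cover $V_\Gamma$, exactly as in Example~\ref{exm:Cnquantumiso2}. The prism $\mathrm{Pr}(C_5)$ is identical, its fixed-point-free central involution being the pentagon swap. For the torus $K_3\times K_3$ the subgroup $\mathbb{Z}_3^2$ is the Sylow $3$-subgroup, which acts regularly on the nine vertices, so all stabilisers are trivial; for $\mathbb{Z}_2^2$ I would first check that every involution of $S_3\wr\mathbb{Z}_2$ fixes at most three of the nine vertices, and then run an orbit count. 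Writing $f$ for the number of vertices fixed by all three involutions (the size-one orbits) and $s$ for the number of size-two orbits, a covering of $V_\Gamma$ requires $f+2s=9$, whereas summing the three fixed-point sets gives $3f+2s\le 9$; subtracting forces $f=0$ and hence $2s=9$, which is impossible. Finally, for the Petersen graph realised as the Kneser graph on $2$-subsets of $\{1,\dots,5\}$, a $2$-group fixes an odd number of the five points and a faithful $\mathbb{Z}_2^2$ cannot fix three, so it fixes exactly one point and reduces to a fixed-point-free Klein four subgroup of $S_4$; the two conjugacy types ($V_4$ and $\langle(12),(34)\rangle$) each leave precisely four of the ten vertices with trivial stabiliser.

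The main obstacle I anticipate is not a single deep step but the breadth of the case analysis, concentrated in two places where crude counting does not suffice. For the Petersen and torus actions the mere bound on the number of fixed points of an involution does not by itself obstruct a covering, so I must use the explicit combinatorics --- the reduction to $S_4$ together with the Kneser incidence structure for Petersen, and the parity refinement for the torus. The other delicate point is making the central type classification airtight, in particular the projective-dimension estimates that rule out the larger dihedral subgroups and the order-$36$ subgroups of $S_3\wr\mathbb{Z}_2$; once these are in place, the remaining arguments are short.
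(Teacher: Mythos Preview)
Your proposal is correct and follows essentially the same approach as the paper: reduce via Corollary~\ref{cor:grouppseudo} and Proposition~\ref{prop:notstabilised} to exhibiting, for every non-trivial central type subgroup of each automorphism group, a vertex with trivial stabiliser. Your write-up is in fact considerably more detailed than the paper's own proof --- you supply the projective-dimension bounds that pin down the candidate subgroups and the explicit fixed-point/orbit arguments for the Petersen and torus cases, whereas the paper simply asserts the central type classification and declares the stabiliser checks ``straightforward''.
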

\begin{proof}
In this proof we make extensive use of the fact that the trivial subgroup of a group of central type cannot be coisotropic (see Proposition~\ref{prop:notstabilised}).

The automorphism groups of the complete graphs $K_2$ and $K_3$ have no nontrivial subgroups of central type, so by Corollary~\ref{cor:grouppseudo} cannot be part of a pseudo-telepathic graph pair.\footnote{In fact, it is well known that all quantum isomorphisms between graphs with fewer than four vertices are direct sums of classical isomorphisms~\cite{Wang1998}.}

The circulant graphs all have dihedral automorphism group, which acts on them as on any cycle graph. As with the cycle graph (Examples~\ref{exm:Cnquantumiso1} and~\ref{exm:Cnquantumiso2}), there are up to two conjugacy classes of nontrivial central type subgroups (all isomorphic to $\mathbb{Z}_2 \times \mathbb{Z}_2$), all of which have some trivial vertex stabilisers; so, by Corollary~\ref{cor:grouppseudo}, they cannot be part of a pseudo-telepathic graph pair. 

Similarly, $\text{Pr}(C_5)$ has trivial vertex stabilisers under the action of the  unique up-to-conjugacy central type subgroup $\mathbb{Z}_2 \times \mathbb{Z}_2$.

For the Petersen graph, all central type subgroups of $S_5$ are isomorphic to $\mathbb{Z}_2 \times \mathbb{Z}_2$; there are two conjugacy classes of these subgroups. However, each of these conjugacy classes have vertices with trivial stabiliser.

For the torus graph, $S_3 \wr \mathbb{Z}_2$ has three conjugacy classes of central type subgroups, two isomorphic to $\mathbb{Z}_2 \times \mathbb{Z}_2$ and one isomorphic to $\mathbb{Z}_3 \times \mathbb{Z}_3$. Again, it is straightforward to check that all three conjugacy classes have vertices with trivial stabiliser; the corresponding quantum graphs are therefore non-classical. 
\end{proof}
\begin{remark}
By Corollary~\ref{cor:bigclassification}, we also obtain a classification of quantum graphs which are quantum isomorphic to a classical graph with no quantum symmetries. We show how this works in the vertex-transitive case. The central type subgroups appearing in the proof of Theorem~\ref{thm:vertextransitive} are of the form $\mathbb{Z}_n\times \mathbb{Z}_n$ with $n=2,3$. There is only one cohomology class of non-degenerate $2$-cocycles on $\mathbb{Z}_2 \times \mathbb{Z}_2$, so for those graphs with only $\mathbb{Z}_2 \times \mathbb{Z}_2$ central type subgroups, quantum isomorphic quantum graphs are in bijective correspondence with conjugacy classes of these subgroups. This implies that the circulant graphs of odd order have no quantum isomorphic quantum graph, $\text{Pr}(C_5)$ and the circulant graphs of even order not divisible by $4$ have one quantum isomorphic quantum graph, and the Petersen graph and the circulant graphs of order divisible by $4$ have two quantum isomorphic quantum graphs.

We must be slightly more careful with the torus graph, since the central type subgroup $\mathbb{Z}_3 \times \mathbb{Z}_3$ has two cohomology classes $[\phi_1]$ and $[\phi_2]$ of nondegenerate $2$-cocyles. It is straightforward to check that, for a subgroup $L\cong \mathbb{Z}_3 \times \mathbb{Z}_3$ of $\Aut(\text{Torus}) \cong S_3 \wr \mathbb{Z}_2$, the pairs $(L, [\phi_1])$ and $(L, [\phi_2])$ are equivalent under the relation~\eqref{eq:equivalence}. The torus graph therefore has three quantum isomorphic quantum graphs, corresponding to the two conjugacy classes of $\mathbb{Z}_2 \times \mathbb{Z}_2$ subgroups, and the single conjugacy class of $\mathbb{Z}_3 \times \mathbb{Z}_3$ subgroups with either of the equivalent cohomology classes of 2-cocyles.

These quantum isomorphisms may have some interpretation in the theory of zero-error quantum communication~\cite{Stahlke2016}.
\end{remark}

\subsection{Binary constraint systems and Arkhipov's construction}\label{sec:bcsarkhipov}

In~\cite{Arkhipov2012}, Arkhipov describes a construction of a non-local game from a connected non-planar graph $Z$ and a specified vertex $l^*$, generalising the famous magic square and magic pentagram games~\cite{Mermin1990}.  In~\cite[Definition 4.4 and Theorem 4.5]{Lupini2017}, Lupini et al. translate this construction into a construction of a pseudo-telepathic graph pair $\left(X_0(Z), X(Z,l^*)\right)$. 

In this section, we show that the graph $X(Z,l^*)$ and the quantum isomorphism $X(Z,l^*)\to X_0(Z) $ always arise from subgroups of central type\footnote{In particular, all these graph pairs correspond to Frobenius monoids in the classical subcategory of one of the graphs.} of the automorphism group of the graph $X_0(Z)$, following the construction of Corollary~\ref{cor:classgroup}. Moreover, these subgroups can always be taken to be isomorphic to either $\mathbb{Z}_2^4$ or $\mathbb{Z}_2^6$.
The observations and constructions in this section generalise the binary magic square example from the introduction.

We first establish the following proposition, which allows us to recognise whether a graph $\Gamma'$ which is quantum isomorphic to another graph $\Gamma$ comes from a given central type subgroup of $\Aut(\Gamma)$.
\begin{prop} \label{prop:recognize}Let $\Gamma$ and $\Gamma'$ be classical graphs, let $(L, \psi)$ be a subgroup of central type of $\Aut(\Gamma)$ with coisotropic stabilisers and let $\{ U_a\in U(H) |~ a\in L\}$ be a corresponding nice unitary error basis. Then $\Gamma'$ is isomorphic to $\Gamma_{L, \psi}$ if and only if there exists a quantum isomorphism $(H,P): \Gamma'\to \Gamma$ such that the following holds, for all $a\in L \subseteq \Aut(\Gamma)$:
\begin{equation}\label{eq:recognise}\begin{tz}[zx, xscale=-1,scale=1.7,every to/.style={out=up, in=down},yscale=1.25]
\draw (0,0) to (2,2);
\draw[arrow data={0.95}{>},arrow data={0.085}{>}] (2,0) to node[zxnode=\zxwhite,pos=0.5] {$P$} node[zxnode=\zxwhite, pos=0.2] {$U_a^\dagger$}node[zxnode=\zxwhite, pos=0.8] {$U_a$} (0,2);
\node[dimension,left] at (2,0) {$H$};
\node[dimension,right] at (0,2) {$H$};
\node[dimension, right] at (0,0) {$V_{\Gamma'}$};
\node[dimension, left] at (2,2) {$V_{\Gamma}$};
\end{tz}
~~~=~~~
\begin{tz}[zx, xscale=-1,scale=1.7,every to/.style={out=up, in=down},yscale=1.25]
\draw (0,0) to node[zxnode=\zxwhite, pos=0.8] {$a^{\minus 1}$} (2,2);
\draw[string,arrow data={0.95}{>},arrow data={0.085}{>}] (2,0) to node[zxnode=\zxwhite,pos=0.5] {$P$}  (0,2);
\node[dimension,left] at (2,0) {$H$};
\node[dimension,right] at (0,2) {$H$};
\node[dimension, right] at (0,0) {$V_{\Gamma'}$};
\node[dimension, left] at (2,2) {$V_{\Gamma}$};
\end{tz}
\end{equation}
\end{prop}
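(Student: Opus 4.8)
My plan is to reduce the biconditional to a single identity of dagger Frobenius monoids and then verify it diagrammatically. By Proposition~\ref{prop:quantumbijectionFrobenius}, a quantum isomorphism $(H,P)\colon\Gamma'\to\Gamma$ yields the simple dagger Frobenius monoid $(H\otimes H^*,P\circ\overline P)$ of~\eqref{eq:quantumbijectionFrobenius}, and $\Gamma_{L,\psi}$ is by construction (Theorem~\ref{thm:Frobeniussplit}, Corollary~\ref{cor:classgroup}) the classical graph obtained by splitting the monoid $X_{L,\psi}$ of~\eqref{eq:UEBquantumbijection}, with a canonical splitting $P_0\colon\Gamma_{L,\psi}\to\Gamma$ on the Hilbert space $H$ satisfying $P_0\circ\overline{P_0}=X_{L,\psi}$. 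The refined classification recorded after Corollary~\ref{cor:bigclassification} (Remark~\ref{rem:classify*iso}) states that quantum isomorphisms into $\Gamma$ up to the equivalence~\eqref{eq:equivalencequantumiso} are in bijection with $*$-isomorphism classes of simple dagger Frobenius monoids. Hence $\Gamma'\cong\Gamma_{L,\psi}$ holds if and only if there is a quantum isomorphism $(H,P)\colon\Gamma'\to\Gamma$, on the same space $H$, with $P\circ\overline P=X_{L,\psi}$: given an isomorphism $\epsilon\colon\Gamma'\to\Gamma_{L,\psi}$ the composite $P:=P_0\circ\epsilon$ works, since $\epsilon\circ\overline\epsilon=\id$ gives $P\circ\overline P=P_0\circ\overline{P_0}=X_{L,\psi}$; conversely $P\circ\overline P=X_{L,\psi}$ makes $P$ and $P_0$ $*$-isomorphic as monoids, so they are equivalent under~\eqref{eq:equivalencequantumiso} and in particular $\Gamma'\cong\Gamma_{L,\psi}$. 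Everything therefore reduces to the \emph{core equivalence}: for a biunitary $P$ on $H$, equation~\eqref{eq:recognise} holds for all $a\in L$ if and only if $P\circ\overline P=X_{L,\psi}$.

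To prove the core equivalence I would first rewrite~\eqref{eq:recognise}, using unitarity of $U_a$ and the group structure, in the equivalent ``pull-through'' form $P\circ(U_a\otimes\id_{V_{\Gamma'}})=(a\otimes U_a)\circ P$ for each $a\in L$. I then compare $P\circ\overline P$ with $X_{L,\psi}$ on the orthonormal basis $\{\,U_a/\sqrt n : a\in L\,\}$ of $H\otimes H^*\cong\End(H)$, which is indeed a basis because~\eqref{eq:UEBcondition} gives $\Tr(U_a^\dagger U_b)=n\,\delta_{a,b}$ and $|L|=n^2=\dim\End(H)$ with $n=\dim H$. Writing $\ket{U_a}=(U_a\otimes\id_{H^*})\ket{\Omega}$ for the cap $\ket{\Omega}\in H\otimes H^*$, all the $U$-factors sit on the $H$-leg that meets $P$ (not $\overline P$), so feeding $\ket{U_a}$ in and $\bra{U_b}$ out and applying the pull-through relation converts the $U_a$ into the automorphism $a$ on the $V_\Gamma$-wire and a factor $U_a$ on the outgoing $H$-leg; the outgoing $\bra{U_b}$ then produces $\tfrac1n\Tr(U_b^\dagger U_a)=\delta_{a,b}$, while the now-bare doubled $H$-wire closes up to the identity on $V_\Gamma$ via the duality relations~\eqref{eq:rightdual} and~\eqref{eq:leftdual}. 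This gives $\bra{U_b}(P\circ\overline P)\ket{U_a}=\delta_{a,b}\,a$. Reading off~\eqref{eq:UEBquantumbijection} in the same basis, the cup/cap through $U_c^\dagger$ and the orthogonality~\eqref{eq:UEBcondition} collapse the sum over $c$ to the single term $c=a=b$, giving $\bra{U_b}X_{L,\psi}\ket{U_a}=\delta_{a,b}\,a$ with the same normalisation. As the two monoids agree on a basis of $H\otimes H^*$, this proves $P\circ\overline P=X_{L,\psi}$.

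For the converse half of the core equivalence, I would run the matrix-element computation in reverse. Assuming $P\circ\overline P=X_{L,\psi}$, the previous paragraph shows $\bra{U_b}(P\circ\overline P)\ket{U_a}=\delta_{a,b}\,a$, i.e. $X_{L,\psi}$ sends the input state $\ket{U_a}$ to $a\otimes\ket{U_a}$ for every $a$; since $\overline P$ is biunitary and hence invertible (Theorem~\ref{thm:dualisable}), I can strip it off and, using the spanning property of $\{U_b\}$ together with the braiding moves~\eqref{eq:biunitarybraiding}--\eqref{eq:biunitarybraiding2}, isolate for each fixed $a$ the single-wire relation $P\circ(U_a\otimes\id_{V_{\Gamma'}})=(a\otimes U_a)\circ P$, which is~\eqref{eq:recognise}. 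Combining the two halves: the forward half gives the ``if'' direction of the Proposition, and the reverse half, applied to $P=P_0\circ\epsilon$, gives the ``only if'' direction (the coisotropy hypothesis on $(L,\psi)$ being exactly what guarantees, via Theorem~\ref{thm:coisotropic} and Corollary~\ref{cor:classgroup}, that $\Gamma_{L,\psi}$ is a genuine classical graph, so that $\epsilon$ can be an honest graph isomorphism). I expect the main obstacle to be the phase and normalisation bookkeeping: rewriting~\eqref{eq:recognise} in pull-through form and, in the reverse direction, cancelling $\overline P$ while passing between the single $H$ of $P$ and the doubled $H\otimes H^*$ of $X_{L,\psi}$ both hinge on the cocycle phases $\psi(a,a^{-1})$ and the factors of $n$ from~\eqref{eq:endomorphismalgebra} and~\eqref{eq:UEBcondition} cancelling precisely; the orthogonality relation~\eqref{eq:UEBcondition} is what forces the double sum over $L$ down to the diagonal and clears these phases.
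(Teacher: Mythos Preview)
Your approach is essentially the same as the paper's. Both reduce the biconditional to the ``core equivalence'' that \eqref{eq:recognise} holds for all $a\in L$ if and only if $P\circ\overline{P}=X_{L,\psi}$, and both establish that equivalence using biunitarity of $P$ together with the orthogonality relation~\eqref{eq:UEBcondition} for the nice UEB. Your matrix-element computation in the basis $\{U_a/\sqrt{n}\}$ is exactly the coordinate version of the paper's diagrammatic move: the paper rewrites $P\circ\overline P=X_{L,\psi}$ via the biunitarity relations~\eqref{eq:biunitarybraiding}--\eqref{eq:biunitarybraiding2} into an equation indexed by a sum over $b\in L$, and then contracts the free $H\otimes H^*$ legs with $U_a$ to collapse the sum, which is precisely your pairing with $\ket{U_a}$ and $\bra{U_b}$.

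The one place where the paper is tighter is the reverse half of the core equivalence. Your phrase ``since $\overline P$ is biunitary and hence invertible, I can strip it off'' does not by itself recover a single-wire identity for $P$ from an identity for $P\circ\overline P$: invertibility of $\overline P$ as a map $H^*\otimes V_\Gamma\to V_{\Gamma'}\otimes H^*$ does not let you cancel it from the horizontal composite. What actually does the work, and what the paper makes explicit, is that the biunitarity moves and the contraction with $U_a$ are each \emph{reversible} (the latter because $\{U_a\}$ spans $\End(H)$), so the chain of equalities runs both ways. If you rewrite your reverse argument in those terms---expand both sides in the UEB basis and equate coefficients, rather than ``stripping off'' $\overline P$---it becomes the paper's proof verbatim.
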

\begin{proof} It follows from Proposition~\ref{prop:quantumbijectionFrobenius} that $\Gamma_{L, \psi}$ is isomorphic to $\Gamma'$ if and only if there exists a quantum isomorphism $(H,P): \Gamma' \to \Gamma$ such that $P\circ \overline{P} = X_{L, \psi}$:
\def\d{0.5}
\def\h{0.6}
\def\l{3}
\def\hU{0.8}
\begin{equation}\label{eq:proofMP}
\begin{tz}[zx, xscale=-1,every to/.style={out=up, in=down},yscale=1.143]
\draw (0,0) to (2.5,3.5);
\draw[arrow data={0.15}{<}, arrow data={0.8}{<}] (1.25,0) to node[zxnode=\zxwhite, pos=0.4] {$\conj{P}$} (0,3.5);
\draw[string,arrow data={0.2}{>}, arrow data={0.9}{>}] (2.5,0) to node[zxnode=\zxwhite, pos=0.58] {$P$} (1.25,3.5);
\node[dimension,left] at (1.25,0) {$H^*$};
\node[dimension, left] at (2.5,0) {$H$};
\end{tz}
~~~=~~~
\frac{1}{\sqrt{|L|}}~\sum_{a\in L\subseteq \Aut(\Gamma)}~ 
\begin{tz}[zx,xscale=-1, every to/.style={out=up, in=down}]
\draw (0,0) to (2,4);
\draw[arrow data={0.08}{<},arrow data={0.99}{<}] (2-\d,0) to (2-\d,\h) to  [out=up, in=up,looseness=\l] (2+\d,\h) to  (2+\d,0);
\draw[arrow data={0.1}{>}, arrow data={0.95}{>}] (0-\d,4) to (0-\d,4-\h) to  [out=down, in=down,looseness=\l] (0+\d,4-\h) to  (0+\d,4);
\node[zxnode=\zxwhite] at (1,2) {$a$};
\node[zxnode=\zxwhite] at (2+\d, \hU) {$U_a^\dagger$};
\node[zxnode=\zxwhite] at (0+\d, 4-\hU) {$U_a$};
\node[dimension,right] at (0,0) {$V_\Gamma$};
\node[dimension, left] at (2,4) {$V_\Gamma$};
\end{tz}
\end{equation}
Using the shorthand notation~\eqref{eq:shorthand} for the quantum isomorphism $P$, and~\eqref{eq:biunitarybraiding2}, this is equivalent to the following:
\[
\begin{tz}[zx,yscale=-1]
\draw[arrow data={0.15}{<}, arrow data={0.8}{<}] (1.25,0) to [out=up, in=down] (0, 4);
\draw[arrow data ={0.15}{<}, arrow data ={0.85}{<}] (1.25,4) to [out=down, in=down,looseness=3.5] (2.5,4);
\draw (0,0) to [out=up, in=down] node[zxvertex=\zxwhite, pos=0.31] {} (4,4);
\end{tz}
~~~=~~~
\begin{tz}[zx,yscale=-1]
\draw[arrow data={0.15}{<}, arrow data={0.8}{<}] (1.25,0) to [out=up, in=down] (0, 4);
\draw[arrow data ={0.1}{<}, arrow data = {0.5}{<}, arrow data ={0.9}{<}] (1.25,4) to[out=down, in=up] (1.9,2.0) to [out=down, in=down,looseness=3.5] (3.15,2.25) to [out=up, in=down] (2.5,4);
\draw (0,0) to [out=up, in=down] node[zxvertex=\zxwhite, pos=0.31] {} node[zxvertex=\zxwhite, pos=0.488]{} node[zxvertex=\zxwhite, pos=0.695] {}(4,4);
\end{tz}
~~~\superequalseq{eq:proofMP}~ ~~
\frac{1}{\sqrt{|L|}}~\sum_{b\in L} ~
\begin{tz}[zx,yscale=-1]
\draw[arrow data={0.11}{<}, arrow data={0.98}{<}] (3.75,0) to [out=up, in=down] node[zxnode=\zxwhite, pos=0.3] {$U_b$} (2.5, 4);
\draw[string,arrow data ={0.08}{>}, arrow data ={0.95}{>}] (0,4) to [out=down, in=down,looseness=3.5] node[zxnode=\zxwhite, pos=0.18] {$U_b^\dagger$}(1.25,4);
\draw[string] (0,0) to [out=up, in=down, in looseness=1.1] node[zxnode=\zxwhite, pos=0.47] {$b$} node[zxvertex=\zxwhite, pos=0.67] {} (4,4);
\end{tz}
\]
Contracting the first two bottom wires with $U_a$ for $a\in L$ and using~\eqref{eq:UEBcondition} completes the proof.
\end{proof}
\ignore{
\begin{corollary} The graphs $\Gamma_{\mathrm{BMS}}$ and $\Gamma_{\mathbb{Z}_2^4, \psi}$ are not classically isomorphic. 
\end{corollary}
\begin{proof} \DRcomm{Make sure we have assigned 1Z and 1X to the correct generators.}
Suppose $\Gamma_{\mathrm{BMS}}$ and $\Gamma_{\mathbb{Z}_2^4, \psi}$ are isomorphic. Then, according to Theorem~\ref{}, there is a quantum automorphism $(H,P): \Gamma_{\mathrm{BMS}} \to \Gamma_{\mathrm{BMS}}$ such that~\eqref{eq:} holds. In terms of the underlying matrix of projectors, this reads 
\[ U_a^\dagger P_{v,w} U_a = P_{a(v),w}
\]
Here $\{U_a~|~ a\in \mathbb{Z}_2^4\}$ is the UEB obtained from tensor products of Pauli matrices and $a(v)$ denotes the action of the group $\mathbb{Z}_2^4$ on the graph $\Gamma_{\mathrm{BMS}}$ as described in ...
\end{proof}}
\noindent
In terms of the underlying projective permutation matrix $P$, condition~\eqref{eq:recognise} can be stated as follows, for all $a\in L\subseteq\Aut(\Gamma), v'\in V_{\Gamma'}$ and $v\in V_{\Gamma}$: \begin{equation}\label{eq:recognisePPM} U_a P_{v',v} U_a^\dagger = P_{v',a(v)} \end{equation}

We give a brief summary of the construction of pseudo-telepathic graphs from binary constraint systems as developed in~\cite[Section 6]{Atserias2016}.

Let $\mathcal{F}$ be a linear binary constraint system (see~\cite[Section~6.1 and 6.2]{Atserias2016}) with binary variables $x_1,\ldots, x_m \in \{+1,-1\}$ and constraints $C_1,\ldots, C_p$, where each $C_l$ is an equation of the form $\prod_{x_i \in S_l} x_i = b_l$ for $S_l \subseteq \{ x_1,\ldots,x_n\}$ and $b_l \in \{+ 1,-1\}$.\footnote{Unlike~\cite{Atserias2016}, we write our constraint systems in multiplicative form.} A \emph{classical solution} is a solution of the constraint system with $x_i \in \{+ 1,-1\}$. A \emph{quantum solution} is a solution for which the $x_i$ are self-adjoint operators with eigenvalues $\pm 1$ acting on some finite-dimensional Hilbert space $H$, and such that all operators appearing in the same constraint commute. A linear binary constraint system which admits a quantum but not a classical solution will be called \emph{pseudo-telepathic}. 

The \emph{homogenisation} $\mathcal{F}_0$ of $\mathcal{F}$ is the constraint system in which we set the right hand side of every constraint equation to $+1$. For every linear binary constraint system $\mathcal{F}$, Atserias et al. construct a graph $\Gamma^{\mathcal{F}}$ whose vertices are pairs $(C_l, f)$ of a constraint equation $C_l$ of $\mathcal{F}$ together with a `local' classical solution $f : S_l \to \{+1,-1\}$ of this equation, and with an edge between $(C_l,f)$ and $(C_k, g)$ if and only if the local solutions $f:S_l \to \{+1,-1\}$ and $g: S_k \to \{+ 1,-1\}$ are inconsistent on $S_l \cap S_k$. They show that a constraint system $\mathcal{F}$ has a classical solution if and only if the graphs $\Gamma^{\mathcal{F}_0}$ and $\Gamma^{\mathcal{F}}$ are isomorphic, and that if $\mathcal{F}$ has a quantum solution then these graphs are quantum isomorphic. (See~\cite[Proof of Theorem 6.3]{Atserias2016} or the proof of Proposition~\ref{prop:binaryconstraint} below for the construction of the quantum isomorphism arising from a quantum solution.)

We now show that all pseudo-telepathic graph pairs arising from a binary constraint system possessing a quantum solution satisfying a certain pair of conditions can be obtained from central type subgroups.

\begin{prop} \label{prop:binaryconstraint}Let $\mathcal{F}$ be a linear binary constraint system and suppose that this system has a quantum solution $\{X_i \in \End(H) \}_{1\leq i \leq m}$, acting on some Hilbert space $H$, with the following two properties:
\begin{itemize}
\item If $A\in \End(H)$ is such that $AX_i = X_i A$ for all $1\leq i\leq m$, then $A \propto \mathbbm{1}_H$.
\item There is a group of central type $(L, \psi)$ and a corresponding nice unitary error basis $\{ U_a\in U(H)~|~ a\in L\}$ such that the following holds for all $a\in L$ and $1\leq i\leq m$:
\begin{equation} \label{eq:conjugating}U_a^\dagger X_i U_a = p_i ^a ~X_i\hspace{1cm}\text{where } p_i^a \in \{+1,-1\}
\end{equation}
\end{itemize}
Then, there is an embedding $L \hookrightarrow \Aut(\Gamma^{\mathcal{F}_0})$ and $\Gamma^{\mathcal{F}}$ is isomorphic to $\Gamma^{\mathcal{F}_0}_{L, \psi}$.
\end{prop}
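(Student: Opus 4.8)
The plan is to produce an explicit nice unitary error basis action on the vertex set of $\Gamma^{\mathcal{F}_0}$ and then invoke Proposition~\ref{prop:recognize} to identify $\Gamma^{\mathcal{F}}$ with $\Gamma^{\mathcal{F}_0}_{L,\psi}$. First I would recall the structure of the vertices of $\Gamma^{\mathcal{F}_0}$: these are pairs $(C_l, f)$ where $C_l$ is a (homogenised) constraint equation $\prod_{x_i \in S_l} x_i = +1$ and $f : S_l \to \{+1,-1\}$ is a local classical solution. Given the group of central type $(L,\psi)$ with its nice UEB $\{U_a\}$ and the signs $p_i^a \in \{+1,-1\}$ from~\eqref{eq:conjugating}, I would define an action of $L$ on the vertices by $a\cdot (C_l, f) := (C_l, f^a)$ where $f^a(x_i) := p_i^a\, f(x_i)$ for $x_i \in S_l$. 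The first thing to check is that this is well-defined: $f^a$ must again be a local solution of the homogeneous constraint $C_l$, i.e. $\prod_{x_i\in S_l} f^a(x_i) = +1$. This follows because $\prod_{x_i\in S_l} p_i^a$ equals the scalar by which $U_a^\dagger(\prod_{x_i\in S_l}X_i)U_a$ differs from $\prod_{x_i\in S_l}X_i$; since the $X_i$ in a single constraint commute and form a quantum solution of the \emph{homogeneous} system (the product is $+\mathbbm{1}_H$), conjugation preserves it, forcing $\prod_{x_i\in S_l} p_i^a = +1$. One also checks $a\mapsto (f\mapsto f^a)$ is a group homomorphism (the $p_i^a$ multiply, as $U_{ab}$ agrees with $U_aU_b$ up to the scalar $\psi(a,b)$ which cancels in the conjugation action), and that it preserves adjacency, since inconsistency of two local solutions on $S_l\cap S_k$ is preserved under multiplying both by the same signs $p_i^a$. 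This yields the embedding $L \hookrightarrow \Aut(\Gamma^{\mathcal{F}_0})$, faithfulness coming from the first hypothesis (if $a$ acts trivially then all $p_i^a=+1$, so $U_a$ commutes with every $X_i$, hence $U_a\propto\mathbbm{1}_H$, and nondegeneracy of the UEB forces $a=e$).

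Next I would recall from~\cite[Proof of Theorem 6.3]{Atserias2016} the explicit quantum isomorphism $(H,P):\Gamma^{\mathcal{F}}\to\Gamma^{\mathcal{F}_0}$ built from the quantum solution $\{X_i\}$. Its projective permutation matrix has components $P_{(C_l,g),(C_l,f)}$ given by the spectral projections cutting out the joint $\pm1$-eigenspaces of the $\{X_i\}_{x_i\in S_l}$ matching the sign pattern relating $g$ (a local solution of the inhomogeneous $C_l$) to $f$ (a local solution of the homogeneous $C_l$); components across distinct constraints vanish. The key computation is then to verify the intertwining relation~\eqref{eq:recognisePPM}, namely $U_a P_{v',v} U_a^\dagger = P_{v', a(v)}$ for all $a\in L$, $v'\in V_{\Gamma^{\mathcal{F}}}$, $v\in V_{\Gamma^{\mathcal{F}_0}}$. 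This is exactly where hypothesis~\eqref{eq:conjugating} does its work: conjugating the spectral projection of $\prod$ by $U_a$ permutes eigenspaces according to the sign flips $p_i^a$, and this permutation is precisely the relabelling $f\mapsto f^a$ defining the action $a(v)$. So the intertwiner equation reduces to matching eigenspace labels with the combinatorial action, which is a direct check.

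With~\eqref{eq:recognisePPM} (equivalently~\eqref{eq:recognise}) established, Proposition~\ref{prop:recognize} immediately gives $\Gamma^{\mathcal{F}} \cong \Gamma^{\mathcal{F}_0}_{L,\psi}$, provided the subgroup has coisotropic stabilisers — but this is automatic, since Proposition~\ref{prop:recognize} presupposes it and, more to the point, the existence of the quantum isomorphism $P$ together with Theorem~\ref{thm:commutativitycondition} guarantees that the graph $\Gamma^{\mathcal{F}}$ being classical forces the commutativity condition, hence coisotropy via Theorem~\ref{thm:coisotropic}. The main obstacle I anticipate is the bookkeeping in verifying~\eqref{eq:recognisePPM}: one must carefully track how conjugation by $U_a$ acts on the explicit spectral projections of Atserias et al.\ and confirm that the induced relabelling of local solutions coincides on the nose with the vertex action $a\cdot(C_l,f)=(C_l,f^a)$, including getting the homogeneous-versus-inhomogeneous sign bookkeeping right. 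The well-definedness argument (that $f^a$ stays a solution of the \emph{homogeneous} constraint) is the conceptual crux that makes the whole correspondence go through, and I would make sure to isolate it cleanly before attempting the projector computation.
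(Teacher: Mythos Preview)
Your proposal is correct and follows essentially the same route as the paper: define the $L$-action on vertices of $\Gamma^{\mathcal{F}_0}$ via the signs $p_i^a$, verify it gives an embedding $L\hookrightarrow\Aut(\Gamma^{\mathcal{F}_0})$ using the irreducibility hypothesis, then check that the Atserias et al.\ quantum isomorphism satisfies the intertwining relation~\eqref{eq:recognisePPM} and invoke Proposition~\ref{prop:recognize}. One small slip: you write that the $X_i$ ``form a quantum solution of the \emph{homogeneous} system (the product is $+\mathbbm{1}_H$)'', but they are a quantum solution of the \emph{inhomogeneous} system $\mathcal{F}$, so $\prod_{x_i\in S_l}X_i = b_l\,\mathbbm{1}_H$; the conclusion $\prod_{x_i\in S_l}p_i^a = +1$ still follows since conjugation fixes the scalar $b_l\,\mathbbm{1}_H$, but the justification should be phrased accordingly.
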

\begin{proof}  We first note that for each $a\in L$, $\{p_i^a\}_{1\leq i \leq m}$ forms a (global) classical solution of the homogenous constraint system $\mathcal{F}_0$ and thus gives rise to `local' assignments which we denote by $[p^a]^l: S_l \to \{ +1,-1\}$. This in turn gives rise to an automorphism $p^a$ of $\Gamma^{\mathcal{F}_0}$, mapping a vertex $(C_l, f)$ to the vertex $(C_l,[p^a]^l \cdot f)$ where $[p^a]^l \cdot f : S_l \to \{ + 1,-1\}$ denotes the pointwise multiplication of the assignments $[p^a]^l, f: S_l \to \{ + 1,-1\}$. This results in a group homomorphism $L \to \Aut(\Gamma^{\mathcal{F}_0}),~a\mapsto p^a$, which is injective, since if $p^a = p^b $ it follows that $p_i^a = p_i ^b$ for all $1\leq i \leq m$ and thus that $U_a^\dagger X_i U_a = U_b^\dagger X_i U_b$, or equivalently that $U_a U_b^\dagger$ commutes with each $x_i$. Thus, by the first assumption, $U_a U_b^\dagger \propto \mathbbm{1}_H$ and therefore $a=b$. Therefore, $a\mapsto p^a$ defines an embedding $L \hookrightarrow \Aut(\Gamma^{\mathcal{F}_0})$.

We now show that $\Gamma^\mathcal{F}$ is isomorphic to $\Gamma^{\mathcal{F}_0}_{L, \psi}$. In the proof of~\cite[Theorem 6.3]{Atserias2016}, from a quantum solution $\{X_i \in \End(H)\}_{1\leq i \leq m}$ a quantum isomorphism $(H,P): \Gamma^{\mathcal{F}} \to \Gamma^{\mathcal{F}_0}$ is constructed as follows. Given a vertex $(C_l ,f)$ of $\Gamma^{\mathcal{F}}$, define the projector $Q_{(C_l,f)}$ on $H$ as the projector onto the joint eigenspace of the commuting operators $\{X_i~|~ x_i \in S_l \}$ with respective eigenvalues determined by $f:S_l \to \{ +1,-1\}$. The quantum isomorphism $(H, P):  \Gamma^{\mathcal{F}} \to \Gamma^{\mathcal{F}_0} $ is then defined as the following projective permutation matrix, where $(C_k, f) \in V_{\Gamma^{\mathcal{F}}}$ and $(C_l, g) \in V_{\Gamma^{\mathcal{F}_0}}$:
\[ P_{(C_k, f),(C_l, g)} := \delta_{k,l} ~Q_{(C_l, fg)}
\] If the given quantum solution fulfils the second condition of the theorem, then the projectors onto the joint eigenspaces fulfil the following equation:
\[U_a Q_{(C_l,g)} U_a^\dagger = Q_{(C_l, [p^a]^l\cdot g)}\]
Therefore, the following holds for the just defined projective permutation matrix $P$ and for all $a\in L$ and vertices $v\in V_{\Gamma^{\mathcal{F}}}$ and $w\in V_{\Gamma^{\mathcal{F}_0}}$:
\[ U_a P_{v,w} U_a^\dagger = P_{v,p^a(w)}
\]
This is precisely condition~\eqref{eq:recognisePPM}. It thus follows from Proposition~\ref{prop:recognize} that $\Gamma^{\mathcal{F}}$ and $\Gamma^{\mathcal{F}_0}_{L, \psi}$ are isomorphic.
\end{proof}

\begin{remark} 
The first paragraph of the proof of Propositon~\ref{prop:binaryconstraint} shows how automorphisms of the graph $\Gamma^{\mathcal{F}_0}$ arise from global classical solutions of the homogenous constraint system $\mathcal{F}_0$. This generalises how the automorphism subgroup $\mathbb{Z}_2^4$ of the graph $\Gamma$ in the example in the introduction arises from bit flip symmetries --- or equivalently from global classical solutions of the binary magic square constraint system.
\end{remark}
\begin{remark}
The global classical solutions~\eqref{eq:symmetrytransformations} of the magic square constraint system discussed in the introduction arise as in equation~\eqref{eq:conjugating} as the matrices of signs obtained from conjugating the entries of the following well-known quantum solution of the inhomogenous\footnote{In the inhomogenous magic square constraint system, all rows and columns multiply to $1$ except for the middle column which multiplies to $-1$.} magic square constraint system by $U_{1,0,0,0} = \sigma_X \otimes \mathbbm{1}_2,~ U_{0,1,0,0} = \sigma_Z \otimes \mathbbm{1}_2, ~U_{0,0,1,0} =\mathbbm{1}_2 \otimes \sigma_X $ and $U_{0,0,0,1} = \mathbbm{1}_2 \otimes \sigma_Z$, respectively: \begin{equation}\label{eq:quantumsolution}\left(\begin{array}{ccc}
\mathbbm{1}_2 \otimes \sigma_Z & \sigma_Z \otimes \sigma_Z & \sigma_Z \otimes \mathbbm{1}_2\\
\sigma_X\otimes \sigma_Z & \sigma_Y \otimes \sigma_Y & \sigma_Z \otimes \sigma_X \\ \sigma_X \otimes \mathbbm{1}_2 & \sigma_X \otimes \sigma_X & \mathbbm{1}_2 \otimes \sigma_X \end{array}\right)\end{equation} 
In particular, the inhomogenous magic square constraint system fulfils the conditions of Proposition~\ref{prop:binaryconstraint} which leads to the proof of Theorem~\ref{thm:arkhipov}.
\end{remark}

\noindent
We now show that all pseudo-telepathic graph pairs generated from Lupini et al.'s translation of Arkhipov's construction arise from a central type subgroup of the automorphism group of one of the graphs. Recall that, in the introduction, we used tensor products of the Pauli UEB to define the $2$-cocycles $\psi_{\mathrm{P}}$ on $\mathbb{Z}_2^2$~\eqref{eq:paulicocycle} and $\psi_{\mathrm{P}^2}$ on $\mathbb{Z}_2^4$~\eqref{eq:Pauliproduct}. We define the $2$-cocycle $\psi_{\mathrm{P}^3}$ on $\mathbb{Z}_2^6$ analogously.

\begin{theorem}\label{thm:arkhipov} Let $Z$ be a connected non-planar graph, let $l^*$ be a specified vertex of $Z$ and let $X_0(Z)$ and $X(Z,l^*)$ be the induced pseudo-telepathic graphs~\cite[Definition 4.4]{Lupini2017}. Then, there is a subgroup of central type $\left(L , \psi\right)$ of $\Aut(X_0(Z))$,which is isomorphic to either $\left(\mathbb{Z}_2^4, \psi_{\mathrm{P}^2}\right)$ or $\left(\mathbb{Z}_2^6, \psi_{\mathrm{P}^3}\right)$ such that $X(Z, l^*)$ is isomorphic to the graph $X_0(Z)_{L, \psi}$.
\end{theorem}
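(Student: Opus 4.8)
The plan is to apply Proposition~\ref{prop:binaryconstraint} to the particular binary constraint systems underlying Arkhipov's construction. First I would recall how the graphs $X_0(Z)$ and $X(Z,l^*)$ of~\cite[Definition 4.4]{Lupini2017} arise: the pair $(Z,l^*)$ determines a linear binary constraint system $\mathcal{F}$ (built from the edges and vertices of $Z$, with the parity constraints forced by the non-planarity and the marked vertex $l^*$), so that $X_0(Z) = \Gamma^{\mathcal{F}_0}$ is the homogeneous BCS graph and $X(Z,l^*) = \Gamma^{\mathcal{F}}$ is the inhomogeneous one. The key point is that Arkhipov's construction is precisely engineered to produce a system $\mathcal{F}$ that has a quantum solution but no classical solution; the magic square and magic pentagram are the two smallest instances. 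Thus the whole theorem reduces to verifying that the quantum solution supplied by the construction satisfies the two hypotheses of Proposition~\ref{prop:binaryconstraint}, and that the associated group of central type is $\mathbb{Z}_2^4$ or $\mathbb{Z}_2^6$.

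The heart of the argument is the explicit quantum solution. For Arkhipov's games the operators $X_i$ are always tensor products of Pauli matrices on either two qubits (giving $H \cong \mathbb{C}^4$) or three qubits (giving $H \cong \mathbb{C}^8$), depending on the combinatorics of $Z$; this is the dichotomy that yields either $\mathbb{Z}_2^4$ or $\mathbb{Z}_2^6$. I would take $L$ to be the full group of tensor-product Pauli operators, namely $\mathbb{Z}_2^4$ with cocycle $\psi_{\mathrm{P}^2}$ of~\eqref{eq:Pauliproduct}, or $\mathbb{Z}_2^6$ with the analogously defined $\psi_{\mathrm{P}^3}$, and let $\{U_a\}$ be the corresponding nice unitary error basis. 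The first hypothesis of Proposition~\ref{prop:binaryconstraint} --- that any $A$ commuting with every $X_i$ is scalar --- amounts to irreducibility of the quantum solution, which holds because the Pauli operators appearing in an Arkhipov solution generate the full matrix algebra $\End(H)$; any operator commuting with a generating set of $\End(H)$ is scalar by Schur's lemma. The second hypothesis follows because conjugation of one tensor-Pauli operator by another always returns the same operator up to a sign $p_i^a \in \{+1,-1\}$, which is exactly equation~\eqref{eq:conjugating}; this is the familiar statement that Pauli operators either commute or anticommute.

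With both hypotheses verified, Proposition~\ref{prop:binaryconstraint} immediately gives an embedding $L \hookrightarrow \Aut(X_0(Z))$ and an isomorphism $X(Z,l^*) \cong X_0(Z)_{L,\psi}$, which is the claimed statement; coisotropy of the stabilisers is automatic since it is a standing hypothesis of Proposition~\ref{prop:binaryconstraint} (and is in any case guaranteed by the fact that~\eqref{eq:proofMP} produces an honest classical graph $\Gamma^{\mathcal{F}}$, via Theorem~\ref{thm:coisotropic}). I expect the main obstacle to be not any single deep step but the bookkeeping required to extract, uniformly in $Z$, a tensor-Pauli quantum solution from Arkhipov's and Lupini et al.'s definitions and to show it is always irreducible and always lands in exactly $\mathbb{Z}_2^4$ or $\mathbb{Z}_2^6$. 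In particular I would need to argue that the non-planarity of $Z$ forces the solution to require genuinely entangled (two- or three-qubit) Pauli observables --- this is where the topological input of Arkhipov's construction enters --- and to check that the marked vertex $l^*$ is what breaks the symmetry between the homogeneous and inhomogeneous systems so that the resulting Frobenius monoid is Morita non-trivial and the graphs are genuinely non-isomorphic. The sign data $p_i^a$ and the verification that $\{p_i^a\}_i$ solves $\mathcal{F}_0$ for each $a$ is then a routine consequence of the (anti)commutation structure, so the real work is organising the solution's structure rather than any hard new estimate.
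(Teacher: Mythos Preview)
Your proposal is correct and follows the same route as the paper: reduce to Proposition~\ref{prop:binaryconstraint}, verify its two hypotheses for Arkhipov's Pauli-tensor quantum solution, and read off the embedding $L\hookrightarrow\Aut(X_0(Z))$ together with $X(Z,l^*)\cong X_0(Z)_{L,\psi}$.

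Two small points where the paper is sharper than your write-up. First, the dichotomy ``two qubits versus three qubits'' is not just ``depending on the combinatorics of $Z$''; the paper pins it down via the Pontryagin--Kuratowski theorem: any connected non-planar $Z$ contains a topological minor isomorphic to $K_{3,3}$ or $K_5$, and Arkhipov's construction for general $Z$ reuses exactly the operators from the magic square or magic pentagram solution for that minor (the remaining variables are assigned identities). This is what guarantees both that the operators $\{X_i\}$ generate the full matrix algebra (your irreducibility check) and that $(L,\psi)$ is literally $(\mathbb{Z}_2^4,\psi_{\mathrm{P}^2})$ or $(\mathbb{Z}_2^6,\psi_{\mathrm{P}^3})$ rather than some other subgroup --- so what you flagged as ``bookkeeping'' has a clean one-line structural explanation. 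Second, coisotropy of the stabilisers is not a stated hypothesis of Proposition~\ref{prop:binaryconstraint}; your parenthetical is the right justification (classicality of $\Gamma^{\mathcal{F}}$ forces it via Theorem~\ref{thm:coisotropic}), but you should not call it a ``standing hypothesis''.
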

\begin{proof} If $Z$ is the bipartite complete graph $K_{3,3}$ or the complete graph $K_5$ with arbitrary specified vertex $l^*$, the associated pseudo-telepathic pair $(X_0(Z), X(Z, l^*))$ arise, respectively, from the well-known magic square and magic pentagram binary constraint system (see~\cite{Arkhipov2012} for more details). In both cases, there are quantum solutions (see \eqref{eq:quantumsolution} and \cite[Figure II.2]{Arkhipov2012}) consisting of two-fold and three-fold tensor products of Pauli matrices, fulfilling the conditions of Proposition~\ref{prop:binaryconstraint} for the subgroups $\mathbb{Z}_2^4$ and $\mathbb{Z}_2^6$ with 2-cocycle $\psi_{\mathrm{P}^2}$ and $\psi_{\mathrm{P}^3}$, respectively, with the corresponding Pauli tensor product UEBs.  
Thus, $X(K_{3,3}, l^*)$ is isomorphic to $X_0(K_{3,3})_{\mathbb{Z}_2^4, \psi_{\mathrm{P}^2}}$, while $X(K_5, l^*)$ is isomorphic to $X_0(K_5)_{\mathbb{Z}_2^6, \psi_{\mathrm{P}^3}}$.

For a general connected non-planar graph $Z$, Arkhipov chooses a topological minor isomorphic either to $K_{3,3}$ or $K_5$ (such a topological minor exists due to the Pontryagin-Kuratowski theorem) and constructs a quantum solution which contains precisely the operators from the quantum solution to $K_{3,3}$ or $K_{5}$, respectively, reducing the problem to either the magic square or magic pentagram. Thus, the obtained quantum solution again fulfils the conditions of Proposition~\ref{prop:binaryconstraint} with $(L.\psi)$ isomorphic to either $\left(\mathbb{Z}_2^4, \psi_{\mathrm{P}^2}\right)$ or $\left(\mathbb{Z}_2^6, \psi_{\mathrm{P}^3}\right)$. \end{proof}


\bibliographystyle{plainurl}
\bibliography{QSET}

\appendix


\section{Morita equivalence and dagger 2-categories}\label{app:2categorypictures}
In this appendix, we prove the correspondence between equivalence classes of certain objects and Morita equivalence classes of certain Frobenius monads in dagger $2$-categories in which dagger idempotents split. This is the main technical result needed for our classification of quantum isomorphic quantum graphs in Corollary~\ref{cor:bigclassification}.

The basic idea of this section can be summarised as follows.
A dualisable 1-morphism $S:B\to A$ in a dagger $2$-category $\mathbb{B}$ gives rise to a dagger Frobenius monoid $S\circ \conj{S}$ in $\mathbb{B}(A,A)$.
It can be shown that two such Frobenius monoids are $*$-isomorphic if and only if the underlying 1-morphisms $S:B\to A$ and $S':B'\to A$ are equivalent (in the sense that there is an equivalence $\epsilon:B\to B'$ such that $ S' \circ \epsilon \cong S$). If we only consider Frobenius monoids up to Morita equivalence, then we cannot recover the 1-morphism $S$ but we can still recover $B$, the source of $S$, up to equivalence. This is the content of Theorem~\ref{thm:maintechnical}.

Here, we use the graphical calculus of 2-categories; objects are depicted as shaded regions, 1-morphisms $f:A\to B$ are depicted as wires bounded by $A$ on the right and $B$ on the left and 2-morphisms are depicted by vertices. Our diagrams should be read from bottom to top and from right to left to match the conventional right-to-left notation of function --- and 1-morphism --- composition. For an introduction to this calculus, see~\cite{Selinger2010,Marsden2014}. 

Recall that a 1-morphism $F:A\to B$ in a 2-category is an \textit{equivalence} if there is a 1-morphism $G:B\to A$ such that $F{\circ}G \cong 1_B$ and $G\circ F\cong 1_A$. Similarly, we say that a 1-morphism in a dagger 2-category is a \textit{dagger equivalence} if these 2-isomorphisms are also unitary.

In the following, we depict the objects $A$ and $B$ as white and blue regions, respectively. We say that a 1-morphism $S:B\to A$ in a dagger 2-category has a \textit{dual} $\conj{S}:A\to B$ if there are $2$-morphisms $\epsilon:\conj{S}{\circ}S\To 1_B$ and $\eta:1_A\To S{\circ}\conj{S}$, depicted as follows:
\def\ps{\vphantom{\conj{S}}}
\def\scl{1.25}
\begin{calign}
\label{eq:cupsandcaps}
\begin{tz}[zx,scale=\scl,yscale=1.1]
\clip (-0.5, -0.4) rectangle (1.5,1.4);
\draw [ string, draw] (0,0) to [out=up, in=up, looseness=2] (1,0);
\draw [blueregion] (0,0) to (-0.5,0) to (-0.5,1) to (1.5,1) to (1.5,0) to (1,0) to [out=up, in=up, looseness=2] (0,0);
\node[dimension,below] at (1,0) {$\ps S$};
\node[dimension, below] at (0,0) {$\conj{S}$}; 
\end{tz}
&
\begin{tz}[zx,scale=\scl,yscale=1.1, yscale=-1]
\clip (-0.5, -0.4) rectangle (1.5,1.4);
\draw [blueregion, string, draw] (0,0) to [out=up, in=up, looseness=2] (1,0);
\node[dimension,above] at (1,0) {$\conj{S}$};
\node[dimension, above] at (0,0) {$\ps S$}; 
\end{tz}
&
\begin{tz}[zx,scale=\scl,yscale=1.1]
\clip (-0.5, -0.4) rectangle (1.5,1.4);
\draw [blueregion, string, draw] (0,0) to [out=up, in=up, looseness=2] (1,0);
\node[dimension,below] at (1,0) {$\conj{S}$};
\node[dimension, below] at (0,0) {$\ps S$}; 
\end{tz} 
&
\begin{tz}[zx,scale=\scl,yscale=1.1, yscale=-1]
\clip (-0.5, -0.4) rectangle (1.5,1.4);
\draw [string, draw] (0,0) to [out=up, in=up, looseness=2] (1,0);
\draw [blueregion] (0,0) to (-0.5,0) to (-0.5,1) to (1.5,1) to (1.5,0) to (1,0) to [out=up, in=up, looseness=2] (0,0);
\node[dimension,above] at (1,0) {$\ps S$};
\node[dimension, above] at (0,0) {$\conj{S}$}; 
\end{tz}
\\\nonumber
\epsilon:\conj{S}{\circ}S\To 1_B & \eta:1_A\To S{\circ}\conj{S} & \eta^\dagger:S{\circ}\conj{S}\To 1_A & \epsilon^\dagger:1_B\To \conj{S}{\circ}S
\end{calign}
These must satisfy the following `snake' equations:
\def\scl{1.4}
\begin{calign}
\label{eq:colouredcupscaps}
\begin{tz}[zx,scale=\scl,scale=1,xscale=-1]
\draw [blueregion] (0,0.25) to (0,1) to [out=up, in=up, looseness=2] (-0.5,1) to [out=down, in=down, looseness=2] (-1,1) to (-1,1.75) to (0.5,1.75) to (0.5,0.25);
\draw [string] (0,0.25) to (0,1) to [out=up, in=up, looseness=2] (-0.5,1) to [out=down, in=down, looseness=2] (-1,1) to (-1,1.75);
\end{tz}
~=~
\begin{tz}[zx,scale=\scl,scale=1,xscale=-1]
\draw [blueregion] (0,0.25) to (0,1.75) to (0.5,1.75) to (0.5,0.25);
\draw [string] (0,0.25) to (0,1.75);
\end{tz}
~=~
\begin{tz}[zx,scale=\scl,scale=1,scale=-1]
\draw [blueregion] (0,0.25) to (0,1) to [out=up, in=up, looseness=2] (-0.5,1) to [out=down, in=down, looseness=2] (-1,1) to (-1,1.75) to (0.5,1.75) to (0.5,0.25);
\draw [string] (0,0.25) to (0,1) to [out=up, in=up, looseness=2] (-0.5,1) to [out=down, in=down, looseness=2] (-1,1) to (-1,1.75);
\end{tz}
&
\begin{tz}[zx,scale=\scl,scale=1]
\draw [blueregion] (0,0.25) to (0,1) to [out=up, in=up, looseness=2] (-0.5,1) to [out=down, in=down, looseness=2] (-1,1) to (-1,1.75) to (0.5,1.75) to (0.5,0.25);
\draw [string] (0,0.25) to (0,1) to [out=up, in=up, looseness=2] (-0.5,1) to [out=down, in=down, looseness=2] (-1,1) to (-1,1.75);
\end{tz}
~=~
\begin{tz}[zx,scale=\scl,scale=1]
\draw [blueregion] (0,0.25) to (0,1.75) to (0.5,1.75) to (0.5,0.25);
\draw [string] (0,0.25) to (0,1.75);
\end{tz}
~=~
\begin{tz}[zx,scale=\scl,scale=1,yscale=-1]
\draw [blueregion] (0,0.25) to (0,1) to [out=up, in=up, looseness=2] (-0.5,1) to [out=down, in=down, looseness=2] (-1,1) to (-1,1.75) to (0.5,1.75) to (0.5,0.25);
\draw [string] (0,0.25) to (0,1) to [out=up, in=up, looseness=2] (-0.5,1) to [out=down, in=down, looseness=2] (-1,1) to (-1,1.75);
\end{tz}
\end{calign}
We say that a 1-morphism $S:B\to A$ is \textit{special} if it has a dual $\conj{S}:A\to B$ such that, in addition, the following holds:
\begin{equation}\label{eq:specialdagger}
\def\scl{1.25}
\begin{tz}[zx,scale=\scl,yscale=1., yscale=-1]
\draw[blueregion] (0.,-1) rectangle (2,1);
\draw [fill=white, string, draw]  (1,0) circle (0.5); 
\end{tz}
~~=~~
\begin{tz}[zx,scale=\scl,yscale=1., yscale=-1]
\draw[blueregion] (0.,-1) rectangle (2,1);
\end{tz}
\end{equation}
\noindent
If $S:B\to A$ is a special $1$-morphism, then $S{\circ}\conj{S} $ is a special dagger Frobenius monoid in $\mathbb{B}(A,A)$. Conversely, we say that a special dagger Frobenius monoid is \emph{split} if it is $*$-isomorphic to $S\circ\conj{S}$ for some special $1$-morphism.

We now state and prove the main technical result needed for the classification of quantum isomorphic graphs in Corollary~\ref{cor:bigclassification}. We believe the content and ideas of the following theorem to be well known; however, we could not find a similar statement and proof of appropriate generality in the literature.
\ignore{
\begin{definition} Let $\mathbb{B}$ be a $2$-category and let $A$ be an object of $\mathbb{B}$. We say that two $1$-morphisms $S:A\to B$ and $S':A\to B'$ are \emph{equivalent} if there is an equivalence $\epsilon: B \to B'$ and a $2$-isomorphism $\epsilon\circ S \cong S'$. 

In a dagger $2$-category, we say that $S$ and $S'$ are \emph{dagger equivalent} if $\epsilon$ is a dagger equivalence and the $2$-isomorphism $\epsilon \circ S \cong S'$ is unitary.
\end{definition}
\begin{theorem} Let $\mathbb{B}$ be a $2$-category in which idempotents split and let $S:A\to B$ and $S':A\to B'$ be two special $1$-morphisms. Then $S$ and $S'$ are equivalent (Definition~) if and only if the special dagger Frobenius algebras $\overline{S}\circ S$ and $\overline{S}' \circ S'$ are $*$-isomorphic. 
\end{theorem}}
\begin{theorem} \label{thm:maintechnical}Let $\mathbb{B}$ be a dagger $2$-category in which all dagger idempotents split and let $S:X\to A$ and $P:Y \to A$ be special $1$-morphisms. Then, the special dagger Frobenius monoids $S\circ \overline{S} $ and $P\circ \overline{P}$ are Morita equivalent if and only if $X$ is dagger equivalent to $Y$.
\end{theorem}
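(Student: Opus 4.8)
The plan is to prove both directions by constructing the relevant data explicitly using the graphical calculus for dagger $2$-categories. The key conceptual point is that a special $1$-morphism $S:X\to A$ exhibits $X$ as the ``splitting object'' of the Frobenius monoid $S\circ\overline{S}$ in $\mathbb{B}(A,A)$, and that Morita bimodules are precisely what is needed to transport this splitting data between two such Frobenius monoids.

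For the easy direction, suppose $X$ is dagger equivalent to $Y$ via a dagger equivalence $\epsilon:Y\to X$ with unitary $2$-isomorphism witnessing $\epsilon\circ\overline{\epsilon}\cong 1_X$ and $\overline{\epsilon}\circ\epsilon\cong 1_Y$. Then $S\circ\epsilon:Y\to A$ is again a special $1$-morphism, and I would first check that $(S\circ\epsilon)\circ\overline{(S\circ\epsilon)} \cong S\circ\overline{S}$ as dagger Frobenius monoids (using that $\epsilon\circ\overline{\epsilon}\cong 1_X$ unitarily), so it suffices to treat the case $Y=X$ and $P=S\circ u$ for a unitary $2$-automorphism, or more directly to show that any two special $1$-morphisms out of dagger-equivalent sources yield $*$-isomorphic, hence Morita equivalent, Frobenius monoids. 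The final appeal is to the remark in the excerpt that $*$-isomorphic special dagger Frobenius monoids are Morita equivalent.

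For the hard direction, suppose $S\circ\overline{S}$ and $P\circ\overline{P}$ are Morita equivalent via dagger bimodules ${}_{S\overline{S}}M_{P\overline{P}}$ and ${}_{P\overline{P}}N_{S\overline{S}}$ satisfying the equations~\eqref{eq:alldataMorita}. The plan is to build a $1$-morphism $\epsilon:Y\to X$ and show it is a dagger equivalence. Concretely, I would consider the composite $2$-endomorphism on the $1$-morphism $\overline{S}\circ$(something)$\circ P$ or, more precisely, form the object obtained by splitting a dagger idempotent built from $M$ together with the cup/cap data of $S$ and $P$: since $X$ splits $S\circ\overline{S}$ and $Y$ splits $P\circ\overline{P}$, the bimodule $M$ induces a $1$-morphism $\epsilon:=\overline{S}\circ m_M\circ P$ (appropriately interpreted via the splitting isometries) from $Y$ to $X$. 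The two Morita equations then translate, after splitting the idempotents defining $\otimes_{P\overline{P}}$ and $\otimes_{S\overline{S}}$, into unitary $2$-isomorphisms $\epsilon\circ\overline{\epsilon}\cong 1_X$ and $\overline{\epsilon}\circ\epsilon\cong 1_Y$, exhibiting the dagger equivalence. The main technical content is checking that the isometries arising from splitting all the relevant dagger idempotents assemble coherently, and that the ``bubble-removal'' consequences of specialness~\eqref{eq:specialdagger} make the composites collapse to identities rather than merely to isomorphic $1$-morphisms.

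The step I expect to be the main obstacle is precisely this last assembly: verifying that the $1$-morphism extracted from the Morita bimodule $M$ is genuinely a dagger equivalence and not just an equivalence, i.e.\ that the unitarity is preserved throughout. This requires using the dagger compatibility of the bimodule actions (Definition~\ref{def:daggerbimodule}) at each step, ensuring that the splitting isometries $i$ satisfy $i^\dagger i = 1$ and that the witnessing $2$-isomorphisms built from the triangle maps in~\eqref{eq:alldataMorita} are unitary. I would organize the computation so that each graphical manipulation is justified by either a snake equation~\eqref{eq:colouredcupscaps}, specialness~\eqref{eq:specialdagger}, a bimodule axiom~\eqref{eq:bimodule}, or one of the Morita identities~\eqref{eq:alldataMorita}, and I would explicitly invoke the hypothesis that dagger idempotents split in $\mathbb{B}$ each time a relative tensor product is formed, since that is what guarantees the requisite objects $X$ and $Y$ are recoverable as honest splittings.
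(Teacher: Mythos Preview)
Your approach to the hard direction (Morita equivalent implies dagger equivalent) is essentially the paper's: form dagger idempotents on $\overline{S}\circ M\circ P$ and $\overline{P}\circ N\circ S$ from the bimodule actions together with the cup/cap data of $S$ and $P$, split them to obtain $1$-morphisms $L:Y\to X$ and $R:X\to Y$, and then verify that these constitute a dagger equivalence using specialness~\eqref{eq:specialdagger} and the Morita identities~\eqref{eq:alldataMorita}. The paper carries out this graphical computation explicitly; your sketch is vaguer but points in the same direction, and you correctly identify unitarity of the witnessing $2$-isomorphisms as the place where the dagger-bimodule axioms do the work.

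There is, however, a genuine gap in your ``easy'' direction. You correctly observe that $S\circ\overline{S}$ is $*$-isomorphic to $(S\circ\epsilon)\circ\overline{(S\circ\epsilon)}$, reducing to the case where $S\circ\epsilon$ and $P$ are both special $1$-morphisms $Y\to A$. But you then assert that any two special $1$-morphisms out of the same source yield \emph{$*$-isomorphic} Frobenius monoids. This is false: two $1$-morphisms $Q,P:Y\to A$ give in general only Morita equivalent Frobenius monoids $Q\circ\overline{Q}$ and $P\circ\overline{P}$, not $*$-isomorphic ones (cf.\ Remark~\ref{rem:classify*iso}, where $*$-isomorphism classes classify the $1$-morphisms themselves, while Morita classes classify only their sources). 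Your alternative suggestion that one may assume $P=S\circ u$ for a unitary $2$-automorphism is likewise unjustified: an arbitrary $P:Y\to A$ need not be related to $S\circ\epsilon$ in this way. What is actually needed is to exhibit explicit Morita bimodules, and this is what the paper does: it takes $S\circ L\circ\overline{P}$ and $P\circ R\circ\overline{S}$ (with $L,R$ the dagger equivalence) and writes down the triangle maps directly from the unitary witnesses $\eta:L\circ R\Rightarrow 1_X$ and $\mu:R\circ L\Rightarrow 1_Y$, verifying~\eqref{eq:alldataMorita}. The easy direction therefore still requires a bimodule construction, not an appeal to $*$-isomorphism.
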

\begin{proof}
1. Suppose that $S{\circ} \conj{S}$ and $P{\circ}\conj{P}$ are Morita equivalent. Then we claim that $X$ (depicted as a blue region) and $Y$ (depicted as a red region) are dagger equivalent. The object $A$ is depicted as the white region and the invertible dagger bimodules ${}_{S{\circ}\conj{S}}M_{P{\circ}\conj{P}}$ and ${}_{P{\circ}\conj{P}}N_{S{\circ}\conj{S}}$ are depicted as follows:
\def\vs{\vphantom{\conj{S}}}
\def\vsp{\vphantom{\conj{S}'}}
\begin{calign}\nonumber
\begin{tz}[zx,every to/.style={out=up, in=down}]
\draw (0,0) to (0,3);
\path[blueregion, draw, string] (-1.2, 0) to [in=down] (-0.4,1.6) to (-0.2,1.6) to [out=down, in=up] (-0.8,0);
\path[redregion, draw, string] (1.2, 0) to [in=down] (0.4,1.6) to (0.2,1.6) to [out=down, in=up] (0.8,0);
\node[bigbox] at (0,1.6){};
\node[dimension,below] at (0,0) {$\vs M$};
\node[dimension, below] at (1.25,0) {$\conj{P}$};
\node[dimension, below] at (0.75,0) {$\vs P$};
\node[dimension, below] at (-1.25,0) {$\vs S$};
\node[dimension, below] at (-0.75,0) {$\vs \conj{S}$};
\end{tz}
&
\begin{tz}[zx,every to/.style={out=up, in=down}]
\draw (0,0) to (0,3);
\path[redregion, draw, string] (-1.2, 0) to [in=down] (-0.4,1.6) to (-0.2,1.6) to [out=down, in=up] (-0.8,0);
\path[blueregion, draw, string] (1.2, 0) to [in=down] (0.4,1.6) to (0.2,1.6) to [out=down, in=up] (0.8,0);
\node[bigbox] at (0,1.6){};
\node[dimension,below] at (0,0) {$\vs N$};
\node[dimension, below] at (1.25,0) {$\conj{S}$};
\node[dimension, below] at (0.75,0) {$\vs S$};
\node[dimension, below] at (-1.25,0) {$\vs P$};
\node[dimension, below] at (-0.75,0) {$\conj{P}$};
\end{tz}
\end{calign}
Following~\eqref{eq:alldataMorita}, we have additional 2-morphisms fulfilling the following equations:

\def\scal{0.8}
\def\trianglescale{0.8}
\def\yscl{1.2}
\def\boxscale{0.8}
\begin{calign}\nonumber
\begin{tz}[zx,scale=\scal,yscale=\yscl]
\draw[blueregion,draw] (-0.2,0) to (-0.2,0.8) to (0.2,0.8) to (0.2,0);
\draw[blueregion,draw] (-0.2,3) to (-0.2,2.2) to (0.2,2.2) to (0.2,3);
\draw (-1,1) to (-1,2);
\draw (1,1) to (1,2);
\draw[blueregion, draw,string] (-2,0) to [out=up, in=down] (-1.2,1.5) to (-1.4,1.5) to [out=down, in=up] (-2.4,0);
\draw[blueregion, draw, string] (2,0) to [out=up, in=down] (1.2,1.5) to (1.4,1.5) to [out=down, in=up] (2.4,0);
\node[bigbox] at (-1,1.5){};
\node[bigbox] at (1,1.5){};
\node[triangleup=2] at (0,2){};
\node[triangledown=2] at (0,1){};
\end{tz}
~=~
\begin{tz}[zx,scale=\scal,yscale=\yscl]
\path[blueregion, even odd rule] 
(-1.6,0) to [out=up, in=down] (-0.2,3) to (0.2,3) to [out=down, in=up] (1.6,0)
(-1.2,0) to [out=up, in=up, looseness=2.5] (-0.2,0)
(0.2,0) to [out=up, in=up, looseness=2.5] (1.2,0);
\draw (-1.6,0) to [out=up, in=down] (-0.2,3) 
 (0.2,3) to [out=down, in=up] (1.6,0)
(-1.2,0) to [out=up, in=up, looseness=2.5] (-0.2,0)
(0.2,0) to [out=up, in=up, looseness=2.5] (1.2,0);
\end{tz}
&
\def\scl{3/4}
\def\xscl{4/3}
\begin{tz}[zx,scale=\scl*\scal,xscale=\xscl,yscale=\yscl]
\draw (0,0) to (0,1);
\draw (2,0) to (2,1);
\draw[blueregion, draw] (0.8,1.2) to (0.8,2.8) to (1.2,2.8) to (1.2,1.2);
\draw (0,3) to (0,4);
\draw (2,3) to (2,4);
\node[triangleup=2] at (1,1){};
\node[triangledown=2] at (1,3){};
\end{tz}
~=~~
\begin{tz}[zx,every to/.style={out=up, in=down},scale=\scl*\scal,xscale=\xscl,,yscale=\yscl]
\draw (0,0.) to (0,4);
\draw (2,0.) to (2,4);
\draw[redregion, draw] (0.2,2.25) to [out=down, in=left] (1,1.) to [out=right, in=down] (1.8, 2.25) to (1.6,2.25) to [out=down, in=right] (1,1.4) to [out=left, in=down] (0.4,2.25);
\node[bigbox] at (0,2.25){};
\node[bigbox] at (2,2.25){};
\end{tz}
&
\begin{tz}[zx,scale=\scal,yscale=\yscl]
\draw[redregion,draw] (-0.2,0) to (-0.2,0.8) to (0.2,0.8) to (0.2,0);
\draw[redregion,draw] (-0.2,3) to (-0.2,2.2) to (0.2,2.2) to (0.2,3);
\draw (-1,1) to (-1,2);
\draw (1,1) to (1,2);
\draw[redregion, draw,string] (-2,0) to [out=up, in=down] (-1.2,1.5) to (-1.4,1.5) to [out=down, in=up] (-2.4,0);
\draw[redregion, draw, string] (2,0) to [out=up, in=down] (1.2,1.5) to (1.4,1.5) to [out=down, in=up] (2.4,0);
\node[bigbox] at (-1,1.5){};
\node[bigbox] at (1,1.5){};
\node[triangleup=2,fill=\zxblack] at (0,2){};
\node[triangledown=2,fill=\zxblack] at (0,1){};
\end{tz}
~=~
\begin{tz}[zx,scale=\scal,yscale=\yscl]
\path[redregion, even odd rule] 
(-1.6,0) to [out=up, in=down] (-0.2,3) to (0.2,3) to [out=down, in=up] (1.6,0)
(-1.2,0) to [out=up, in=up, looseness=2.5] (-0.2,0)
(0.2,0) to [out=up, in=up, looseness=2.5] (1.2,0);
\draw (-1.6,0) to [out=up, in=down] (-0.2,3) 
 (0.2,3) to [out=down, in=up] (1.6,0)
(-1.2,0) to [out=up, in=up, looseness=2.5] (-0.2,0)
(0.2,0) to [out=up, in=up, looseness=2.5] (1.2,0);
\end{tz}
&
\def\scl{3/4}
\def\xscl{4/3}
\begin{tz}[zx,scale=\scl*\scal,xscale=\xscl,yscale=\yscl]
\draw (0,0) to (0,1);
\draw (2,0) to (2,1);
\draw[redregion, draw] (0.8,1.2) to (0.8,2.8) to (1.2,2.8) to (1.2,1.2);
\draw (0,3) to (0,4);
\draw (2,3) to (2,4);
\node[triangleup=2,fill=\zxblack] at (1,1){};
\node[triangledown=2,fill=\zxblack] at (1,3){};
\end{tz}
~=~~
\begin{tz}[zx,every to/.style={out=up, in=down},scale=\scl*\scal,xscale=\xscl,yscale=\yscl]
\draw (0,0.) to (0,4);
\draw (2,0.) to (2,4);
\draw[blueregion, draw] (0.2,2.25) to [out=down, in=left] (1,1.) to [out=right, in=down] (1.8, 2.25) to (1.6,2.25) to [out=down, in=right] (1,1.4) to [out=left, in=down] (0.4,2.25);
\node[bigbox] at (0,2.25){};
\node[bigbox] at (2,2.25){};
\end{tz}
\end{calign}
\def\trianglescale{1}%
\def\boxscale{1}%
\noindent
In particular, these equations imply the following:
\begin{calign}\label{eq:implied}
\begin{tz}[zx]
\draw[blueregion,draw] (-0.2,0) to (-0.2,0.8) to (0.2,0.8) to (0.2,0);
\draw (-1,1) to (-1,3);
\draw (1,1) to (1,3);
\draw[blueregion, draw,string] (-2,0) to [out=up, in=down] (-1.2,2) to (-1.4,2) to [out=down, in=up] (-2.4,0);
\draw[blueregion, draw, string] (2,0) to [out=up, in=down] (1.2,2) to (1.4,2) to [out=down, in=up] (2.4,0);
\node[bigbox] at (-1,2){};
\node[bigbox] at (1,2){};
\node[triangledown=2] at (0,1){};
\end{tz}
~=~
\begin{tz}[zx]
\path[blueregion, even odd rule] 
(-1.6,0) to [out=up, in=down] (-0.2,2.25) to (0.2,2.25) to [out=down, in=up] (1.6,0)
(-1.2,0) to [out=up, in=up, looseness=2.5] (-0.2,0)
(0.2,0) to [out=up, in=up, looseness=2.5] (1.2,0);
\draw (-1.6,0) to [out=up, in=down] (-0.2,2.25) 
 (0.2,2.25) to [out=down, in=up] (1.6,0)
(-1.2,0) to [out=up, in=up, looseness=2.5] (-0.2,0)
(0.2,0) to [out=up, in=up, looseness=2.5] (1.2,0);
\draw (-1,2.25) to (-1,3);
\draw (1,2.25) to (1,3);
\node[triangledown=2] at (0, 2.25){};
\end{tz}
&
\begin{tz}[zx]
\draw (-1,0) to (-1,3);
\draw (1,0) to (1,3);
\draw[blueregion, draw,string] (-2,0) to [out=up, in=down] (-1.2,2) to (-1.4,2) to [out=down, in=up] (-2.4,0);
\draw[blueregion, draw, string] (2,0) to [out=up, in=down] (1.2,2) to (1.4,2) to [out=down, in=up] (2.4,0);
\draw[redregion, draw,string] (-0.8, 2) to [out=down, in=down,looseness=2.2]  (0.8,2) to (0.6,2) to [out=down, in=down,looseness=2] (-0.6,2);
\node[bigbox] at (-1,2){};
\node[bigbox] at (1,2){};
\end{tz}
~=~
\begin{tz}[zx]
\draw (-1,0) to (-1,0.5)
(1,0) to (1,0.5) 
(-1,2.5) to (-1,3)
(1,2.5) to (1,3);
\path[blueregion] (2.4,0) to [out=up, in=down] (0.2,2.5) to (-0.2,2.5) to [out=down, in=up]  (-2.4,0) to (-2,0) to [out=up, in=up,looseness=1.2] (-0.2,0.5) to (0.2,0.5) to [out=up, in=up,looseness=1.2] (2,0) to (2.4,0);
\draw (2.4,0) to [out=up, in=down] (0.2,2.5) 
(-0.2,2.5) to [out=down, in=up]  (-2.4,0) 
 (-2,0) to [out=up, in=up,looseness=1.2] (-0.2,0.5)
  (0.2,0.5) to [out=up, in=up,looseness=1.2] (2,0) ;
\node[triangleup=2] at (0,0.5){};
\node[triangledown=2] at (0,2.5){};
\end{tz}
\end{calign}
Since ${}_{S{\circ}\conj{S}}M_{P{\circ}\conj{P}}$ and ${}_{P{\circ}\conj{P}}N_{S{\circ}\conj{S}}$ are dagger bimodules, it follows that the following two 2-morphisms are dagger idempotent:
\begin{calign}\nonumber
\def\h{1.75}
\begin{tz}[zx]
\draw (0,0) to (0,3);
\path[redregion] (1,0) to [out=up, in=down] (0.2,\h) to (0.4,\h) to [out=down, in=down, looseness=2.5] (1,\h) to (1,3) to (2,3) to (2,0);
\draw (1,0) to [out=up, in=down] (0.2,\h) to (0.4,\h) to [out=down, in=down, looseness=2.5] (1,\h) to (1,3);
\path[blueregion] (-1,0) to [out=up, in=down] (-0.2,\h) to (-0.4,\h) to [out=down, in=down, looseness=2.5] (-1,\h) to (-1,3) to (-2,3) to (-2,0);
\draw (-1,0) to [out=up, in=down] (-0.2,\h) to (-0.4,\h) to [out=down, in=down, looseness=2.5] (-1,\h) to (-1,3);
\node[bigbox] at (0,\h){};
\end{tz}
&
\def\h{1.75}
\begin{tz}[zx]
\draw (0,0) to (0,3);
\path[blueregion] (1,0) to [out=up, in=down] (0.2,\h) to (0.4,\h) to [out=down, in=down, looseness=2.5] (1,\h) to (1,3) to (2,3) to (2,0);
\draw (1,0) to [out=up, in=down] (0.2,\h) to (0.4,\h) to [out=down, in=down, looseness=2.5] (1,\h) to (1,3);
\path[redregion] (-1,0) to [out=up, in=down] (-0.2,\h) to (-0.4,\h) to [out=down, in=down, looseness=2.5] (-1,\h) to (-1,3) to (-2,3) to (-2,0);
\draw (-1,0) to [out=up, in=down] (-0.2,\h) to (-0.4,\h) to [out=down, in=down, looseness=2.5] (-1,\h) to (-1,3);
\node[bigbox] at (0,\h){};
\end{tz}
\end{calign}
Splitting these idempotents produces 2-morphisms (here depicted as circular white nodes) fulfilling the following equations:
\def\xscl{0.85}
\def\scl{0.85}
\begin{calign}\label{eq:colorsplit}
\def\h{1.75}
\begin{tz}[zx,scale=\scl,xscale=\xscl]
\draw (0,0) to (0,3);
\path[redregion] (1,0) to [out=up, in=down] (0.2,\h) to (0.4,\h) to [out=down, in=down, looseness=2.5] (1,\h) to (1,3) to (2,3) to (2,0);
\draw (1,0) to [out=up, in=down] (0.2,\h) to (0.4,\h) to [out=down, in=down, looseness=2.5] (1,\h) to (1,3);
\path[blueregion] (-1,0) to [out=up, in=down] (-0.2,\h) to (-0.4,\h) to [out=down, in=down, looseness=2.5] (-1,\h) to (-1,3) to (-2,3) to (-2,0);
\draw (-1,0) to [out=up, in=down] (-0.2,\h) to (-0.4,\h) to [out=down, in=down, looseness=2.5] (-1,\h) to (-1,3);
\node[bigbox] at (0,\h){};
\end{tz}=
\begin{tz}[zx,scale=\scl,xscale=\xscl]
\clip (-1.75,0) rectangle (1.75,3);
\path[blueregion] (-2,0) rectangle (0,3);
\path[redregion] (2,0) rectangle (0,3);
\draw[fill=white] (-1,0) to [out=up, in=-135] (0,1) to [out=-45, in=up] (1,0);
\draw[fill=white] (-1,3) to [out=down, in=135] (0,2) to [out=45, in=down] (1,3);
\draw (0,0) to (0,3);
\node[zxvertex=\zxwhite, zxdown] at (0,1){};
\node[zxvertex=\zxwhite, zxup] at (0,2){};
\end{tz}
&
\begin{tz}[zx,scale=\scl,xscale=\xscl]
\clip (-1.5,0) rectangle (1.5,3);
\path[blueregion] (-2,0) rectangle (0,3);
\path[redregion] (2,0) rectangle (0,3);
\draw[fill=white] (0,0.75) to [out=135, in=down] (-0.75, 1.5) to [out=up, in=-135] (0,2.25) to [out=-45, in=up] (0.75, 1.5) to [out=down, in=45] (0,0.75);
\draw (0,0) to (0,3);
\node[zxvertex=\zxwhite, zxup] at (0,0.75){};
\node[zxvertex=\zxwhite, zxdown] at (0,2.25){};
\end{tz}
~=~
\begin{tz}[zx,scale=\scl,xscale=\xscl]
\clip (-1.5,0) rectangle (1.5,3);
\path[blueregion] (-2,0) rectangle (0,3);
\path[redregion] (2,0) rectangle (0,3);
\draw (0,0) to (0,3);
\end{tz}
&
\def\h{1.75}
\begin{tz}[zx,scale=\scl,xscale=\xscl]
\draw (0,0) to (0,3);
\path[blueregion] (1,0) to [out=up, in=down] (0.2,\h) to (0.4,\h) to [out=down, in=down, looseness=2.5] (1,\h) to (1,3) to (2,3) to (2,0);
\draw (1,0) to [out=up, in=down] (0.2,\h) to (0.4,\h) to [out=down, in=down, looseness=2.5] (1,\h) to (1,3);
\path[redregion] (-1,0) to [out=up, in=down] (-0.2,\h) to (-0.4,\h) to [out=down, in=down, looseness=2.5] (-1,\h) to (-1,3) to (-2,3) to (-2,0);
\draw (-1,0) to [out=up, in=down] (-0.2,\h) to (-0.4,\h) to [out=down, in=down, looseness=2.5] (-1,\h) to (-1,3);
\node[bigbox] at (0,\h){};
\end{tz}=
\begin{tz}[zx,scale=\scl,xscale=\xscl]
\clip (-1.75,0) rectangle (1.75,3);
\path[redregion] (-2,0) rectangle (0,3);
\path[blueregion] (2,0) rectangle (0,3);
\draw[fill=white] (-1,0) to [out=up, in=-135] (0,1) to [out=-45, in=up] (1,0);
\draw[fill=white] (-1,3) to [out=down, in=135] (0,2) to [out=45, in=down] (1,3);
\draw (0,0) to (0,3);
\node[zxvertex=\zxwhite, zxdown] at (0,1){};
\node[zxvertex=\zxwhite, zxup] at (0,2){};
\end{tz}
&
\begin{tz}[zx,scale=\scl,xscale=\xscl]
\clip (-1.5,0) rectangle (1.5,3);
\path[redregion] (-2,0) rectangle (0,3);
\path[blueregion] (2,0) rectangle (0,3);
\draw[fill=white] (0,0.75) to [out=135, in=down] (-0.75, 1.5) to [out=up, in=-135] (0,2.25) to [out=-45, in=up] (0.75, 1.5) to [out=down, in=45] (0,0.75);
\draw (0,0) to (0,3);
\node[zxvertex=\zxwhite, zxup] at (0,0.75){};
\node[zxvertex=\zxwhite, zxdown] at (0,2.25){};
\end{tz}
~=~
\begin{tz}[zx,scale=\scl,xscale=\xscl]
\clip (-1.5,0) rectangle (1.5,3);
\path[redregion] (-2,0) rectangle (0,3);
\path[blueregion] (2,0) rectangle (0,3);
\draw (0,0) to (0,3);
\end{tz}
&
\end{calign}
We claim that the resulting 1-morphisms $L:Y\to X$ and $R:X\to Y$ form a dagger equivalence. In fact, we claim that the following 2-morphisms are inverse to each other:
\begin{calign}\nonumber
\begin{tz}[zx]
\path[blueregion] (-2, -0.5) rectangle (2, 3);
\draw[fill=white, postaction={redregion,draw}] (-1,-0.5) to (-1,1) to (1,1) to (1,-0.5);
\draw[fill=white] (-1,0) to [out=45, in=135, looseness=1.5] (1,0) to [out=45, in=right,looseness=1.2] (0,2.5) to [out=left, in=135,looseness=1.2] (-1,0);
\draw (-1,0) to (-1,1);
\draw (1,0) to (1,1);
\draw[blueregion,draw] (-0.2,1.2) to (-0.2,1.5) to  [out=up, in=up, looseness=4] (0.2,1.5) to  (0.2,1.2);
\node[triangleup=2] at (0,1){};
\node[zxvertex=\zxwhite, zxup] at (-1,0){};
\node[zxvertex=\zxwhite, zxup] at (1,0){};
\end{tz}
&
\begin{tz}[zx,yscale=-1]
\path[blueregion] (-2, -0.5) rectangle (2, 3);
\draw[fill=white, postaction={redregion,draw}] (-1,-0.5) to (-1,1) to (1,1) to (1,-0.5);
\draw[fill=white] (-1,0) to [out=45, in=135, looseness=1.5] (1,0) to [out=45, in=right,looseness=1.2] (0,2.5) to [out=left, in=135,looseness=1.2] (-1,0);
\draw (-1,0) to (-1,1);
\draw (1,0) to (1,1);
\draw[blueregion,draw] (-0.2,1.2) to (-0.2,1.5) to  [out=up, in=up, looseness=4] (0.2,1.5) to  (0.2,1.2);
\node[triangledown=2] at (0,1){};
\node[zxvertex=\zxwhite, zxdown] at (-1,0){};
\node[zxvertex=\zxwhite, zxdown] at (1,0){};
\end{tz}\\\nonumber
\epsilon: L{\circ}R\To 1_X& \eta = \epsilon^\dagger:1_X\To L{\circ}R
\end{calign}
The equation $\epsilon \eta=1_{1_X}$ can be proven as follows:

\[\begin{tz}[zx]
\path[blueregion] (-2, -3.5) rectangle (2, 3.5);
\draw[fill=white, postaction={redregion,draw}] (-1,-1.5) to (-1,1.5) to (1,1.5) to (1,-1.5);
\draw[fill=white] (-1,0.5) to [out=45, in=135, looseness=1.5] (1,0.5) to [out=45, in=right,looseness=1.2] (0,3) to [out=left, in=135,looseness=1.2] (-1,0.5);
\draw[fill=white] (-1,-0.5) to [out=-45, in=-135, looseness=1.5] (1,-0.5) to [out=-45, in=right,looseness=1.2] (0,-3) to [out=left, in=-135,looseness=1.2] (-1,-0.5);
\draw (-1,0.5) to (-1,1.5);
\draw (1,0.5) to (1,1.5);
\draw (-1,-0.5) to (-1,-1.5);
\draw (1,-0.5) to (1,-1.5);
\draw[blueregion,draw] (-0.2,1.7) to (-0.2,2) to  [out=up, in=up, looseness=4] (0.2,2) to  (0.2,1.7);
\draw[blueregion,draw] (-0.2,-1.7) to (-0.2,-2) to  [out=down, in=down, looseness=4] (0.2,-2) to  (0.2,-1.7);
\node[triangleup=2] at (0,1.5){};
\node[zxvertex=\zxwhite, zxup] at (-1,0.5){};
\node[zxvertex=\zxwhite, zxup] at (1,0.5){};
\node[triangledown=2] at (0,-1.5){};
\node[zxvertex=\zxwhite, zxdown] at (-1,-0.5){};
\node[zxvertex=\zxwhite, zxdown] at (1,-0.5){};
\end{tz}
~~\superequalseq{eq:colorsplit}~~
\begin{tz}[zx]
\def\h{0.25}
\path[blueregion] (-2.5, -3.5) rectangle (2.5, 3.5);
\draw[fill=white] (1.2,\h+0.1) to  (1.,\h) to [out=-50, in=right,looseness=1.2] (0, -2.5) to [out=left, in=-130, looseness=1.2] (-1, \h) to (1.2, \h+0.1);
\draw[fill=white]  (1.4,\h) to [out=down, in=down,looseness=2.75] (1.9, \h) to [out=up, in=right] (0,2.5) to [out=left, in=up] (-1.9, \h) to [out=down, in=down,looseness=2.75] (-1.4, \h);
\draw (-1,-1) to (-1,1.);
\draw (1,-1) to (1,1.);
\draw[blueregion,draw] (-0.2,1.2) to (-0.2,1.5) to  [out=up, in=up, looseness=4] (0.2,1.5) to  (0.2,1.2);
\draw[blueregion,draw] (-0.2,-1.2) to (-0.2,-1.5) to  [out=down, in=down, looseness=4] (0.2,-1.5) to  (-0.8,-1.2);
\draw[redregion, draw] (-0.8,\h) to [out=down, in=down, looseness=2.2] (0.8, \h) to (0.6, \h) to [out=down, in=down, looseness=2] (-0.6, \h);
\node[bigbox] at (-1,\h){};
\node[bigbox] at (1,\h){};
\node[triangleup=2] at (0,1.){};
\node[triangledown=2] at (0,-1.){};
\end{tz}
~~~~\superequals{\eqref{eq:moritaidempotentsplit}\&\eqref{eq:implied}}~~~~
\ignore{
\begin{tz}[zx]
\def\h{0.25}
\path[blueregion] (-2.5, -3.5) rectangle (2.5, 3.5);
\draw[fill=white] (1.2,\h+0.1) to  (1.,\h) to [out=-50, in=right,looseness=1.2] (0, -2.5) to [out=left, in=-130, looseness=1.2] (-1, \h) to (1.2, \h+0.1);
\draw[fill=white]  (1.4,\h) to [out=down, in=down,looseness=2.75] (1.9, \h) to [out=up, in=right] (0,2.5) to [out=left, in=up] (-1.9, \h) to [out=down, in=down,looseness=2.75] (-1.4, \h);
\draw (-1,-1) to (-1,1.);
\draw (1,-1) to (1,1.);
\draw[blueregion,draw] (-0.2,1.2) to (-0.2,1.5) to  [out=up, in=up, looseness=4] (0.2,1.5) to  (0.2,1.2);
\draw[blueregion,draw] (-0.2,-1.2) to (-0.2,-1.5) to  [out=down, in=down, looseness=4] (0.2,-1.5) to  (0.2,-1.2);
\node[bigbox] at (-1,\h){};
\node[bigbox] at (1,\h){};
\node[triangleup=2] at (0,1.){};
\node[triangledown=2] at (0,-1.){};
\end{tz}
~~=~~}
\begin{tz}[zx]
\path[blueregion] (-2., -3.5) rectangle (2., 3.5);
\draw[fill=white] (0, -1.2) circle (0.8);
\draw[fill=white] (0, 1.2) circle (0.8);
\end{tz}
~~\superequalseq{eq:specialdagger}~~
\begin{tz}[zx]
\path[blueregion] (-2., -3.5) rectangle (2., 3.5);
\end{tz}
\]
We note that it follows from~\eqref{eq:colorsplit} that the converse $\eta \epsilon= 1_{L{\circ}R}$ is equivalent to the following equation:
\begin{calign}\label{eq:conversemoritaproof}
\begin{tz}[zx]
\path[blueregion] (-2, -4.25) rectangle (2, 4.25);
\draw[fill=white, postaction={redregion,draw}] (-1,-4.25) to (-1,-1.75) to (1,-1.75) to (1,-4.25);
\draw[fill=white, postaction={redregion,draw}] (-1,4.25) to (-1,1.75) to (1,1.75) to (1,4.25);
\draw[fill=white] (-1.5, 4.25) to [out=down, in=135] (-1, 3.25 ) to [out=45, in=down] (-0.5, 4.25);
\draw[fill=white] (-1.5, -4.25) to [out=up, in=-135] (-1, -3.25 ) to [out=-45, in=up] (-0.5,- 4.25);
\draw[fill=white] (1.5, 4.25) to [out=down, in=45] (1, 3.25 ) to [out=135, in=down] (0.5, 4.25);
\draw[fill=white] (1.5, -4.25) to [out=up, in=-45] (1, -3.25 ) to [out=-135, in=up] (0.5,- 4.25);
\draw (-1, 4.25) to +(0,-1);
\draw (-1, -4.25) to +(0,1);
\draw (1, 4.25) to +(0,-1);
\draw (1, -4.25) to +(0,1);
\draw[fill=white] (-1,-2.75) to [out=45, in=135, looseness=1.5] (1,-2.75) to [out=45, in=right,looseness=1.2] (0,-0.25) to [out=left, in=135,looseness=1.2] (-1,-2.75);
\draw[fill=white] (-1,2.75) to [out=-45, in=-135, looseness=1.5] (1,2.75) to [out=-45, in=right,looseness=1.2] (0,0.25) to [out=left, in=-135,looseness=1.2] (-1,2.75);
\draw (-1,-2.75) to (-1,-1.75);
\draw (1,-2.75) to (1,-1.75);
\draw (-1,2.75) to (-1,1.75);
\draw (1,2.75) to (1,1.75);
\draw[blueregion,draw] (-0.2,-1.55) to (-0.2,-1.25) to  [out=up, in=up, looseness=4] (0.2,-1.25) to  (0.2,-1.55);
\draw[blueregion,draw] (-0.2,1.55) to (-0.2,1.25) to  [out=down, in=down, looseness=4] (0.2,1.25) to  (0.2,1.55);
\node[triangleup=2] at (0,-1.75){};
\node[zxvertex=\zxwhite, zxup] at (-1,-2.75){};
\node[zxvertex=\zxwhite, zxup] at (1,-2.75){};
\node[triangledown=2] at (0,1.75){};
\node[zxvertex=\zxwhite, zxdown] at (-1,2.75){};
\node[zxvertex=\zxwhite, zxdown] at (1,2.75){};
\node[zxvertex=\zxwhite, zxup] at (-1,3.25){};
\node[zxvertex=\zxwhite, zxup] at (1,3.25){};
\node[zxvertex=\zxwhite, zxdown] at (-1,-3.25){};
\node[zxvertex=\zxwhite, zxdown] at (1,-3.25){};
\end{tz}
~~~=~~~
\begin{tz}[zx]
\path[blueregion] (-2, -4.25) rectangle (-1, 4.25);
\path[redregion] (-1, -4.25) rectangle (1, 4.25);
\path[blueregion] (1, -4.25) rectangle (2, 4.25);
\draw (-1,-4.25) to (-1, 4.25);
\draw (1,-4.25) to (1,4.25);
\draw[fill=white] (-1.5, 4.25) to [out=down, in=135] (-1, 3.25 ) to [out=45, in=down] (-0.5, 4.25);
\draw[fill=white] (-1.5, -4.25) to [out=up, in=-135] (-1, -3.25 ) to [out=-45, in=up] (-0.5,- 4.25);
\draw[fill=white] (1.5, 4.25) to [out=down, in=45] (1, 3.25 ) to [out=135, in=down] (0.5, 4.25);
\draw[fill=white] (1.5, -4.25) to [out=up, in=-45] (1, -3.25 ) to [out=-135, in=up] (0.5,- 4.25);
\draw (-1, 4.25) to +(0,-1);
\draw (-1, -4.25) to +(0,1);
\draw (1, 4.25) to +(0,-1);
\draw (1, -4.25) to +(0,1);
\node[zxvertex=\zxwhite, zxup] at (-1,3.25){};
\node[zxvertex=\zxwhite, zxup] at (1,3.25){};
\node[zxvertex=\zxwhite, zxdown] at (-1,-3.25){};
\node[zxvertex=\zxwhite, zxdown] at (1,-3.25){};
\end{tz}
~~~\superequalseq{eq:colorsplit}~~~
\def\h{1.75}
\begin{tz}[zx]
\clip (-2.25, -4.25) rectangle (2.25, 4.25);
\draw (-1,-4.25) to (-1,4.25);
\path[redregion]  (-0.5, -4.25) to  [out=up, in=down] (-0.8,\h-1.5) to (-0.6,\h-1.5) to [out=down, in=down, looseness=2.75] (-0.25,\h-1.5) to[out=up, in=down] (-0.5,4.25) to (0.5,4.25) to [out=down, in=up] (0.25,\h-1.5) to [out=down, in=down, looseness=2.75] (0.6,\h-1.5) to (0.8, \h-1.5) to [out=down, in=up] (0.5,-4.25); 
\draw (-0.5,-4.25) to  [out=up, in=down] (-0.8,\h-1.5) to (-0.6,\h-1.5) to [out=down, in=down, looseness=2.75] (-0.25,\h-1.5) to [out=up, in=down] (-0.5,4.25);
\path[blueregion] (-1.5,-4.25) to   [out=up, in=down] (-1.2,\h-1.5) to (-1.4,\h-1.5) to [out=down, in=down, looseness=2.75] (-1.75,\h-1.5) to [out=up, in=down] (-1.5,4.25) to (-3,4.25) to (-3,-4.25);
\draw (-1.5,-4.25) to  [out=up, in=down] (-1.2,\h-1.5) to (-1.4,\h-1.5) to [out=down, in=down, looseness=2.75] (-1.75,\h-1.5) to [out=up, in=down] (-1.5,4.25);
\draw (1,-4.25) to (1,4.25);
\draw (0.5,-4.25) to  [out=up, in=down] (0.8,\h-1.5) to (0.6,\h-1.5) to [out=down, in=down, looseness=2.75] (0.25,\h-1.5) to [out=up, in=down] (0.5,4.25);
\path[blueregion] (1.5,-4.25) to   [out=up, in=down] (1.2,\h-1.5) to (1.4,\h-1.5) to [out=down, in=down, looseness=2.75] (1.75,\h-1.5) to [out=up, in=down] (1.5,4.25) to (3,4.25) to (3,-4.25);
\draw (1.5,-4.25) to  [out=up, in=down] (1.2,\h-1.5) to (1.4,\h-1.5) to [out=down, in=down, looseness=2.75] (1.75,\h-1.5) to [out=up, in=down] (1.5,4.25);
\node[bigbox] at (-1,\h-1.5){};
\node[bigbox] at (1,\h-1.5){};
\end{tz}
\end{calign}
We can rewrite the involved 2-morphism as follows:
\[\begin{tz}[zx,yscale=-1]
\clip (-2, -1.5) rectangle (2,3);
\path[blueregion] (-2, -1.5) rectangle (2, 3);
\draw[fill=white, postaction={redregion,draw}] (-1,-1.5) to (-1,1) to (1,1) to (1,-1.5);
\draw[fill=white] (-1,0) to [out=45, in=135, looseness=1.5] (1,0) to [out=45, in=right,looseness=1.2] (0,2.5) to [out=left, in=135,looseness=1.2] (-1,0);
\draw[fill=white] (-1.5,-1.5) to [out=up, in=-135] (-1,-0.5) to [out=-45, in=up] (-0.5,-1.5);
\draw[fill=white](1.5,-1.5) to [out=up, in=-45] (1,-0.5) to [out=-135, in=up] (0.5,-1.5);
\draw (-1,0) to (-1,1);
\draw (1,0) to (1,1);
\draw[blueregion,draw] (-0.2,1.2) to (-0.2,1.5) to  [out=up, in=up, looseness=4] (0.2,1.5) to  (0.2,1.2);
\draw (1,-0.5) to + (0,-1);
\draw (-1,-0.5) to + (0,-1);
\node[triangledown=2] at (0,1){};
\node[zxvertex=\zxwhite, zxdown] at (-1,0){};
\node[zxvertex=\zxwhite, zxdown] at (1,0){};
\node[zxvertex=\zxwhite, zxup] at (-1,-0.5){};
\node[zxvertex=\zxwhite, zxup] at (1,-0.5){};
\end{tz}
~~\superequalseq{eq:colorsplit}~~
\begin{tz}[zx]
\def\h{0.25}
\clip (-2.5, -3 ) rectangle (2.5,1.5);
\path[blueregion] (-2.5, -3.5) rectangle (2.5, 3.5);
\draw[fill=white] (1.2,\h+0.1) to  (1.,\h) to [out=-50, in=right,looseness=1.2] (0, -2.5) to [out=left, in=-130, looseness=1.2] (-1, \h) to (1.2, \h+0.1);
\path[fill=white]  (1.4,\h) to [out=down, in=down,looseness=2.75] (1.9, \h) to (1.9, 1.5) to (-1.9, 1.5) to  (-1.9, \h) to [out=down, in=down,looseness=2.75] (-1.4, \h);
\draw  (1.4,\h) to [out=down, in=down,looseness=2.75] (1.9, \h) to (1.9, 1.5)  (-1.9, 1.5) to  (-1.9, \h) to [out=down, in=down,looseness=2.75] (-1.4, \h);
\draw (-1,-1) to (-1,1.5);
\draw (1,-1) to (1,1.5);
\draw[blueregion,draw] (-0.2,-1.2) to (-0.2,-1.5) to  [out=down, in=down, looseness=4] (0.2,-1.5) to  (-0.8,-1.2);
\draw[redregion] (-0.8,\h) to [out=down, in=down, looseness=1.8] (0.8, \h) to (0.6, \h) to [out=down, in=down, looseness=3.5] (0.25,\h) to (0.25,1.5) to  (-0.25, 1.5) to (-0.25, \h) to [out=down, in=down, looseness=3.5](-0.6, \h);
\draw (-0.8,\h) to [out=down, in=down, looseness=1.8] (0.8, \h) to (0.6, \h) to [out=down, in=down, looseness=3.5] (0.25,\h) to (0.25,1.5)  (-0.25, 1.5) to (-0.25, \h) to [out=down, in=down, looseness=3.5](-0.6, \h);
\node[bigbox] at (-1,\h){};
\node[bigbox] at (1,\h){};
\node[triangledown=2] at (0,-1.){};
\end{tz}
~~\superequalseq{eq:implied}~~
\begin{tz}[zx]
\def\h{0.25}
\clip (-2.5, -3 ) rectangle (2.5,1.5);
\path[blueregion] (-2.5, -3) rectangle (2.5, 1.5);
\draw[fill=white, draw] (1.9, 1.5) to (1.9, -1) to [out=down, in=down, looseness=2] (0.2,-1.2) to (-0.2,-1.2) to [out=down, in=down, looseness=2] (-1.9, -1) to (-1.9, 1.5);
\draw (1,-1) to (1,1.5);
\draw (-1,-1) to (-1,1.5);
\draw[redregion] (-0.8,\h) to [out=down, in=down, looseness=1.8] (0.8, \h) to (0.6, \h) to [out=down, in=down, looseness=3.5] (0.25,\h) to (0.25,1.5) to  (-0.25, 1.5) to (-0.25, \h) to [out=down, in=down, looseness=3.5](-0.6, \h);
\draw (-0.8,\h) to [out=down, in=down, looseness=1.8] (0.8, \h) to (0.6, \h) to [out=down, in=down, looseness=3.5] (0.25,\h) to (0.25,1.5)  (-0.25, 1.5) to (-0.25, \h) to [out=down, in=down, looseness=3.5](-0.6, \h);
\draw[fill=white] (0, -2.3) circle (0.35);
\node[bigbox] at (-1,\h){};
\node[bigbox] at (1,\h){};
\node[triangledown=2] at (0,-1.){};
\end{tz}
~~\superequalseq{eq:specialdagger}~~
\begin{tz}[zx]
\def\h{0.25}
\clip (-2.5, -3 ) rectangle (2.5,1.5);
\path[blueregion] (-2.5, -3) rectangle (2.5, 1.5);
\draw[fill=white, draw] (1.9, 1.5) to (1.9, -1) to [out=down, in=down, looseness=2] (0.2,-1.2) to (-0.2,-1.2) to [out=down, in=down, looseness=2] (-1.9, -1) to (-1.9, 1.5);
\draw (1,-1) to (1,1.5);
\draw (-1,-1) to (-1,1.5);
\draw[redregion] (-0.8,\h) to [out=down, in=down, looseness=1.8] (0.8, \h) to (0.6, \h) to [out=down, in=down, looseness=3.5] (0.25,\h) to (0.25,1.5) to  (-0.25, 1.5) to (-0.25, \h) to [out=down, in=down, looseness=3.5](-0.6, \h);
\draw (-0.8,\h) to [out=down, in=down, looseness=1.8] (0.8, \h) to (0.6, \h) to [out=down, in=down, looseness=3.5] (0.25,\h) to (0.25,1.5)  (-0.25, 1.5) to (-0.25, \h) to [out=down, in=down, looseness=3.5](-0.6, \h);
\node[bigbox] at (-1,\h){};
\node[bigbox] at (1,\h){};
\node[triangledown=2] at (0,-1.){};
\end{tz}
\]
We can then prove equation~\eqref{eq:conversemoritaproof} as follows:

\[\begin{tz}[zx]
\def\h{0.25}
\path[blueregion] (-2.5, -4.25) rectangle (2.5, 4.25);
\draw[fill=white, draw] (1.9, 4.25) to (1.9, 1.75) to [out=down, in=down, looseness=2] (0.2,1.55) to (-0.2,1.55) to [out=down, in=down, looseness=2] (-1.9, 1.75) to (-1.9, 4.25);
\draw (1,1.75) to (1,4.25);
\draw (-1,1.75) to (-1,4.25);
\draw[redregion] (-0.8,\h+2.75) to [out=down, in=down, looseness=1.8] (0.8, \h+2.75) to (0.6, \h+2.75) to [out=down, in=down, looseness=3.5] (0.25,\h+2.75) to (0.25,4.25) to  (-0.25, 4.25) to (-0.25, \h+2.75) to [out=down, in=down, looseness=3.5](-0.6, \h+2.75);
\draw (-0.8,\h+2.75) to [out=down, in=down, looseness=1.8] (0.8, \h+2.75) to (0.6, \h+2.75) to [out=down, in=down, looseness=3.5] (0.25,\h+2.75) to (0.25,4.25)  (-0.25, 4.25) to (-0.25, \h+2.75) to [out=down, in=down, looseness=3.5](-0.6, \h+2.75);
\draw[fill=white, draw] (1.9, -4.25) to (1.9, -1.75) to [out=up, in=up, looseness=2] (0.2,-1.55) to (-0.2,-1.55) to [out=up, in=up, looseness=2] (-1.9, -1.75) to (-1.9, -4.25);
\draw (1,-1.75) to (1,-4.25);
\draw (-1,-1.75) to (-1,-4.25);
\draw[redregion,draw] (-0.25,-4.25) to [out=up, in=down] (-0.8, -\h-2.75) to (-0.6,-\h-2.75) to [out=down, in=down, looseness=1.25] (0.6,-\h-2.75) to (0.8, -\h-2.75) to [out=down, in=up] (0.25,-4.25);
\node[bigbox] at (-1,\h+2.75){};
\node[bigbox] at (1,\h+2.75){};
\node[triangledown=2] at (0,1.75){};
\node[bigbox] at (-1,-\h-2.75){};
\node[bigbox] at (1,-\h-2.75){};
\node[triangleup=2] at (0,-1.75){};
\end{tz}
~~\superequalseq{eq:implied}~~
\begin{tz}[zx]
\def\h{0.25}
\path[blueregion] (-2.5, -4.25) rectangle (2.5, 4.25);
\path[fill=white] (1.9, 4.25) to (1.9, \h) to [out=down, in=down, looseness=3]  (1.4, \h) to (1.2, \h) to [out=down, in=up] (1.9, -4.25) to (-1.9, -4.25) to [out=up, in=down] (-1.2,\h) to (-1.4,\h)to [out=down, in=down, looseness=3] (-1.9, \h) to (-1.9, 4.25);
\draw (1.9, 4.25) to (1.9, \h) to [out=down, in=down, looseness=3]  (1.4, \h) to (1.2, \h) to [out=down, in=up] (1.9, -4.25) 
(-1.9, -4.25) to [out=up, in=down] (-1.2,\h) to (-1.4,\h)to [out=down, in=down, looseness=3] (-1.9, \h) to (-1.9, 4.25);
\draw (1,-4.25) to (1,4.25);
\draw (-1,-4.25) to (-1,4.25);
\draw[redregion] (-0.8,\h+2.75) to [out=down, in=down, looseness=1.8] (0.8, \h+2.75) to (0.6, \h+2.75) to [out=down, in=down, looseness=3.5] (0.25,\h+2.75) to (0.25,4.25) to  (-0.25, 4.25) to (-0.25, \h+2.75) to [out=down, in=down, looseness=3.5](-0.6, \h+2.75);
\draw (-0.8,\h+2.75) to [out=down, in=down, looseness=1.8] (0.8, \h+2.75) to (0.6, \h+2.75) to [out=down, in=down, looseness=3.5] (0.25,\h+2.75) to (0.25,4.25)  (-0.25, 4.25) to (-0.25, \h+2.75) to [out=down, in=down, looseness=3.5](-0.6, \h+2.75);
\draw[redregion, draw] (-0.8,\h) to [out=down, in=down, looseness=2.2] (0.8, \h) to (0.6, \h) to [out=down, in=down, looseness=2] (-0.6, \h);
\draw[redregion,draw] (-0.25,-4.25) to [out=up, in=down] (-0.8, -\h-2.75) to (-0.6,-\h-2.75) to [out=down, in=down, looseness=1.25] (0.6,-\h-2.75) to (0.8, -\h-2.75) to [out=down, in=up] (0.25,-4.25);
\node[bigbox] at (-1,\h+2.75){};
\node[bigbox] at (1,\h+2.75){};
\node[bigbox] at (-1,-\h-2.75){};
\node[bigbox] at (1,-\h-2.75){};
\node[bigbox] at (-1,\h){};
\node[bigbox] at (1,\h){};
\end{tz}
~~\superequalseq{eq:bimodule}~~
\begin{tz}[zx]
\def\h{0.25}
\path[blueregion] (-2.5, -4.25) rectangle (2.5, 4.25);
\path[fill=white] (1.9, 4.25) to (1.9, \h) to [out=down, in=down, looseness=3]  (1.4, \h) to (1.2, \h) to [out=down, in=up] (1.9, -4.25) to (-1.9, -4.25) to [out=up, in=down] (-1.2,\h) to (-1.4,\h)to [out=down, in=down, looseness=3] (-1.9, \h) to (-1.9, 4.25);
\draw (1.9, 4.25) to (1.9, \h) to [out=down, in=down, looseness=3]  (1.4, \h) to (1.2, \h) to [out=down, in=up] (1.9, -4.25) 
(-1.9, -4.25) to [out=up, in=down] (-1.2,\h) to (-1.4,\h)to [out=down, in=down, looseness=3] (-1.9, \h) to (-1.9, 4.25);
\draw (1,-4.25) to (1,4.25);
\draw (-1,-4.25) to (-1,4.25);
\path[redregion] (-0.25, 4.25) to (-0.25,\h) to [out=down, in=down, looseness=3.5] (-0.6, \h) to (-0.8, \h) to [out=down, in=up] (-0.25,-4.25) to (0.25, -4.25) to [out=up, in=down] (0.8,\h) to (0.6,\h) to [out=down, in=down,looseness=3.5] (0.25,\h) to (0.25,4.25);
\draw (-0.25, 4.25) to (-0.25,\h) to [out=down, in=down, looseness=3.5] (-0.6, \h) to (-0.8, \h) to [out=down, in=up] (-0.25,-4.25) 
 (0.25, -4.25) to [out=up, in=down] (0.8,\h) to (0.6,\h) to [out=down, in=down,looseness=3.5] (0.25,\h) to (0.25,4.25);
\draw[fill=white] (0,\h-1.375) circle (0.15);
\draw[fill=white] (0,\h+1.375) circle (0.15);
\node[bigbox] at (-1,\h){};
\node[bigbox] at (1,\h){};
\end{tz}
~~\superequalseq{eq:specialdagger}~~
\begin{tz}[zx]
\def\h{0.25}
\path[blueregion] (-2.5, -4.25) rectangle (2.5, 4.25);
\path[fill=white] (1.9, 4.25) to (1.9, \h) to [out=down, in=down, looseness=3]  (1.4, \h) to (1.2, \h) to [out=down, in=up] (1.9, -4.25) to (-1.9, -4.25) to [out=up, in=down] (-1.2,\h) to (-1.4,\h)to [out=down, in=down, looseness=3] (-1.9, \h) to (-1.9, 4.25);
\draw (1.9, 4.25) to (1.9, \h) to [out=down, in=down, looseness=3]  (1.4, \h) to (1.2, \h) to [out=down, in=up] (1.9, -4.25) 
(-1.9, -4.25) to [out=up, in=down] (-1.2,\h) to (-1.4,\h)to [out=down, in=down, looseness=3] (-1.9, \h) to (-1.9, 4.25);
\draw (1,-4.25) to (1,4.25);
\draw (-1,-4.25) to (-1,4.25);
\path[redregion] (-0.25, 4.25) to (-0.25,\h) to [out=down, in=down, looseness=3.5] (-0.6, \h) to (-0.8, \h) to [out=down, in=up] (-0.25,-4.25) to (0.25, -4.25) to [out=up, in=down] (0.8,\h) to (0.6,\h) to [out=down, in=down,looseness=3.5] (0.25,\h) to (0.25,4.25);
\draw (-0.25, 4.25) to (-0.25,\h) to [out=down, in=down, looseness=3.5] (-0.6, \h) to (-0.8, \h) to [out=down, in=up] (-0.25,-4.25) 
 (0.25, -4.25) to [out=up, in=down] (0.8,\h) to (0.6,\h) to [out=down, in=down,looseness=3.5] (0.25,\h) to (0.25,4.25);
\node[bigbox] at (-1,\h){};
\node[bigbox] at (1,\h){};
\end{tz}
\]
This concludes the proof that $L{\circ}R \cong 1_X$. The converse, $R{\circ}L\cong 1_Y$ can be proven analogously.

2. We claim that if $S:X\to A\to X$ and $P:Y \to A$ are special $1$-morphisms such that $X$ and $Y$ are dagger equivalent, then $S{\circ}\conj{S}$ and $P{\circ}\conj{P}$ are Morita equivalent. Let $R:X\to Y$ and $L:Y\to X$ be an equivalence between $X$ and $Y$ and denote the corresponding 2-isomorphisms as follows:
\begin{calign}\nonumber
\begin{tz}[zx]
\path[blueregion] (-1.5,0) rectangle (1.5, 3);
\draw[fill=white, postaction={redregion,draw}] (-0.75, 0) to (-0.75, 1.5) to (0.75, 1.5) to (0.75,0);
\node[widebox={1.9},minimum width=1.5cm] at (0,1.5) {$\eta$};
\end{tz}
&
\begin{tz}[zx,yscale=-1]
\path[blueregion] (-1.5,0) rectangle (1.5, 3);
\draw[fill=white, postaction={redregion,draw}] (-0.75, 0) to (-0.75, 1.5) to (0.75, 1.5) to (0.75,0);
\node[widebox={1.9},minimum width=1.5cm] at (0,1.5) {$\eta^\dagger$};
\end{tz}
&
\begin{tz}[zx]
\path[redregion] (-1.5,0) rectangle (1.5, 3);
\draw[fill=white, postaction={blueregion,draw}] (-0.75, 0) to (-0.75, 1.5) to (0.75, 1.5) to (0.75,0);
\node[widebox={1.9},minimum width=1.5cm] at (0,1.5) {$\mu$};
\end{tz}
&
\begin{tz}[zx,yscale=-1]
\path[redregion] (-1.5,0) rectangle (1.5, 3);
\draw[fill=white, postaction={blueregion,draw}] (-0.75, 0) to (-0.75, 1.5) to (0.75, 1.5) to (0.75,0);
\node[widebox={1.9},minimum width=1.5cm] at (0,1.5) {$\mu^\dagger$};
\end{tz}
\\\nonumber
\eta:L{\circ}R\To 1_X & \eta^{-1}:1_X\To L{\circ}R & \mu:R{\circ}L \To 1_Y & \mu^{-1}:1_Y \To R{\circ}L
\end{calign}
It can then be verified that the 1-morphisms $S{\circ}L{\circ}\conj{P}$ and $P{\circ}R{\circ}\conj{S}$ form bimodules
\def\vs{\vphantom{\conj{S}}}
\def\scl{1.2}
\begin{calign}\nonumber
\begin{tz}[zx,scale=\scl,xscale=1.2]
\path[blueregion] (-1.5, 0) to [out=up, in=down] (-0.25,3) to (0,3) to (0,0) to (-0.25,0) to [out=up, in=up, looseness=2.5] (-1,0);
\path[redregion] (1.5,0) to [out=up, in=down] (0.25,3) to (0,3) to (0,0) to (0.25,0) to [out=up, in=up, looseness=2.5] (1,0);
\draw(-1.5, 0) to [out=up, in=down] (-0.25,3);
\draw (0,3) to (0,0);
\draw  (-0.25,0) to [out=up, in=up, looseness=2.5] (-1,0);
\draw(1.5,0) to [out=up, in=down] (0.25,3);
\draw (0.25,0) to [out=up, in=up, looseness=2.5] (1,0);
\node[dimension, below] at (-1.5,0) {$\vs S$};
\node[dimension, below] at (-1,0) {$\conj{S}$};
\node[dimension, below] at (-0.3,0) {$\vs S$};
\node[dimension, below] at (0,0) {$\vs L $};
\node[dimension, below] at (1.5,0) {$\conj{P}$};
\node[dimension, below] at (1,0) {$\vs P$};
\node[dimension, below] at (0.3,0) {$\conj{P}$};
\end{tz}
&
\begin{tz}[zx,scale=\scl,xscale=1.2]
\path[redregion] (-1.5, 0) to [out=up, in=down] (-0.25,3) to (0,3) to (0,0) to (-0.25,0) to [out=up, in=up, looseness=2.5] (-1,0);
\path[blueregion] (1.5,0) to [out=up, in=down] (0.25,3) to (0,3) to (0,0) to (0.25,0) to [out=up, in=up, looseness=2.5] (1,0);
\draw(-1.5, 0) to [out=up, in=down] (-0.25,3);
\draw (0,3) to (0,0);
\draw  (-0.25,0) to [out=up, in=up, looseness=2.5] (-1,0);
\draw(1.5,0) to [out=up, in=down] (0.25,3);
\draw (0.25,0) to [out=up, in=up, looseness=2.5] (1,0);
\node[dimension, below] at (-1.5,0) {$\vs P$};
\node[dimension, below] at (-1,0) {$\conj{P}$};
\node[dimension, below] at (-0.3,0) {$\vs P$};
\node[dimension, below] at (0,0) {$\vs R $};
\node[dimension, below] at (1.5,0) {$\conj{S}$};
\node[dimension, below] at (1,0) {$\vs S$};
\node[dimension, below] at (0.3,0) {$\conj{S}$};
\end{tz}
\end{calign}
for which the following 2-morphisms establish Morita equivalence as in~\eqref{eq:alldataMorita}:
\begin{calign}\nonumber
\begin{tz}[zx]
\path[blueregion] (-1,0) rectangle (1, 0.75);
\draw[fill=white, postaction={redregion,draw}] (-0.75, 0) to (-0.75, 0.75) to (0.75, 0.75) to (0.75,0);
\draw (-1,0) to +(0,0.75);
\draw (1.,0) to +(0,0.75);
\draw[fill=white] (-0.5,0) to (-0.5,0.75) to (0.5,0.75) to (0.5,0); 
\draw[blueregion, draw] (-0.25,2.25) to (-0.25,1.25) to (0.25, 1.25) to (0.25, 2.25);
\node[triangleup=2] at (0,0.75) {};
\end{tz}
~:=~
\begin{tz}[zx]
\path[blueregion] (-1,0) rectangle (1, 2.25);
\draw[fill=white, postaction={redregion,draw}] (-0.75, 0) to (-0.75, 1.25) to (0.75, 1.25) to (0.75,0);
\draw (-1,0) to +(0,2.25);
\draw (1.,0) to +(0,2.25);
\draw[fill=white] (-0.5,0) to [out=up, in=up, looseness=2] (0.5,0);
\node[widebox={1.9},minimum width=1.5cm] at (0,1.25) {$\eta$};
\end{tz}
&
\begin{tz}[zx,yscale=-1]
\path[blueregion] (-1,0) rectangle (1, 0.75);
\draw[fill=white, postaction={redregion,draw}] (-0.75, 0) to (-0.75, 0.75) to (0.75, 0.75) to (0.75,0);
\draw (-1,0) to +(0,0.75);
\draw (1.,0) to +(0,0.75);
\draw[fill=white] (-0.5,0) to (-0.5,0.75) to (0.5,0.75) to (0.5,0); 
\draw[blueregion, draw] (-0.25,2.25) to (-0.25,1.25) to (0.25, 1.25) to (0.25, 2.25);
\node[triangledown=2] at (0,0.75) {};
\end{tz}
~:=~
\begin{tz}[zx,yscale=-1]
\path[blueregion] (-1,0) rectangle (1, 2.25);
\draw[fill=white, postaction={redregion,draw}] (-0.75, 0) to (-0.75, 1.25) to (0.75, 1.25) to (0.75,0);
\draw (-1,0) to +(0,2.25);
\draw (1.,0) to +(0,2.25);
\draw[fill=white] (-0.5,0) to [out=up, in=up, looseness=2] (0.5,0);
\node[widebox={1.9},minimum width=1.5cm] at (0,1.25) {$\eta^\dagger$};
\end{tz}
&
\begin{tz}[zx]
\path[redregion] (-1,0) rectangle (1, 0.75);
\draw[fill=white, postaction={blueregion,draw}] (-0.75, 0) to (-0.75, 0.75) to (0.75, 0.75) to (0.75,0);
\draw (-1,0) to +(0,0.75);
\draw (1.,0) to +(0,0.75);
\draw[fill=white] (-0.5,0) to (-0.5,0.75) to (0.5,0.75) to (0.5,0); 
\draw[redregion, draw] (-0.25,2.25) to (-0.25,1.25) to (0.25, 1.25) to (0.25, 2.25);
\node[triangleup=2,fill=\zxblack] at (0,0.75) {};
\end{tz}
~:=~
\begin{tz}[zx]
\path[redregion] (-1,0) rectangle (1, 2.25);
\draw[fill=white, postaction={blueregion,draw}] (-0.75, 0) to (-0.75, 1.25) to (0.75, 1.25) to (0.75,0);
\draw (-1,0) to +(0,2.25);
\draw (1.,0) to +(0,2.25);
\draw[fill=white] (-0.5,0) to [out=up, in=up, looseness=2] (0.5,0);
\node[widebox={1.9},minimum width=1.5cm] at (0,1.25) {$\mu$};
\end{tz}
&
\begin{tz}[zx,yscale=-1]
\path[redregion] (-1,0) rectangle (1, 0.75);
\draw[fill=white, postaction={blueregion,draw}] (-0.75, 0) to (-0.75, 0.75) to (0.75, 0.75) to (0.75,0);
\draw (-1,0) to +(0,0.75);
\draw (1.,0) to +(0,0.75);
\draw[fill=white] (-0.5,0) to (-0.5,0.75) to (0.5,0.75) to (0.5,0); 
\draw[redregion, draw] (-0.25,2.25) to (-0.25,1.25) to (0.25, 1.25) to (0.25, 2.25);
\node[triangledown=2,fill=\zxblack] at (0,0.75) {};
\end{tz}
~:=~
\begin{tz}[zx,yscale=-1]
\path[redregion] (-1,0) rectangle (1, 2.25);
\draw[fill=white, postaction={blueregion,draw}] (-0.75, 0) to (-0.75, 1.25) to (0.75, 1.25) to (0.75,0);
\draw (-1,0) to +(0,2.25);
\draw (1.,0) to +(0,2.25);
\draw[fill=white] (-0.5,0) to [out=up, in=up, looseness=2] (0.5,0);
\node[widebox={1.9},minimum width=1.5cm] at (0,1.25) {$\mu^\dagger$};
\end{tz}\qedhere
\end{calign}
\end{proof}
\vspace{5pt}
\begin{corollary}\label{cor:maintechnical}
Let $\mathbb{B}$ be a dagger $2$-category in which dagger idempotents split and let $A$ be an object of $\mathbb{B}$. The construction of a special dagger Frobenius monoid in $\mathbb{B}(A,A)$ from a special $1$-morphism into $A$ induces a bijection between the following sets:
\begin{itemize}
\item Dagger equivalence classes of objects $X$ such that there exists a special $1$-morphism \mbox{${S:X \to A}$.}
\item Morita equivalence classes of split special dagger Frobenius monoids $F\in \mathbb{B}(A,A)$.
\end{itemize}
\end{corollary}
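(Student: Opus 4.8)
The plan is to recognise that Corollary~\ref{cor:maintechnical} is essentially a repackaging of Theorem~\ref{thm:maintechnical} as a statement about a bijection, so that almost all of the mathematical content is already available. The map $\Phi$ in question sends the dagger equivalence class of an object $X$ admitting a special $1$-morphism $S:X\to A$ to the Morita equivalence class of the split special dagger Frobenius monoid $S\circ\overline{S}\in\mathbb{B}(A,A)$. Since the hypotheses guarantee that at least one such $S$ exists, the recipe produces a class; the work is then to check that $\Phi$ is well defined, injective, and surjective.

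First I would establish well-definedness, which has two parts. The choice of the special $1$-morphism $S:X\to A$ must not matter: if $S,S':X\to A$ are both special, then applying Theorem~\ref{thm:maintechnical} in the degenerate case $Y=X$ shows that $S\circ\overline{S}$ and $S'\circ\overline{S'}$ are Morita equivalent, since $X$ is trivially dagger equivalent to itself. The choice of representative object must also not matter: if $X$ is dagger equivalent to $Y$ and $S:X\to A$, $P:Y\to A$ are special, then the same theorem immediately gives that $S\circ\overline{S}$ and $P\circ\overline{P}$ are Morita equivalent. Hence $\Phi$ descends to dagger equivalence classes.

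Injectivity is precisely the forward (``only if'') direction of Theorem~\ref{thm:maintechnical}: if $\Phi([X])=\Phi([Y])$, that is, if $S\circ\overline{S}$ and $P\circ\overline{P}$ are Morita equivalent, then $X$ and $Y$ are dagger equivalent, so $[X]=[Y]$. For surjectivity I would invoke the definition of a \emph{split} special dagger Frobenius monoid: any such $F$ is $*$-isomorphic to $S\circ\overline{S}$ for some special $1$-morphism $S:X\to A$. Since $*$-isomorphic special dagger Frobenius monoids are Morita equivalent (the remark following Definition~\ref{def:daggermoritaequiv}), $F$ and $S\circ\overline{S}$ lie in the same Morita class, whence $[F]=\Phi([X])$ is in the image.

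There is essentially no serious obstacle remaining once Theorem~\ref{thm:maintechnical} is in hand; the only point requiring genuine care is the bookkeeping for well-definedness, namely remembering that the theorem must be applied in the case $X=Y$ to see that distinct special $1$-morphisms out of a single object yield Morita-equivalent Frobenius monoids. The substantive steps --- building the invertible dagger bimodules from a dagger equivalence, and conversely extracting a dagger equivalence from a Morita equivalence by splitting the induced dagger idempotents --- have already been discharged in the proof of the theorem.
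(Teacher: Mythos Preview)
Your proposal is correct and takes essentially the same approach as the paper, which simply records that the map is well defined by one direction of Theorem~\ref{thm:maintechnical}, injective by the other, and surjective by the definition of ``split''. Your write-up is in fact more careful: you explicitly separate well-definedness into independence from the choice of special $1$-morphism $S:X\to A$ (handled by applying the theorem with $Y=X$) and independence from the representative of the dagger equivalence class, and you correctly identify which implication of the theorem yields which property (the paper's one-line proof appears to have the labels ``if'' and ``only if'' swapped).
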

\begin{proof} This function is well-defined by the \emph{only if} condition of Theorem~\ref{thm:maintechnical}, surjective by definition and injective by the \emph{if} condition.
\end{proof}

\begin{remark} Statements similar to Theorem~\ref{thm:maintechnical} and Corollary~\ref{cor:maintechnical} hold for non-dagger special Frobenius monoids in non-dagger 2-categories.
\end{remark}

\begin{remark}\label{rem:classify*iso} Corollary~\ref{cor:maintechnical} classifies objects $X$ in $\mathbb{B}$ for which there exists some special $1$-morphism $X\to A$. One may further obtain an explicit classification of these $1$-morphisms as follows:

We say that two special $1$-morphisms $S:X\to A$ and $P:Y \to A$ are \emph{equivalent} if there exists a dagger equivalence $\epsilon:X\to Y$ and a unitary isomorphism $P\circ \epsilon \cong S$. It can then be shown that equivalence classes of special $1$-morphisms into $A$ are in one-to-one correspondence with split special dagger Frobenius monoids in $\mathbb{B}(A,A)$ \emph{up to $*$-isomorphism}.

In other words, $*$-isomorphism classes of Frobenius monoids in $\mathbb{B}(A,A)$ classify $1$-morphisms into $A$, while the coarser Morita equivalence classes just classify objects to which there exists some $1$-morphism into $A$.
\end{remark}
\ignore{
\begin{theorem}\label{thm:maintechnical} Let $\mathbb{B}$ be a dagger-2-category in which all idempotents split and let $A$ be an object of $\mathbb{B}$. Then, there is a bijective correspondence between the following two sets:
\begin{itemize}
\item $\quotient{\left\{\left.\vphantom{\frac{a}{b}}\text{special $\dagger$-Frobenius monoids F in }\mathbb{B}(A,A) ~\right|~\exists \text{ special }S:A\to X, \text{ s.t. }F\cong \conj{S}{\circ}S\right\}}{\text{Morita equivalence}}$
\item $\quotient{\left\{\left.\vphantom{\frac{a}{b}}\text{objects }X\text{ in }\mathbb{B}~\right|~\exists \text{ special }S:A\to X\right\}}{\text{equivalence}}$
\end{itemize}
given by mapping a Frobenius monoid $F$ to the equivalence class of the target of a special $1$-morphism $S$ such that $F\cong \conj{S}{\circ}S$. Here $\cong$ denotes a $*$-isomorphism of Frobenius monoids.
\end{theorem}
}

\end{document}